  \def\Cref#1{<#1>}%
  \def\({}
  \def\){}
\newtheorem{proposition}{Proposition}[section]
\newtheorem{lemma}{Lemma}[section]
\newtheorem{remark}{Remark}[section]
\newtheorem{theorem}{Theorem}[section]
\newtheorem{corollary}{Corollary}[section]
\newtheorem{definition}{Definition}[section]
\newtheorem{conjecture}{Conjecture}[section]
\newtheorem{observation}{Observation}[section]
\newtheorem{claim}{Claim}[section]
\newtheorem{question}{Question}
\newtheorem{case}{Case}
\newcommand{\eps}{\ensuremath{\varepsilon}}
\newcommand{\vp}{\mathbf{p}}
\newcommand{\vx}{\mathbf{x}}
\newcommand{\vy}{\mathbf{y}}
\DeclareMathOperator*{\argmin}{arg\,min}
\DeclareMathOperator*{\argmax}{arg\,max}
\newcommand{\obj}[1]{\ensuremath{\textsc{obj}\left(#1\right)}\xspace}
\newcommand{\swapi}[1]{\ensuremath{\text{swap}_i\left(#1\right)}\xspace}
\newcommand{\phin}[1]{\ensuremath{\phi_n\left(\left|#1\right|\right)}\xspace}
\newcommand{\phit}[1]{\ensuremath{\phi_t\left(\left|#1\right|\right)}\xspace}
\newcommand{\mlsymopt}{\textsc{ML-OPT}_\textsc{Sym}\xspace}
\newcommand{\mlopt}{\textsc{ML-OPT}\xspace}
\newcommand{\opt}{\textsc{OPT}\xspace}
\newcommand{\symopt}{\ensuremath{\textsc{OPT}_\textsc{Sym}}\xspace}
\newcommand{\symgap}{\ensuremath{\textsc{Sym-Gap}}\xspace}
\newcommand{\indicator}[1]{\ensuremath{\mathbbm{1}_{#1}}\xspace}
\newcommand{\submp}{\textsc{Sub-MP}\xspace}
\newcommand{\symsubmp}{\textsc{Sym-Sub-MP}\xspace}
\newcommand{\monosubmp}{\textsc{Mono-Sub-MP}\xspace}
\newcommand{\submprel}{\textsc{Sub-MP-Rel}\xspace}
\newcommand{\matrixmp}{\textsc{Matrix-MP}\xspace}
\newcommand{\matroidmp}{\textsc{Matroid-MP}\xspace}
\newcommand{\covmp}{\textsc{Coverage-MP}\xspace}
\newcommand{\gcovmp}{\textsc{Graph-Coverage-MP}\xspace}
\newcommand{\mwc}{\textsc{Graph-Multiway-Cut}\xspace}
\newcommand{\maxcut}{\textsc{Max-cut}\xspace}
\newcommand{\knote}[1]{{\color{red}[{\tiny Karthik: \bf #1}]\marginpar{\color{red}*}}}
\newcommand{\donote}[1]{{\color{blue}[{\tiny TODO: \bf #1}]\marginpar{\color{red}*}}}
\newcommand{\snote}[1]{{\color{red}[{\tiny Soham: \bf #1}]\marginpar{\color{red}*}}}
\title{Monotone Submodular Multiway Partition}
\author{
Richard Bi\thanks{UIUC, Email: \{rbi3, karthe\}@illinois.edu. Supported in part by NSF grant CCF-2402667.}
\and Karthekeyan Chandrasekaran\footnotemark[1] 
\and Soham Joshi\thanks{IIT Bombay, Email: sohamjoshi@cse.iitb.ac.in. Work done while visiting UIUC and was supported in part by NSF grant CCF-2402667.}
}
\begin{document}

\maketitle
\begin{abstract}
In submodular multiway partition (\submp), the input is a non-negative submodular function $f:2^V\rightarrow \R_{\ge 0}$ given by an evaluation oracle along with $k$ terminals $t_1, t_2, \ldots, t_k\in V$. The goal is to find a partition $V_1, V_2, \ldots, V_k$ of $V$ with $t_i\in V_i$ for every $i\in [k]$ in order to minimize $\sum_{i=1}^k f(V_i)$. In this work, we focus on \submp when the input function is monotone (termed \monosubmp). 
\monosubmp formulates partitioning problems over several interesting structures---e.g., matrices, matroids, graphs, and hypergraphs. 
\monosubmp is NP-hard since the graph multiway cut problem can be cast as a special case. We investigate the approximability of \monosubmp: we show that it admits a $4/3$-approximation and does not admit a $(10/9-\epsilon)$-approximation for every constant $\epsilon>0$. 
Next, we study a special case of \monosubmp where the monotone submodular function of interest is the coverage function of an input graph, termed \gcovmp. 
\gcovmp is equivalent to the classic multiway cut problem for the purposes of exact optimization.
We show that \gcovmp admits a $1.125$-approximation and does not admit a $(1.00074-\epsilon)$-approximation for every constant $\epsilon>0$ assuming the Unique Games Conjecture. 
These results separate \gcovmp from graph multiway cut in terms of approximability. 

\end{abstract}

\newpage
\tableofcontents
\newpage

\setcounter{page}{1}
\section{Introduction}
A set function $f:2^V\rightarrow \R$ is \emph{submodular} if $f(A)+f(B)\ge f(A\cap B) + f(A\cup B)$ for every $A, B\subseteq V$.  
We focus on the submodular multiway partition problem denoted \submp:  
The input here is a non-negative monotone submodular function $f: 2^V\rightarrow \R_{\ge 0}$ given by an evaluation oracle\footnote{The evaluation oracle for a function $f:2^V\rightarrow \R$ takes a set $S\subseteq V$ as input and returns $f(S)$.} and a collection $T=\{t_1, t_2, \ldots, t_k\}\subseteq V$ of terminals. 
The goal is to find a partition $V_1, V_2, \ldots, V_k$ of $V$ with $t_i\in V_i$ for every $i\in [k]$ in order to minimize $\sum_{i=1}^k f(V_i)$, i.e., 
\[
\min\left\{\sum_{i=1}^k f(V_i): V_1, V_2, \ldots, V_k \text{ is a partition of }V \text{ such that } t_i\in V_i\ \forall\ i\in [k] \right\}. 
\]
Throughout this work, 
we assume that the input function $f$ is non-negative, i.e., $f(A)\ge 0$ for every $A\subseteq V$ and normalized, i.e., $f(\emptyset) = 0$. 
\submp was introduced by 
Zhao, Nagamochi, and Ibaraki \cite{ZNI05} as a unified generalization of several partitioning problems---e.g., edge multiway cut in graphs/hypergraphs and node multiway cut in graphs. 
We recall that a set function $f: 2^V\rightarrow \R$  is \emph{symmetric} if $f(A) = f(V\setminus A)$ for every $A\subseteq V$ and is \emph{monotone} if $f(A)\le f(B)$ for every $A\subseteq B\subseteq V$. 
In \submp, if the input function is symmetric, then we call the problem as symmetric submodular multiway partition, denoted \symsubmp and if the input function is monotone, then we call the problem as monotone submodular multiway partition, denoted \monosubmp. 
The focus of this work is on \monosubmp. 


\paragraph{Motivations and Connections.} 
We motivate \monosubmp via three important special cases: 
(1) In Matrix Multiway Partition, denoted \matrixmp, the input is a matrix $M\in \R^{n\times m}$ along with $k$ specified rows $r_1, r_2, \ldots, r_k\in [n]$ and the goal is to partition the $n$ rows into $V_1, V_2, \ldots, V_k$ with $r_i\in V_i$ for every $i\in [k]$ in order to minimize the sum of the dimensions of the subspaces spanned by the parts, i.e., minimize $\sum_{i=1}^k \text{dimension}(V_i)$. This is a special case of \monosubmp since the dimension function is non-negative, normalized, monotone, and submodular. \matrixmp is a fundamental linear algebra problem and is likely to have applications in data analysis. 
(2) In Matroid Multiway Partition, denoted \matroidmp, the input is the  rank function $r: 2^V\rightarrow \Z_{\ge 0}$ of a matroid (given by an evaluation oracle) along with a collection $t_1, \ldots, t_k\in V$ of elements. The goal is to partition the ground set $V$ into $k$ parts $V_1, V_2, \ldots, V_k$ with $t_i\in V_i$ for every $i\in [k]$ in order to minimize $\sum_{i=1}^k r(V_i)$. \matroidmp is a special case of \monosubmp since the matroid rank function is non-negative, normalized, monotone, and submodular. We note that \matrixmp is a special case of \matroidmp. 
(3) In Coverage Multiway Partition, denoted \covmp, the input is a hypergraph $H=(V, E)$ along with $k$ terminals $t_1, t_2, \ldots, t_k \in V$ and the goal is to partition the vertex set into $k$ parts $V_1, V_2,\ldots, V_k$ in order to minimize the sum of the coverage values of the parts, i.e., minimize $\sum_{i=1}^k b(V_i)$, where $b(X)$ is the number of hyperedges in $H$ intersecting the vertices in $X$. \covmp is a special case of \monosubmp since the coverage function $b: 2^V\rightarrow \Z_{\ge 0}$ is non-negative, normalized, monotone, and submodular. We emphasize that all three special cases mentioned above are themselves fairly general and numerous other partitioning problems can be cast as special cases of these problems as well. Thus, our motivation behind studying \monosubmp is to understand whether we can get good approximations for all these problems simultaneously. 

In \covmp, if the input hypergraph is simply a graph, then we denote the resulting problem as \gcovmp. \gcovmp is an important special case of \monosubmp, so we will study this special case as well in this work. 
\gcovmp is closely related to the well-studied graph multiway cut problem (denoted \mwc). The input in both problems is a graph with edge weights $G=(V, E, w: E\rightarrow \R_+)$ and a collection of $k$ terminals $t_1, \ldots, t_k\in V$. 
The goal in both problems is to find a partition $V_1, \ldots, V_k$ of the vertex set with $t_i\in V_i$  for every $i\in [k]$ in order to minimize an objective function. 
The objective function in \gcovmp is $\sum_{i=1}^k b(S_i)$, where $b:2^V\rightarrow \R_{\ge 0}$ is the graph coverage function defined as $b(S):=\sum_{e\in E: e\cap S\neq \emptyset}w_e$ for all $S\subseteq V$. 
The objective function in \mwc is $(1/2)\sum_{i=1}^k d(S_i)$, where $d:2^V\rightarrow \R_{\ge 0}$ is the graph cut function defined as $d(S):=\sum_{e\in E: e\cap S \neq \emptyset, e\setminus S \neq \emptyset}w_e$ for all $S\subseteq V$. We recall that the graph coverage function is monotone submodular while the graph cut function is symmetric submodular. Moreover, the objective in \gcovmp is a translation of the objective in \mwc by an additive $w(E):=\sum_{e\in E}w_e$. This connection immediately implies that \gcovmp is NP-hard (since \mwc is known to be NP-hard). However,  there is no approximation preserving reduction between \mwc and \gcovmp (in either direction). This situation raises the following question: to what extent do the approximation and inapproximability techniques for \mwc extend to \gcovmp? Motivated by this question, we study the approximability of \gcovmp. We refer the reader to Figure \ref{fig:reductions} showing the relationships between various submodular partitioning problems discussed here. 

\begin{figure}[ht]
\centering
\includegraphics[width = 0.7\textwidth]{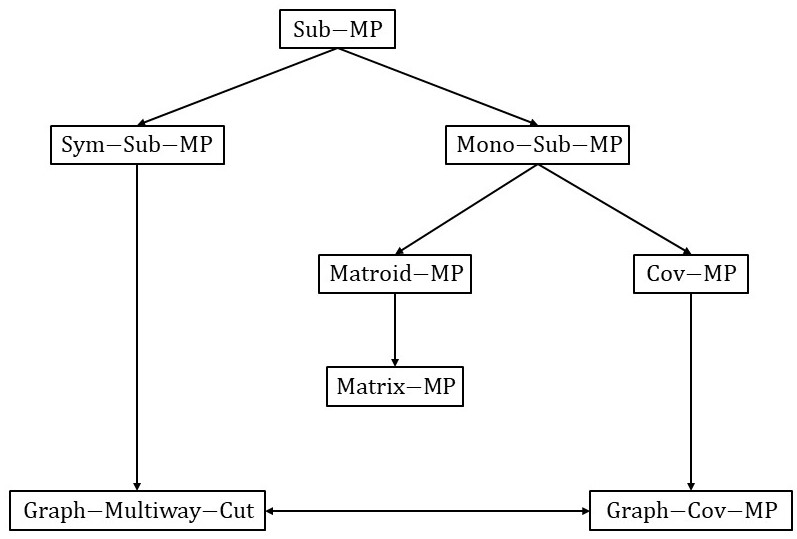}
\caption{Reductions between submodular multiway partitioning problems. Arrow from Problem A to Problem B implies that Problem B is a special case of Problem A. \mwc and \gcovmp are equivalent for the purposes of exact optimization but they do not admit approximation preserving reductions in either direction. }
\label{fig:reductions}
\end{figure}

\paragraph{Approximability.} 
\submp, \symsubmp, and \monosubmp are all NP-hard since the classic graph multiway cut problem can be cast as a special case of all these problems. 
Zhao, Nagamochi, and Ibaraki \cite{ZNI05} introduced a greedy splitting approach and showed that it implies a $(k-1)$-approximation for \submp and a $(2-2/k)$-approximation for both \symsubmp and \monosubmp. 
Chekuri and Ene \cite{CE} introduced the Lov\'{a}sz extension based convex programming relaxation for \submp. They designed rounding algorithms for this convex program that achieved a $2$-approximation for \submp and a $3/2$-approximation for \symsubmp---we note that their rounding algorithms are different for \submp and \symsubmp. 
In subsequent work, Ene, Vondr\'{a}k, and Wu \cite{EVW} improved the rounding algorithm for \submp and showed that it achieves a $(2-2/k)$-approximation. 
On the hardness side, Ene, Vondr\'{a}k, and Wu \cite{EVW} showed that for every constant $\epsilon>0$, (1) a $(2-2/k-\epsilon)$-approximation for \submp requires exponentially many function evaluation queries and (2) a $(1.268-\epsilon)$-approximation for \symsubmp requires exponentially many function evaluation queries. 

We now discuss the complexity and approximability of \monosubmp. 
As mentioned above, \gcovmp is NP-hard and consequently, \monosubmp is also NP-hard. 
As mentioned in the previous paragraph, two independent prior approaches for \monosubmp achieved a $(2-2/k)$-approximation: (i) the greedy splitting technique \cite{ZNI05} and (ii) rounding an optimum solution to the Lov\'{a}sz extension based convex program for \submp \cite{CE}. 
We observe that there is a considerably simpler and faster (linear-time) algorithm for \monosubmp that achieves a $(2-1/k)$-approximation: order the terminal set as $t_1, t_2, \ldots, t_k$ where $f(\{t_1\})\le f(\{t_2\})\le \ldots \le f(\{t_k\})$ and return the partition $V_1:=\{t_1\}$, $V_2 :=\{t_2\}$, \ldots, $V_{k-1}:=\{t_{k-1}\}$, $V_k:=(V\setminus T)\cup \{t_k\}$. 
See Proposition \ref{prop:monosubmp-easy-2-approx} for the approximation factor analysis of this algorithm. 
Given this simple and linear-time $2$-approximation and the plethora of applications that can be cast as \monosubmp, we focus on understanding the approximability of \monosubmp in this work. Another technical motivation to study \monosubmp 
arises from the convex program that is used to approximate \submp. 
If the input function is monotone in addition to being submodular, then can we design a rounding algorithm for the convex program with better approximation guarantee (i.e, better than what is possible for submodular or symmetric submodular functions)? 

As mentioned earlier, we also study \gcovmp since it is a special case of \monosubmp and is closely related to the well-studied \mwc. \mwc is a fundamental graph partitioning problem and has been studied extensively in the literature. Its rich structure enables an approximation preserving reduction to a geometric problem, namely simplex  partitioning \cite{CKR00, KKSTY04, CCT06}. In spite of this reduction to a geometric problem, pinning down the approximability of \mwc has remained a tantalizing open question with several works devoted to narrowing the gap between the upper and lower bounds 
\cite{freund-karloff, MNRS08, BNS13, SV14, AMM17, BCKM20}. 
For \mwc, the current largest inapproximation factor is $1.200016$ and the current smallest approximation factor is $1.2965$. 
Although \gcovmp is equivalent to \mwc for the purposes of exact optimization, we will later see that they are not equivalent in terms of approximability. 



\subsection{Results and Techniques}
As mentioned previously, there are three known $2$-approximations for \monosubmp---the greedy splitting approach, the convex program approach, and the simple linear-time greedy approach described above. Our first result improves the approximability of \monosubmp. 
\begin{theorem}\label{thm:mono-sub-mp-upper-bound}
    \monosubmp admits a $4/3$-approximation. 
\end{theorem}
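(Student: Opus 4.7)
The plan is to approximate \monosubmp via a convex programming relaxation combined with a carefully designed randomized rounding that exploits monotonicity. First, I would work with the Lov\'asz extension based convex program for \submp introduced by Chekuri and Ene~\cite{CE}:
\[
\min\ \sum_{i=1}^k \hat{f}(x^{(i)}) \quad \text{s.t.}\quad \sum_{i=1}^k x^{(i)} = \mathbf{1},\ x^{(i)}_{t_i} = 1\ \forall\ i \in [k],\ x^{(i)} \geq \mathbf{0}.
\]
Since $f$ is monotone submodular, $\hat{f}$ is a monotone convex function, so this program can be solved in polynomial time. Let $L$ denote its optimum; by standard arguments, $L$ lower bounds the value of every feasible integer partition.

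Next, given a fractional optimum $(x^{(1)}, \ldots, x^{(k)})$, I would design a rounding with a designated ``dump'' terminal: sample a threshold $\theta$ from a carefully chosen distribution supported on $(1/2, 1]$; for each $i$, let $V_i(\theta) := \{v : x^{(i)}_v \geq \theta\}$ (these sets are pairwise disjoint since $\sum_i x^{(i)}_v = 1$ and $\theta > 1/2$); pick a designated terminal $i^*$ (possibly at random, with probabilities weighted by a function of the fractional solution); and assign every unassigned vertex (meaning $\max_i x^{(i)}_v < \theta$) to $i^*$. The rounded partition is $V_i(\theta)$ for $i \neq i^*$ together with $V_{i^*}(\theta) \cup U(\theta)$ for $i = i^*$, where $U(\theta)$ is the set of unassigned vertices. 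Monotonicity is crucial here: it justifies dumping all unassigned vertices into one bucket without worrying about individual assignments.

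The analysis would exploit the key identity $\hat{f}(x^{(i)}) = \int_0^1 f(\{v : x^{(i)}_v \geq \theta\})\, d\theta$, which converts expectations of $f(V_i(\theta))$ over $\theta$ into integrals against the LP cost. By subadditivity, $f(V_{i^*}(\theta) \cup U(\theta)) \leq f(V_{i^*}(\theta)) + f(U(\theta))$, so the expected cost decomposes into the ``slice'' cost (bounded by scaled integrals of $\hat{f}$) plus the ``dump'' cost $\mathbb{E}[f(U(\theta))]$. Tuning the threshold distribution and the rule for selecting $i^*$ should balance these two contributions and give the $4/3$ factor.

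The main obstacle is controlling $\mathbb{E}[f(U(\theta))]$: the naive bound $f(U(\theta)) \leq f(V) \leq L$ via monotonicity and subadditivity is too lossy and would only recover roughly a $2$-approximation. A sharper bound is likely to come from one of two directions: (i) randomizing $i^*$ across the $k$ terminals so that the dump cost amortizes into the slice cost of every part, or (ii) combining the LP rounded solution with the trivial singleton-based partition from Proposition~\ref{prop:monosubmp-easy-2-approx}, using different analyses depending on whether the LP optimum is ``balanced'' across terminals or dominated by one part. I expect that carefully combining these ideas---picking the threshold distribution, the dump terminal, and possibly taking the minimum of two schemes---will yield the $4/3$ bound, and that the constant is sharp for this style of rounding.
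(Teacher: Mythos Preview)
Your proposal has a genuine gap: restricting the threshold to $(1/2,1]$ cannot yield a $4/3$-approximation, no matter how you choose the distribution on $\theta$ or the dump terminal $i^*$. Consider $k$ disjoint triples $E_j=\{t_{2j-1},t_{2j},u_j\}$ with $2k$ terminals and the coverage function $f(X)=|\{j:X\cap E_j\neq\emptyset\}|$. The optimal fractional solution sets $x^{(2j-1)}_{u_j}=x^{(2j)}_{u_j}=1/2$. For every $\theta>1/2$ the level sets are exactly the singletons $\{t_i\}$ and $U(\theta)=\{u_1,\ldots,u_k\}$; the rounded cost is $3k-1$ regardless of how you pick $\theta$ or which part absorbs $U(\theta)$, while $\opt=2k$. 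Thus any rounding supported on $(1/2,1]$ is stuck at $3/2$. Neither of your proposed fixes helps here: the instance is symmetric in the terminals (so randomizing $i^*$ gains nothing), and the simple singleton partition of Proposition~\ref{prop:monosubmp-easy-2-approx} also costs about $2\cdot\opt$ on this instance.

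The paper's key idea is precisely to enlarge the threshold range to $[1/4,1]$. For $\theta<1/2$ the level sets $A(i,\theta)$ may overlap, but monotonicity lets you remove overlaps arbitrarily without increasing cost---this is where monotonicity is used beyond just bounding the dump part. The choice of $1/4$ is exactly what is needed: the paper proves a truncated-mass inequality (for any probability vector $p$ and any $\delta\ge 1/4$, $\sum_i\min\{\delta,p_i\}\ge 1-\max_i p_i$) which fails for $\delta<1/4$, and uses it to show that the ``lost'' portion $\sum_i\int_0^{1/4}f(A(i,\theta))\,d\theta$ already dominates $\int_0^1 f(U(\theta))\,d\theta$. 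This charging argument is what converts the $3/4$ of the integral you do use into a $4/3$ factor. Your outline correctly identifies controlling $\mathbb{E}[f(U(\theta))]$ as the crux, but the mechanism that actually achieves it is lowering the threshold floor, not redistributing the dump cost.
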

This result is based on rounding an optimum solution to the Lov\'{a}sz extension based convex program for \submp introduced by Chekuri and Ene \cite{CE}. 
We use a threshold rounding algorithm similar to Chekuri and Ene. However, the interval for the threshold that we use is different from theirs. For \submp, Chekuri and Ene pick a uniform random threshold from the interval $[1/2,1]$ in order to ensure that sets obtained by rounding do not overlap and hence, give a partition. For \symsubmp, they pick a uniform random threshold from the interval $[0,1]$ which leads to overlapping sets, but these sets can be uncrossed to obtain a partition without increasing the objective value owing to the posimodularity property of symmetric submodular functions. For \monosubmp, we pick a uniform random threshold from the interval $[1/4,1]$. This again leads to overlapping sets, but we uncross these sets arbitrarily without increasing the objective by exploiting the monotonicity of the function. We emphasize that our approximation factor analysis differs considerably from that of Chekuri and Ene as well as the subsequent work of Ene, Vondr\'{a}k, and Wu \cite{EVW} (although we use some of their ingredients). Our analysis differs since our threshold interval is different and we need to exploit monotonicity to bound the approximation factor. Moreover, the  approximation factor that we achieve for \monosubmp is $4/3$ which is smaller than the known approximation factor for \submp and \symsubmp. 
We show that the approximation factor analysis of our algorithm is tight via an example. Hence, improving on the approximability requires new ideas. 

Our next result shows a lower bound on the approximability assuming that the algorithm makes polynomial number of function evaluation queries. 

\begin{restatable}{theorem}{thmMonoSubMPLowerBound}\label{thm:mono-sub-mp-lower-bound}
For every constant $\epsilon > 0$, every algorithm for \monosubmp that achieves an approximation factor of $(\frac{10}{9} - \epsilon)$ requires $2^{\Omega(n)}$ function evaluation queries, where $n$ is the size of the ground set.

\end{restatable}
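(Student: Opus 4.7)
The plan is to follow the symmetry gap framework of Vondr\'{a}k, adapted to the monotone submodular setting along the lines of the $(1.268-\epsilon)$ lower bound of Ene, Vondr\'{a}k, and Wu \cite{EVW} for \symsubmp. The idea is to exhibit a constant-size hard gadget with a $10/9$ gap between its best integer partition value and the best fractional value among symmetric solutions, and then blow it up using random permutations inside equal-sized clusters to obtain a pair of monotone submodular functions on the $n$-element ground set that any subexponential-query algorithm cannot distinguish.

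The first and most delicate step is to construct a constant-size instance $(f_0, V_0, T_0)$ with terminals $T_0=\{t_1,\ldots,t_k\}$ for some small constant $k$, together with a symmetry group $G\le\mathrm{Sym}(V_0)$ fixing each terminal, such that $f_0$ is nonnegative, monotone, and submodular, and the symmetry gap
\[
\gamma \;=\; \frac{\min\left\{\sum_{i=1}^k \hat{f}_0(\mathbf{x}^i) : G\text{-symmetric fractional partition } \mathbf{x}^1,\ldots,\mathbf{x}^k\right\}}{\min\left\{\sum_{i=1}^k f_0(V_i) : \text{integer partition with } t_i\in V_i\right\}}
\]
equals $10/9$, where $\hat{f}_0$ denotes the Lov\'{a}sz extension. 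A natural ansatz is to take $f_0$ as a weighted sum of truncated-cardinality or coverage-type functions over ``bundles'' of non-terminal elements, each bundle tied to a single terminal (or to an orbit of terminals under $G$), with coefficients tuned so that the integer optimum assigns each bundle to exactly one terminal while any $G$-symmetric fractional solution is forced to split bundles uniformly, yielding the $10/9$ ratio. Monotonicity and submodularity then follow automatically from concavity of the aggregators and monotonicity of the building blocks.

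Given such a gadget, I would apply the standard Vondr\'{a}k blow-up: replace each $v\in V_0$ by a cluster $C_v$ of $N=n/|V_0|$ copies, draw a uniformly random permutation $\pi_v$ of $C_v$ for each cluster, and define the ``hard'' function $f_+$ via the multilinear extension of $f_0$ on the vector of cluster-fractions induced by $\pi_v$, together with the symmetrized ``easy'' function $f_-$ obtained by averaging over all cluster permutations. Both $f_+$ and $f_-$ are monotone and submodular because $f_0$ is. They witness integer multiway partition values of at most $9N$ (via the planted asymmetric partition, lifted from the integer optimum of $f_0$) and at least $10N(1-o(1))$ (for any $G$-symmetric rounding of a fractional solution), respectively. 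The indistinguishability argument then proceeds as in \cite{EVW}: a fixed query $S\subseteq V$ has near-uniform intersection with every cluster except with probability $2^{-\Omega(n)}$, so $f_+(S)=f_-(S)$ on all such sets, and a union bound over $2^{o(n)}$ queries yields the claim.

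The principal obstacle is the gadget construction itself: extracting exactly a $10/9$ symmetry gap from a \emph{monotone} submodular function is subtle because monotonicity forbids the asymmetric ``penalty'' tricks that drive larger gaps in the symmetric submodular setting (which is precisely why the best achievable lower bound here is only $10/9$, as opposed to the $1.268$ obtained for \symsubmp in \cite{EVW}). A secondary obstacle is verifying that the lower bound of $10$ on the symmetric fractional value is tight over the full symmetrized partition polytope, not merely at the uniform split; I expect this requires a short convex-duality argument that exploits the $G$-symmetry of $\hat{f}_0$ to reduce the infinite family of symmetric fractional partitions to a low-dimensional optimization that can be solved in closed form.
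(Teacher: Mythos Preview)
Your overall framework is right and matches the paper: one exhibits a monotone submodular instance with a symmetry gap of $10/9$ and then invokes the Vondr\'{a}k/Ene--Vondr\'{a}k--Wu machinery to convert this into an oracle lower bound. However, the proposal has two substantive gaps relative to what the paper actually does.

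First, your expectation of a single \emph{constant-size} gadget with symmetry gap exactly $10/9$ is off. The paper constructs a \emph{family} of instances indexed by $k$, with ground set $V=[k]\times[k]$, terminals on the diagonal $(i,i)$, and symmetry group the order-$2$ transpose action $(i,j)\mapsto(j,i)$. The function is of ``row-column type'': $f(S)=\sum_i\bigl(g(S\cap R_i)+g(S\cap C_i)\bigr)$, where $g$ applies one concave cap $\phi_t(a)=\min\{\tfrac{3}{8}ka,\ \tfrac{3}{8}k+a-1,\ \tfrac{7}{8}k\}$ to rows/columns containing their own terminal and another cap $\phi_n(a)=\min\{a,\tfrac{7}{8}k\}$ otherwise. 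The asymmetric optimum (assign row $R_i$ to part $i$) costs $(9k^2-4k)/4$, while every symmetric partition costs at least $(10k^2-14k-1)/4$, so the gap is only $10/9$ \emph{in the limit} $k\to\infty$. Your ansatz of bundles tied to single terminals does not directly produce this; the two-index structure and the interaction of $\phi_t$ with $\phi_n$ are essential, and coming up with this gadget is the paper's main contribution.

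Second, your expectation that the symmetric lower bound is ``a short convex-duality argument'' significantly underestimates the work. The paper's proof is purely combinatorial and rather long: it shows that any symmetric multiway partition can be transformed, losing at most an additive $2k$, into a highly structured one (there exists $i^*$ with $\overrightarrow{R}_i\subseteq Q_i$ for $i\le i^*$ and $\overrightarrow{R}_i\subseteq Q_1$ for $i>i^*$). Reaching this canonical form requires a sequence of carefully justified moves---pushing ``unhappy'' off-diagonal elements into $Q_1$, a family of $\mathrm{swap}_i$ relabelings of adjacent indices, and monotone-ordering of the $|R_i\cap P_i|$---each shown not to increase the objective via concavity of $\phi_t$ and the specific thresholds $\tfrac{3}{8}k,\tfrac{7}{8}k$. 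Only then is the structured form analyzed by a three-case computation in $i^*$. None of this is captured by symmetry-plus-convexity alone, and it is where the bulk of the technical effort lies.
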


This lower bound result on the inapproximability of \monosubmp is based on the  \emph{symmetry gap} machinery introduced by Vondr\'{a}k \cite{Vondrak}. 
Vondr\'{a}k introduced the symmetry gap machinery to show 
inapproximability results 
in the valuation oracle input model (i.e., assuming that the algorithm makes only polynomial number of function evaluation queries) 
for constrained submodular \emph{maximization} problems. Subsequently, Ene, Vondr\'{a}k, and Wu \cite{EVW} adapted this machinery to show 
inapproximability results in the valuation oracle model for 
\submp and \symsubmp, which are \emph{minimization} problems related to submodular objectives. 
Informally, the symmetry gap of a function relative to an optimization problem is the largest ratio between the objective values of an optimum symmetric solution (under an appropriately chosen notion of symmetry) and an optimum solution to the problem. 
We adapt Vondr\'{a}k's \cite{Vondrak} and Ene, Vondr\'{a}k, and Wu's \cite{EVW} machinery in a straightforward way to conclude that the symmetry gap of \monosubmp is a lower bound on its inapproximability factor. 
Given this status, the 
technical challenge in using 
the symmetry gap machinery for inapproximability involves 
constructing instances of the problem which have large symmetry gap. This is our key contribution: we construct an instance of \monosubmp that has symmetry gap at least $10/9$. We note that our analysis of the symmetry gap of our instance is tight. Thus, improving the inapproximability factor beyond $10/9$ via the symmetry gap machinery requires fundamentally new types of instances. 
We mention that the symmetry gap machinery allows query complexity based inapproximability results to be converted to computation based inapproximability results assuming NP $\neq$ RP \cite{DV12}. 


Next, we turn to \gcovmp which is a special case of \monosubmp. 
Its close relationship to \mwc inspires our results. We note that NP-hardness of \mwc implies the NP-hardness of  \gcovmp. 
It is easy to show that if $\mwc$ admits an $\alpha$-approximation, then $\gcovmp$ admits a $(1+\alpha)/2$-approximation (see Proposition \ref{prop:mwc-to-gcovmp-approximation}). Thus, the current best $1.2965$-approximation for \mwc implies a $1.14825$-approximation for \gcovmp. We improve the approximability of \gcovmp beyond this trivial guarantee. 
\begin{restatable}{theorem}{thmGcovMPApprox}\label{thm:gcovmp-approximation}
\gcovmp admits a $1.125$-approximation. 
\end{restatable}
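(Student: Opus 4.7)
The plan rests on the identity
\[
\sum_{i=1}^k b(V_i) \;=\; w(E) \;+\; \tfrac{1}{2}\sum_{i=1}^k d(V_i),
\]
valid for every partition $V_1,\ldots,V_k$ of $V$ (obtained by counting each edge's contribution to the coverage and cut functions). Consequently, $\OPT_{\text{gcov}} = w(E) + \OPT_{\text{mwc}}$, which yields the universal lower bound $\OPT_{\text{gcov}} \ge w(E)$. This additive $w(E)$ offset, which has no analog in \mwc, is the main leverage driving the improvement beyond $(1{+}\alpha)/2$.

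First I would attempt the natural combination approach: return the better of two partitions. Partition~(A) is obtained by running a state-of-the-art $\alpha$-approximation for \mwc (e.g., Sharma--Vondr\'ak with $\alpha \le 1.2965$), giving \gcovmp cost at most $w(E) + \alpha\,\OPT_{\text{mwc}}$. Partition~(B) is a trivial one in which all non-terminals are attached to a single terminal and the remaining terminals are singletons; since its cut has weight at most $w(E)$, its \gcovmp cost is at most $2w(E)$. Parameterizing by $r = \OPT_{\text{mwc}}/w(E) \in [0,1]$, the two candidate ratios are $(1+\alpha r)/(1+r)$ and $2/(1+r)$; they cross at $r^\star = 1/\alpha$, giving overall ratio $\tfrac{2\alpha}{\alpha+1}$.

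Since $\tfrac{2 \cdot 1.2965}{2.2965} \approx 1.129 > 9/8$, this blackbox combination alone is not tight enough. To bridge the gap I would forgo the \mwc blackbox and work directly with the Chekuri--Ene Lov\'asz-extension convex program from \cite{CE}, whose value on a \gcovmp instance equals $w(E) + \mathrm{LP}_{\text{mwc}}$. The plan is to design a threshold rounding, tailored to the coverage function, whose per-edge cut probability satisfies an \emph{affine} bound
\[
\Pr[uv \text{ is cut}] \;\le\; \tfrac{1}{8} \;+\; \tfrac{9}{8}\cdot\tfrac{1}{2}\|y_u - y_v\|_1.
\]
Summing with weights $w_{uv}$ then yields expected \gcovmp cost at most $\tfrac{9}{8}\bigl(w(E) + \mathrm{LP}_{\text{mwc}}\bigr) \le \tfrac{9}{8}\,\OPT_{\text{gcov}}$, which is exactly the $1.125$-approximation.

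The main obstacle is establishing this affine per-edge bound for a concrete rounding. Standard CKR-style threshold rounding yields a \emph{linear} bound $c\cdot d_{uv}$ with $d_{uv} := \tfrac{1}{2}\|y_u - y_v\|_1$ that is sharp for small $d_{uv}$ but too large once $d_{uv}$ grows, whereas a ``collapse to a single terminal'' rounding has cut probability close to $1$ that is acceptable only near $d_{uv} = 1$. I would combine these two procedures via a probabilistic mixture with carefully tuned weights and verify the affine bound via a case split on $d_{uv}$; the intermediate regime of moderate $d_{uv}$, where neither component alone suffices, is where the analysis is most delicate and where the specific structure of the coverage function (monotonicity plus the modular representation through edge weights) must be exploited.
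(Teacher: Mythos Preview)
Your structural observations are correct: the additive $w(E)$ offset is the key, and aiming for a per-edge affine bound $\Pr[\text{cut}] \le \tfrac{1}{8} + \tfrac{9}{8}\,d_{uv}$ is exactly the right target (it is the tangent to the tight ratio curve at $d=1/3$). However, the concrete rounding you propose cannot reach that line. If you mix CKR-style rounding (which gives the linear bound $\Pr[\text{cut}] \le \tfrac{3}{2}d$) with any ``collapse'' rounding whose cut probability is at most a constant $c\le 1$, the mixture with weight $p$ on CKR yields $\Pr[\text{cut}] \le (1-p)c + \tfrac{3p}{2}d$. Matching the target line forces $\tfrac{3p}{2} \le \tfrac{9}{8}$, i.e.\ $p \le \tfrac{3}{4}$, while the intercept constraint $(1-p)c \le \tfrac{1}{8}$ forces $p \ge 1 - \tfrac{1}{8c} \ge \tfrac{7}{8}$ (and no collapse scheme gives a uniform per-edge constant below $1$ for large $k$). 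These are incompatible, so no convex combination of a slope-$\tfrac{3}{2}$ linear bound and a constant bound lies under the line $\tfrac{1}{8}+\tfrac{9}{8}d$ on all of $[0,1]$.

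The paper avoids this obstruction by using the exponential-clocks rounding of Buchbinder--Naor--Schwartz, which yields the \emph{concave} per-edge bound $\Pr[\text{cut}] \le \tfrac{2d}{1+d}$. A one-line check shows $\tfrac{2d}{1+d} \le \tfrac{1}{8} + \tfrac{9}{8}d$ for all $d\in[0,1]$ (equivalently $(1-3d)^2 \ge 0$), with equality at $d=1/3$; so this single rounding already sits under your target line. The paper then phrases the conclusion equivalently as a per-edge ratio: the expected coverage cost of an edge is at most $w_e\,\tfrac{1+3d}{1+d}$, while its LP contribution is $w_e(1+d)$, and $\max_{d\in[0,1]} \tfrac{1+3d}{(1+d)^2} = \tfrac{9}{8}$ at $d=\tfrac{1}{3}$. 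In short, the missing ingredient in your plan is a rounding with a genuinely concave cut-probability profile; mixing two linear/constant profiles cannot manufacture the needed curvature.
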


Our algorithm for \gcovmp is based on rounding an optimum solution to a linear programming relaxation. 
The linear programming relaxation associated with \gcovmp has a different objective compared to that of \mwc, but the constraints are the same. Buchbinder, Naor, and Schwartz \cite{BNS13} designed the exponential clocks rounding scheme for the LP-relaxation of \mwc and showed that it achieves an approximation factor of $1.32388$. The analysis was subsequently improved by Sharma and Vondr\'{a}k \cite{SV14} to show that it achieves a $1.309017$-approximation for \mwc. Our main contribution is analyzing the approximation factor of the same algorithm for the objective of \gcovmp and bounding it by $1.125$. 

Next, we turn to the hardness of \gcovmp. It is known that \mwc is APX-hard via reduction from the max-cut problem \cite{cmplx-mwc}. We adapt the same reduction to conclude the following result. 

\begin{restatable}{theorem}{thmGCovMPInapprox}\label{thm:gcovmp-inapproximability}
For 
every constant $\epsilon>0$, \gcovmp does not admit a $(1.00036-\epsilon)$-approximation assuming $P\neq NP$ and does not admit a $(1.00074-\epsilon)$-approximation assuming the unique games conjecture. 
\end{restatable}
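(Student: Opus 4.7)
The plan is to adapt the standard reduction from max-cut to \mwc of \cite{cmplx-mwc} and analyze the same reduction in the \gcovmp setting. The bridge is the identity derived earlier: on any graph $G' = (V', E')$ with edge weights $w$, the \gcovmp objective of a $k$-partition equals its \mwc objective plus the fixed constant $w(E') = \sum_{e \in E'} w_e$. Hence any reduction that produces an \mwc instance simultaneously produces a \gcovmp instance on the same graph, and the two optima differ only by this additive shift.

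First, I would take a max-cut instance on a graph $G = (V, E)$ with $m := |E|$ edges (restricting, if needed, to bounded-degree graphs for which sharpest max-cut hardness constants are known) and pass it through the reduction of \cite{cmplx-mwc}. This produces a graph $G'$ with three terminals $t_1, t_2, t_3$ for which the optimum \mwc value is an affine function of the max-cut value, say $\mathrm{OPT}_{\mwc}(G') = A - B \cdot \mathrm{OPT}_{\maxcut}(G)$, with explicit constants $A, B > 0$ determined by the reduction and by $|V|, |E|$. Interpreting $G'$ directly as a \gcovmp instance (same graph, same three terminals) gives
\[
\mathrm{OPT}_{\gcovmp}(G') \;=\; A + w(E') - B \cdot \mathrm{OPT}_{\maxcut}(G).
\]

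Next, I would push known max-cut hardness gaps through this reduction. It is NP-hard to distinguish $\mathrm{OPT}_{\maxcut}(G) \geq cm$ from $\mathrm{OPT}_{\maxcut}(G) \leq sm$ with $s/c = 16/17$, and under the UGC this ratio improves to $s/c = \alpha_{GW} \approx 0.87856$. Under the reduction, the YES (large max-cut) case has \gcovmp optimum at most $A + w(E') - Bcm$, while the NO case has it at least $A + w(E') - Bsm$, yielding an inapproximability factor of
\[
\frac{A + w(E') - B s m}{A + w(E') - B c m} \;=\; 1 + \frac{B(c-s) m}{A + w(E') - B c m}.
\]

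The main obstacle is the arithmetic bookkeeping: one must chase through the reduction of \cite{cmplx-mwc} to pin down the explicit values of $A$, $B$, and $w(E')$ in terms of the parameters of $G$, choose the max-cut instance family that maximizes this ratio, and verify that the ratio is at least $1.00036 - \epsilon$ under $P \neq NP$ and at least $1.00074 - \epsilon$ under the UGC. These constants are so close to $1$ because the background term $A + w(E')$ in both numerator and denominator dominates the gap-contributing term $B(c-s)m$; this dilution of the max-cut hardness gap by the additive shift $w(E')$ is intrinsic to the reduction and is precisely why \gcovmp is much less inapproximable than \mwc despite the two problems being equivalent for exact optimization.
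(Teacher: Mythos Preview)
Your proposal is correct and follows essentially the same route as the paper: both use the gadget reduction of \cite{cmplx-mwc} from \maxcut to $3$-terminal \mwc and then exploit the additive shift $w(E')$ between the \mwc and \gcovmp objectives. The only minor difference is that the paper phrases the reduction as approximation-preserving rather than gap-preserving---it shows that an $\alpha$-approximation for \gcovmp yields a $(164-163\alpha)$-approximation for \maxcut by using the universal bound $\mathrm{OPT}_{\maxcut}\ge m/2$ (so that $\mathrm{OPT}_{\gcovmp}=82m-\mathrm{OPT}_{\maxcut}\le 163\cdot\mathrm{OPT}_{\maxcut}$), which lets one read off the constants directly from the \maxcut thresholds $16/17$ and $\alpha_{GW}$ without needing the specific completeness/soundness pair $(c,s)$ of the hard \maxcut instances.
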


Our lower and upper bounds for \gcovmp shows a separation of this problem from \mwc in terms of approximability although they are equivalent in terms of exact optimization. 
We conclude the section with a table of results showing the current best upper and lower bounds on the approximability of submodular partitioning problems that are of interest to this work (see Table \ref{table:results}). 

\begin{table}[ht]
\centering
\begin{tabular}{|l | l | l |} 
 \hline
 \textbf{Problem} & \textbf{Lower Bound} & \textbf{Upper Bound}  \\ [0.5ex] 
 \hline
 \submp & 2 \cite{EVW} & 2 \cite{CE} \\ 
 \hline 
 \symsubmp & 1.268 \cite{EVW} & 1.5 \cite{CE} \\
 \hline
 \monosubmp & 1.1111 (Thm \ref{thm:mono-sub-mp-lower-bound}) & 1.33333 (Thm \ref{thm:mono-sub-mp-upper-bound}) \\
 \hline
 \gcovmp & 1.00074 (Thm \ref{thm:gcovmp-inapproximability}) & 1.125 (Thm \ref{thm:gcovmp-approximation}) \\
 \hline 
 \mwc & 1.20016 \cite{MNRS08, BCKM20} & 1.2965 \cite{SV14} \\
 \hline 
\end{tabular}
\caption{Upper bound and lower bound on the approximability of submodular partitioning problems. 
We have included known results for \submp, \symsubmp, and \mwc for the sake of comparison. The lower bounds for \submp, \symsubmp, and \monosubmp are assuming polynomial number of function evaluations. The lower bounds for \gcovmp and \mwc are assuming the unique games conjecture.}
\label{table:results}
\end{table}

\section{Monotone Submodular Multiway Partition}
\label{sec:mono-sub-mp}
We recall that \monosubmp admits a $2$-approximation via the greedy splitting approach \cite{ZNI05} and via the convex relaxation approach \cite{CE}. There is a much simpler and faster $2$-approximation than either of these two approaches (see Proposition \ref{prop:monosubmp-easy-2-approx} below which leads to a linear-time algorithm). We mention this algorithm for the sake of completeness and to provide context. We subsequently improve the approximability to $4/3$ in Section \ref{sec:monosubmp-approx} (where we prove Theorem \ref{thm:mono-sub-mp-upper-bound}). We address the inapproximability in Section \ref{sec:monosubmp-inapprox} (where we prove Theorem \ref{thm:mono-sub-mp-lower-bound}). 

\begin{proposition}\label{prop:monosubmp-easy-2-approx}
Let $f:2^V\rightarrow \R_{\ge 0}$ be a monotone submodular function and $T:=\{t_1, t_2, \ldots, t_k\}\subseteq V$. Let $\opt$ be defined as
\[
\min\left\{\sum_{i=1}^k f(V_i): V_1, V_2, \ldots, V_k \text{ is a partition of } V \text{ such that } t_i \in V_i\ \forall\ i\in [k]\right\}. 
\]
Let $t_1, t_2, \ldots, t_k$ be an ordering of the terminals such that $f(\{t_1\})\le f(\{t_2\})\le \ldots \le f(\{t_k\})$. Then, 
\[
f\left(\left(V\setminus T\right)\cup\{t_k\}\right) + \sum_{i=1}^{k-1}f\left(\{t_i\}\right) 
\le \left(2-\frac{1}{k}\right)\opt. 
\]
\end{proposition}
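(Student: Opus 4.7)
The plan is to upper bound the algorithm's objective $\text{ALG} := f((V\setminus T)\cup\{t_k\}) + \sum_{i=1}^{k-1} f(\{t_i\})$ by two separate terms, each bounded by a constant times $\opt$. Let $V_1^*, V_2^*, \ldots, V_k^*$ be an optimum partition with $t_i \in V_i^*$ for every $i\in[k]$, so that $\opt = \sum_{i=1}^k f(V_i^*)$.

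\textbf{Step 1 (bounding the first term).} I would first show that $f((V\setminus T)\cup\{t_k\}) \le \opt$. By monotonicity, $f((V\setminus T)\cup\{t_k\}) \le f(V)$. Since $f$ is non-negative, normalized, and submodular, it is subadditive, giving
\[
f(V) = f\!\left(\bigcup_{i=1}^k V_i^*\right) \le \sum_{i=1}^k f(V_i^*) = \opt.
\]

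\textbf{Step 2 (bounding the remaining terms).} Next, by monotonicity applied to $\{t_i\}\subseteq V_i^*$, we get $f(\{t_i\}) \le f(V_i^*)$ for every $i \in [k]$, hence $\sum_{i=1}^k f(\{t_i\}) \le \opt$. To exploit the ordering $f(\{t_1\}) \le \ldots \le f(\{t_k\})$, observe that $f(\{t_k\}) = \max_i f(\{t_i\}) \ge \tfrac{1}{k}\sum_{i=1}^k f(\{t_i\})$, and therefore
\[
\sum_{i=1}^{k-1} f(\{t_i\}) \;=\; \sum_{i=1}^{k} f(\{t_i\}) - f(\{t_k\}) \;\le\; \left(1 - \tfrac{1}{k}\right)\sum_{i=1}^k f(\{t_i\}) \;\le\; \tfrac{k-1}{k}\,\opt.
\]

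\textbf{Step 3 (combining).} Adding the two bounds yields $\text{ALG} \le \opt + \tfrac{k-1}{k}\opt = (2 - \tfrac{1}{k})\opt$, which is the desired inequality. There is no real obstacle here: the only subtlety is remembering that submodularity together with $f(\emptyset)=0$ and non-negativity implies subadditivity, so Step 1 goes through without assuming anything beyond the hypotheses. The algorithmic content is entirely in choosing the largest-singleton terminal to absorb the non-terminal vertices, and the clean $(2-1/k)$ factor falls out of the ``drop the maximum'' averaging trick in Step 2.
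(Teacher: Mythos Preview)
Your proof is correct and follows essentially the same approach as the paper: both bound $f((V\setminus T)\cup\{t_k\})$ by $f(V)\le\opt$ via monotonicity and subadditivity, and bound $\sum_{i=1}^{k-1}f(\{t_i\})$ by $(1-1/k)\sum_{i=1}^k f(\{t_i\})\le (1-1/k)\opt$ via the ordering and monotonicity. The paper's presentation routes the two pieces through $\max\{f(V),\sum_i f(\{t_i\})\}$ before comparing to $\opt$, but this is only a cosmetic difference.
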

\begin{proof}
    We have that 
    \begin{align}
        &f\left(\left(V\setminus T\right)\cup \{t_k\}\right) +\sum_{i=1}^{k-1}f\left(\{t_i\}\right)\notag\\
        &\quad \quad \le f(V) + (1-1/k)\sum_{i=1}^k f(\{t_i\})\notag\\
        &\quad \quad \le \left(2-\frac{1}{k}\right)\max\left\{f(V), \sum_{i=1}^k f\left(\{t_i\}\right)\right\}. \label{ineq:algo-upper-bound}
    \end{align} 
    Next, we lower bound the objective value of an optimum partition. 
    Let $V_1^*$, $V_2^*$, $\ldots$, $V_k^*$ be an optimum partition with $t_i\in V_i^*$ for every $i\in [k]$. Then, (1) $\sum_{i=1}^k f(V_i^*)\ge f(V)$ by non-negativity and submodularity of $f$ and (2) $\sum_{i=1}^k f(V_i^*)\ge \sum_{i=1}^k f(\{t_i\})$ since $f(V_i^*)\ge f(\{t_i\})$ by monotonicity of $f$ and the fact that $t_i\in V_i^*$ for every $i\in [k]$. Consequently, 
    \begin{align}
        \opt&\ge \max\left\{f(V), \sum_{i=1}^k f\left(\{t_i\}\right)\right\}. \label{ineq:opt-lower-bound}
    \end{align} 
    The proposition follows from inequalities \eqref{ineq:algo-upper-bound} and \eqref{ineq:opt-lower-bound}. 
\end{proof}
\begin{remark}
    We note that the factor of $(2-1/k)$ in Proposition \ref{prop:monosubmp-easy-2-approx} is tight: let $f$ be the rank function of the partition matroid on a $k$-partition $S_1, S_2, \ldots, S_k$ of the ground set where $|S_i|\ge 2$ for every $i\in [k]$. Suppose that the set $T=\{t_1,t_2, \ldots, t_k\}$ of terminals is such that $t_i\in S_i$ for every $i\in [k]$. Then, the LHS of the lemma is 
    is $2k-1$, whereas the partition $S_1, \ldots, S_k$ has objective value $k$.
\end{remark}

\subsection{Approximation Algorithm}\label{sec:monosubmp-approx}
In this section, we describe our algorithm for \monosubmp and analyze its approximation factor. Let $f: 2^V\rightarrow \R_{\ge 0}$ be the input monotone submodular function with terminals $t_1, \ldots, t_k\in V$. 
For $\vx \in [0, 1]^V$ and $\theta \in [0, 1]$, we define $\vx^{\theta}\in \{0,1\}^V$ as 
\[
(\vx^\theta)_v := 
\begin{cases} 
1 & \text{ if } \vx_v \geq \theta, \\ 
0 & \text{ otherwise. }  
\end{cases}
\]
We also define $f\left(\vx^\theta\right) := f\left(\{v\in V \colon (\vx^\theta)_v = 1\}\right)$. We define the Lov\'{a}sz extension $\hat{f}:[0,1]^V\rightarrow \R$ of $f$ as follows:  For all $\vx \in [0, 1]^V$, $\hat{f}(\vx) := \E_{\theta \in [0, 1]}\left[f(\vx^\theta)\right] = \int_0^1 f\left(\vx^\theta\right) d\theta$. We define $\vx_i = \left(x(v_1, i), \ldots, x(v_n, i)\right)$ and consider the convex program 
below:
\begin{align}
    \text{minimize } \sum_{i = 1}^k \hat{f}\left(\vx_i\right) & \tag{Sub-MP-Rel} \label{CP:submp} \\
    \sum_{i = 1}^k x(v, i) = 1 & \quad \forall v \in V \notag\\
    x(t_i, i) = 1 & \quad \forall i \in [k] \notag\\
    x(v, i) \geq 0 & \quad \forall v \in V, i \in [k]. \notag
\end{align}
The convex program is identical to the convex programming relaxation of \submp 
introduced by Chekuri and Ene and it can be solved in polynomial time \cite{CE}. 
We present a rounding algorithm for \monosubmp based on this relaxation. Our rounding algorithm is similar to the $\theta$-rounding algorithms of Chekuri and Ene for \submp and \symsubmp \cite{CE}, but the interval from which we choose $\theta$ is different. 
Our choice for the interval of $\theta$ is designed to exploit monotonicity in order to achieve a better approximation factor. 
Below, we present the rounding algorithm for \monosubmp:
\begin{algorithm}[H] 
\caption{Mono-Sub-MP Rounding}\label{alg:mono}
\begin{algorithmic}
    \State Let $\vx$ be an optimal solution to \ref{CP:submp} 
    \State Choose $\theta \in [\frac{1}{4}, 1]$ uniformly at random
    \For{$i \in [k]$}
        \State $A(i, \theta) \gets \{v \in V \colon x(v, i) \geq \theta \}$
    \EndFor
    \State $U(\theta) \gets V - \bigcup_{i = 1}^k A(i, \theta)$
    \For{$v \in V$}
        \State $I_v \gets \{i \in [k] \colon v \in A(i, \theta) \}$
        \If{$|I_v| > 1$}
            \State delete $v$ from all but one of the sets $\{A(i, \theta)\}_{i \in I_v}$
        \EndIf
    \EndFor
    \State \Return $\left(A_1, \ldots, A_{k - 1}, A_k \right) = \left(A(1, \theta), \ldots, A(k - 1, \theta), A(k, \theta) \cup U(\theta) \right)$
\end{algorithmic}
\end{algorithm}

We observe that the rounding algorithm guarantees that each $v \in V$ is assigned to exactly one of the returned sets. Additionally, for each $i \in [k]$, we have that $t_i \in A(i, \theta)$ since $\vx(t_i,i)=1$. Hence, the algorithm returns a feasible multiway partition of $V$. Moreover, the algorithm can be implemented to run in polynomial time. 

We now bound the approximation factor of Algorithm \ref{alg:mono} relative to the optimum value of \ref{CP:submp}. 

\begin{restatable}{theorem}{thmMonoSubMPRelCP} \label{thm:mono-sumbp-4/3}
 Let $\opt_{frac}$ be the optimum value of \ref{CP:submp} and $A_1, \ldots, A_k$ be the solution returned by Algorithm \ref{alg:mono}. Then, 
\[
    \E\left[\sum_{i=1}^k f(A_i)\right]\le \frac{4}{3}\opt_{frac}. 
\]

\end{restatable}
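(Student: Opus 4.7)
\emph{Proof plan.} The plan is to bound $\mathbb{E}\bigl[\sum_{i=1}^k f(A_i)\bigr]$ against $\tfrac{4}{3}\opt_{frac}$ by reducing the task to a single key inequality between values of the Lov\'asz extension $\hat f$. I would first use monotonicity to handle the uncrossing step: since the returned sets satisfy $A_i \subseteq A(i,\theta)$ for $i<k$ and $A_k \subseteq A(k,\theta)\cup U(\theta)$, monotonicity gives $\sum_i f(A_i) \le \sum_{i<k} f(A(i,\theta)) + f(A(k,\theta)\cup U(\theta))$. Combining this with subadditivity applied to the disjoint sets $A(k,\theta)$ and $U(\theta)$ yields
\[
\sum_{i=1}^k f(A_i) \;\le\; \sum_{i=1}^k f(A(i,\theta)) + f(U(\theta)).
\]

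Next, I would take expectation over $\theta$ uniform in $[1/4,1]$ and apply the identity $\sum_i \hat f(\vx_i)=\opt_{frac}$. The expectation of the first term rewrites as $\tfrac{4}{3}\bigl(\opt_{frac} - \int_0^{1/4}\sum_i f(A(i,\theta))\,d\theta\bigr)$, so it suffices to prove the key inequality
\[
\int_{1/4}^{1} f(U(\theta))\,d\theta \;\le\; \int_0^{1/4}\sum_{i=1}^k f(A(i,\theta))\,d\theta, \qquad (\star)
\]
after which $\mathbb{E}\bigl[\sum_i f(A_i)\bigr] \le \tfrac{4}{3}\opt_{frac}$ follows by rearranging.

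To prove $(\star)$, I would rewrite both sides as values of $\hat f$. For the left-hand side, define $\vy \in [0,1]^V$ by $y(v) := 1 - \max_i x(v,i)$; then $v \in U(\theta)$ iff $y(v) > 1-\theta$, and the substitution $s = 1-\theta$ yields $\int_{1/4}^1 f(U(\theta))\,d\theta = \int_0^{3/4} f(\vy^s)\,ds \le \hat f(\vy)$. For the right-hand side, set $\mathbf{u}_i := \vx_i \wedge (\tfrac14\mathbf{1})$ coordinate-wise; since $\mathbf{u}_i^s = A(i,s)$ for $s \in [0,\tfrac14]$ and $\mathbf{u}_i^s = \emptyset$ for $s > \tfrac14$, I get $\int_0^{1/4} f(A(i,\theta))\,d\theta = \hat f(\mathbf{u}_i)$, so the right-hand side equals $\sum_i \hat f(\mathbf{u}_i)$. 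Combining monotonicity of $\hat f$ with the subadditivity $\hat f\bigl(\sum_i \mathbf{u}_i\bigr) \le \sum_i \hat f(\mathbf{u}_i)$ (which follows from convexity and positive homogeneity of the Lov\'asz extension of a submodular function), the inequality $(\star)$ reduces to the coordinate-wise claim $\vy \le \sum_i \mathbf{u}_i$ in $[0,1]^V$.

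The main obstacle---and the only step that truly uses the specific threshold $1/4$---is verifying $y(v) \le \sum_i \mathbf{u}_i(v)$ for every $v \in V$, together with the sanity check $\sum_i \mathbf{u}_i(v) \le 1$. This reduces to a short case analysis on $j(v) := |\{i : x(v,i) > 1/4\}|$, which can take values only in $\{0,1,2,3\}$ since $\sum_i x(v,i) = 1$. The tightest case is $j(v)=2$, which requires $x^{**}(v) \le 1/2$ for the second-largest coordinate of $x(v,\cdot)$; this follows from $x^{**}(v) \le x^{*}(v) \le 1-x^{**}(v)$. The threshold $1/4$ is precisely calibrated so that the pointwise inequality holds across all four cases, simultaneously pinning down the rounding interval $[\tfrac14,1]$ and the approximation ratio $\tfrac{4}{3} = 1/(1-\tfrac14)$.
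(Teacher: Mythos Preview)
Your proof is correct and reduces to the same pointwise inequality as the paper---namely $1 - \max_i x(v,i) \le \sum_i \min\{1/4, x(v,i)\}$, which is exactly Lemma~\ref{claim:1/4-ineq} with $\delta=1/4$---but you reach it by a genuinely different route. The paper establishes $(\star)$ via a telescoping argument over the ground-set elements: it expands $\sum_i \int_0^{1/4} f(A(i,\theta))\,d\theta$ as a sum of discrete marginals (Lemma~\ref{lem:telescope-sum}), uses submodularity of $f$ on sets inside each telescoping row, combines with a formula for $\int_0^1 f(U(\theta))\,d\theta$ borrowed from Chekuri--Ene (Proposition~\ref{prop:U-int-telescope}), and packages everything as Theorem~\ref{thm:sum_of_A_at_least_U}. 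You instead recognize both sides of $(\star)$ as Lov\'asz-extension values and reduce to $\hat f(\vy)\le \hat f(\sum_i \mathbf u_i)\le \sum_i \hat f(\mathbf u_i)$ using only monotonicity and subadditivity of $\hat f$ (the latter from convexity plus positive homogeneity, which does use $f(\emptyset)=0$). Your approach is more compact and bypasses the telescoping machinery entirely; the paper's approach, by contrast, keeps track of where monotonicity and submodularity are invoked at the level of set marginals, which feeds directly into its tightness analysis (Remarks after Theorem~\ref{thm:sum_of_A_at_least_U} and Theorem~\ref{thm:mono-sumbp-4/3}).
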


The rest of this section is devoted to proving Theorem \ref{thm:mono-sumbp-4/3}. 
Our approximation factor analysis uses certain ingredients of Chekuri and Ene \cite{CE}, but differs from their analysis to exploit monotonicity and the different threshold range. We will point out the differences as we proceed with the analysis. 

We start with a lemma about the truncated mass of a discrete probability distribution. A vector $\vp\in \R^k_{\ge 0}$ with $\sum_{i\in [k]}p_i=1$ can be viewed as a probability distribution over the set $\{1,2,\ldots, k\}$ with $p_i$ being the probability of picking item $i$. For $\delta\in [0,1]$, the quantity $\sum_{i=1}^k \min\{\delta, p_i\}$ is the truncated probability mass. We show that if $\delta\ge 1/4$, then the truncated probability mass is at least $1-\max_{i\in [k]}p_i$. The choice of $\delta\in [1/4,1]$ in the rounding algorthm emanates from this lemma. This lemma does not appear in Chekuri and Ene \cite{CE} and is our contribution. 

\begin{lemma} \label{claim:1/4-ineq}
    Let $\vp \in \mathbb{R}^k_{\ge 0}$ such that $\sum_{i = 1}^k p_i = 1$ and $\delta \ge 1/4$. Then, 
    \[\sum_{i = 1}^k \min\left\{\delta, p_i\right\} \geq 1 - \max_{i \in [k]} p_i.\]
\end{lemma}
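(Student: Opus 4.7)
The plan is to find a single pointwise inequality that replaces $\min(\delta, p_i)$ by a scalar multiple of $p_i$, then sum over $i$ using $\sum_i p_i = 1$ to reduce the statement to a one-variable inequality in $M := \max_{i \in [k]} p_i$ that is tight exactly at the hypothesis $\delta = 1/4$.

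I would first dispose of the easy case $M \le \delta$: every coordinate then satisfies $p_i \le M \le \delta$, so $\min(\delta, p_i) = p_i$ and the left-hand side equals $\sum_i p_i = 1 \ge 1 - M$, as required. So for the remainder of the argument I may assume $M > \delta$.

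In this nontrivial case, the key step is the pointwise bound
\[
\min(\delta, p_i) \ge \frac{\delta}{M}\, p_i \qquad \text{for every } i \in [k].
\]
This is a short two-case check: if $p_i \le \delta$ the inequality reduces to $\delta/M \le 1$, which holds because $M > \delta$; and if $p_i > \delta$ the left-hand side is $\delta$ while the right-hand side is $(\delta/M) p_i \le \delta$ since $p_i \le M$. Summing over $i$ and using $\sum_i p_i = 1$ yields
\[
\sum_{i = 1}^k \min(\delta, p_i) \ge \frac{\delta}{M}.
\]

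It then remains to verify $\delta/M \ge 1 - M$, i.e.\ $\delta \ge M(1 - M)$. The quadratic $M(1-M)$ attains its maximum of $1/4$ on $[0,1]$ at $M = 1/2$, so the hypothesis $\delta \ge 1/4$ immediately gives $\delta \ge 1/4 \ge M(1-M)$, completing the proof. There is no real obstacle; the only thing to spot is the ratio-based pointwise bound, which converts an expression involving a nonsmooth $\min$ into a linear expression in $p_i$. The worst case $M = 1/2$ of the final inequality is precisely what forces the threshold constant $1/4$ in Algorithm \ref{alg:mono}, so this step of the analysis is tight.
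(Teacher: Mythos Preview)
Your proof is correct and takes a genuinely different route from the paper's. The paper rewrites the left-hand side as $1-\sum_{i:p_i>\delta}(p_i-\delta)$, reduces the claim to $\max_i p_i \ge \sum_{i:p_i>\delta}(p_i-\delta)$, and then argues by contradiction via a three-way case split on the cardinality of $\{i:p_i>\delta\}\setminus\{\arg\max_i p_i\}$. Your argument instead linearizes the $\min$ via the pointwise bound $\min(\delta,p_i)\ge (\delta/M)\,p_i$ (valid once $M>\delta$), sums, and is left with the single-variable inequality $\delta \ge M(1-M)$, which is immediate from $\delta\ge 1/4$ and the elementary bound $M(1-M)\le 1/4$.

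Your approach is shorter and, arguably, more explanatory: the threshold $1/4$ emerges transparently as the maximum of the quadratic $M(1-M)$, whereas in the paper the constant is justified only after the fact via a separate tightness example (Remark~\ref{remark:tight-for-1/4-ineq}). The paper's case analysis, on the other hand, is entirely self-contained and does not require spotting the ratio trick; it also implicitly identifies the structural reason the inequality can fail for $\delta<1/4$ (two coordinates equal to $1/2$), which your proof recovers only through the extremizer $M=1/2$ of $M(1-M)$.
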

\begin{proof}
    The LHS can be rewritten as follows: 
    \begin{align*}
        \sum_{i = 1}^{k} \min\left\{\delta, p_i\right\} &= \sum_{i = 1}^{k} p_i - \sum_{i \in [k]: p_i > \delta} (p_i - \delta)  
        = 1 - \sum_{i \in [k] : p_i > \delta} (p_i - \delta)
    \end{align*}
    Let $L:=\{i\in [k]: p_i>\delta\}$ and let $j\in [k]$ such that $p_j=\max_{i\in [k]}p_i$. 
    It suffices to show that $p_j \ge \sum_{i\in L}(p_i-\delta)$. 
    For the sake of contradiction, suppose that $p_j <\sum_{i\in L}(p_i-\delta)$. We case based on the size of $L-\{j\}$ and arrive at a contradiction in all cases.
    
    \noindent \textbf{Case 1.} Suppose $L-\{j\}=\emptyset$. If $j\in L$, then $L=\{j\}$, and hence, 
    \[
    p_j < \sum_{i\in L}(p_i-\delta) = p_j - \delta,
    \]
    a contradiction. If $j\not\in L$, then $L=\emptyset$, and hence,
    \[
    0\le p_j < \sum_{i\in L}(p_i-\delta) = 0,
    \]
    a contradiction. 
    
    \noindent \textbf{Case 2.} Suppose $|L-\{j\}|=1$. Let $L-\{j\}=\{\ell\}$. Then, by definition of $j, \ell,$ and $L$, we have that $L=\{\ell, j\}$. Consequently, we have that 
    \[
    p_j < \sum_{i\in L}(p_i-\delta) = p_j + p_{\ell} - 2\delta.
    \]
    Hence, we have that $p_{\ell}> 2\delta \ge 1/2$ since $\delta \ge 1/4$. Also, by definition of $j$, we have that $p_j\ge p_{\ell}$. Hence, we have that $p_j\ge p_{\ell}>1/2$ and hence, $\sum_{i=1}^k p_i \ge p_j + p_k >1$, a contradiction.

    \noindent \textbf{Case 3.} Suppose $|L-\{j\}|\ge 2$. 
    Then, by definition of $j$, we have that $j\in L$ and hence, $|L|\ge 3$ and $p_j> \delta$. Thus, we have that 
    \[
    \delta \le p_j <\sum_{i\in L}(p_i-\delta) \le \left(\sum_{i\in L}p_i\right) - 3\delta.
    \]
    The last inequality above is because $|L|\ge 3$. From the above inequalities, we have that $\sum_{i=1}^k p_i \ge \sum_{i\in L}p_i > 4\delta \ge 1$, a contradiction. 
\end{proof}

\begin{remark}\label{remark:tight-for-1/4-ineq}
    The conclusion of Lemma \ref{claim:1/4-ineq} fails for $\delta <1/4$. We give an example to illustrate this: let $k=2$ and let $x_1=1/2, x_2 = 1/2$. Now, consider $\delta <1/4$. Then, the LHS of the lemma is $\sum_{i=1}^k \min\{\delta, x_i\}=\min\{\delta, x_1\} + \min\{\delta, x_2\} = 2\delta$. On the other hand, the RHS of the lemma is $1-\max_{i\in [k]}x_i = 1- \max\{x_1, x_2\}=1/2$. Thus, the LHS is at least the RHS only for $\delta\ge 1/4$ and fails for $\delta<1/4$. 
\end{remark}

We now set up the notation to analyze the approximation factor of our algorithm. This notation is similar to that of Chekuri and Ene \cite{CE}. 
\paragraph{Notation.}
Let $\vx$ be an optimum solution to \submprel. 
We assume that $V=\{v_1, \ldots, v_n\}$. 
For each $v_j\in V$, we define $\alpha_j = \max_{i = 1}^k x(v_j, i)$. We assume that the elements $v_1, v_2, \ldots, v_n$ of the ground set $V$ are ordered such that $0 \leq \alpha_1 \leq \alpha_2 \leq \ldots \leq \alpha_n \leq 1$. For convenience, we define $\alpha_0 := 0$ and $\alpha_{n + 1} := 1$. 
Let $\theta\in [0,1]$. 
For each $i \in [k]$, we define $A(i, \theta) = \{v \in V \colon x(v, i) \geq \theta\}$ as in the algorithm. Additionally, for each $j \in [n]$, we let $V_j := \{v_1, \ldots, v_j\}$, and for convenience, we define $V_0 := \emptyset$. Finally, we let $A_j(i, \theta) := A(i, \theta) \cap V_j$.

With the above notation, we show the following technical lemma based on monotonicity and submodularity of $f$ and by using Lemma \ref{claim:1/4-ineq}. Since we rely on monotonicity and Lemma \ref{claim:1/4-ineq}, the following lemma does not follow from the results in Chekuri and Ene \cite{CE}. 
\begin{lemma} \label{lem:telescope-sum}
    For every $\delta \in [\frac{1}{4}, 1]$ and $j \in [n]$, we have that
    \[
        \sum_{i = 1}^k \int_0^\delta \left( f\left(A_j(i, \theta)\right) - f\left(A_{j - 1}(i, \theta)\right)\right) d\theta \geq (1 - \alpha_j) \left(f(V_j) - f(V_{j - 1}) \right).
    \]
\end{lemma}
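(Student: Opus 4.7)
The plan is to analyze the integrand pointwise (for fixed $\theta$), reduce the marginal contribution of $v_j$ to the marginal $f(V_j) - f(V_{j-1})$ via submodularity, integrate over $\theta$ using the identity $\int_0^\delta \mathbbm{1}[x(v_j, i) \geq \theta] \, d\theta = \min\{\delta, x(v_j, i)\}$, and finally apply Lemma \ref{claim:1/4-ineq} with $p_i := x(v_j, i)$.

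First, I would observe the following structural fact: since $A_j(i,\theta) = A(i,\theta) \cap V_j$ and $A_{j-1}(i,\theta) = A(i,\theta) \cap V_{j-1}$, and since $V_j \setminus V_{j-1} = \{v_j\}$, we have $A_j(i,\theta) = A_{j-1}(i,\theta) \cup \{v_j\}$ when $x(v_j, i) \geq \theta$ and $A_j(i,\theta) = A_{j-1}(i,\theta)$ otherwise. So the integrand is zero unless $\theta \le x(v_j, i)$, in which case it equals $f(A_{j-1}(i,\theta) \cup \{v_j\}) - f(A_{j-1}(i,\theta))$.

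Next, since $A_{j-1}(i,\theta) \subseteq V_{j-1}$, the submodular (diminishing returns) inequality applied to the element $v_j$ yields
\[
f(A_{j-1}(i,\theta) \cup \{v_j\}) - f(A_{j-1}(i,\theta)) \;\geq\; f(V_{j-1} \cup \{v_j\}) - f(V_{j-1}) \;=\; f(V_j) - f(V_{j-1}).
\]
By monotonicity of $f$, the right-hand side is non-negative, so multiplying by the indicator $\mathbbm{1}[x(v_j, i) \geq \theta]$ preserves the inequality. Integrating over $\theta \in [0, \delta]$ gives
\[
\int_0^\delta \bigl(f(A_j(i,\theta)) - f(A_{j-1}(i,\theta))\bigr) \, d\theta \;\geq\; \min\{\delta, x(v_j, i)\} \cdot \bigl(f(V_j) - f(V_{j-1})\bigr).
\]

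Finally, I sum this over $i \in [k]$ and invoke Lemma \ref{claim:1/4-ineq} with $p_i := x(v_j, i)$: the constraints of \ref{CP:submp} guarantee $\sum_{i=1}^k x(v_j, i) = 1$, and by definition $\max_{i \in [k]} x(v_j, i) = \alpha_j$, so for $\delta \geq 1/4$ we obtain $\sum_{i=1}^k \min\{\delta, x(v_j, i)\} \geq 1 - \alpha_j$. Combining with the per-$i$ bound and the non-negativity of $f(V_j) - f(V_{j-1})$ yields the desired inequality.

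The main thing to be careful about is the sign: the per-$i$ inequality is only useful because $f(V_j) - f(V_{j-1}) \geq 0$, which is exactly where monotonicity enters (submodularity alone would not suffice). This is also the reason the lemma strengthens what Chekuri--Ene could deduce, and together with the threshold restriction $\delta \geq 1/4$ (required to invoke Lemma \ref{claim:1/4-ineq}) it is the precise point at which both hypotheses of \monosubmp are used.
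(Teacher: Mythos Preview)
Your proposal is correct and follows essentially the same approach as the paper's proof: both identify that the integrand vanishes when $\theta > x(v_j,i)$, bound the marginal $f(A_{j-1}(i,\theta)\cup\{v_j\}) - f(A_{j-1}(i,\theta))$ below by $f(V_j)-f(V_{j-1})$ via submodularity, integrate to obtain the factor $\min\{\delta, x(v_j,i)\}$, sum over $i$, and apply Lemma~\ref{claim:1/4-ineq} together with monotonicity to conclude. The paper just packages the first two steps as separate claims, whereas you present them inline.
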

\begin{proof}
    Let $\delta\in [1/4, 1]$ and $j\in [n]$. We show the following two claims that will help in proving the lemma.
    \begin{claim} \label{claim:telescope-trunc}
    For every $i\in [k]$,
        \begin{align*}
            &\int_0^\delta \left( f\left(A_j(i, \theta)\right) - f\left(A_{j - 1}(i, \theta)\right)\right) d\theta \\
        &\quad \quad \quad \quad= \int_0^{\min\{\delta, x(v_j, i)\}} \left( f\left(A_j(i, \theta)\right) - f\left(A_{j - 1}(i, \theta)\right)\right) d\theta.
        \end{align*}
    \end{claim}
    \begin{proof}
        Let $i\in [k]$.
        We may assume that $\min\{\delta, \ x(v_j, i)\}=x(v_j, i)$ (otherwise, the claim holds trivially). 
        For every $\theta \in (x(v_j, i), \delta]$, we get $x(v_j, i) < \theta$ and hence, $v_j \notin A(i, \theta)$. Thus, for every $\theta \in (x(v_j, i), \delta]$, we have that $A_j(i, \theta) = A(i, \theta) \cap V_j = A(i, \theta) \cap V_{j-1} = A_{j-1}(i, \theta)$. Consequently, 
        \begin{align*}
            &\int_0^\delta \left( f\left(A_j(i, \theta)\right) - f\left(A_{j - 1}(i, \theta)\right)\right) d\theta\\
            & \quad \quad \quad \quad = \int_0^{x(v_j, i)} \left( f\left(A_j(i, \theta)\right) - f\left(A_{j - 1}(i, \theta)\right)\right) d\theta\\ 
            & \quad \quad \quad \quad \quad \quad + \int_{x(v_j, i)}^\delta \left( f\left(A_j(i, \theta)\right) - f\left(A_{j - 1}(i, \theta)\right)\right) d\theta \\ 
            &\quad \quad \quad \quad = \int_0^{x(v_j, i)} \left( f\left(A_j(i, \theta)\right) - f\left(A_{j - 1}(i, \theta)\right)\right) d\theta.
        \end{align*}
    \end{proof}
    \begin{claim} \label{claim:sub-telescope}
        For every $i\in [k]$, 
        \begin{align*}   
        &\int_0^{\min\{\delta, x(v_j, i)\}} \left( f\left(A_j(i, \theta)\right) - f\left(A_{j - 1}(i, \theta)\right)\right) d\theta \\
        & \quad \quad \quad \quad  \quad \quad \quad \quad \geq
        \min\{\delta, x(v_j, i)\} \left( f\left(V_j\right) - f\left(V_{j-1}\right)\right). 
        \end{align*}
    \end{claim}
    \begin{proof}
        Let $\theta \in [0, \min(\delta, x(v_j, i))]$. Then, $x(v_j, i) \geq \theta$ and thus, $A_j(i, \theta) = A_{j-1}(i, \theta) \cup \{v_j\}$. Additionally, $V_j = V_{j-1} \cup \{v_j\}$, $A_{j-1}(i, \theta) \subseteq V_{j-1}$, and $v_j \notin V_{j-1}$. Hence, we have that 
        \begin{align*}
            f\left(A_j(i, \theta)\right) - f\left(A_{j - 1}(i, \theta)\right) &= f\left(A_{j-1}(i, \theta) \cup \{v_j\}\right) - f\left(A_{j - 1}(i, \theta)\right) \\
            &\geq f\left(V_{j-1} \cup \{v_j\}\right) - f\left(V_{j-1}\right)\\
            & \quad \quad \quad \quad \quad \quad \text{(by submodularity of $f$)}\\
            &= f\left(V_j\right) - f\left(V_{j-1}\right). 
        \end{align*}
        Thus, 
        \begin{align*}
            &\int_0^{\min\{\delta, x(v_j, i)\}} \left( f\left(A_j(i, \theta)\right) - f\left(A_{j - 1}(i, \theta)\right)\right) d\theta \\
            &\quad \quad \geq
            \int_0^{\min\{\delta, x(v_j, i)\}} \left( f\left(V_j\right) - f\left(V_{j-1}\right)\right) d\theta\\
            &\quad \quad = \min\{\delta, x(v_j, i)\} \left( f\left(V_j\right) - f\left(V_{j-1}\right)\right).
        \end{align*}
    \end{proof}
    Now, we use the above claims to lower bound the LHS of the lemma 
    as follows:  
    \begin{align}
        &\sum_{i = 1}^k \int_0^\delta \left( f\left(A_j(i, \theta)\right) - f\left(A_{j - 1}(i, \theta)\right)\right) d\theta \notag\\
        &\quad \quad \quad \quad =
        \sum_{i = 1}^k \int_0^{\min\{\delta, x(v_j, i)\}} \left( f\left(A_j(i, \theta)\right) - f\left(A_{j - 1}(i, \theta)\right)\right) d\theta \quad \text{(by Claim \ref{claim:telescope-trunc})} \notag\\ 
        &\quad \quad \quad \quad \geq
        \sum_{i = 1}^k \min\{\delta, x(v_j, i)\} \left( f\left(V_j\right) - f\left(V_{j-1}\right)\right) \quad \text{(by Claim \ref{claim:sub-telescope})} \notag\\ 
        &\quad \quad \quad \quad = \left( f\left(V_j\right) - f\left(V_{j-1}\right)\right) \left( \sum_{i = 1}^k \min\{\delta, x(v_j, i)\} \right). \label{ineq:fA_j-difference}
    \end{align}
    Finally, we know that $x(v_j, i)\ge 0$ for every $i \in [k]$ and $\sum_{i=1}^{k} x(v_j, i) = 1$ since $x$ is a feasible solution to the convex program. Hence, using Lemma \ref{claim:1/4-ineq}, we get
    \begin{align}
    \sum_{i=1}^k \min\{\delta, x(v_j, i)\} &\ge 1- \max_{i\in [k]}x(v_j, i) = 1-\alpha_j. \label{ineq:truncated-mass}
    \end{align}
    Further, by monotonicity of $f$, we have that $f(V_j) - f(V_{j-1})\ge 0$. Hence, 
    \begin{align}
    \left(f\left(V_j\right) - f\left(V_{j-1}\right)\right) \left( \sum_{i = 1}^k \min\{\delta, x(v_j, i)\} \right) 
    \geq (1 - \alpha_j)  \left( f\left(V_j\right) - f\left(V_{j-1}\right)\right). \label{ineq:fV_j-difference}
    \end{align}
    
    Thus, using inequalities (\ref{ineq:fA_j-difference}) and (\ref{ineq:fV_j-difference}), we get that 
    \begin{align*}
        \sum_{i = 1}^k \int_0^\delta \left( f\left(A_j(i, \theta)\right) - f\left(A_{j - 1}(i, \theta)\right)\right) d\theta \geq (1 - \alpha_j) \left(f(V_j) - f(V_{j - 1}) \right).
    \end{align*}
\end{proof}

We need the following Proposition from Chekuri and Ene \cite{CE}.
\begin{proposition} \cite{CE} \label{prop:U-int-telescope}
    Let $f$ be an arbitrary non-negative submodular function, $r \in [0,1]$, and let $h$ be the largest value of $j\in [n]$ such that $\alpha_j \leq r$. We have
    \begin{align*}
        \int_0^r f\left(U(\theta)\right) d\theta = \sum_{j=1}^{h} \alpha_j \left(f(V_{j-1}) - f(V_j)\right) + rf(V_h).
    \end{align*}
\end{proposition}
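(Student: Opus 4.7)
The plan is to compute the integral $\int_0^r f(U(\theta)) d\theta$ directly by identifying $U(\theta)$ as a piecewise-constant function of $\theta$ taking values in the chain $\emptyset = V_0 \subseteq V_1 \subseteq \cdots \subseteq V_n$, and then telescoping the resulting sum.

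First I would observe that, by the definition of $U(\theta)$, a vertex $v_j$ lies in $U(\theta)$ iff $v_j \notin A(i,\theta)$ for every $i \in [k]$, i.e., iff $x(v_j,i) < \theta$ for all $i$. This is equivalent to $\alpha_j < \theta$, so $U(\theta) = \{v_j : \alpha_j < \theta\}$. Since the $\alpha_j$'s are sorted, setting $\alpha_0 := 0$ gives $U(\theta) = V_j$ for every $\theta \in (\alpha_j, \alpha_{j+1})$, and up to the cutoff $r$ we use $V_h$ on $(\alpha_h, r]$ by definition of $h$.

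Next I would split the integral according to these pieces:
\[
\int_0^r f(U(\theta))\,d\theta = \sum_{j=0}^{h-1} (\alpha_{j+1} - \alpha_j)\, f(V_j) + (r - \alpha_h)\, f(V_h),
\]
noting that the $(0,\alpha_1)$ piece contributes $f(V_0) = f(\emptyset) = 0$ by normalization, so it drops out silently. After reindexing the sum as $\sum_{j=1}^{h}(\alpha_j - \alpha_{j-1}) f(V_{j-1})$ and moving the $rf(V_h)$ term forward, I would apply Abel summation (equivalently, group the $\alpha_j$ coefficients) to get
\[
\sum_{j=1}^{h} \alpha_j f(V_{j-1}) - \sum_{j=1}^{h-1} \alpha_j f(V_j) - \alpha_h f(V_h) + r f(V_h) = \sum_{j=1}^{h} \alpha_j \bigl(f(V_{j-1}) - f(V_j)\bigr) + r f(V_h),
\]
which is the claimed identity.

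The argument is almost purely bookkeeping; there is no real obstacle, since no submodularity or monotonicity of $f$ is used. The one subtle point is being careful with measure-zero issues at $\theta = \alpha_j$ when several $\alpha_j$'s coincide (in which case some interval $(\alpha_j, \alpha_{j+1})$ is empty and contributes nothing) and with the boundary case $\alpha_1 = 0$ (where the $(0,\alpha_1)$ piece is empty). Both corner cases are handled uniformly by the telescoping sum above, so the stated formula holds in general.
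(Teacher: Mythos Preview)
Your proof is correct and is the standard derivation of this identity; the paper itself does not prove this proposition but simply cites it from Chekuri--Ene \cite{CE}. One minor remark: your comment that the $(0,\alpha_1)$ piece ``drops out silently'' because $f(\emptyset)=0$ is unnecessary---the telescoping goes through using only the convention $\alpha_0=0$, so the identity holds for any set function $f$, not just normalized ones.
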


Lemma \ref{lem:telescope-sum} and Proposition \ref{prop:U-int-telescope} lead to the following corollary. 
\begin{corollary}
\label{cor:at_least_U}
    For every $\delta \in [\frac{1}{4}, 1]$, 
    \[
        \sum_{j = 1}^n \left( \sum_{i = 1}^k \int_0^\delta \left(f(A_j(i, \theta) - f(A_{j - 1}(i, \theta) \right) d\theta \right) \geq \int_0^1 f(U(\theta)) d\theta - f(\emptyset).
    \]
\end{corollary}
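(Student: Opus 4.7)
The plan is to apply Lemma \ref{lem:telescope-sum} term-by-term, sum over $j$, and then show that the resulting lower bound equals the right-hand side $\int_0^1 f(U(\theta))\, d\theta - f(\emptyset)$ by a telescoping identity combined with Proposition \ref{prop:U-int-telescope} evaluated at $r=1$.

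More concretely, first I would apply Lemma \ref{lem:telescope-sum} with the given $\delta$ to each index $j\in [n]$ and sum, obtaining
\[
\sum_{j=1}^n \sum_{i=1}^k \int_0^\delta\!\bigl(f(A_j(i,\theta)) - f(A_{j-1}(i,\theta))\bigr)\, d\theta \;\geq\; \sum_{j=1}^n (1-\alpha_j)\bigl(f(V_j) - f(V_{j-1})\bigr).
\]
Next, I would rewrite the right-hand side of the displayed inequality by splitting the factor $(1-\alpha_j)$ and telescoping:
\[
\sum_{j=1}^n (1-\alpha_j)\bigl(f(V_j) - f(V_{j-1})\bigr) = \bigl(f(V_n)-f(V_0)\bigr) + \sum_{j=1}^n \alpha_j\bigl(f(V_{j-1}) - f(V_j)\bigr).
\]
Since $V_n = V$ and $V_0 = \emptyset$, the first term equals $f(V)-f(\emptyset)$.

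Finally, I would invoke Proposition \ref{prop:U-int-telescope} with $r=1$: since $\alpha_n\leq 1$, we have $h=n$ and $V_h=V$, which gives
\[
\int_0^1 f(U(\theta))\, d\theta = \sum_{j=1}^n \alpha_j\bigl(f(V_{j-1}) - f(V_j)\bigr) + f(V).
\]
Substituting this into the telescoped expression yields
\[
\sum_{j=1}^n (1-\alpha_j)\bigl(f(V_j) - f(V_{j-1})\bigr) = f(V) - f(\emptyset) + \int_0^1 f(U(\theta))\, d\theta - f(V) = \int_0^1 f(U(\theta))\, d\theta - f(\emptyset),
\]
which chains with the first inequality to give the corollary. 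No step is really hard here; the only thing to watch is lining up the sign conventions in the telescoping sum with those in Proposition \ref{prop:U-int-telescope} and remembering that at $r=1$ one has $h=n$ so that $V_h = V$.
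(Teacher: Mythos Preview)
Your proposal is correct and follows essentially the same approach as the paper: apply Lemma \ref{lem:telescope-sum} for each $j$ and sum, telescope $\sum_{j=1}^n (1-\alpha_j)(f(V_j)-f(V_{j-1}))$ into $f(V_n)-f(\emptyset)+\sum_{j=1}^n \alpha_j(f(V_{j-1})-f(V_j))$, and then invoke Proposition \ref{prop:U-int-telescope} at $r=1$ (with $h=n$) to identify this with $\int_0^1 f(U(\theta))\,d\theta - f(\emptyset)$.
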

\begin{proof}
    We use $r = 1$ and $h = n$ in Proposition \ref{prop:U-int-telescope} to get, 
    \begin{align} \label{eq:u-theta}
        \int_0^1 f\left(U(\theta)\right) d\theta = \sum_{j=1}^{n} \alpha_j \left(f(V_{j-1}) - f(V_j)\right) + f(V_n).
    \end{align}
    Now, by Lemma \ref{lem:telescope-sum}, we get 
    \begin{align*}
        &\sum_{j=1}^{n} \left(\sum_{i = 1}^k \int_0^\delta \left( f\left(A_j(i, \theta)\right) - f\left(A_{j - 1}(i, \theta)\right)\right) d\theta \right) \\
        &\quad \quad \geq 
        \sum_{j=1}^{n} (1 - \alpha_j) \left(f(V_j) - f(V_{j - 1}) \right) \\
        &\quad \quad = \sum_{j=1}^{n} \left(f(V_j) - f(V_{j - 1}) \right) + 
         \sum_{j=1}^{n} \alpha_j\left(f(V_{j-1}) - f(V_{j}) \right) \\
        &\quad \quad = f(V_n) - f(V_0) + \sum_{j=1}^{n} \alpha_j\left(f(V_{j-1}) - f(V_{j}) \right). 
    \end{align*}
    Finally, using $V_0 = \emptyset$ and equation (\ref{eq:u-theta}), we get, 
    \begin{align*}
        \sum_{j=1}^{n} \left(\sum_{i = 1}^k \int_0^\delta \left( f\left(A_j(i, \theta)\right) - f\left(A_{j - 1}(i, \theta)\right)\right) d\theta \right) 
        &\geq
        \int_0^1 f(U(\theta)) d\theta - f(\emptyset).
    \end{align*}
\end{proof}

We have the following theorem as a consequence of Corollary \ref{cor:at_least_U}. We observe that this theorem upper bounds the expected function value of the set of unassigned elements. 
\begin{theorem}
\label{thm:sum_of_A_at_least_U}
    For all $\delta \in [\frac{1}{4}, 1]$, 
    \[
        \sum_{i = 1}^k \int_0^\delta f(A(i, \theta)) d\theta \geq (k \delta - 1)f(\emptyset) + \int_0^1 f(U(\theta)) d\theta.
    \]
\end{theorem}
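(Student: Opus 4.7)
The plan is to derive Theorem \ref{thm:sum_of_A_at_least_U} as a direct consequence of Corollary \ref{cor:at_least_U} by swapping the order of summation and telescoping. This is a short calculation with no real obstacle; the hard work has been done in Lemma \ref{lem:telescope-sum} and Corollary \ref{cor:at_least_U}, and all that remains is to collapse the inner double sum.

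First I would recall from the notation that $A_j(i,\theta) = A(i,\theta) \cap V_j$, so $A_n(i,\theta) = A(i,\theta)$ (since $V_n = V$) and $A_0(i,\theta) = A(i,\theta) \cap \emptyset = \emptyset$. Starting from the left-hand side of Corollary \ref{cor:at_least_U}, I would swap the order of summation between $j$ and $i$ (justified because the sums are finite), then for each fixed $i \in [k]$ and each fixed $\theta \in [0,\delta]$, the sum $\sum_{j=1}^n (f(A_j(i,\theta)) - f(A_{j-1}(i,\theta)))$ telescopes to $f(A_n(i,\theta)) - f(A_0(i,\theta)) = f(A(i,\theta)) - f(\emptyset)$.

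Substituting this telescoping identity into Corollary \ref{cor:at_least_U}, the inequality becomes
\[
\sum_{i=1}^k \int_0^\delta \bigl(f(A(i,\theta)) - f(\emptyset)\bigr) d\theta \geq \int_0^1 f(U(\theta)) d\theta - f(\emptyset).
\]
Pulling the constant $f(\emptyset)$ out of the integral contributes a term $-k\delta \cdot f(\emptyset)$ on the left. Moving this to the right-hand side yields
\[
\sum_{i=1}^k \int_0^\delta f(A(i,\theta)) d\theta \geq (k\delta - 1) f(\emptyset) + \int_0^1 f(U(\theta)) d\theta,
\]
which is precisely the statement of Theorem \ref{thm:sum_of_A_at_least_U}. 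Since the argument is purely bookkeeping on top of Corollary \ref{cor:at_least_U}, no additional properties of $f$ (beyond what was already used in the corollary) are required, and the restriction $\delta \in [\frac{1}{4}, 1]$ is inherited directly from the hypothesis of Corollary \ref{cor:at_least_U}.
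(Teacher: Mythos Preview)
Your proposal is correct and is essentially the same argument as the paper's: both proofs reduce the theorem to Corollary \ref{cor:at_least_U} via the telescoping identity $\sum_{j=1}^n\bigl(f(A_j(i,\theta))-f(A_{j-1}(i,\theta))\bigr)=f(A(i,\theta))-f(\emptyset)$ together with $A_0(i,\theta)=\emptyset$. The only cosmetic difference is that the paper writes the telescoping starting from $\sum_{i}\int_0^\delta f(A(i,\theta))\,d\theta$ and works down to the corollary, whereas you start from the corollary and work up.
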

\begin{proof}
We have that 
    \begin{align*}
        &\sum_{i = 1}^k \int_0^\delta f(A(i, \theta)) d\theta \\
        & \quad \quad = \sum_{i = 1}^k \int_0^\delta f(A_n(i, \theta)) d\theta \\
        &\quad \quad = \sum_{i = 1}^k \int_0^\delta f(A_0(i, \theta)) d\theta + \sum_{j = 1}^n \left( \sum_{i = 1}^k \int_0^\delta f(A_j(i, \theta)) - \sum_{i = 1}^k \int_0^\delta f(A_{j - 1}(i, \theta)) \right) \\
        &\quad \quad = \sum_{i = 1}^k \int_0^\delta f(A_0(i, \theta)) d\theta + \sum_{j = 1}^n \left( \sum_{i = 1}^k \int_0^\delta \left(f(A_j(i, \theta)) - f(A_{j - 1}(i, \theta))\right) d\theta \right) \\
        &\quad \quad \geq \sum_{i = 1}^k \int_0^\delta f(A_0(i, \theta))d\theta + \int_0^1 f(U(\theta)) d\theta - f(\emptyset) \quad \quad \quad \text{(by Corollary \ref{cor:at_least_U})} \\
        &\quad \quad = (k\delta - 1)f(\emptyset) + \int_0^1 f(U(\theta)) d\theta.
    \end{align*}
\end{proof}

\begin{remark}\label{remark:tight-for-assigned-vs-unassigned-charging}
    Theorem \ref{thm:sum_of_A_at_least_U} fails to hold for $\delta<1/4$. We give an example to illustrate this: let $V:=\{t_1, t_2, u\}$, $T:=\{t_1, t_2\}$ (i.e., $n=3$ and $k=2$) and let $f(X):=1$ for every non-empty $X$ and $f(\emptyset)=0$. Setting $x(u, 1)=x(u, 2)=1/2$, $x(t_1, 1) = 1$, $x(t_2, 2)=1$ with the rest of the $x$ values being $0$ gives an optimum solution to the convex program \submprel. Now consider $\delta \le 1/4$. Then, we have that 
    $f(A(1, \theta))=1= f(A(2, \theta))$ for $\theta \in [0, 1/2]$, and hence, the LHS of the theorem is 
    $\sum_{i=1}^k \int_{0}^{\delta}f(A(i, \theta))d\theta = \int_{0}^{\delta} f(A(1, \theta))d\theta + \int_{0}^{\delta}f(A(1, \theta))d\theta= \delta + \delta = 2\delta$. On the other hand, we have that $f(U(\theta))=0$ for every $\theta \in [0,1/2)$ and $f(U(\theta))=1$ for every $\theta \in [1/2,1]$ and   hence, the RHS of the theorem is 
    $\int_{0}^1 f(U(\theta))d\theta = \int_{1/2}^1 f(U(\theta))d\theta = 1/2$. Thus, $2\delta \ge 1/2$ fails to hold if $\delta <1/4$. We note that this tight example for Theorem \ref{thm:sum_of_A_at_least_U} arises from the tight example for Lemma \ref{claim:1/4-ineq} mentioned in Remark \ref{remark:tight-for-1/4-ineq}. 
\end{remark}

We use Theorem \ref{thm:sum_of_A_at_least_U} and submodularity and monotonicity of $f$ to bound the approximation factor of the algorithm relative to the optimum value of the convex program. We restate and prove Theorem \ref{thm:mono-sumbp-4/3} below. 
\thmMonoSubMPRelCP*
\begin{proof}
    We note that $\opt_{frac} = \sum_{i = 1}^k \hat{f}(\vx_i) = \int_0^1 \sum_{i = 1}^k f(A(i, \theta)) d\theta$. By Theorem \ref{thm:sum_of_A_at_least_U}, 
    \begin{align}
        \sum_{i = 1}^k \int_0^\frac{1}{4} f(A(i, \theta)) d\theta \geq \left(\frac{1}{4}k - 1\right)f(\emptyset) + \int_0^1 f(U(\theta)) d\theta \geq \int_0^1 f(U(\theta)) d\theta.  \label{ineq:minus_A_plus_U}
    \end{align}
    The last inequality above is because $f(\emptyset)=0$. 
Hence, the expected cost of the partition returned by Algorithm \ref{alg:mono} is
$\E_{\theta \in [\frac{1}{4}, 1]}\left[ \sum_{i = 1}^k f(A_i) \right]$
    \begin{align*}
        &\leq \E_{\theta \in [\frac{1}{4}, 1]} \left[ \sum_{i = 1}^{k - 1} f(A(i, \theta)) + f(A(k, \theta) \cup U(\theta))\right] \quad \quad \text{(since $f$ is monotone)} \\
        &\leq \E_{\theta \in [\frac{1}{4}, 1]}\left[ \sum_{i = 1}^k f(A(i, \theta)) + f(U(\theta)) \right] \text{(since $f$ is non-negative and submodular)} \\
        &= \frac{4}{3} \int_{\frac{1}{4}}^1 \left(\sum_{i = 1}^k f(A(i, \theta)) + f(U(\theta))\right) d\theta \\
        &= \frac{4}{3} \left(\opt_{frac} - \sum_{i = 1}^k \int_{0}^{\frac{1}{4}} f(A(i, \theta)) d\theta + \int_{\frac{1}{4}}^1 f(U(\theta)) d\theta \right) \\
        &\leq \frac{4}{3}\opt_{frac}. \quad \quad \text{(by inequality (\ref{ineq:minus_A_plus_U}))}
    \end{align*}
\end{proof}

\begin{remark}
    We emphasize that monotonicity of the function is used only in the proofs of Theorem \ref{thm:mono-sumbp-4/3} and Lemma \ref{lem:telescope-sum}. 
\end{remark}
\begin{remark}
    Our approximation factor analysis for Algorithm \ref{alg:mono} is tight. We give an example to show that Algorithm \ref{alg:mono} does not achieve $(4/3-\epsilon)$-approximation for any constant $\epsilon>0$. Let $V:= \{t_1, t_2, u_1 \cdots, t_{2k-1}, t_{2k}, u_k\}$ and $T:= \{t_1, t_2, \cdots, t_{2k}\}$. Let $E_i := \{t_{2i-1}, t_{2i}, u_i\}$ and $f(X) := |\{j : X \cap E_j \neq \emptyset\}|$.
    The partition $V_{2i} = \{t_{2i}, u_{i}\}$, $V_{2i-1} = \{t_{2i-1}\}$ is optimal for this instance of \monosubmp, with $\sum_{i=1}^{2k} f(V_i) = \sum_{i=1}^{k} \left(f(V_{2i}) + f(V_{2i-1})\right) = 2k$. 
    Now, setting $x(u_i, i) = x(u_i, i+1) = \frac{1}{2}$ with the rest of the $x$ values being $0$ gives an optimum solution to the convex program \submprel. For $\theta \in (\frac{1}{2}, 1]$, $A(i, \theta) = \{t_i\}$, $U(\theta) = \{u_1, \cdots, u_k\}$. Hence, the algorithm returns $V_i = \{t_i\}$ for $i \in [2k-1]$, $V_{2k} = \{u_1, \cdots, u_k, t_{2k}\}$. This gives $\sum_{i=1}^{2k} f(V_i) = 3k-1$. Further, for $\theta \in [\frac{1}{4}, \frac{1}{2}]$, $A(2i, \theta) = \{t_{2i}, u_i\}$ and $A(2i-1) = \{t_{2i-1}, u_i\}$. Hence, the algorithm returns $V_{2i} = \{t_i, u_i\}$, $V_{2i-1} = \{t_{2i-1}\}$, and $\sum_{i=1}^{2k} f(V_i) = 2k$. Hence, $\E_{\theta \in [\frac{1}{4}, 1]}[\sum_{i=1}^{2k} f(V_i)] = \frac{1}{3}(2k) + \frac{2}{3}(3k-1) = \frac{8k-2}{3}$. Hence, approximation factor for this example is $\frac{8k-2}{6k} = \frac{4}{3} - \frac{1}{3k}$. Setting $k$ large enough gives us the desired result.
    We mention that this example is obtained by taking multiple copies of the tight example for Theorem \ref{thm:sum_of_A_at_least_U} mentioned in Remark \ref{remark:tight-for-assigned-vs-unassigned-charging}.

\end{remark}

\begin{remark}
For threshold rounding where the threshold $\theta$ is picked uniformly at random from the interval $[1/2, 1]$ (i.e., Chekuri and Ene's rounding algorithm for \submp), the best possible approximation factor is at least $3/2$. The example in the previous remark illustrates this. 
\end{remark}

\subsection{Inapproximability}\label{sec:monosubmp-inapprox}
In this section, we prove Theorem \ref{thm:mono-sub-mp-lower-bound}. 
Our approach will be similar to the approaches in \cite{Vondrak, EVW}---we define a notion of \emph{symmetry gap} below; in 
Theorem \ref{thm:sym-gap-to-oracle-lower-bound}, 
we show that symmetry gap gives a lower bound on the approximation factor of algorithms with polynomial number of function evaluation queries; 
finally, in Theorem \ref{thm:large-symmetry-gap}, 
we construct an instance of \monosubmp that exhibits a symmetry gap of $10/9$. 
We mention that Dobzinksi and Vondr\'{a}k \cite{DV12} showed a general approach to convert query complexity based inapproximability results via the symmetry gap to computation based inapproximability results.

\begin{definition}(Symmetry Gap) 
\label{def:instance}
Let $V=[k]\times [k]$ and $f: 2^V\rightarrow \R$ be a set function defined over the set $V$. 
    \begin{enumerate}
    \item For a subset $A\subseteq V$, we denote $A^T:=\{(j,i): (i,j)\in A\}$. 
    \item The set function $f$ is a row-column-type function if there exists a function $g:2^{V}\rightarrow \R_{\ge 0}$ such that $f(A)=\sum_{i=1}^k (g(A\cap R_i) + g(A\cap C_i))$ for every $A\subseteq V$, where $R_i:=\{(i, j): j\in [k]\}$ and $C_i:=\{(j,i): j\in [k]\}$ for every $i\in [k]$.
    
    
    \item A partition $V_1, V_2, \ldots, V_k$ of $V$ is \emph{symmetric} if  
        $V_{\ell}=V_{\ell}^T$ for every $\ell \in [k]$. 
        Generalizing, we say that $\vx \in [0, 1]^{V \times [k]}$ is \emph{symmetric} if $x(*, \ell) = x(*, \ell)^T$ for every $\ell \in [k]$. 
    \item Let $\opt(f) $ be 
    \[
    \min \left\{\sum_{i = 1}^k f(V_i) \colon V_1, \ldots, V_k \text{ is a partition of } V \text{ with } (i, i) \in V_i\ \forall i \in [k]\right\}. 
    \]
    \item Let $\symopt(f)$ be 
    \[
    \min \left\{\sum_{i = 1}^k f(V_i) \colon V_1, \ldots, V_k \text{ is a symmetric partition of } V \text{ with } (i, i) \in V_i\ \forall i \in [k]\right\}.
    \]
    \item Let $\symgap\left(f\right) := \frac{\symopt(f)}{\opt(f)}$. 
    \end{enumerate}
\end{definition}

Next, we prove that the approximation factor of an algorithm that makes polynomial number of function evaluation queries should be at least the symmetry gap. 

\begin{restatable}{theorem}{thmSymGapToOracleLowerBound}\label{thm:sym-gap-to-oracle-lower-bound}
    Suppose that there exists a row-column-type non-negative monotone submodular function $f\colon 2^V \rightarrow \mathbb{R}$ over the set $V=[k]\times [k]$ such that $\symgap(f)\ge \alpha$. 
    Then, for every constant $\eps>0$, every algorithm for \monosubmp with an approximation factor of $\alpha-\epsilon$ makes $2^{\Omega(n)}$ function evaluation queries, where $n$ is the size of the ground set. 
\end{restatable}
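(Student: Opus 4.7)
The plan is to follow the symmetry-gap machinery of Vondr\'{a}k \cite{Vondrak} as specialized to submodular multiway partition by Ene, Vondr\'{a}k, and Wu \cite{EVW}, adapting it to the transpose symmetry of row-column-type functions while preserving monotonicity. Starting from the hypothesis function $f$ on $V=[k]\times[k]$, I would first form its symmetrization under the transpose action, $\bar{f}(A) := \tfrac{1}{2}\left(f(A) + f(A^T)\right)$. Since $f$ is row-column-type, $\bar{f}$ is non-negative, monotone, and submodular; moreover $\symopt(f)$ is achieved by a symmetric partition, so it equals $\opt(\bar{f})$ restricted to partitions with $(i,i)\in V_i$.

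Next I would construct a blown-up instance on a ground set $\hat{V}$ obtained by replacing each $(i,j)\in V$ by a cluster $B_{(i,j)}$ of $N$ copies, so that $n=|\hat{V}|=Nk^2$. Lift $f$ and $\bar f$ to set functions $F,\bar F$ on $\hat V$ via the multilinear (or concave-closure) extension, i.e.\ $F(S):=f^{\circ}(x_S)$ where $x_S\in[0,1]^V$ has $(x_S)_{(i,j)}=|S\cap B_{(i,j)}|/N$; this preserves non-negativity, monotonicity, and submodularity. Place terminal $t_i$ as a fixed element of $B_{(i,i)}$. Then I would draw a uniformly random permutation $\pi$ of $\hat V$ that acts within each pair $\{B_{(i,j)},B_{(j,i)}\}$, and define the hard function
\[
\tilde F(S) := \begin{cases} F(\pi(S)) & \text{if } S \text{ is } \eps\text{-balanced across every pair } \{B_{(i,j)},B_{(j,i)}\},\\ \bar F(S) & \text{otherwise,} \end{cases}
\]
with a smooth interpolation in the transition region to keep $\tilde F$ monotone and submodular (the standard trick of taking the minimum of an upper envelope from $\bar F$ and a lifted version of $F$, as in \cite{Vondrak,EVW}).

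The key indistinguishability step: for any fixed query set $S\subseteq\hat V$, a Chernoff/Hoeffding bound on the random permutation $\pi$ shows that $\pi(S)$ is $\eps$-balanced on every cluster pair except with probability $2^{-\Omega(N)}$. Thus, any deterministic algorithm making at most $2^{cN}$ queries (for sufficiently small $c$) sees $\tilde F(S)=\bar F(S)$ on every query with probability $1-o(1)$, and by Yao's principle the same holds against randomized algorithms. Such an algorithm's output partition is therefore independent of $\pi$ up to negligible probability, and any discrete partition of $\hat V$ induces a symmetric-in-expectation fractional assignment on $V$; since $\bar f$ has its continuous optimum (with diagonal constraints) achieved by symmetric fractional assignments, any algorithm-output partition costs at least $(1-o(\eps))\cdot N\cdot \symopt(f)$ on $\tilde F$.

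On the other hand, $\tilde F = F\circ \pi$ on well-balanced inputs, so the partition obtained by pulling the optimal partition $(V_1^*,\ldots,V_k^*)$ of $V$ back through $\pi$ (i.e., $\hat V_i := \pi^{-1}(\bigcup_{(i',j')\in V_i^*}B_{(i',j')})$) has cost exactly $N\cdot\opt(f)$. Combining with $\symgap(f)\ge\alpha$, any $(\alpha-\eps)$-approximation algorithm using $2^{o(n)}$ queries would have to return a solution of value $<(\alpha-\eps)N\cdot\opt(f)\le (1-\tfrac{\eps}{\alpha})N\cdot\symopt(f)$, contradicting the lower bound above for small enough $\eps$ and large enough $N$. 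Taking $N=\Theta(n)$ yields the claimed $2^{\Omega(n)}$ bound. The main obstacle, as usual in this machinery, is making the stitching in the definition of $\tilde F$ preserve both monotonicity and submodularity simultaneously while keeping $\tilde F$ close to $\bar F$ on balanced sets; I would handle this by taking $\tilde F$ to be the pointwise minimum of $\bar F$ and a shifted lift of $F\circ\pi$, mirroring the constructions in \cite{Vondrak,EVW} but verifying that the monotone hull does not inflate the symmetric optimum beyond the $o(1)$ slack we can afford.
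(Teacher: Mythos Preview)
Your high-level plan is the same as the paper's: both follow the Vondr\'{a}k/Ene--Vondr\'{a}k--Wu symmetry-gap machinery, specialized to the transpose action on $V=[k]\times[k]$, with a blown-up ground set, a randomly permuted lift of $f$, and an indistinguishability argument. Two of your technical shortcuts, however, do not go through as written.

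First, your proposed stitching of $\tilde F$ as a piecewise function with a ``pointwise minimum'' patch does not preserve submodularity: the minimum of two submodular functions is generally not submodular, and there is no reason an upper envelope would be either. The paper does not attempt a direct piecewise construction; instead it invokes Vondr\'{a}k's smoothing lemma (Lemma~3.2 in \cite{Vondrak}), which produces \emph{two} continuous functions $\hat F,\hat G$ that (i) agree on nearly-symmetric inputs, (ii) stay uniformly close to the multilinear extension $F$, and (iii) have the first- and second-order derivative signs needed so that the induced discrete functions are monotone and submodular. The indistinguishability argument is then run between the two discrete functions $\hat f,\hat g$ (Lemma~3.3 of \cite{Vondrak}), not between a single stitched function and its symmetrization.

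Second, and more substantively, you assert that ``$\bar f$ has its continuous optimum achieved by symmetric fractional assignments'' and hence any output partition costs at least roughly $N\cdot\symopt(f)$. This is exactly where the \emph{row-column-type} hypothesis is doing work, and transpose-invariance alone is not enough. The paper proves (Lemma~\ref{lem:OPT_equality}, part~2) that the symmetric multilinear optimum equals $\symopt(f)$ by showing that a symmetric fractional solution can be rounded to a symmetric \emph{integral} partition without increasing expected cost; the rounding assigns each unordered pair $\{(i,j),(j,i)\}$ jointly, and the key computation (Claim~\ref{claim:symmetric-rounding-atmost-rounding}) uses that $f(S)=\sum_i\bigl(g(S\cap R_i)+g(S\cap C_i)\bigr)$ so that within any single row or column the coupled and independent roundings have the same marginals and hence the same expected $g$-value. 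Without this decomposition you cannot conclude that the symmetric continuous optimum matches the symmetric integral optimum, and your lower bound on the algorithm's output would only be against the (potentially smaller) symmetric \emph{fractional} optimum.
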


Finally, we show that there exists a row-column-type non-negative monotone submodular function with large symmetry gap. 
\begin{restatable}{theorem}{thmLargeSymmetryGap}\label{thm:large-symmetry-gap}
    Let $V^{(k)} = [k] \times [k]$. Then, there exists a sequence $\left(f^{(k)} \colon 2^{V^{(k)}} \rightarrow \mathbb{R}_{\geq 0}\right)_{k \in \mathbb{N}}$ of row-column-type non-negative monotone submodular functions parameterized by $k$ such that 
    \[
        \lim_{k \rightarrow \infty} \symgap\left(f^{(k)}\right) = 10/9.
    \]
\end{restatable}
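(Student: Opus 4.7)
The plan is to exhibit an explicit family of coverage-type functions. For each $k$, define $g^{(k)}:2^{V^{(k)}}\to\mathbb{R}_{\geq 0}$ to be the (weighted) coverage function of a carefully chosen hypergraph $H^{(k)}$ on $V^{(k)}=[k]\times[k]$, and set $f^{(k)}(A)=\sum_{i=1}^{k}\bigl(g^{(k)}(A\cap R_{i})+g^{(k)}(A\cap C_{i})\bigr)$. Weighted coverage functions are automatically non-negative, monotone, and submodular, and these properties transfer to $f^{(k)}$, so the result is a valid row-column-type function with no further verification required. The hypergraph $H^{(k)}$ will be designed so that an asymmetric partition aligned with the rows (or a close variant thereof) is near-optimal, while every partition satisfying $V_{\ell}=V_{\ell}^{T}$ is forced to pay extra by ``mirroring'' off-diagonal assignments across rows and columns.

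Next, I would compute $\opt(f^{(k)})$ by evaluating a specific asymmetric partition (the row partition $V_{i}=R_{i}$ is the natural candidate) and obtaining a closed-form expression $\alpha_{k}$ in terms of $k$ and the weights of $H^{(k)}$. For a matching lower bound, I would use the identity $\sum_{\ell}f^{(k)}(V_{\ell})=\sum_{j}\sum_{\ell}g^{(k)}(V_{\ell}\cap R_{j})+\sum_{j}\sum_{\ell}g^{(k)}(V_{\ell}\cap C_{j})$ and lower-bound each ``row sum'' and ``column sum'' by exploiting the concavity of coverage functions under splitting. For $\symopt(f^{(k)})$, the key observation is that a symmetric partition is determined by a function $h:\binom{[k]}{2}\to[k]$ assigning each off-diagonal transpose pair $\{(i,j),(j,i)\}$ to a part, together with the forced placement $(i,i)\in V_{i}$. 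Because $V_{\ell}=V_{\ell}^{T}$, row and column contributions to $f^{(k)}(V_{\ell})$ coincide, so the total cost simplifies to $2\sum_{\ell}\sum_{i}g^{(k)}(V_{\ell}\cap R_{i})$, a combinatorial quantity depending only on the preimage sizes $|h^{-1}(\ell)|$. A convexity or exchange argument then shows the minimum over $h$ is attained (up to lower-order terms) by the ``all pairs into one part'' assignment, giving $\symopt(f^{(k)})=\beta_{k}(1+o(1))$.

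The final step is to verify $\lim_{k\to\infty}\beta_{k}/\alpha_{k}=10/9$ by algebraic manipulation of the closed forms, using that the upper bound on $\opt$ and the lower bound on $\symopt$ established in the previous step match the exhibited asymmetric and symmetric partitions. The main obstacle is the precise design of $H^{(k)}$: simple coverage constructions (e.g., hyperedges equal to rows, or weighted sums of row and column indicators) produce symmetry gaps that either degenerate to $1$ in the limit or equal values like $7/6$ that are maximized at tiny $k$ and decay afterwards. Getting exactly $10/9$ in the limit requires calibrating hyperedge weights so that the ``mirroring penalty'' of a symmetric partition scales as $\tfrac{1}{9}$ of the asymmetric cost in the leading order. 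The second delicate step is the lower bound on $\symopt$: it must rule out every choice of $h$, not just the natural ``all-pairs-in-one'' candidate, which I expect to handle by a case analysis on the distribution of pair-counts $|h^{-1}(\ell)|$ combined with a convexity argument based on the (concave) coverage function.
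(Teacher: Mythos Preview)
Your proposal is not a proof but a plan, and the plan is missing its central ingredient: you never actually construct $H^{(k)}$. You acknowledge this yourself (``The main obstacle is the precise design of $H^{(k)}$'') and list attempted constructions that fail. A proof of this theorem \emph{is} the construction together with the verification; without it there is nothing to check.

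Beyond the missing construction, two steps in your outline would not go through as stated. First, the assertion that for a symmetric partition the cost $2\sum_{\ell}\sum_{i}g(V_{\ell}\cap R_{i})$ is ``a combinatorial quantity depending only on the preimage sizes $|h^{-1}(\ell)|$'' is false: the set $V_{\ell}\cap R_{i}$ has cardinality $\mathbbm{1}[\ell=i]+|\{j\neq i:h(\{i,j\})=\ell\}|$, which depends on how the pairs incident to row $i$ are distributed among the parts, not merely on the global counts $|h^{-1}(\ell)|$. Second, the hope that ``a convexity or exchange argument'' pins down $\symopt$ is too optimistic; in the paper the lower bound on $\symopt$ (their Lemma~\ref{lem:sym-opt-lower-bound}) occupies an entire subsection and proceeds by a delicate sequence of structural transformations---moving ``unhappy'' off-diagonal elements into $P_{1}$, repeated swap operations that permute consecutive indices, and a reduction to ``structured'' partitions indexed by a threshold $i^{*}$---before a three-case computation finishes the bound. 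None of this is a convexity argument on preimage sizes.

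It is also worth noting where the paper's construction differs qualitatively from yours. Their $g$ is \emph{not} a coverage function of a hypergraph on $V^{(k)}$; it depends on $|S|$ through two different concave caps $\phi_{t}$ and $\phi_{n}$ according to whether $S$ contains a diagonal element $(\ell,\ell)$. The large first-step marginal $\phi_{t}(1)=\tfrac{3}{8}k$ when a terminal is present, versus $\phi_{n}(1)=1$ otherwise, is exactly what penalizes symmetric partitions (which must split each row's terminal from many of its off-diagonal mates) relative to the row partition. A pure coverage construction has no such ``terminal bonus'' built in, and you give no evidence that one can be tuned to produce the ratio $10/9$.
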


Theorem \ref{thm:mono-sub-mp-lower-bound} follows from Theorems \ref{thm:sym-gap-to-oracle-lower-bound} and \ref{thm:large-symmetry-gap}. We prove Theorem \ref{thm:sym-gap-to-oracle-lower-bound} in Section \ref{subsec:sym_gap_to_query-complexity} and Theorem \ref{thm:large-symmetry-gap} in Section \ref{subsec:large-symmetry-gap}. Our proof approach for Theorem \ref{thm:sym-gap-to-oracle-lower-bound} is similar to the proof approaches in \cite{Vondrak, EVW}---the result does not follow directly from the results in \cite{Vondrak, EVW} since we are interested in \monosubmp (which was not the focus of those two works), so we have to work out the details. We encourage first-time readers to skip Section \ref{thm:sym-gap-to-oracle-lower-bound}. Our main contribution is Theorem \ref{thm:large-symmetry-gap}, i.e., constructing an instance of \monosubmp with large symmetry gap. 


\subsubsection{Lower Bound from Symmetry Gap}
\label{subsec:sym_gap_to_query-complexity}

In this section, we 
prove that the approximation factor of an algorithm that makes polynomial number of function evaluation queries should be at least the symmetry gap, i.e., prove Theorem \ref{thm:sym-gap-to-oracle-lower-bound}. Our proof approach is similar to the proof approaches in \cite{Vondrak, EVW}, but we have to work out the details since we are interested in \monosubmp which was not the focus of those two works. 
We begin by showing that \opt and \symopt can be obtained as optimum solutions to the multilinear relaxation. We define the multilinear extension and the multilinear relaxation of the multiway partitioning problem. 

\begin{definition}
Let $V=[k]\times [k]$ and $f: 2^V\rightarrow \R$ be a set function. 
    \begin{enumerate}
    \item The \emph{multilinear extension} of $f$ is the function $F \colon [0, 1]^V \rightarrow \mathbb{R}$ defined by
        \[
        F(\vx) = \sum_{S \subseteq V} f(S) \left(\prod_{v \in S} x_v \prod_{v \in V \setminus S} (1 - x_v)\right). 
        \]
    \item The \emph{multilinear relaxation} of the multiway partitioning problem over $V$ wrt $f$ for the set $\{(\ell,\ell): \ell\in [k]\}$ of terminals is 
    \begin{align*}
        \text{min} \sum_{\ell = 1}^k F\left(\vx_\ell\right) & \\
        \text{subject to } \sum_{\ell = 1}^k x((i, j), \ell) = 1 & \quad \forall (i, j) \in V \\
        x((\ell, \ell), \ell) = 1 & \quad \forall \ell \in [k] \\
        x((i, j), \ell) \geq 0 & \quad \forall (i, j) \in V, \ell \in [k], 
    \end{align*}
    where $\vx_\ell = \left(x((i, j), \ell)\right)_{(i, j) \in V}$. We will say that a vector $\vx = \left(x((i, j), \ell) \right)_{(i, j) \in V, \ell \in [k]} \in [0, 1]^{V \times [k]}$ is a solution to the multilinear relaxation if it satisfies the constraints of the multilinear relaxation.
    \end{enumerate}
\end{definition}

The following lemma shows that both \opt and \symopt can be expressed as optimum solutions to the multilinear relaxation and a symmetry constrained multilinear relaxation respectively. 
\begin{lemma}
\label{lem:OPT_equality}
    Let $V=[k]\times [k]$ and let $f:2^V\rightarrow \R$ be a row-column-type set function defined over the set $V$. 
    We have the following: 
    \begin{enumerate}
        \item $\opt(f) = \min\left\{\sum_{\ell = 1}^k F(\vx_\ell): \left(\vx_\ell \right)_{\ell \in [k]} \text{ is a feasible solution to the multilinear relaxation} \right\}$  and 
        \item 
        $\symopt(f) = \min\left\{\sum_{\ell = 1}^k F\left(\vy_\ell\right): \left(\vy_\ell \right)_{\ell \in [k]} \text{ is a symmetric feasible solution to the multilinear relaxation}\right\}$.
    \end{enumerate}    
\end{lemma}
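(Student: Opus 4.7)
The plan is to prove each equality by showing $\ge$ and $\le$. The $\ge$ direction is immediate in both parts: every feasible (symmetric) partition $(V_1, \ldots, V_k)$ with $(\ell,\ell) \in V_\ell$ is encoded by the integer (symmetric) vector $x((i,j),\ell) := \indicator{(i,j) \in V_\ell}$, which is feasible for the multilinear relaxation, and for integer $\vx$ one has $F(\vx_\ell) = f(V_\ell)$.

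For the $\le$ direction of part $1$, I would randomly round any feasible $\vx$ by independently assigning each $v \in V$ to class $\ell$ with probability $x(v,\ell)$. This is valid since $\sum_\ell x(v,\ell) = 1$, and the terminal constraint $x((\ell,\ell),\ell) = 1$ places each $(\ell,\ell)$ into $V_\ell$ with probability one, yielding a legal random partition. For each fixed $\ell$, the indicators $\indicator{v \in V_\ell}$ are independent Bernoullis with parameters $x(v,\ell)$ (the classes of distinct elements are chosen independently), which is precisely the product distribution underlying the multilinear extension, so $\E[f(V_\ell)] = F(\vx_\ell)$. Summing over $\ell$ and selecting the best realization yields a partition with $\sum_\ell f(V_\ell) \le \sum_\ell F(\vx_\ell)$.

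For part $2$, I would use a symmetric rounding: for each $i < j$, jointly assign the pair $\{(i,j),(j,i)\}$ to a single class $\ell$ with probability $y((i,j),\ell) = y((j,i),\ell)$ (the equality holding by symmetry of $\vy$), and independently route each diagonal element $(i,i)$ with probabilities $y((i,i),\ell)$. The resulting sets satisfy $V_\ell = V_\ell^T$ by construction, and $(\ell,\ell) \in V_\ell$ almost surely. The main obstacle is that the events $\indicator{(i,j) \in V_\ell}$ and $\indicator{(j,i) \in V_\ell}$ are now fully correlated, so the joint law of $V_\ell$ is \emph{not} the product used in $F$. I would resolve this by invoking the row-column-type hypothesis: inside a single row $R_i = \{(i,1),\ldots,(i,k)\}$ the transposes $(j,i)$ live in distinct columns, so the pairs/singletons containing $(i,1),\ldots,(i,k)$ are mutually disjoint and their class-choices are independent. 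Hence $V_\ell \cap R_i$ still has the product distribution with marginals $\vy_\ell|_{R_i}$, and analogously for $V_\ell \cap C_i$. Expanding $f(V_\ell) = \sum_i (g(V_\ell \cap R_i) + g(V_\ell \cap C_i))$ and taking expectations one row or column at a time gives $\E[f(V_\ell)] = F(\vy_\ell)$, and picking the best realization finishes part $2$.

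The crux of the argument is that the correlations introduced by symmetric rounding live \emph{between} a row and its transpose column, while the row-column decomposition of $f$ evaluates $g$ on at most one row or one column at a time, so every marginal $F$ cares about remains a product distribution. Note that neither submodularity nor monotonicity of $f$ is used anywhere in this lemma; only the row-column-type structure and the fact that $F$ agrees with $f$ on integer points.
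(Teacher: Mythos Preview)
Your proposal is correct and follows essentially the same approach as the paper: both use independent rounding for part~1 and paired symmetric rounding for part~2, and both resolve the correlation issue in part~2 by observing that the row-column decomposition of $f$ means one only needs the marginal distribution on a single $R_i$ or $C_i$, where the symmetric rounding is still a product. The paper takes a slightly more roundabout route in part~2 by first constructing the independent rounding $V^*_\ell$ as an intermediate object and then proving $\E[f(\overline{V}_\ell)] = \E[f(V^*_\ell)]$ via the same row/column argument, whereas you compare directly to $F(\vy_\ell)$; the content is identical.
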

\begin{proof}
For simplicity, we denote $\opt = \opt(f)$ and $\symopt = \symopt(f)$. 
\begin{enumerate}
    \item Let $\mlopt = \min\{\sum_{\ell = 1}^k F(\vx_\ell): \left(\vx_\ell \right)_{\ell \in [k]} \text{ is a feasble solution to the multilinear relaxation} \}$. We show that $\opt = \mlopt$. Let $V_1, \ldots, V_k$ be a partition of $V$ such that $(i, i) \in V_i$ for each $i \in [k]$ and $\sum_{i = 1}^k f(V_i) = OPT$. For every $i, j, \ell\in [k]$, we define 
    \[
    x((i, j), \ell) = \begin{cases} 1 & \text{ if } (i, j) \in V_\ell, \\ 0 & \text{ if } (i, j) \notin V_\ell. \end{cases}
    \]
    Then, $\left(x((i, j), \ell)\right)_{(i, j) \in V, \ell \in [k]}$ is a feasible solution to the multilinear relaxation and its objective value is $\sum_{\ell = 1}^k F\left(\left(x((i, j), \ell) \right)_{(i, j) \in V}\right) = \opt$. Consequently, $\mlopt\le \opt$. Next, let $\vx^*$ be an optimum solution to the multilinear relaxation. 
    We obtain a partition $V^*_1, \ldots, V^*_k$ of $V$ by assigning each element $(i,j)\in V$ to one of the parts as follows: consider an element $(i, j)\in V$; 
    we define $z_0:=0$ and $z_{\ell}:=\sum_{m=1}^{\ell}x^*((i,j),m)$ for every $\ell\in [k]$, pick $\theta\in [0,1]$ uniformly at random, and assign $(i,j)$ to $V_{\ell}$ such that $z_{\ell-1}\le \theta < z_{\ell}$. We note that each element $(i,j)$ is assigned to a part $V^*_{\ell}$ independently with probability $x^*((i,j), \ell)$. 
    We observe that $V^*_1, \ldots, V^*_k$ is a partition of $V$ with $(i, i) \in V^*_i$ for every $i\in [k]$. Moreover, $\E \left[ f(V^*_\ell) \right] = \sum_{S \subseteq V} f(S) \prod_{(i, j) \in S} x^*((i, j), \ell) \prod_{(i, j) \in V \setminus S} (1 - x^*((i, j), \ell)) = F(\vx^*_\ell)$. Hence, $\E \left[ \sum_{\ell = 1}^k f(V^*_\ell) \right] = \sum_{\ell = 1}^k F(\vx^*_\ell) = \mlopt$. This implies that there exists a partition $V_1, \ldots, V_k$ with $(i, i) \in V_i$ for each $i \in [k]$ such that $\sum_{\ell = 1}^k f(V_\ell) \leq \mlopt$, so $\opt \le \mlopt$. 

    \item Let $\mlsymopt = \min\left\{\sum_{\ell = 1}^k F\left(\overline{\vx}_\ell\right): \left(\overline{\vx}_\ell \right)_{\ell \in [k]} \text{ is a symmetric solution to the multilinear relaxation}\right\}$. We show that $\symopt = \mlsymopt$. Using identical reasoning to the previous paragraph, we have that $\mlsymopt \leq \symopt$. Let $\vy$ be an optimum symmetric solution to the multilinear relaxation. We obtain a partition $V^*_1, \ldots, V^*_k$ of $V$ in the same manner as the previous paragraph from $\vy$. 
    By the reasoning of the above paragraph, we have $\E \left[\sum_{\ell = 1}^k f(V^*_\ell)\right] = \mlsymopt$. 
    We recall that $\vy$ is a symmetric feasible solution to the multilinear relaxation but the the partition $V^*_1, \ldots, V^*_k$ need not be symmetric. 
    In addition to the partition $V^*_1, \ldots, V^*_k$, we obtain a symmetric partition $\overline{V}_1, \ldots, \overline{V}_k$ of $V$ by assigning each pair $\{(i,j), (j,i)\}\subseteq V$ to one of the parts as follows: for $i,j\in [k]$ consider the pair of elements $\{(i,j), (j,i)\}\in V$; 
     we define $z_0:=0$ and $z_{\ell}:=\sum_{m=1}^{\ell}y((i,j),m)$ for every $\ell\in [k]$, pick $\theta\in [0,1]$ uniformly at random, and assign the pair of elements $\{(i,j), (j,i)\}$ to $V_{\ell}$ such that $z_{\ell-1}\le \theta < z_{\ell}$. We note that each element pair $\{(i,j), (j,i)\}$ is assigned to a part $\overline{V}_{\ell}$ independently with probability $y((i,j), \ell)$. 
    We observe that $\overline{V}_1, \ldots, \overline{V}_k$ is a partition of $V$ with  $\overline{V}_i$ being symmetric and $(i, i)\in \overline{V}_i$ for every $i\in [k]$. We now show that $\E\left[f(\overline{V}_{\ell})\right]\le \E \left[ f(V^*_\ell)\right]$ for every $\ell\in [k]$. 
    
    \begin{claim}\label{claim:symmetric-rounding-atmost-rounding}
        For every $\ell\in [k]$, we have that 
        \[\E\left[f(\overline{V}_{\ell})\right]\le \E \left[f(V^*_\ell)\right].\]
    \end{claim}
    \begin{proof}
    Let $\ell\in [k]$. 
    We recall that $V=[k]\times [k]$ and $f:2^V\rightarrow \R_{\ge 0}$ is a row-column-type function. 
    Hence, there exists a function $g:2^{V}\rightarrow \R_{\ge 0}$ such that $f(A)=\sum_{i=1}^k (g(A\cap R_i) + g(A\cap C_i))$ for every $A\subseteq V$, where $R_i:=\{(i, j): j\in [k]\}$ and $C_i:=\{(j,i): j\in [k]\}$ for every $i\in [k]$. Hence, for every $i\in [k]$, we have that 
    \begin{align*}
        \E \left[ g\left(\overline{V}_\ell \cap R_i\right) \right] &= \sum_{S \subseteq R_i} g(S) \Pr \left[ \overline{V}_\ell \cap R_i = S\right] \\
        &= \sum_{S \subseteq R_i} g(S) \prod_{(i, j) \in S} y((i, j), \ell) \prod_{(i, j) \in R_i \setminus S} (1 - y((i, j), \ell)) \\
        &= \sum_{S \subseteq R_i} g(S) \Pr \left[ V^*_\ell \cap R_i = S\right] \\
        &= \E \left[ g\left(V^*_\ell \cap R_i\right) \right]. 
    \end{align*}
    Similar reasoning shows that $\E \left[ g\left(\overline{V}_\ell \cap C_i\right) \right] = \E \left[ g(V^*_\ell \cap C_i) \right]$ for every $i\in [k]$. 
    Hence, we have that 
    \begin{align*}
        \E \left[ f\left(\overline{V}_\ell\right) \right] &= \sum_{i = 1}^k \left(\E \left[ g\left(\overline{V}_\ell \cap R_i\right)\right] + \E \left[ g\left(\overline{V}_\ell \cap C_i\right) \right]\right)\\ 
        &= \sum_{i = 1}^k \left(\E \left[ g(V^*_\ell \cap R_i) \right] + \E \left[ g(V^*_\ell \cap C_i) \right]\right) \\
        &= \E \left[ f(V^*_\ell) \right].
    \end{align*}
    \end{proof}
    Claim \ref{claim:symmetric-rounding-atmost-rounding} implies that there exists a partition $\overline{V}_1, \ldots, \overline{V}_k$ of $V$ with  $\overline{V}_i$ being symmetric and $(i, i)\in \overline{V}_i$ for every $i\in [k]$ such that $\sum_{i=1}^{\ell}f(\overline{V}_{\ell})\le \E \left[\sum_{\ell = 1}^k f(V^*_\ell)\right] = \mlsymopt$ and hence, $\symopt\le \mlsymopt$. 
\end{enumerate}
\end{proof}

We will use the following three lemmas from \cite{Vondrak} to prove Theorem \ref{thm:sym-gap-to-oracle-lower-bound}. 


\begin{lemma}[Lemma 3.1 from \cite{Vondrak}]
\label{lem:EVW_discrete}
    Let $H \colon [0, 1]^V \rightarrow \mathbb{R}$ be a function with absolutely continuous first partial derivatives. For an integer $n \geq 1$, let $N = [n]$ and define $h \colon 2^{N \times V} \rightarrow \mathbb{R}$ as $h(S) = H(\vx)$ where $x_i = \frac{1}{n}|S \cap (N \times \{i\})|$. Then, we have the following: 
    \begin{enumerate}
        \item If $\frac{\partial H}{\partial x_i} \geq 0$ everywhere for each $i$, then $h$ is monotone. 

        \item If $\frac{\partial^2 H}{\partial x_i \partial x_j} \leq 0$ almost everywhere for all $i, j$, then $h$ is submodular.
    \end{enumerate}
\end{lemma}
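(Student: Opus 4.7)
The plan is to prove both parts by reducing the discrete statements to integrals of the (partial) derivatives of $H$, and then using the sign hypotheses on those derivatives to conclude. The absolute continuity of the first partial derivatives is precisely what lets us invoke the fundamental theorem of calculus (for both single and iterated integrals), so that a.e. conditions on the second derivatives translate into genuine inequalities for $h$.

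For part (1), I would show monotonicity by checking it one element at a time. If $(j,i) \notin S$, then passing from $S$ to $S \cup \{(j,i)\}$ leaves every coordinate of $\vx$ fixed except that $x_i$ increases by $1/n$. Writing $\vx$ for the vector corresponding to $S$, absolute continuity of $\partial H/\partial x_i$ (in fact, just continuity of $H$ plus the existence of the partial derivative) gives
\[
  h(S \cup \{(j,i)\}) - h(S) \;=\; H(\vx + \tfrac{1}{n}e_i) - H(\vx) \;=\; \int_{0}^{1/n} \frac{\partial H}{\partial x_i}(\vx + s\,e_i)\,ds \;\geq\; 0,
\]
by the hypothesis that $\partial H/\partial x_i \geq 0$ everywhere. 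Iterating this single-element step along any chain from $S$ to $T \supseteq S$ gives $h(S) \leq h(T)$.

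For part (2), I would use the standard characterization of submodularity via pairwise diminishing returns: it suffices to show that for every $S \subseteq N \times V$ and every pair of distinct elements $a = (j_1, i_1), b = (j_2, i_2) \notin S$,
\[
  h(S \cup \{a\}) + h(S \cup \{b\}) \;\geq\; h(S) + h(S \cup \{a,b\}).
\]
Let $\vx$ be the vector corresponding to $S$. There are two cases. If $i_1 \neq i_2$, then adding $a$ raises only $x_{i_1}$ by $1/n$ and adding $b$ raises only $x_{i_2}$ by $1/n$, so applying the fundamental theorem of calculus twice (which is justified by absolute continuity of $\partial H/\partial x_{i_1}$) expresses the required difference as
\[
  \Bigl[H(\vx + \tfrac{1}{n}e_{i_1}) - H(\vx)\Bigr] - \Bigl[H(\vx + \tfrac{1}{n}e_{i_1} + \tfrac{1}{n}e_{i_2}) - H(\vx + \tfrac{1}{n}e_{i_2})\Bigr] = -\int_{0}^{1/n}\!\!\int_{0}^{1/n} \frac{\partial^2 H}{\partial x_{i_1} \partial x_{i_2}}(\vx + s\,e_{i_1} + t\,e_{i_2})\,ds\,dt,
\]
which is nonnegative by the assumption that the mixed partial is $\leq 0$ almost everywhere. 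If instead $i_1 = i_2 = i$, then $a \neq b$ forces two distinct elements of $S \cap (N \times \{i\})$ to be added, so $x_i$ moves from $x_i$ to $x_i + 2/n$ for $S \cup \{a,b\}$, while $S \cup \{a\}$ and $S \cup \{b\}$ both correspond to $\vx + \tfrac{1}{n}e_i$. The submodular inequality becomes $2H(\vx + \tfrac{1}{n}e_i) \geq H(\vx) + H(\vx + \tfrac{2}{n}e_i)$, which is the statement that the univariate slice $t \mapsto H(\vx + t\,e_i)$ is midpoint-concave on $[0, 2/n]$, and this follows by writing the difference as $-\int_0^{1/n}\!\int_0^{1/n}\partial^2 H/\partial x_i^2(\vx + (s+t)e_i)\,ds\,dt \geq 0$.

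The only subtlety I anticipate is handling the measure-theoretic aspect: the hypothesis on $\partial^2 H$ holds only almost everywhere, whereas we need the double integrals above to be pointwise nonnegative for specific values of $\vx$. This is exactly where absolute continuity of $\partial H/\partial x_i$ is essential---it guarantees that $\partial H/\partial x_i$ is the indefinite integral of $\partial^2 H/\partial x_i \partial x_j$ along a.e.\ line in the $x_j$-direction, and hence Fubini lets us rewrite increments of $H$ as double integrals of the second partial without needing it to be defined everywhere. With this in hand, both parts follow cleanly from the sign hypotheses.
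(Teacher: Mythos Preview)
The paper does not prove this lemma; it simply quotes it from \cite{Vondrak} and uses it as a black box. So there is no ``paper's own proof'' to compare against. Your argument is correct and is essentially the standard proof of this fact: reduce monotonicity to single-element marginals and submodularity to the pairwise diminishing-returns inequality, then express each discrete increment as a line integral (or iterated integral) of the relevant partial derivative of $H$ and read off the sign. Your case split in part (2) according to whether the two added elements lie in the same fiber $N\times\{i\}$ or different fibers is exactly right, and you correctly identify that absolute continuity of the first partials is what justifies writing increments as integrals of the (a.e.-defined) second partials.

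One small remark on the measure-theoretic point you flag at the end: the hypothesis ``$\partial^2 H/\partial x_i\partial x_j\le 0$ almost everywhere on $[0,1]^V$'' does not, by itself, say anything about a specific two-dimensional slice (which has measure zero). What actually drives the argument is the absolute continuity of $\partial H/\partial x_{i_1}$ along each line in the $x_{i_2}$-direction: this guarantees that on \emph{every} such line the first partial is the indefinite integral of its own a.e.\ derivative, and that derivative coincides a.e.\ (on the line) with $\partial^2 H/\partial x_{i_2}\partial x_{i_1}$, which by hypothesis is $\le 0$ wherever it exists. You essentially say this, but it is worth being explicit that the sign information on the slice comes from absolute continuity on that slice, not from Fubini applied to the ambient a.e.\ hypothesis.
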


We will need the notion of transpose-invariant functions to state the next lemma. 
\begin{definition}
Let $V=[k]\times [k]$. A set function $f:2^V\rightarrow \R$ is \emph{transpose-invariant} if $f(A)=f(A^T)$ for every $A\subseteq V$. Generalizing, we say that a function $F:[0,1]^V\rightarrow R$ is \emph{transpose-invariant} if $F(\vx)=F(\vx^T)$ for every $\vx \in [0,1]^{[k]\times [k]}$, where $\vx^T$ is the transpose of the matrix $\vx$. 
\end{definition}
We note that a row-column-type set function is transpose-invariant by definition. 

\begin{lemma}[Lemma 3.2 from \cite{Vondrak}] 
\label{lem:EVW_continuous}
    Let $V=[k]\times [k]$, $f: 2^V\rightarrow \R_{\ge 0}$ be a transpose-invariant function over the set $V$, and $\eps>0$. 
    For $\vx\in [0,1]^{V}$, denote $\overline{\vx}=(\vx+\vx^T)/2$. 
    Then, there exists $\delta>0$ and functions $\hat{F}, \hat{G} \colon [0, 1]^V \rightarrow \mathbb{R}_{\ge 0}$ satisfying the following: 
    \begin{enumerate}[label=(\roman*)]
        \item Both $\hat{F}$ and $\hat{G}$ are transpose-invariant. 
        \item For all $\vx \in [0, 1]^V$, $\hat{G}(\vx) = \hat{F}(\overline{\vx})$.

        \item For all $\vx \in [0, 1]^V$, $|\hat{F}(\vx) - F(\vx)| \leq \epsilon$. 

        \item If $\|\vx - \overline{\vx}\|^2 \leq \delta$, then $\hat{F}(\vx) = \hat{G}(\vx)$ and the value of $\hat{F}(x)$ depends only on $\overline{\vx}$. 

        \item The first partial derivatives of $\hat{F}$ and $\hat{G}$ are absolutely continuous.

        \item If $f$ is monotone, then $\frac{\partial \hat{F}}{\partial x_i} \geq 0$ and $\frac{\partial \hat{G}}{\partial x_i} \geq 0$ everywhere. \label{label:monotone}

        \item If $f$ is submodular, then $\frac{\partial^2 \hat{F}}{\partial x_i \partial x_j} \leq 0$ and $\frac{\partial^2 \hat{G}}{\partial x_i \partial x_j} \leq 0$ almost everywhere. \label{submodular}
    \end{enumerate}
\end{lemma}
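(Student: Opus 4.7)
The approach follows Vondr\'{a}k's construction in the proof of Lemma 3.2 of \cite{Vondrak}, specialized to the transposition group $\{e, \tau\}$ acting on $[0,1]^V$ via $\tau(\vx) = \vx^T$. The idea is to build $\hat{F}$ as a smooth interpolation between a ``fully symmetrized'' function $\overline{F}(\vx) := F(\overline{\vx})$, which should govern $\hat{F}$ near the symmetric subspace $\{\vx : \vx = \vx^T\}$, and a mollified version of $F$ itself, which should govern $\hat{F}$ far from the symmetric subspace.

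I would start by defining $\overline{F}(\vx) := F(\overline{\vx})$. This function is transpose-invariant, depends only on $\overline{\vx}$, and since $\vx \mapsto \overline{\vx}$ is a linear map with non-negative coefficients, $\overline{F}$ inherits both properties (vi) and (vii) from $F$ by the chain rule. I would then fix a transpose-symmetric $C^2$ mollifier $\psi_\eta$ supported in a ball of radius $\eta$ and a smooth cutoff $\rho \colon [0,\infty) \to [0,1]$ with $\rho(t) = 1$ on $[0,\delta/2]$ and $\rho(t) = 0$ on $[\delta,\infty)$, and propose
\[
\hat{F}(\vx) := \rho\!\left(\|\vx - \overline{\vx}\|^2\right) \overline{F}(\vx) \;+\; \left(1 - \rho\!\left(\|\vx - \overline{\vx}\|^2\right)\right) (F * \psi_\eta)(\vx),
\]
where $F$ is first extended smoothly to an open neighborhood of $[0,1]^V$ (using that the multilinear extension is a polynomial globally defined on $\mathbb{R}^V$). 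I would then set $\hat{G}(\vx) := \hat{F}(\overline{\vx})$, which automatically gives properties (i), (ii), and (iv), and makes $\hat{G}$ depend only on $\overline{\vx}$.

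Properties (iii) and (v) follow from standard estimates. For (iii), the multilinear extension $F$ is Lipschitz on $[0,1]^V$ with a Lipschitz constant $L$ controlled by $\max_{S\subseteq V}|f(S)|$ and $|V|$; thus $|F * \psi_\eta - F|_\infty = O(L\eta)$ and $|\overline{F}(\vx) - F(\vx)| \le L\|\vx - \overline{\vx}\| \le L\sqrt{\delta}$ on the support of $\rho(\|\vx-\overline{\vx}\|^2)$. Choosing $\eta, \delta$ small enough in terms of $L$ and $\eps$ yields $|\hat F - F|_\infty \le \eps$. Property (v) is immediate from the $C^1$ choices of $\rho$ and $\psi_\eta$.

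The main obstacle is verifying properties (vi) and (vii): a convex combination of two monotone concave functions with $\vx$-dependent weights need not remain monotone or concave because derivatives of the weights interact with differences $\overline{F}(\vx) - (F*\psi_\eta)(\vx)$ in the partial derivatives of $\hat F$. This is precisely the subtle point addressed in \cite{Vondrak}: one must choose the shape of $\rho$ and $\psi_\eta$ (and possibly refine the interpolation into a more symmetric form, such as convolving $\overline{F}$ as well, so that on the ``transition annulus'' both summands are pointwise close) so that the problematic cross-terms in $\partial_i \hat F$ and $\partial_i \partial_j \hat F$ vanish or are dominated by the non-positive contributions coming from $F$ and $\overline{F}$. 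I would verify this by direct calculation, differentiating $\hat F$ and grouping the terms by their sign, which is essentially the technical calculation in \cite{Vondrak} transported verbatim to the two-element group $\{e,\tau\}$; this allows one to conclude monotonicity of $\hat F, \hat G$ from non-negativity of the first derivatives of $F$, and submodularity from non-positivity of the second derivatives of $F$, via (the converse direction of) the correspondence expressed in Lemma \ref{lem:EVW_discrete}.
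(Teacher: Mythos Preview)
The paper does not give its own proof of this lemma; it is quoted verbatim as Lemma~3.2 of \cite{Vondrak} and used as a black box. So there is nothing in the present paper to compare your argument against---the relevant comparison is with Vondr\'{a}k's original proof.

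Your high-level strategy (interpolate between the symmetrized function $\overline{F}(\vx)=F(\overline{\vx})$ near the fixed subspace and a mollified $F$ away from it) is in the right spirit, and your treatment of properties (i)--(v) is fine. The genuine gap is exactly where you flag it: properties (vi) and (vii). Your proposed $\hat F$ is a convex combination of two functions with \emph{$\vx$-dependent} weights, and differentiating produces cross-terms of the form $\rho'(\|\vx-\overline{\vx}\|^2)\cdot\partial_i\|\vx-\overline{\vx}\|^2\cdot\bigl(\overline{F}(\vx)-(F*\psi_\eta)(\vx)\bigr)$ (and worse second-order analogues). These have no reason to be sign-controlled, and you do not actually show how to kill them; saying ``this is precisely the subtle point addressed in \cite{Vondrak}\ldots transported verbatim'' is not a proof. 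In fact Vondr\'{a}k's construction is \emph{not} a spatially-weighted convex combination of two functions as you wrote. It instead evaluates a single $F$ at a perturbed point, roughly $\hat F(\vx)=\mathbb{E}\bigl[F\bigl(M(\vx)+\text{noise}\bigr)\bigr]$, where $M$ pushes $\vx$ toward $\overline{\vx}$ by an amount depending on the distance to the symmetric subspace. The point of composing rather than mixing is that monotonicity and the second-derivative sign condition are preserved under composition with suitable maps and under averaging, whereas they are not preserved under your pointwise interpolation. So your formula as written would not yield (vi)--(vii), and the fix is not a matter of tuning $\rho$ and $\psi_\eta$ but of changing the construction to the composition form used in \cite{Vondrak}.
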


\begin{lemma}[Lemma 3.3 from \cite{Vondrak}]
\label{lem:EVW_distinguish} 
Let $V=[k]\times [k]$, $f: 2^V\rightarrow \R_{\ge 0}$ be a transpose-invariant function over the set $V$, and $\eps>0$. 
    Let $\hat{F}$, $\hat{G}$ be the two functions satisfying the conclusions of Lemma \ref{lem:EVW_continuous}. For an integer $n \ge 1$ and $N = [n]$, define two discrete functions $\hat{f}, \hat{g} \colon 2^{N \times V} \rightarrow \mathbb{R}_+$ as follows: 
    Let $\sigma_r\in \{0, 1\}$ be chosen uniformly at random for each $r\in N$. 
    For every set $S \subseteq N \times V$, we define a vector $\xi(S) \in [0, 1]^V$ by 
    \begin{align*}
    \xi(S)_{(i, j)} = \frac{1}{n} \left|\left\{r \in N \colon (r, (i, j)) \in S\text{ and } \sigma_r = 0\right\}\cup 
    \left\{r \in N \colon (r, (j, i)) \in S\text{ and } \sigma_r = 1\right\}
    \right|. 
    \end{align*}
    and define $\hat{f}(S) = \hat{F}(\xi(S))$ and $\hat{g}(S) = \hat{G}(\xi(S))$. Then, deciding whether a function given by a value oracle is $\hat{f}$ or $\hat{g}$ (even using a randomized algorithm with a constant probability of success) requires $2^{\Omega(n)}$ function evaluation queries. 
\end{lemma}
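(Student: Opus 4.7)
The plan is to show that for any query set $S \subseteq N \times V$, the random vector $\xi(S)$ is exponentially close to its symmetric projection $\overline{\xi(S)} = (\xi(S) + \xi(S)^T)/2$ with probability $1 - e^{-\Omega(n)}$ over the random bits $\sigma$. Invoking property (iv) of Lemma \ref{lem:EVW_continuous}, any such $\xi(S)$ satisfies $\hat{F}(\xi(S)) = \hat{G}(\xi(S))$, i.e.\ $\hat{f}(S) = \hat{g}(S)$. So almost every oracle query gives identical answers under $\hat{f}$ and $\hat{g}$, and a standard coupling argument promotes this to the claimed query lower bound.

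For the concentration step, fix $S \subseteq N \times V$ and coordinates $(i,j) \in V$, and set $a_r := \indicator{(r,(i,j)) \in S}$ and $b_r := \indicator{(r,(j,i)) \in S}$. A direct calculation from the definition of $\xi$ gives
\[
    \xi(S)_{(i,j)} - \xi(S)_{(j,i)} = \frac{1}{n} \sum_{r \in N} (a_r - b_r)(1 - 2\sigma_r),
\]
a normalized sum of independent, mean-zero, $[-1,1]$-bounded terms. Hoeffding's inequality then yields $\Pr_\sigma[|\xi(S)_{(i,j)} - \xi(S)_{(j,i)}| \ge t] \le 2 e^{-n t^2/2}$. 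Choosing $t = \sqrt{\delta}/k$ and union-bounding over the $k^2$ coordinates, together with the identity $\xi(S) - \overline{\xi(S)} = (\xi(S) - \xi(S)^T)/2$, one obtains
\[
\Pr_\sigma\left[\|\xi(S) - \overline{\xi(S)}\|^2 > \delta\right] \le 2k^2 e^{-n \delta /(2 k^2)} = e^{-\Omega(n)},
\]
where the exponent is $\Omega(n)$ since $k$ and $\delta$ are constants. By Lemma \ref{lem:EVW_continuous}(iv), on this event $\hat{f}(S) = \hat{g}(S)$.

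To lift this per-query bound to adaptive algorithms, I would pass to a deterministic distinguisher $A$ (by averaging over coins) and couple its two executions on $\hat{f}_\sigma$ and $\hat{g}_\sigma$ under a common $\sigma$. Let $\mathcal{E}_t$ denote the event that the first $t$ queries yield equal answers in both executions. On $\mathcal{E}_{t-1}$, the $t$-th query $S_t$ is determined by the common history, so conditioning on the history makes $S_t$ a fixed set to which the single-query bound applies: $\Pr[\mathcal{E}_t^c \mid \mathcal{E}_{t-1}] \le e^{-\Omega(n)}$. A union bound over $q$ steps gives $\Pr[\neg \mathcal{E}_q] \le q \cdot e^{-\Omega(n)}$, so for $q = 2^{o(n)}$ the two executions are identical with probability $1 - o(1)$, ruling out any constant distinguishing advantage and thus forcing $q = 2^{\Omega(n)}$.

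The main subtlety is the adaptivity: because the query $S_t$ depends on $\sigma$ through previous answers, one cannot union-bound Hoeffding over a static list of queries. The step-by-step conditioning above circumvents this by using that, as long as the prior answers have agreed, $S_t$ is a deterministic function of the (common) history, making the single-query Hoeffding bound directly applicable at each step.
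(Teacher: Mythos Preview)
The paper does not prove this lemma; it is quoted from Vondr\'{a}k~\cite{Vondrak}. Your approach---Hoeffding concentration on the entries of $\xi(S)-\xi(S)^T$, then a union bound over queries---is exactly the argument used there, and your per-query concentration computation is correct.

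There is, however, one soft spot in your treatment of adaptivity. You write that on $\mathcal{E}_{t-1}$, ``conditioning on the history makes $S_t$ a fixed set to which the single-query bound applies.'' But the history is itself a function of $\sigma$ (through the oracle answers), so conditioning on it can skew the distribution of $\sigma$, and the unconditional Hoeffding bound for a \emph{fixed} set does not automatically transfer to this conditional probability. What actually makes the step go through is an observation you have set up but not stated: the symmetrized vector $\overline{\xi(S)}$ is \emph{independent of $\sigma$}, since from your own identity one reads off $\xi(S)_{(i,j)}+\xi(S)_{(j,i)}=\tfrac{1}{n}\sum_{r}(a_r+b_r)$. Together with Lemma~\ref{lem:EVW_continuous}(iv), this means that on the good event the oracle answers depend only on $\overline{\xi(S_1)},\ldots,\overline{\xi(S_{t-1})}$ and hence reveal nothing about $\sigma$; consequently the actual query sequence coincides with a deterministic ``reference trajectory'' $S_1^*,\ldots,S_q^*$, over which one can union-bound Hoeffding as a static list. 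Adding this one line closes the argument; without it, the conditioning step as written is not justified.
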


We now prove Theorem \ref{thm:sym-gap-to-oracle-lower-bound}. In Theorem \ref{thm:sym-gap-to-oracle-lower-bound-terminal-sets} below, we prove that symmetry gap gives an inapproximability bound for a slight variant of \monosubmp where the terminals correspond to disjoint sets rather than singleton elements. 
We note that Theorem \ref{thm:sym-gap-to-oracle-lower-bound} follows from Theorem \ref{thm:sym-gap-to-oracle-lower-bound-terminal-sets}  (i.e., for the version of \monosubmp where the terminals are singleton elements) by a standard contraction  of terminal sets into a single terminal and appropriately modifying the associated submodular function. 
\begin{theorem}\label{thm:sym-gap-to-oracle-lower-bound-terminal-sets}
    Let $V=[k]\times [k]$ and $f\colon 2^V \rightarrow \mathbb{R}_{\ge 0}$ be a row-column-type non-negative monotone submodular function over the set $V$. Then, for all $\gamma > 0$ and sufficiently large $n \in \mathbb{Z_+}$, there exist non-negative monotone submodular functions $\hat{f}, \hat{g} \colon 2^{[n] \times V} \rightarrow \mathbb{R_+}$ and a collection of pairwise disjoint terminal sets $T_1, \ldots, T_k \subseteq  [n]\times V$ such that there does not exist an algorithm that achieves an approximation factor of $\symgap(f) - \gamma$ for the following instances of \monosubmp while using $2^{o(n)}$ function evaluation queries on these instances:
    \begin{align*}
        \min&\left\{\sum_{i = 1}^k \hat{f}(V_i) \colon V_1, \ldots, V_k \text{ is a partition of } [n] \times V \text{ with } T_i \subseteq V_i \ \forall\ i \in [k]\right\} \text{ and} \\
        \min&\left\{\sum_{i = 1}^k \hat{g}(V_i) \colon V_1, \ldots, V_k \text{ is a partition of } [n] \times V \text{ with } T_i \subseteq V_i \ \forall\ i \in [k]\right\}.  
    \end{align*}
\end{theorem}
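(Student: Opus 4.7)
The plan is to adapt the symmetry-gap reduction of Vondr\'{a}k and Ene-Vondr\'{a}k-Wu to the monotone submodular setting by exhibiting two submodular functions that are hard to tell apart via queries but whose partition optima differ by a factor of $\symgap(f)$. Given $\gamma > 0$, I first choose $\epsilon > 0$ small enough that $(\symgap(f) - \gamma)(\opt(f) + k\epsilon) < \symopt(f) - k\epsilon$; this is possible because $\symopt(f)/\opt(f) = \symgap(f) > \symgap(f) - \gamma$. I then apply Lemma \ref{lem:EVW_continuous} to the multilinear extension $F$ of $f$ to obtain smoothed functions $\hat{F}, \hat{G}\colon [0,1]^V \to \mathbb{R}_{\ge 0}$ satisfying $\hat{G}(\vx) = \hat{F}(\overline{\vx})$, $|\hat{F}(\vx) - F(\vx)| \le \epsilon$, and the derivative properties needed by Lemma \ref{lem:EVW_discrete}. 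By Lemma \ref{lem:EVW_discrete}, the set functions $\hat{f}, \hat{g}\colon 2^{[n]\times V} \to \mathbb{R}_{\ge 0}$ built via the $\xi$-map of Lemma \ref{lem:EVW_distinguish} are nonnegative, monotone, and submodular. I set the terminal sets $T_\ell := [n] \times \{(\ell,\ell)\}$ for $\ell \in [k]$.

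The crux of the argument is to show that the partition optimum satisfies $\text{OPT}(\hat{f}) \in [\opt(f) - k\epsilon,\ \opt(f) + k\epsilon]$ and $\text{OPT}(\hat{g}) \in [\symopt(f) - k\epsilon,\ \symopt(f) + k\epsilon]$. For the upper bound on $\text{OPT}(\hat{f})$, given an optimal partition $V^*_1, \ldots, V^*_k$ of $V$ witnessing $\opt(f)$, I use the ``$\sigma$-twisted blow-up''
\[
V'_\ell := \{(r, (i,j)): (i,j) \in V^*_\ell,\ \sigma_r = 0\} \cup \{(r, (j,i)): (i,j) \in V^*_\ell,\ \sigma_r = 1\}.
\]
A direct calculation yields $\xi(V'_\ell) = \chi_{V^*_\ell}$ (the two branches of the $\xi$-definition together cover all of $[n]$), so $\hat{f}(V'_\ell) = \hat{F}(\chi_{V^*_\ell})$ lies within $\epsilon$ of $f(V^*_\ell)$; and because $(\ell,\ell)$ is a fixed point of transposition, $T_\ell \subseteq V'_\ell$. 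The same construction applied to an optimal \emph{symmetric} partition for $\symopt(f)$ gives $\text{OPT}(\hat{g}) \le \symopt(f) + k\epsilon$. For the converse bounds, I observe that any feasible partition $(V'_\ell)_\ell$ of $[n]\times V$ induces $\vx_\ell := \xi(V'_\ell) \in [0,1]^V$ which is a feasible solution to the multilinear relaxation of Lemma \ref{lem:OPT_equality} (the packing constraint $\sum_\ell x((i,j), \ell) = 1$ holds because each $(r,u)$ is in exactly one $V'_\ell$, and $x((\ell,\ell), \ell) = 1$ follows from $T_\ell \subseteq V'_\ell$). Hence $\sum_\ell \hat{f}(V'_\ell) \ge \sum_\ell F(\vx_\ell) - k\epsilon \ge \opt(f) - k\epsilon$ by Lemma \ref{lem:OPT_equality}; and $\sum_\ell \hat{g}(V'_\ell) = \sum_\ell \hat{F}(\overline{\vx}_\ell) \ge \symopt(f) - k\epsilon$ because $(\overline{\vx}_\ell)_\ell$ is a \emph{symmetric} feasible solution.

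Finally, suppose for contradiction that an algorithm $\mathcal{A}$ achieves approximation factor $\symgap(f) - \gamma$ using $\text{poly}(n)$ queries. Running $\mathcal{A}$ on the input oracle (either $\hat{f}$ or $\hat{g}$) and then spending $k$ additional queries to evaluate the sum $\sum_\ell (\text{oracle})(P_\ell)$ on the output partition $P$ gives a distinguisher: compare this sum to the threshold $\tau := (\symgap(f) - \gamma)(\opt(f) + k\epsilon)$. Combining the approximation guarantee with the upper bound on $\text{OPT}(\hat{f})$, the sum is at most $\tau$ when the oracle is $\hat{f}$; combining the lower bound on any $\hat{g}$-value of a feasible partition with the choice of $\epsilon$, the sum exceeds $\tau$ when the oracle is $\hat{g}$. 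This polynomial-query distinguisher contradicts Lemma \ref{lem:EVW_distinguish}. The main technical difficulty will be the bookkeeping in the twisted blow-up: verifying that $V'_\ell$ (i) respects the terminal sets without requiring $V^*_\ell$ to be symmetric and (ii) is mapped by the $\sigma$-randomized $\xi$-map exactly to $\chi_{V^*_\ell}$---this is precisely the property that lets a non-symmetric $V^*_\ell$ certify $\text{OPT}(\hat{f}) \le \opt(f) + k\epsilon$ while the symmetrization $\overline{\,\cdot\,}$ built into $\hat{G}$ forces $\text{OPT}(\hat{g})$ to track $\symopt(f)$.
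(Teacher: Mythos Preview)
Your proposal is correct and follows essentially the same route as the paper: build $\hat{F},\hat{G}$ from Lemma~\ref{lem:EVW_continuous}, discretize via the $\xi$-map of Lemma~\ref{lem:EVW_distinguish}, bound $\mathrm{OPT}(\hat g)$ from below using Lemma~\ref{lem:OPT_equality} applied to the symmetrized solution $(\overline{\vx}_\ell)_\ell$, bound $\mathrm{OPT}(\hat f)$ from above by exhibiting an explicit blow-up, and combine these with Lemma~\ref{lem:EVW_distinguish} to derive a contradiction. One small difference worth noting: for the upper bound on $\mathrm{OPT}(\hat f)$ the paper works with a rational approximation $\tilde{\vx}$ to an optimal \emph{fractional} solution of the multilinear relaxation (incurring an extra $\epsilon$, hence the bound $\opt+\epsilon(k{+}1)$), whereas your $\sigma$-twisted blow-up of an optimal \emph{integral} partition achieves $\xi(V'_\ell)=\chi_{V^*_\ell}$ exactly and gives the slightly cleaner bound $\opt+k\epsilon$; either suffices. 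Two cosmetic remarks: your bounds $\mathrm{OPT}(\hat f)\ge \opt-k\epsilon$ and $\mathrm{OPT}(\hat g)\le \symopt+k\epsilon$ are true but not used, and the contradiction should be phrased for $2^{o(n)}$ rather than $\mathrm{poly}(n)$ queries to match the $2^{\Omega(n)}$ lower bound.
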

\begin{proof}

    Let $\gamma > 0$. We choose $\epsilon > 0$ such that $\frac{\symopt - \epsilon k}{\opt + \epsilon(k + 1)} > \frac{\symopt}{\opt} - \gamma$. 
    Consider the multilinear relaxation of multiway partitioning wrt the function $f: 2^V\rightarrow \R$ for the terminals $t_{\ell}=(\ell,\ell)$ for every $\ell\in [k]$. We  fix $n \in \mathbb{Z}_+$ large enough such that there exists a feasible solution $\tilde{\vx}$ to the multilinear relaxation such that the entries of $\tilde{\vx}$ are all rational numbers with $n$ in the denominator and $\sum_{\ell = 1}^k F(\tilde{\vx}_\ell) \leq \sum_{\ell = 1}^k F(\vx^*_\ell) + \epsilon$, where $\vx^*$ is an optimal solution to the multilinear relaxation (we note that such a solution $\tilde{\vx}$ exists since the objective of the multilinear relaxation is continuous). 
    For the chosen value of $\epsilon$, we use Lemma \ref{lem:EVW_continuous} to obtain $\hat{F}:[0,1]^V\rightarrow \R_+$ and $\hat{G}:[0,1]^V\rightarrow \R_+$ satisfying the conclusions of the lemma. For the chosen value of $n$, we let $N=[n]$ and $X=N\times V$. For each $r \in N$, we choose $\sigma_r \in \{0,1\}$ uniformly at random. Now, for $S \subseteq X$, we define the vector $\xi(S) \in [0, 1]^V$ such that 
    \[
        \xi(S)_{(i, j)} = \frac{1}{n} \left|\left\{r \in N \colon (r, (i, j)) \in S\text{ and } \sigma_r = 0\right\}\cup 
    \left\{r \in N \colon (r, (j, i)) \in S\text{ and } \sigma_r = 1\right\}
    \right|. 
    \]
    Next, we define $\hat{f}, \hat{g} \colon X \rightarrow \mathbb{R}_+$ as $\hat{f}(S) = \hat{F}(\xi(S))$ and $\hat{g}(S) = \hat{G}(\xi(S))$. By Lemma \ref{lem:EVW_continuous}(v)--(vii) and Lemma \ref{lem:EVW_discrete}, the functions $\hat{f}$ and $\hat{g}$ are monotone submodular. 
    For each $\ell\in [k]$, let $T_{\ell}:=\{(r, (\ell, \ell): r\in N\}$. 
    We consider the following two instances of \monosubmp:
    \begin{align*}
        \min&\left\{\sum_{\ell = 1}^k \hat{f}(X_\ell) \colon X_1, \ldots, X_k \text{ is a partition of } X \text{ with } T_{\ell}\subseteq X_{\ell}\ \forall\ \ell\in[k]\right\} \text{ and}  \\
        \min&\left\{\sum_{\ell = 1}^k \hat{g}(X_\ell) \colon X_1, \ldots, X_k \text{ is a partition of } X \text{ with } T_{\ell}\subseteq X_{\ell}\ \forall\ \ell\in[k]\right\}. 
    \end{align*}
    Suppose for contradiction there exists an approximation algorithm $\mathcal{A}$ for \monosubmp that achieves an approximation factor of $\left(\symgap(f)- \gamma\right)$ for both instances of \monosubmp using $2^{o(n)}$ function evaluation queries. We will use such an algorithm to design a procedure that decides whether a function given by a value oracle is $\hat{f}$ or $\hat{g}$ using $2^{o(n)}$ function evaluation queries, thereby contradicting Lemma \ref{lem:EVW_distinguish}. 
    
    We now describe our procedure to distinguish between $\hat{f}$ and $\hat{g}$ using algorithm $\mathcal{A}$. Given an oracle for an unknown function $h \in \{\hat{f}, \hat{g}\}$, the procedure applies algorithm $\mathcal{A}$ to the following instance of \monosubmp:
    \[
        \min\left\{\sum_{\ell = 1}^k h(X_\ell) \colon X_1, \ldots, X_k \text{ is a partition of } X \text{ with } T_{\ell}\subseteq X_\ell\ \ \forall\ \ell \in [k] \right\}. 
    \]
    Let the value returned by the algorithm be $\beta$. If $\beta < \symopt - \epsilon k$, then the procedure declares that $h = \hat{f}$. Otherwise, 
    the procedure declares that $h = \hat{g}$. 
    
    We now show correctness of our procedure. It suffices to show that $h=\hat{f}$ if and only if $\beta<\symopt - \eps k$. 
    Suppose $h\neq \hat{f}$, i.e., $h=\hat{g}$. Then, by Claim \ref{claim:hat-g-opt-is-larger-than-sym-opt} shown below, $\beta \geq \symopt - \epsilon k$. 
    Next, suppose that $h = \hat{f}$. Then, by Claim \ref{claim:hat-f-opt-is-smaller-than-sym-opt} shown below, $\beta \leq (OPT + \epsilon(k + 1)) \left( \symgap(f) - \gamma\right) < (OPT + \epsilon(k + 1)) \left( \frac{\symopt - \epsilon k}{\opt + \epsilon(k + 1)}  \right) = \symopt - \epsilon k$. 
\end{proof}

\begin{claim}\label{claim:hat-g-opt-is-larger-than-sym-opt}
        For every partition $X_1, \ldots, X_k$ of $X = N \times V$ with $T_{\ell} \subseteq X_{\ell}$ for every $\ell \in [k]$, we have that 
        \[
        \sum_{\ell = 1}^k \hat{g}(X_\ell) \geq \symopt - \epsilon k. 
        \]
    \end{claim}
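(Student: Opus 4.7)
The plan is to reduce the claim to Lemma \ref{lem:OPT_equality}(2) (which identifies $\symopt$ with the symmetric multilinear relaxation) by showing that each admissible partition $X_1,\ldots,X_k$ produces, via the symmetrization map, a feasible symmetric solution $(\vy_\ell)_{\ell\in[k]}$ whose multilinear value lower bounds $\sum_\ell \hat{g}(X_\ell)$ up to an additive error of $\epsilon k$.

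First, I would set $\vx_\ell := \xi(X_\ell) \in [0,1]^V$ for each $\ell\in[k]$ and verify that $(\vx_\ell)_{\ell\in[k]}$ is a feasible solution to the multilinear relaxation of multiway partitioning over $V$ with terminals $\{(\ell,\ell):\ell\in[k]\}$. Since $X_1,\ldots,X_k$ partition $N\times V$, for every $(i,j)\in V$ the sets $\{r\in N:(r,(i,j))\in X_\ell,\sigma_r=0\}\cup\{r\in N:(r,(j,i))\in X_\ell,\sigma_r=1\}$ partition $N$ as $\ell$ ranges over $[k]$, giving $\sum_{\ell=1}^k \xi(X_\ell)_{(i,j)}=1$. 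Moreover $T_\ell\subseteq X_\ell$ forces $(r,(\ell,\ell))\in X_\ell$ for every $r\in N$, so $\xi(X_\ell)_{(\ell,\ell)}=1$. This holds for every realization of the random signs $\sigma$, so feasibility is deterministic.

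Next, I would symmetrize: define $\vy_\ell:=\overline{\vx_\ell}=(\vx_\ell+\vx_\ell^T)/2$. By construction each $\vy_\ell$ is symmetric, the linear feasibility constraints $\sum_\ell y_\ell(i,j)=1$ and $y_\ell(\ell,\ell)=1$ are inherited from the corresponding facts for $\vx_\ell$ (noting $\vx_\ell$ and $\vx_\ell^T$ both satisfy them at the diagonal entry $(\ell,\ell)$), so $(\vy_\ell)_{\ell\in[k]}$ is a feasible symmetric solution to the multilinear relaxation. Applying Lemma \ref{lem:OPT_equality}(2) yields $\sum_{\ell=1}^k F(\vy_\ell)\geq \symopt$.

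Finally, I would transfer this bound to $\hat{g}$. By Lemma \ref{lem:EVW_continuous}(ii), $\hat{G}(\vx)=\hat{F}(\overline{\vx})$, so
\[
\hat{g}(X_\ell)=\hat{G}(\xi(X_\ell))=\hat{G}(\vx_\ell)=\hat{F}(\overline{\vx_\ell})=\hat{F}(\vy_\ell).
\]
Lemma \ref{lem:EVW_continuous}(iii) gives $|\hat{F}(\vy_\ell)-F(\vy_\ell)|\leq\epsilon$ for each $\ell$, so summing yields
\[
\sum_{\ell=1}^k \hat{g}(X_\ell)=\sum_{\ell=1}^k \hat{F}(\vy_\ell)\geq \sum_{\ell=1}^k F(\vy_\ell)-\epsilon k\geq \symopt-\epsilon k.
\]
There is no real obstacle here; the only place care is required is checking that symmetrization preserves the terminal constraints at the diagonal, which is immediate because the diagonal entries are already $1$ in $\vx_\ell$ and $(1+1)/2=1$ under the transpose-averaging.
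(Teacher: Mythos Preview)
Your proof is correct and follows essentially the same route as the paper: unwind $\hat g$ via Lemma~\ref{lem:EVW_continuous}(ii) to get $\hat F(\overline{\xi(X_\ell)})$, apply Lemma~\ref{lem:EVW_continuous}(iii) to pass to $F(\overline{\xi(X_\ell)})-\epsilon$, and then invoke Lemma~\ref{lem:OPT_equality}(2) after checking that $(\overline{\xi(X_\ell)})_{\ell\in[k]}$ is a feasible symmetric solution to the multilinear relaxation. Your write-up is in fact more careful than the paper's in verifying the feasibility of the symmetrized solution, but the argument is otherwise identical.
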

    \begin{proof}
    Let $X_1, \ldots, X_k$ be a partition of $X$ with $T_{\ell} \subseteq X_{\ell}$ for every $\ell \in [k]$.  We have that 
    \begin{align*}
        \sum_{\ell = 1}^k \hat{g}(X_\ell) &= \sum_{\ell = 1}^k \hat{G}(\xi(X_\ell)) \quad \quad \text{(by definition of $\hat{g}$)}\\
        &= \sum_{\ell = 1}^k \hat{F}(\overline{\xi(X_\ell)}) \quad \quad \text{(by Lemma \ref{lem:EVW_continuous}(ii))}\\
        &\geq \sum_{\ell = 1}^k (F(\overline{\xi(X_\ell)}) - \epsilon) \quad \quad \text{(by Lemma \ref{lem:EVW_continuous}(iii))}\\
        &= \sum_{\ell = 1}^k F(\overline{\xi(X_\ell)}) - \epsilon k\\
        &\ge \symopt-\epsilon k. \quad \quad \text{(by Lemma \ref{lem:OPT_equality})} 
    \end{align*}
    The last inequality is by Lemma \ref{lem:OPT_equality} since  $\left(\overline{\xi(X_\ell)}\right)_{\ell = 1}^k$ is a feasible symmetric solution to the multilinear relaxation of multiway partitioning wrt the function $f:2^V\rightarrow \R$ for the terminals $t_{\ell}=(\ell,\ell)$ for every $\ell\in [k]$. 
    \end{proof}

    \begin{claim}\label{claim:hat-f-opt-is-smaller-than-sym-opt}
        There exists a partition $X_1, \ldots, X_k$ of $X = N \times V$ with $T_{\ell}\subseteq X_{\ell}$ for every $\ell \in [k]$ such that 
        \[
        \sum_{\ell = 1}^k \hat{f}(X_\ell) \leq \opt + \epsilon (k + 1). 
        \]
    \end{claim}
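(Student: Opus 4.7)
The plan is to construct a deterministic partition $X_1, \ldots, X_k$ of $N \times V$ such that $\xi(X_\ell) = \tilde{\vx}_\ell$ exactly for every $\ell \in [k]$. Once that is achieved, the desired bound follows in a short calculation using Lemma \ref{lem:EVW_continuous}(iii) together with the defining property $\sum_\ell F(\tilde{\vx}_\ell) \leq \opt + \epsilon$ of the rational solution $\tilde{\vx}$ fixed earlier in the proof.

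Set $a_{(i,j), \ell} := n\tilde{x}((i,j), \ell)$, $N_0 := \{r \in N : \sigma_r = 0\}$, $N_1 := N \setminus N_0$, and $n_b := |N_b|$. Each $a_{(i,j), \ell}$ is a non-negative integer since $\tilde{\vx}$ has rational entries with denominator $n$, and $\sum_\ell a_{(i,j), \ell} = n$. For a candidate partition, writing $\alpha_{(i,j),\ell} := |\{r \in N_0 : (r, (i,j)) \in X_\ell\}|$ and $\beta_{(i,j),\ell} := |\{r \in N_1 : (r, (i,j)) \in X_\ell\}|$, the definition of $\xi$ rewrites the target equality $\xi(X_\ell)_{(i,j)} = \tilde{x}((i,j), \ell)$ as the integer coupling condition $\alpha_{(i,j),\ell} + \beta_{(j,i),\ell} = a_{(i,j),\ell}$; the requirement that $X_1, \ldots, X_k$ partition $N \times V$ forces $\sum_\ell \alpha_{(i,j),\ell} = n_0$ and $\sum_\ell \beta_{(i,j),\ell} = n_1$.

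To realize such counts, I would, for each $(i,j) \in V$ independently, pick non-negative integers $\alpha_{(i,j),\ell} \in \{0, 1, \ldots, a_{(i,j),\ell}\}$ with $\sum_\ell \alpha_{(i,j),\ell} = n_0$ and set $\beta_{(j,i),\ell} := a_{(i,j),\ell} - \alpha_{(i,j),\ell}$. Such $\alpha$'s exist because $\sum_\ell a_{(i,j),\ell} = n \geq n_0$ and all involved bounds are integers; the resulting $\beta_{(j,i),\ell}$ are automatically non-negative and satisfy $\sum_\ell \beta_{(j,i),\ell} = n - n_0 = n_1$. Having fixed these counts, arbitrary subsets of $N_0$ and $N_1$ of the prescribed sizes determine the partition. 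The terminal constraint is automatic: $\tilde{x}((\ell,\ell),\ell) = 1$ forces $a_{(\ell,\ell), \ell'} = n \cdot \indicator{\ell' = \ell}$, which then forces $\alpha_{(\ell,\ell),\ell'} = \beta_{(\ell,\ell),\ell'} = 0$ for every $\ell' \neq \ell$, so each $(r, (\ell,\ell))$ lies in $X_\ell$ and therefore $T_\ell \subseteq X_\ell$.

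With $\xi(X_\ell) = \tilde{\vx}_\ell$ secured, the claim follows from
\[
\sum_{\ell = 1}^k \hat{f}(X_\ell) = \sum_{\ell = 1}^k \hat{F}(\tilde{\vx}_\ell) \leq \sum_{\ell = 1}^k (F(\tilde{\vx}_\ell) + \epsilon) \leq \opt + \epsilon(k+1),
\]
combining $\hat{f}(S) = \hat{F}(\xi(S))$, Lemma \ref{lem:EVW_continuous}(iii), and the rational-approximation property of $\tilde{\vx}$. The main subtlety is the coupling between the pair $\{(i,j),(j,i)\}$ introduced by the $\sigma$-twisting inside $\xi$; this is what forces the equation $\alpha_{(i,j),\ell} + \beta_{(j,i),\ell} = a_{(i,j),\ell}$ to involve both pair-members simultaneously. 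The resolution is that the coupling decouples across distinct unordered pairs, reducing the construction to independent, trivially feasible integer-transportation problems, one per pair.
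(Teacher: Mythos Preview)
Your proposal is correct and follows essentially the same route as the paper: both construct a partition $X_1,\ldots,X_k$ of $N\times V$ with $\xi(X_\ell)=\tilde{\vx}_\ell$ exactly, then apply Lemma~\ref{lem:EVW_continuous}(iii) and the rational-approximation property of $\tilde{\vx}$ to conclude. The only difference is presentational: the paper writes down an explicit interval-based assignment (for each $(i,j)$ and $r$, place $(r,(i,j))$ according to the cumulative-sum block of $[n]$ that $r$ falls into, using the $(i,j)$-profile when $\sigma_r=0$ and the $(j,i)$-profile when $\sigma_r=1$), whereas you phrase the same thing as an integer-feasibility problem and argue existence. Both handle the $\sigma$-twist the same way and both obtain $T_\ell\subseteq X_\ell$ from $\tilde{x}((\ell,\ell),\ell)=1$.
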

    \begin{proof}
    We recall that there exists a feasible solution $\tilde{\vx}$ and an optimum solution $\vx^*$ to the multilinear relaxation of multiway partitioning wrt the function $f: 2^V\rightarrow \R$ for the terminals $t_{\ell}=(\ell,\ell)$ for every $\ell\in [k]$ with the following property: the entries of $\tilde{\vx}$ are all rational numbers with $n$ in the denominator and $\sum_{\ell = 1}^k F(\tilde{\vx}_\ell) \leq \sum_{\ell = 1}^k F(\vx^*_\ell) + \epsilon$. 
    We will round such a fractional solution $(\tilde{\vx}_{\ell})_{\ell\in [k]}$ to a partition $X_1, \ldots, X_k$ of $X=N\times V$ with $T_{\ell}\subseteq X_{\ell}$ for every $\ell\in [k]$ as follows: 
    we define 
    \begin{align*}
        X_1^0&:=\{(r, (i,j)): r\in N, i, j\in [k], \sigma_r=0, r\in [n\tilde{x}((i,j), 1)]\}\},\\
        X_1^1&:=\{(r, (j,i)): r\in N, i, j\in [k], \sigma_r=1, r\in [n\tilde{x}((i,j), 1)]\}\}, 
    \end{align*}
    and for each $\ell\in \{2, \ldots, k\}$, 
    \begin{align*}
        X_{\ell}^0&:= \left\{(r, (i,j)): r\in N, i, j\in [k], \sigma_r=0, r\in [n\tilde{x}((i,j), {\ell-1})+1, n\tilde{x}((i,j), \ell-1)+n\tilde{x}((i,j), \ell)]\right\}\\
        X_{\ell}^1&:= \left\{(r, (j,i)): r\in N, i, j\in [k], \sigma_r=1, r\in [n\tilde{x}((i,j), {\ell-1})+1, n\tilde{x}((i,j), \ell-1)+n\tilde{x}((i,j), \ell)]\right\}, 
    \end{align*}
    and set
    $X_{\ell}:=X_{\ell}^0\cup X_{\ell}^1$ for every $\ell\in [k]$. We note that $X_1, \ldots, X_{\ell}$ is a partition of $X=N\times V$ with $T_{\ell}\subseteq X_{\ell}$ for every $\ell\in [k]$. Finally, we bound the cost of the solution as follows: 
    \begin{align*}
        \sum_{\ell = 1}^k \hat{f}(X_\ell) &= \sum_{\ell = 1}^k \hat{F}(\tilde{\vx}_\ell) \quad \quad \text{(by definition of $\hat{f}$ and $X_1, \ldots, X_k$)}\\
        &\leq \sum_{\ell = 1}^k \left(F(\tilde{\vx}_\ell) + \epsilon\right) \quad \quad \text{(by Lemma \ref{lem:EVW_continuous}(iii))}\\
        &= \sum_{\ell = 1}^k F(\tilde{\vx}_\ell) + \epsilon k \\
        &\leq \sum_{\ell = 1}^k F(\vx^*_\ell) + \epsilon(k + 1) \quad \quad \left(\text{since $\sum_{\ell = 1}^k F(\tilde{\vx}_\ell) \leq \sum_{\ell = 1}^k F(\vx^*_\ell) + \epsilon$}\right)\\
        &= \opt + \epsilon(k + 1). 
    \end{align*}
    The last inequality is by Lemma \ref{lem:OPT_equality} since  $\left(x^*_\ell\right)_{\ell = 1}^k$ is an optimum solution to the multilinear relaxation of multiway partitioning wrt the function $f:2^V\rightarrow \R$ for the terminals $t_{\ell}=(\ell,\ell)$ for every $\ell\in [k]$. 
    \end{proof}

\subsubsection{Symmetry Gap}
\label{subsec:large-symmetry-gap}
In this section, we construct an instance of \monosubmp with large symmetry gap, i.e., prove Theorem \ref{thm:large-symmetry-gap}. We begin by defining the function of interest. 


\begin{definition}
\label{def:def:f_for_10/9}
    Let $V = [k] \times [k]$ for $k\ge 4$. For each $i, j\in [k]$, let 
    \begin{align*}
        R_i &= \{(i, j) \colon j \in [k]\},\\
        \overrightarrow{R}_i &= \{(i, j) \colon j \in \{i + 1, i + 2, \ldots, k\} \}, \\
        \overleftarrow{R}_i &= \{(i, j) \colon j \in [i - 1]\}, \text{ and}\\
        C_j &= \{(i, j) \colon i \in [k]\}. 
    \end{align*}
    We define the following functions:
    \begin{enumerate}
        \item 
        $\phi_n \colon \mathbb{R} \rightarrow \mathbb{R}_{\geq 0}$ is defined as $\phi_n(a) = \min\left\{a, \frac{7}{8}k\right\}$;

        \item 
        $\phi_t \colon \mathbb{R} \rightarrow \mathbb{R}_{\geq 0}$ is defined as $\phi_t(a) = \min\left\{\frac{3}{8}ka, \frac{3}{8}k + a - 1, \frac{7}{8}k\right\}$;

        \item 
        $g \colon 2^V \rightarrow \mathbb{R}_{\geq 0}$ is defined as
        \[
            g(S) = \begin{cases} \phi_t\left(|S|\right) & \text{if } \left\{(\ell, \ell) \colon \ell \in [k] \right\} \cap S \neq \emptyset, \\ \phi_n\left(|S|\right) & \text{otherwise}; \end{cases}
        \]
        \item 
        $f \colon 2^V \rightarrow \mathbb{R}_{\geq 0}$ is defined as 
        \[
            f(S) = \sum_{i = 1}^k \left(g(R_i \cap S) + g(C_i \cap S) \right). 
        \]
    \end{enumerate}
    We say that $P_1, \ldots, P_k$ is a symmetric multiway partition of $V$ if $P_1, \ldots, P_k$ is symmetric partition of $V$ satisfying $(\ell, \ell) \in P_\ell$ for each $\ell \in [k]$. 
\end{definition}

In Lemma \ref{lem:function-is-monotone-submodular}, we show that the function $f$ defined above is row-column-type, non-negative, monotone, and submodular. 
In Lemma \ref{lem:opt-upper-bound}, we show that $\opt(f)\le (9k^2-k)/4$. 
In Lemma \ref{lem:sym-opt-lower-bound}, we show that $\opt(f)\ge (10k^2-22k-1)/4$ 
These three lemmas together imply Theorem \ref{thm:large-symmetry-gap}. We now prove these lemmas. 

\begin{lemma}\label{lem:function-is-monotone-submodular}
    Let $V = [k] \times [k]$ and $f \colon 2^V \rightarrow \mathbb{R}_{\geq 0}$ be defined as in Definition \ref{def:def:f_for_10/9}. Then, $f$ is a row-column-type non-negative monotone submodular function. 
\end{lemma}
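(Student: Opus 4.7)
The plan is to verify the four properties in turn. Row-column-type is immediate: the function $g$ from Definition \ref{def:def:f_for_10/9} is itself the witness required by Definition \ref{def:instance}, since $f(S) = \sum_{i=1}^k \left(g(R_i \cap S) + g(C_i \cap S)\right)$ is already in the stated form. Non-negativity reduces to checking that $\phi_n(a)$ and $\phi_t(a)$ are non-negative at the integer values $a = |S|$ that actually arise; each is a minimum of affine expressions that are non-negative on the relevant range (using $k \ge 4$, and noting that $\phi_t$ is only invoked when $|S| \ge 1$).

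For monotonicity and submodularity, sums preserve both properties, and the map $S \mapsto g(S \cap R_i)$ inherits them from the restriction of $g$ to $2^{R_i}$ (and analogously for columns). So it suffices to fix $i \in [k]$ and show that $\tilde g_i \colon 2^{R_i} \to \mathbb{R}_{\ge 0}$, defined by $\tilde g_i(A) = \phi_t(|A|)$ when $(i,i) \in A$ and $\tilde g_i(A) = \phi_n(|A|)$ otherwise, is monotone and submodular. Both $\phi_n$ and $\phi_t$ are minima of non-decreasing affine functions, hence non-decreasing and concave on $[0,\infty)$. This immediately handles the ``diagonal-status preserved'' cases: whenever $A \subseteq B \subseteq R_i$ either both or neither contain $(i,i)$ and a non-diagonal $v \notin B$ is added, the relevant marginals are governed by $\phi_n$ alone or $\phi_t$ alone, and so are non-negative (monotonicity) and non-increasing in $|A|$ (submodularity).

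The heart of the argument concerns the distinguished element $(i,i)$. Set $\psi(a) := \phi_t(a+1) - \phi_n(a)$; this is the marginal of adding $(i,i)$ to a set of size $a$ not containing $(i,i)$. Monotonicity reduces to $\psi(a) \ge 0$, and the diagonal-addition case of submodularity reduces to $\psi$ being non-increasing. I will identify which of the three affine pieces in $\phi_t$ achieves the minimum on each range, obtaining $\phi_n(a) = a$ for $a \le 7k/8$ and $\phi_n(a) = 7k/8$ afterwards, and $\phi_t(0) = 0$, $\phi_t(a) = 3k/8 + a - 1$ for $1 \le a \le k/2 + 1$, $\phi_t(a) = 7k/8$ thereafter. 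Substituting yields $\psi(a) = 3k/8$ on $[0, k/2]$, $\psi(a) = 7k/8 - a$ on $[k/2, 7k/8]$, and $\psi(a) = 0$ beyond; both required properties are then immediate. The remaining subtle case is cross-submodularity when $A \subseteq B \subseteq R_i$ with $(i,i) \in B \setminus A$ and a non-diagonal $v \notin B$ is added, which requires $\phi_n(|A|+1) - \phi_n(|A|) \ge \phi_t(|B|+1) - \phi_t(|B|)$. The main obstacle here is the large jump $\phi_t(1) - \phi_t(0) = 3k/8$, but this is excluded because $(i,i) \in B$ forces $|B| \ge 1$. Beyond this jump both marginals lie in $[0, 1]$, and the inequality follows from a short case split using $|A| \le |B| - 1$ together with the fact that $\phi_t$ reaches $7k/8$ by $a = k/2 + 1$ whereas $\phi_n$ only reaches it at $a = 7k/8$.
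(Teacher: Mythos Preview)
Your proposal is correct and follows essentially the same approach as the paper. Both arguments reduce to the same four marginal inequalities—concavity of $\phi_n$ and $\phi_t$ separately, your $\psi(a)=\phi_t(a+1)-\phi_n(a)$ being non-increasing (the paper's observation 4), and the cross inequality $\phi_n(a+1)-\phi_n(a)\ge\phi_t(c+1)-\phi_t(c)$ for $a+1\le c$ (the paper's observation 3)—followed by a case split on whether the added element is diagonal and on the diagonal-membership status of the two nested sets; your explicit piecewise computation of $\psi$ and your reduction to $\tilde g_i$ on $2^{R_i}$ are a slightly cleaner packaging of the same content.
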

\begin{proof}
    From the definition of $f$, we have that $f$ is row-column-type and non-negative. Moreover, $f$ is monotone since $\phi_n$ and $\phi_t$ are monotone and $\phi_t(a) \geq \phi_n(a)$ for all $a \in \{0, \ldots, k\}$. Now we prove that $f$ is submodular. We first prove the following observations in which $a$ and $c$ are assumed to take integer values:
    \begin{enumerate}
        \item For all $a \leq c$, $\phi_n(a + 1) - \phi_n(a) \geq \phi_n(c + 1) - \phi_n(c)$.

        \item For all $a \leq c$, $\phi_t(a + 1) - \phi_t(a) \geq \phi_t(c + 1) - \phi_t(c)$. 

        \item For all $0 \leq a \leq a + 1 \leq c$, $\phi_n(a + 1) - \phi_n(a) \geq \phi_t(c + 1) - \phi_t(c)$.

        \item For all $0 \leq a \leq c$, $\phi_t(a + 1) - \phi_n(a) \geq \phi_t(c + 1) - \phi_n(c)$.
    \end{enumerate}
    The first two observations follow from the concavity of $\phi_n$ and $\phi_t$, respectively. 
    
    To prove that if $0 \leq a$ and $a + 1 \leq c$, then $\phi_n(a + 1) - \phi_n(a) \geq \phi_t(c + 1) - \phi_t(c)$, we examine two cases: $a \leq \frac{7}{8}k - 1$ and $ a \geq \frac{7}{8}k$. If $a \leq \frac{7}{8}k$, then $\phi_n(a + 1) - \phi_n(a) = 1$. Additionally, because $0 \leq a$ and $a + 1 \leq c$, $c \geq 1$, it follows that $1 \geq \phi_t(c + 1) - \phi_t(c)$. If $a \geq \frac{7}{8}k$, then $\phi_n(a + 1) - \phi_n(a) = 0 = \phi_t(c + 1) - \phi_t(c)$ since $c \geq a + 1$. 

    To prove that if $0 \leq a \leq c$, $\phi_t(a + 1) - \phi_n(a) \geq \phi_t(c + 1) - \phi_n(c)$, we examine two cases: $c \leq \frac{7}{8}k$ and $c \geq \frac{7}{8}k + 1$. If $c \leq \frac{7}{8}k$, then $\phi_n(c) - \phi_n(a) = c - a$ while $\phi_t(c + 1) - \phi_t(a + 1) \leq c - a$ since $a + 1 \geq 1$. If $c \geq \frac{7}{8}k$, then $\phi_n(c) - \phi_n(a) = \max\left\{\frac{7}{8}k - a, 0\right\}$, and since $a + 1 \geq 1$, we have that $\phi_t(c + 1) - \phi_t(a + 1) = \max \left\{\frac{k}{2} - a + 1, 0\right\}$. Hence, $\phi_n(c) - \phi_n(a) \geq \phi_t(c + 1) - \phi_t(a + 1)$. Rearranging, we obtain that $\phi_t(a + 1) - \phi_n(a) \geq \phi_t(c + 1) - \phi_n(c)$. 

    We now prove submodularity armed with the above observations. 
    We will show that for all $X \subseteq Y \subseteq V$ and $(i, j) \in V \setminus Y$, we have that $f\left(X + (i, j)\right) - f\left(X\right) \geq f\left(Y + (i, j)\right) - f\left(Y\right)$. Let $X \subseteq Y \subseteq V$ and $(i, j) \in V \setminus Y$. We observe that $f\left(X + (i, j)\right) - f\left(X\right) = g\left(R_i \cap X + (i, j)\right) - g\left(R_i \cap X\right) + g\left(C_j \cap X + (i, j)\right) - g\left(C_j \cap X\right)$. Similarly, $f\left(Y + (i, j)\right) - f\left(Y\right) = g\left(R_i \cap Y + (i, j)\right) - g\left(R_i \cap Y\right) + g\left(C_j \cap Y + (i, j)\right) - g\left(C_j \cap Y\right)$. 

    First, we will consider the case in which $i \neq j$. In particular, we show that $g\left(R_i \cap X + (i, j)\right) - g\left(R_i \cap X\right) \geq g\left(R_i \cap Y + (i, j)\right) - g\left(R_i \cap Y\right)$. We let $|R_i \cap X| = a$ and $|R_i \cap Y| = c$. If $(i, i) \notin Y$, then $g\left(R_i \cap X + (i, j)\right) - g\left(R_i \cap X\right) = \phi_n(a + 1) - \phi_n(a) \geq \phi_n(c + 1) - \phi_n(c) = g\left(R_i \cap Y + (i, j)\right) - g\left(R_i \cap Y\right)$. If $(i, i) \in X$, then $g\left(R_i \cap X + (i, j)\right) - g\left(R_i \cap X\right) = \phi_t(a + 1) - \phi_t(a) \geq \phi_t(c + 1) - \phi_t(c) = g\left(R_i \cap Y + (i, j)\right) - g\left(R_i \cap Y\right)$. Finally, if $(i, i) \in Y \setminus X$, then $c \geq a + 1$ and $a \geq 0$, so $g\left(R_i \cap X + (i, j)\right) - g\left(R_i \cap X\right) = \phi_n(a + 1) - \phi_n(a) \geq \phi_t(c + 1) - \phi_t(c) = g\left(R_i \cap Y + (i, j)\right) - g\left(R_i \cap Y\right)$. We have shown that $g\left(R_i \cap X + (i, j)\right) - g\left(R_i \cap X\right) \geq g\left(R_i \cap Y + (i, j)\right) - g\left(R_i \cap Y\right)$ in all cases. Similar reasoning shows that $g\left(C_j \cap X + (i, j)\right) - g\left(C_j \cap X\right) \geq g\left(C_j \cap Y + (i, j)\right) - g\left(C_j \cap Y\right)$. Thus, if $i \neq j$, then $f\left(X + (i, j)\right) - f\left(X\right) \geq f\left(Y + (i, j)\right) - f\left(Y\right)$. 

    Now, we consider the case in which $i = j$. Again, we let $|R_i \cap X| = a$ and $|R_i \cap Y| = c$. Then $g\left(R_i \cap X + (i, i)\right) - g\left(R_i \cap X\right) = \phi_t(a + 1) - \phi_n(a) \geq \phi_t(c + 1) - \phi_n(c) = g\left(R_i \cap Y + (i, i)\right) - g\left(R_i \cap Y\right)$. Similar reasoning shows that $g\left(C_j \cap X + (i, i)\right) - g\left(C_j \cap X\right) \geq g\left(C_j \cap Y + (i, i)\right) - g\left(C_j \cap Y\right)$. Hence, $f\left(X + (i, i)\right) - f\left(X\right) \geq f\left(Y + (i, i)\right) - f\left(Y\right)$. 
\end{proof}

Next, we show that the function $f$ of interest to this section has small $\opt$. 
\begin{lemma}\label{lem:opt-upper-bound}
    Let $V = [k] \times [k]$ and $f \colon 2^V \rightarrow \mathbb{R}_{\geq 0}$ be defined as in Definition \ref{def:def:f_for_10/9}. Then $\opt\left(f\right) \leq \frac{9k^2 - 4k}{4}$. 
\end{lemma}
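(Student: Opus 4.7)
The plan is to exhibit the explicit partition $V_\ell := R_\ell$ for each $\ell \in [k]$ and verify directly that $\sum_{\ell=1}^{k} f(V_\ell) \le (9k^2-4k)/4$. This is a valid multiway partition because the rows $R_1,\ldots,R_k$ are pairwise disjoint, their union is $V$, and $(\ell,\ell)\in R_\ell$ for every $\ell\in[k]$, so it satisfies the terminal constraint.

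To evaluate $f(V_\ell)$, I would use the row/column decomposition $f(S)=\sum_{i=1}^k (g(R_i\cap S)+g(C_i\cap S))$. For the row contributions with $S=R_\ell$: whenever $i\ne\ell$ we get $R_i\cap R_\ell=\emptyset$ and $g=0$; when $i=\ell$ we have $R_\ell\cap R_\ell=R_\ell$ of size $k$ and containing $(\ell,\ell)$, so the contribution is $\phi_t(k)$. For the column contributions with $S=R_\ell$: for each $j\in[k]$, $C_j\cap R_\ell=\{(\ell,j)\}$ is a singleton, contributing $\phi_t(1)$ if $j=\ell$ (diagonal present) and $\phi_n(1)=1$ otherwise.

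The next step is to simplify $\phi_t(k)$ and $\phi_t(1)$ using $k\ge 4$. For $\phi_t(k)=\min\{3k^2/8,\ 3k/8+k-1,\ 7k/8\}$, since $11k/8-1>7k/8$ holds for $k>2$ and $3k^2/8\ge 7k/8$ for $k\ge 3$, we obtain $\phi_t(k)=7k/8$. For $\phi_t(1)=\min\{3k/8,\ 3k/8,\ 7k/8\}=3k/8$. Putting the pieces together,
\begin{align*}
f(V_\ell) \;=\; \phi_t(k)+\phi_t(1)+(k-1)\cdot 1 \;=\; \tfrac{7k}{8}+\tfrac{3k}{8}+(k-1) \;=\; \tfrac{9k}{4}-1.
\end{align*}
Summing over $\ell\in[k]$ yields $\sum_{\ell=1}^k f(V_\ell)=k\bigl(\tfrac{9k}{4}-1\bigr)=\tfrac{9k^2-4k}{4}$, proving the stated bound.

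There is no real obstacle here; the only non-mechanical part is making the right guess for the partition. The key insight is that an asymmetric partition (assigning entire rows to terminals, which is as far from symmetric as possible) is a natural candidate to exhibit a strict upper bound on $\opt(f)$ that beats the symmetric value $\symopt(f)$ established in the next lemma, thereby witnessing the $10/9$ symmetry gap.
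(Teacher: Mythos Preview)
Your proposal is correct and essentially identical to the paper's proof: both exhibit the row partition $V_\ell = R_\ell$, compute $f(R_\ell) = \phi_t(k) + \phi_t(1) + (k-1)\phi_n(1) = \tfrac{7k}{8} + \tfrac{3k}{8} + (k-1)$, and sum to $(9k^2-4k)/4$.
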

\begin{proof}
    We upper bound the optimum value by exhibiting a cheap feasible solution. Consider the partition $V_1, \ldots, V_k$ of $V$ defined by $V_i := R_i$ for all $i\in [k]$. We observe that $V_1, \ldots, V_k$ is a feasible multiway partition of $V$ with $(i,i)\in V_i$ for every $i\in [k]$. Hence, we have that
    \begin{align*}
        \opt(f)
        \le \sum_{\ell = 1}^k f(V_\ell) 
        &= \sum_{\ell = 1}^k \sum_{i = 1}^k \left(g(R_i \cap V_\ell) + g(C_i \cap V_\ell) \right) \\
        &= \sum_{\ell = 1}^k \sum_{i = 1}^k \left(g(R_i \cap R_\ell) + g(C_i \cap R_\ell)\right) \\
        &= \sum_{\ell = 1}^k \left( \phit{R_\ell \cap R_\ell} + \phit{\{(\ell, \ell)\}} + \sum_{i \neq \ell} \phin{\{(\ell, i)\}} \right) \\
        &= \sum_{\ell = 1}^k \left( \frac{7}{8}k + \frac{3}{8}k + k - 1 \right) \\
        &= k\left(\frac{18}{8}k - 1\right) \\
        &= \frac{9k^2 - 4k}{4}.
    \end{align*}
\end{proof}

Finally, we show that the function $f$ of interest to this section has large $\symopt$. The proof of the following lemma is non-trivial and we devote Section \ref{sec:sym-opt-lower-bound} to prove it. 
\begin{lemma}\label{lem:sym-opt-lower-bound}
Let $V = [k] \times [k]$ and $f \colon 2^V \rightarrow \mathbb{R}_{\geq 0}$ be defined as in Definition \ref{def:def:f_for_10/9}. Let $Q_1, \ldots, Q_k$ be a symmetric multiway partition of $V$. Then, $\sum_{\ell = 1}^k f(Q_\ell) \geq \frac{10k^2 - 22k - 1}{4}$ 
\end{lemma}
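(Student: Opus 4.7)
The plan is to collapse $f(Q_\ell)$ using the symmetry $Q_\ell = Q_\ell^T$. Since $(i,i) \in Q_i$ uniquely and $|C_i \cap Q_\ell| = |R_i \cap Q_\ell| =: r_{i,\ell}$ by symmetry, $g(R_i \cap Q_\ell)$ equals $\phi_t(r_{i,\ell})$ if $\ell = i$ and $\phi_n(r_{i,\ell})$ otherwise (and similarly for $C_i$), giving
\[
f(Q_\ell) = 2\phi_t(r_{\ell,\ell}) + 2\sum_{i \neq \ell}\phi_n(r_{i,\ell}).
\]
Writing $s_i := r_{i,i}$, using the identity $\phi_n(r) = r - (r - \tfrac{7k}{8})^+$, and summing $\sum_\ell r_{i,\ell} = k$ across each row, the total rearranges into the exact formula
\[
\sum_{\ell=1}^k f(Q_\ell) = 2\sum_{i=1}^k\bigl(\phi_t(s_i) - s_i\bigr) + 2k^2 - 2\Delta, \quad \Delta := \sum_{(i,\ell):\,i\neq\ell}\bigl(r_{i,\ell} - \tfrac{7k}{8}\bigr)^+.
\]
The piecewise-linear formula $\phi_t(s) - s = \min\{\tfrac{3k}{8} - 1,\ \tfrac{7k}{8} - s\}$ (valid for $s \geq 1$, $k \geq 3$) yields $\sum_i(\phi_t(s_i) - s_i) \geq k(\tfrac{3k}{8} - 1) - L_1$, where $L_1 := \sum_{i:\,s_i > k/2 + 1}(s_i - k/2 - 1)$. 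Substituting reduces the lemma to proving
\[
L_1 + \Delta \leq \tfrac{k^2 + 6k + 1}{8}.
\]

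To prove this inequality I would invoke two structural consequences of the symmetric partition hypothesis. Defining $\mathrm{miss}(i) := \{j \neq i : (i,j) \notin Q_i\}$ so that $|\mathrm{miss}(i)| = k - s_i$: for any distinct $i, j$ the pair $\{(i,j),(j,i)\}$ lies in a single $Q_\ell$, and case-checking the three possibilities $\ell = i,\ \ell = j,\ \ell \notin \{i,j\}$ shows that at least one of $j \in \mathrm{miss}(i)$ or $i \in \mathrm{miss}(j)$ must hold. Summing over unordered pairs inside any $T \subseteq [k]$ then yields
\[
\sum_{i \in T}(k - s_i) \geq \binom{|T|}{2}, \text{ equivalently } \sum_{i \in T} s_i \leq \frac{|T|(2k - |T| + 1)}{2}.
\]
Applied to $T := \{i : s_i > k/2 + 1\}$ with $t := |T|$, this gives $L_1 \leq t(k-t-1)/2$. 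Second, if $r_{i^*,\ell^*} > 7k/8$ for some $i^* \neq \ell^*$, then $s_{i^*} < k/8$ (since $k - s_{i^*} \geq r_{i^*,\ell^*}$), and symmetry of $Q_{\ell^*}$ forces $|C_{i^*} \cap Q_{\ell^*}| = r_{i^*,\ell^*}$, hence $|Q_{\ell^*}| \geq 2 r_{i^*,\ell^*} + 1 > 7k/4$; thus every distinct saturating column occupies $\Omega(k)$ space in the partition.

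Finally I would aggregate. Set $U := \{i : s_i < k/8\}$ and $u := |U|$; note $T \cap U = \emptyset$. The per-row bound gives $\Delta \leq \sum_{i \in U}(k/8 - s_i) \leq u(k/8 - 1)$. The counting identity $\sum_\ell |Q_\ell| = k^2$, combined with $|Q_i| \geq 2s_i - 1 \geq k + 3$ for $i \in T$ and the saturation footprint $|Q_{\ell^*}| > 7k/4$ per distinct $\ell^*$, yields a joint linear constraint essentially of the form $t + 2u' \leq k + O(1)$, where $u' \leq u$ counts distinct saturating columns and $T$ is disjoint from the saturation targets. Maximizing $t(k-t-1)/2 + u(k/8 - 1)$ under this constraint is at most $(k^2 + 6k + 1)/8$, closing the argument. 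The most delicate subcase, which I expect to be the main obstacle, is when several saturating rows share a common column $\ell^*$ (so $u' < u$): this forces $s_{\ell^*} \geq u_{\ell^*} + 1$ by symmetry and contributes extra to $L_1$ when $u_{\ell^*} > k/2$, so one must verify that the $\Delta$-gain from such consolidation is exactly offset by the $L_1$-increase via a careful trade-off bound, ensuring the aggregate inequality persists.
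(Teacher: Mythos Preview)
Your reduction is correct and cleaner than anything in the paper: symmetry gives $f(Q_\ell)=2\phi_t(s_\ell)+2\sum_{i\neq\ell}\phi_n(r_{i,\ell})$, and unwinding $\phi_n,\phi_t$ yields the exact identity $\sum_\ell f(Q_\ell)=\tfrac{11}{4}k^2-2k-2(L_1+\Delta)$, so the lemma is equivalent to $L_1+\Delta\le (k^2+6k+1)/8$. The miss argument for $L_1\le t(k-t-1)/2$ is also correct. This is a genuinely different route from the paper, which never proves such an inequality: instead it \emph{transforms} an arbitrary symmetric multiway partition into a \emph{structured} one (for some $i^*$, $\overrightarrow{R}_i\subseteq Q_i$ for $i\le i^*$ and $\overrightarrow{R}_i\subseteq Q_1$ for $i>i^*$), losing at most $2k$ in objective via a long sequence of local moves (moving unhappy elements to $Q_1$, swap operations, etc.), and then evaluates the objective on structured partitions as an explicit one-parameter function of $i^*$.

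However, your final aggregation has real gaps. First, ``$T$ is disjoint from the saturation targets'' is asserted but not true in general: nothing prevents $s_{\ell^*}>k/2+1$ while some other row $i^*$ has $r_{i^*,\ell^*}>7k/8$ (this only forces $|Q_{\ell^*}|\gtrsim 11k/4$, which is consistent). Second, the constraint ``$t+2u'\le k+O(1)$'' is never derived; packing $\sum_\ell|Q_\ell|=k^2$ against $|Q_i|\ge 2s_i-1$ and $|Q_{\ell^*}|>7k/4$ gives something of the shape $2L_1+k(t+\tfrac{7}{4}u')\le k^2-k$, which is not the same and does not obviously close the bound. Third—the point you yourself flag—the shared-target subcase is unresolved: your claim that $s_{\ell^*}\ge u_{\ell^*}+1$ does not follow from $r_{i_j,\ell^*}>7k/8$ alone, since that does not force $(i_j,\ell^*)\in Q_{\ell^*}$. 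Because each of $L_1\le (k-1)^2/8$ and $\Delta\le u(k/8-1)$ can individually approach $k^2/8$, a genuine trade-off argument is required, and the proposal does not supply one. The paper sidesteps this entirely: on structured partitions $L_1$ and $\Delta$ are both determined by the single parameter $i^*$, so no trade-off analysis is needed.
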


\subsubsection{Proof of Lemma \ref{lem:sym-opt-lower-bound}}\label{sec:sym-opt-lower-bound}
In order to prove Lemma \ref{lem:sym-opt-lower-bound}, we first show a lower bound on the objective value of certain structured symmetric multiway partitions (see Lemma \ref{lem:structured-sym-opt-lower-bound}). Next, we show that every symmetric multiway partition can be converted to a structured symmetric multiway partition with a small additive loss (see Lemma \ref{lem:sym_part_structure}). See Figure \ref{fig:sym-part-structure} for an illustration of the structured symmetric multiway partition of interest that will be of interest to this section. 
For notational convenience, we define the objective value of a symmetric multiway partition $P_1, \ldots, P_k$ of $V$ as follows: 
\begin{align*}
    \obj{P_1, \ldots, P_k} &= \frac{1}{2} \sum_{\ell = 1}^k f(P_\ell).
\end{align*}
We observe that 
\begin{align*}
\obj{P_1, \ldots, P_k}
        &= \sum_{i = 1}^k \left(\phi_t\left(|R_i \cap P_i|\right) + \sum_{\ell \in [k] \setminus \{i\}} \phi_n\left(|R_i \cap P_\ell)\right)\right). 
\end{align*}

\begin{figure}[ht]
\centering
\begin{tikzpicture}
    \definecolor{brightube}{rgb}{0.82, 0.62, 0.91}
    
    \def\gridsize{9}
    \def\spacing{0.9} 

    \foreach \x in {1,...,\gridsize} {
        \foreach \y in {1,...,\gridsize} {
            \filldraw[black] (\x*\spacing, \y*\spacing) circle (2pt);
        }
    }
    
    \foreach \i/\c in {0/red, 1/orange, 2/yellow, 3/green, 4/blue}{
    \fill[\c, opacity=0.4] (\i*\spacing + 0.5*\spacing, \gridsize*\spacing-0.5*\spacing - \i*\spacing) rectangle (\gridsize*\spacing+0.5*\spacing, \gridsize*\spacing+0.5*\spacing - \i*\spacing);

    \fill[\c, opacity=0.4] (0.5*\spacing + \i*\spacing, 0.5*\spacing) rectangle (1*\spacing+0.5*\spacing + \i*\spacing, \gridsize*\spacing+0.5*\spacing - \spacing - \i*\spacing);
    }

    
    \foreach \i/\c in {5/brightube, 6/gray, 7/brown, 8/black}{
        \foreach \j in {5, 6, 7, 8}{
                \ifthenelse{\i=\j}{\fill[\c, opacity = 0.4] (0.5*\spacing + \i*\spacing, \gridsize*\spacing - 0.5*\spacing - \j*\spacing) rectangle (0.5*\spacing + \i*\spacing +\spacing, \gridsize*\spacing - 0.5*\spacing - \j*\spacing + \spacing);}{\fill[red, opacity = 0.4] (0.5*\spacing + \i*\spacing, \gridsize*\spacing - 0.5*\spacing - \j*\spacing) rectangle (0.5*\spacing + \i*\spacing +\spacing, \gridsize*\spacing - 0.5*\spacing - \j*\spacing + \spacing);}
        }
    }
\end{tikzpicture}
\caption{A figure illustrating the structure that will be obtained in Lemma \ref{lem:sym_part_structure}. Each color corresponds to a part in the symmetric multiway partition. In particular, $i^*=5$ in this symmetric multiway partition.}
\label{fig:sym-part-structure}
\end{figure}
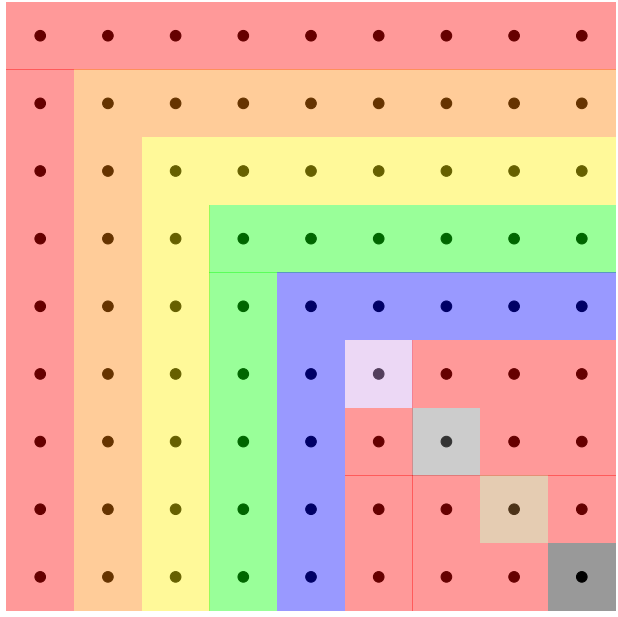

\begin{lemma}\label{lem:structured-sym-opt-lower-bound}
Let $V = [k] \times [k]$ and $f \colon 2^V \rightarrow \mathbb{R}_{\geq 0}$ be defined as in Definition \ref{def:def:f_for_10/9}. Let $Q_1, \ldots, Q_k$ be a symmetric multiway partition of $V$ for which there exists $i^* \in [k]$ such that $\overrightarrow{R}_i \subseteq Q_i$ for all $i \in [i^*]$ and $\overrightarrow{R}_i \subseteq Q_1$ for all $i \in [k] \setminus [i^*]$. Then, $\sum_{\ell = 1}^k f(Q_\ell) \geq \frac{10k^2 - 6k - 1}{4}$. 
\end{lemma}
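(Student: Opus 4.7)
The plan is to reduce the inequality to a one-parameter optimization in $p := i^*$ and verify it by case analysis. I would first observe that the hypothesized structure together with the symmetry $Q_\ell = Q_\ell^T$ and the multiway constraint $(\ell,\ell) \in Q_\ell$ completely determines the partition: every upper-triangular element $(i,j)$ with $i < j$ lies in $\overrightarrow{R}_i$, which is contained in $Q_i$ when $i \leq p$ and in $Q_1$ when $i > p$; symmetry then transfers the same assignment to $(j,i)$; and the multiway constraint forces the diagonal. This rigidity is the only structural input needed; everything that follows is algebraic bookkeeping.

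Given the partition, I would tabulate $|R_i \cap Q_\ell|$: the diagonal entries are $|R_\ell \cap Q_\ell| = k - \ell + 1$ for $\ell \leq p$ and $=1$ for $\ell > p$; the off-diagonal entries are $|R_i \cap Q_1| = 1$ for $2 \leq i \leq p$ and $=k - p$ for $i > p$, $|R_i \cap Q_\ell| = 1$ whenever $2 \leq \ell \leq p$ and $i > \ell$, and all other off-diagonal entries vanish. Because $(i,i) \in R_i$ belongs to $Q_\ell$ only when $i=\ell$ and because $|C_i \cap Q_\ell| = |R_i \cap Q_\ell|$ by symmetry with the same indicator of containing a diagonal element, the identity
\[
f(Q_\ell) = 2\left[\phi_t\!\left(|R_\ell \cap Q_\ell|\right) + \sum_{i \neq \ell} \phi_n\!\left(|R_i \cap Q_\ell|\right)\right]
\]
converts $T(p) := \sum_\ell f(Q_\ell)$ into a closed-form function of $p$.

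Next I would evaluate $T(p)$ using $\phi_n(a) = \min\{a, 7k/8\}$ and the simplification $\phi_t(a) = \min\{3k/8 + a - 1, 7k/8\}$, valid for $a \geq 1$ and $k \geq 4$ since the term $3ka/8$ is dominated in this regime. The relevant casework is on $p$ versus $k/2$ (at which $\phi_t(k-\ell+1)$ stops saturating at $7k/8$) and $p$ versus $k/8$ (at which $\phi_n(k-p)$ stops saturating). In the principal range $k/8 \leq p \leq k/2$, the identity $\sum_{\ell=2}^p(k-\ell) = (p-1)k - p(p+1)/2 + 1$ together with $\sum_{\ell=2}^p \phi_t(k-\ell+1) = (p-1)\cdot 7k/8$ collapses $T(p)$ into the completed-square form
\[
T(p) = \left(p - \tfrac{k-1}{2}\right)^2 + \tfrac{10k^2 - 6k - 1}{4},
\]
which gives the desired bound, with equality only when $k$ is odd and $p = (k-1)/2$.

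For the two complementary ranges I would verify that the bound still holds strictly. When $p > k/2$, tracking the additional contributions from rows $\ell \in [\lfloor k/2\rfloor+1, p]$ whose $\phi_t$ is no longer saturated and carrying through the algebra yields the surprising identity $T(p) = \tfrac{10k^2-6k}{4}$, constant in $p$ and strictly above the target by $\tfrac{1}{4}$. When $p < k/8$, substituting $\phi_n(k-p)=7k/8$ produces a downward-opening parabola in $p$ whose endpoint values at $p=1$ and $p = \lfloor k/8\rfloor$ are both strictly above $\tfrac{10k^2 - 6k - 1}{4}$ (the upper endpoint agreeing with the principal-range formula by continuity at $p = k/8$). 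Combining the three ranges yields the lemma. The primary obstacle is purely arithmetic: collecting like terms through the case split without error. No new structural insight is required beyond the rigidity established in the first step.
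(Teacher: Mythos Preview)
Your proposal is correct and follows essentially the same approach as the paper: both first pin down the row-by-part intersection sizes from the structural hypothesis (the paper does this implicitly row-by-row, you do it via the rigidity observation), then reduce to the identical three-case split on $i^*$ relative to $k/8$ and $k/2$, arriving at the same expressions. The only cosmetic difference is that you complete the square to get $T(p) = \bigl(p - \tfrac{k-1}{2}\bigr)^2 + \tfrac{10k^2-6k-1}{4}$ in the principal range, whereas the paper finds the minimizer by setting the derivative to zero; the arithmetic and the bounds in the other two cases coincide.
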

\begin{proof}
    We have that 
    \begin{align}
        \obj{Q_1, \ldots, Q_k} = \sum_{i = 1}^k \left(\phit{R_i \cap Q_i} + \sum_{\ell \in [k] - i} \phin{R_i \cap Q_\ell}\right). \label{eq:obj-expression}
    \end{align}
    Let $i \in [i^*]$. Then, we have that $R_i\cap Q_i=\overrightarrow{R}_i\cup\{(i,i)\}$. Moreover, we have $R_i\cap Q_{\ell}=\{(i, \ell)\}$ for every $\ell\in [i-1]$ and $R_i\cap Q_{\ell}=\emptyset$ for every $\ell\in \{i,\ldots, k\}$. Hence, for every $i\in [i^*]$, we have that 
    \begin{align}
        \phit{R_i \cap Q_i} + \sum_{\ell \in [k] - i} \phin{R_i \cap Q_\ell} &= \phi_t(k + 1 - i) + (i - 1)\phi_n(1) \notag\\
        &= \phi_t(k + 1 - i) + (i - 1). \label{eq:smaller-than-i*}
    \end{align}
    Let $i \in \{i^* + 1, \ldots, k\}$. Then, we have that $R_i\cap Q_i=\{(i,i)\}$. Moreover, we have that $R_i\cap Q_{1}=\overrightarrow{R}_i$, $R_i\cap Q_{\ell}=\{(i, \ell)\}$ for every $\ell\in \{2, \ldots, i^*\}$ and $R_i\cap Q_{\ell}=\emptyset$ for every $ell\in \{i^*+1, \ldots, k\}$. Hence, for every $i\in \{i^*+1, \ldots, k\}$, we have that 
    \begin{align}
        \phit{R_i \cap Q_i} + \sum_{\ell \in [k] - i} \phin{R_i \cap Q_\ell} &= \phi_t(1) + \phi_n\left(k - i^*\right) + \left(i^* - 1\right)\phi_n(1) \notag\\
        &= \frac{3}{8}k + \phi_n\left(k - i^*\right) + \left(i^* - 1\right). \label{eq:larger-than-i*}
    \end{align}
    We consider three cases based on the value of $i^*$ and show the lower bound in all three cases. 
    
    \noindent\textbf{Case 1.} Suppose that $i^* \leq \frac{1}{8}k$. Then, we have $\phi_t(k + 1 - i) = \frac{7}{8}k$ for each $i \in [i^*]$ and $\phi_n(k - i^*) = \frac{7}{8}k$. Hence, by \eqref{eq:obj-expression}, \eqref{eq:smaller-than-i*}, and \eqref{eq:larger-than-i*}, we have that 
    \begin{align}
        \obj{Q_1, \ldots, Q_k} &= \sum_{i = 1}^{i^*} \left(\frac{7}{8}k + i - 1\right) + \left(k - i^*\right)\left(\frac{3}{8}k + \frac{7}{8}k + i^* - 1\right) \notag\\
        &= ki^* - k - \left(i^*\right)^2 + i^* + \frac{1}{8}\left(4(i^*)^2-3ki^*-4i^* + 10k^2\right). \label{eq:obj-small-i*}
    \end{align}
    Taking the derivative with respect to $i^*$, we have 
    \[
        \frac{\partial\obj{Q_1, \ldots, Q_k}}{\partial i^*} = -i^* + \frac{5}{8}k + \frac{1}{2} \geq \frac{k}{2} + \frac{1}{2} \geq 0,
    \]
    where the first inequality is because $i^*\le \frac{1}{8}k$. 
    Hence, for $i^* \leq \frac{1}{8}k$, the minimum value of $\obj{Q_1, \ldots, Q_k}$ is is attained at $i^* = 1$. Substituting $i^*=1$ in \eqref{eq:obj-small-i*}, we obtain that 
    \[
        \obj{Q_1, \ldots, Q_k} \geq k - k - 1 + 1 + \frac{1}{8}\left(4 - 3k - 4 + 10k^2\right) = \frac{10k^2 - 3k}{8}.
    \]

    \noindent\textbf{Case 2.} Suppose that $\frac{1}{8}k < i^* \leq \frac{k}{2}$. Then, $\phi_t(k + 1 - i) = \frac{7}{8}k$ and $\phi_n(k - i^*) = k - i^*$. Hence, by \eqref{eq:obj-expression}, \eqref{eq:smaller-than-i*}, and \eqref{eq:larger-than-i*}, we have that 
    \begin{align}
        \obj{Q_1, \ldots, Q_k} &= \sum_{i = 1}^{i^*} \left(\frac{7}{8}k + i - 1\right) + \left(k - i^*\right)\left(\frac{3}{8}k + k - i^* + i^* - 1\right) \notag\\
        &= \sum_{i = 1}^{i^*} \left(\frac{7}{8}k + i - 1\right) + \left(k - i^*\right)\left(\frac{11}{8}k - 1\right) \notag\\
        &= \frac{1}{2}\left((i^*)^2 - i^*\right) + \frac{1}{8}\left(11k^2 - 4ki^*\right) - k + i^*. \label{eq:medium-i*}
    \end{align}
    Taking the derivative with respect to $i^*$, we have 
    \[
        \frac{\partial\obj{Q_1, \ldots, Q_k}}{\partial i^*} = i^* + \frac{1}{2} - \frac{k}{2}. 
    \]
    We observe that $\frac{\partial^2\obj{Q_1, \ldots, Q_k}}{\partial (i^*)^2} = 1$, so $\obj{Q_1, \ldots, Q_k}$ is a convex function of $i^*$. Hence, for $\frac{1}{8}k < i^* \leq \frac{k}{2}$, the minimum value for $\obj{Q_1, \ldots, Q_k}$ is attained at $i^* = \frac{k}{2} - \frac{1}{2}$. Thus, evaluating the expression \eqref{eq:medium-i*} at $i^* = \frac{k}{2} - \frac{1}{2}$, we obtain that
    \[
        \obj{Q_1, \ldots, Q_k} \geq \frac{10k^2 - 6k - 1}{8}.
    \]

    \noindent\textbf{Case 3.} Suppose that $\frac{k}{2} < i^*$. Then, 
    \begin{align*}
    \phi_t(k + 1 - i) &= \begin{cases} \frac{7}{8}k & i \leq \frac{k}{2} \\ \frac{11}{8}k - i & i > \frac{k}{2}\end{cases} \text{ and}\\
    \phi_n(k - i^*) &= k - i^*.
    \end{align*}
    Hence, by \eqref{eq:obj-expression}, \eqref{eq:smaller-than-i*}, and \eqref{eq:larger-than-i*}, we have that 
    \begin{align*}
        \obj{Q_1, \ldots, Q_k} &= \sum_{i = 1}^{\frac{k}{2}} \left(\frac{7}{8}k + i - 1\right) + \sum_{i = \frac{k}{2} + 1}^{i^*} \left(\frac{11}{8}k - i + i - 1 \right) + \left(k - i^*\right)\left(\frac{3}{8}k + k - i^* + i^* - 1\right) \\
        &= \sum_{i = 1}^{\frac{k}{2}} \left(\frac{7}{8}k + i - 1\right) + \left(i^* - \frac{k}{2}\right)\left(\frac{11}{8}k - 1\right) + \left(k - i^*\right)\left(\frac{11}{8}k - 1\right) \\
        &= \sum_{i = 1}^{\frac{k}{2}} \left(\frac{7}{8}k + i - 1\right) + \left(\frac{k}{2}\right)\left(\frac{11}{8}k - 1\right) \\
        &= \frac{10k^2 -6k}{8}. 
    \end{align*}
\end{proof}

The following lemma shows that every symmetric multiway partition can be converted to a structured symmetric multiway partition with a small additive loss in the objective. 
\begin{restatable}{lemma}{lemSymPartStructure}
\label{lem:sym_part_structure}
    Let $V = [k] \times [k]$. Let $P_1, \ldots, P_k$ be a symmetric multiway partition of $V$. Then, there exists a symmetric multiway partition $Q_1, \ldots, Q_k$ of $V$ with the following properties:
    \begin{enumerate}
        \item 
        $\sum_{\ell = 1}^k f(Q_\ell) \leq \sum_{\ell = 1}^k f(P_\ell) + 4k$ 

        \item 
        There exists $i^* \in [k]$ such that $\overrightarrow{R}_i \subseteq Q_i$ for all $i \in [i^*]$ and $\overrightarrow{R}_i \subseteq Q_1$ for all $i \in [k] \setminus [i^*]$.
    \end{enumerate}
\end{restatable}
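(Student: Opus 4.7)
My overall strategy is to construct $Q$ as the structured partition $Q^{(i^*)}$ for an appropriate choice of $i^* \in \{0,1,\ldots,k\}$, and to show that for a suitable $i^*$ the overhead $\sum_\ell f(Q^{(i^*)}_\ell) - \sum_\ell f(P_\ell)$ is at most $2k$.

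First, I would observe that once $i^*$ is fixed, property (2) together with the multiway constraint and symmetry of $Q$ uniquely determines $Q^{(i^*)}$: for every pair $(i,j)$ with $i<j$, the set $\{(i,j),(j,i)\}$ lies in $Q_i$ if $i\le i^*$ (by (c) and symmetry) and in $Q_1$ if $i>i^*$ (by (d) and symmetry), while each diagonal element $(\ell,\ell)$ lies in $Q_\ell$. Since $f$ is row-column-type and both $P$ and $Q$ are symmetric, the identity $f(X)=\sum_i [g(R_i\cap X)+g(C_i\cap X)]$ together with the fact that $g(C_i\cap X_\ell)=g(R_i\cap X_\ell)$ for any symmetric $X_\ell$ yields $\sum_\ell f(X_\ell) = 2\sum_i C_{\text{row}}(i,X)$, where $C_{\text{row}}(i,X):=\sum_\ell g(R_i\cap X_\ell)$. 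Property (1) of the lemma therefore reduces to finding $i^*$ such that $\sum_i [C_{\text{row}}(i,Q^{(i^*)})-C_{\text{row}}(i,P)]\le k$.

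Next, I would explicitly compute the row contributions of $Q^{(i^*)}$, exactly as in identities \eqref{eq:smaller-than-i*} and \eqref{eq:larger-than-i*} in the proof of Lemma \ref{lem:structured-sym-opt-lower-bound}: for $i\le i^*$, $C_{\text{row}}(i,Q^{(i^*)})=\phi_t(k-i+1)+(i-1)$, and for $i>i^*$, $C_{\text{row}}(i,Q^{(i^*)})=\phi_t(1)+\phi_n(k-i^*)+(i^*-1)$. Then, to lower bound $C_{\text{row}}(i,P)$, I would write $b_i:=|R_i\cap P_i|$ and apply subadditivity of $\phi_n$ (which follows from its concavity) to obtain $C_{\text{row}}(i,P)\ge \phi_t(b_i)+\phi_n(k-b_i)$. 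The key structural constraint forcing cancellation across rows is that \emph{at most one} row $i$ can satisfy $R_i\subseteq P_i$ (i.e.\ $b_i=k$): if $R_i\subseteq P_i$ and $R_j\subseteq P_j$ for distinct $i,j$, then $(i,j)\in P_i$ and $(j,i)\in P_j$, and by symmetry of $P_j$ we get $(i,j)\in P_j$, contradicting disjointness. More generally, the counting identity $\sum_i b_i = k + q$ (where $q$ counts unordered pairs $\{i,j\}$ whose class lies in $\{i,j\}$) ties the $b_i$'s together, preventing too many rows from being simultaneously concentrated in their own parts.

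Finally, I would choose $i^*$ based on these row statistics of $P$. One candidate is $i^*$ obtained by a threshold/greedy procedure: after sorting rows by $b_i$, process them in order, maintaining a running comparison between the cost of ``cleaning'' a prefix (row contributions $\phi_t(k-i+1)+(i-1)$) and dumping the remainder (row contributions $\phi_t(1)+\phi_n(k-i^*)+(i^*-1)$), and pick the crossover point. An alternative route is an averaging argument, showing $\frac{1}{k+1}\sum_{i^*=0}^{k}[\sum_i C_{\text{row}}(i,Q^{(i^*)})-\sum_i C_{\text{row}}(i,P)]\le k$, so that some $i^*$ attains the bound.

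The main obstacle will be the row-by-row bookkeeping. Individual differences $C_{\text{row}}(i,Q^{(i^*)})-C_{\text{row}}(i,P)$ can be as large as $\Theta(k)$ (for example, if some row of $P$ is very concentrated in its own part but $i^*$ forces that row to be dumped), so the bound of $k$ on the total must come from systematic cancellation among rows, driven by the global constraints coming from the symmetric multiway condition. Making this balance quantitative---in particular, showing that rows with low $C_{\text{row}}(i,P)$ force neighbouring rows in $P$ to have correspondingly high $C_{\text{row}}$---is where the bulk of the technical work will lie.
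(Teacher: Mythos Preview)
Your approach is genuinely different from the paper's, and as stated it has a concrete gap.

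\textbf{What the paper does.} The paper never compares $P$ directly to the canonical $Q^{(i^*)}$. Instead it transforms $P$ through a pipeline of local moves, each analyzed in isolation: (i) move all of $R_1$ into $P_1$ (this is the \emph{only} step that loses, by at most $k$ in $\obj{\cdot}$, i.e.\ $2k$ in $\sum_\ell f$); (ii) move every \emph{unhappy} element $(i,j)\in P_\ell$ with $\ell\notin\{i,j\}$ into $P_1$ at no cost; (iii) use relabeling (``swap'') operations together with pair moves to sort the parts so that $|R_1\cap P_1|\ge\cdots\ge|R_k\cap P_k|$ while enforcing $\overleftarrow{R}_\ell\cap P_\ell=\emptyset$ and $\overrightarrow{R}_\ell\subseteq P_1\cup P_\ell$, again at no cost; (iv) finally, push $\overrightarrow{R}_j$ wholesale into $P_j$ when $|R_j\cap P_j|\ge k/2+2$ and into $P_1$ otherwise, at no cost. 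Step (ii) is the conceptual key: once all unhappy elements sit in $P_1$, for every $\ell\notin\{1,i\}$ the set $R_i\cap P_\ell$ is either $\{(i,\ell)\}$ or empty, and all subsequent bookkeeping becomes one-dimensional.

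\textbf{Why your lower bound is too weak.} Your plan rests on the subadditivity bound
\[
C_{\text{row}}(i,P)\ \ge\ \psi(b_i)\ :=\ \phi_t(b_i)+\phi_n(k-b_i),
\]
together with tournament-type constraints on $(b_i)_i$. But this bound already loses $\Theta(k^2)$, not $O(k)$, on the very partitions you are trying to reach. Take $P=Q^{(k/2)}$ (so $b_i=k+1-i$ for $i\le k/2$ and $b_i=1$ for $i>k/2$). A direct computation gives
\[
\sum_{i=1}^k \psi(b_i)\;=\;\frac{19k^2-4k}{16},
\qquad
\obj{Q^{(k/2)}}\;=\;\frac{10k^2-6k}{8}\;=\;\frac{20k^2-12k}{16},
\]
so $\obj{Q^{(k/2)}}-\sum_i\psi(b_i)=(k^2-8k)/16$. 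Since the vector $(b_i)_i$ of $Q^{(k/2)}$ is itself feasible for your constraints (indeed, $\sum_i(b_i-1)\le\binom{k}{2}$ and at most one $b_i=k$), no inequality that depends on $P$ only through the $b_i$'s can certify $\sum_i C_{\text{row}}(i,P)\ge \min_{i^*}\obj{Q^{(i^*)}}-k$ for large $k$. The slack you need to recover is precisely the contribution of the terms $\sum_{\ell\ne i}\phi_n(|R_i\cap P_\ell|)$ beyond $\phi_n(k-b_i)$, i.e.\ the \emph{spread} of $R_i$ across the parts---exactly the information that the paper isolates by first forcing all unhappy elements into $P_1$. Your proposed averaging over $i^*$ does not help here: the shortfall is in the lower bound on $\obj{P}$, not in the choice of $i^*$.

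If you want to salvage the direct-comparison route, you would at minimum need a refined lower bound of the form $C_{\text{row}}(i,P)\ge \phi_t(b_i)+\phi_n(m_i)+(k-b_i-m_i)$ where $m_i=\max_{\ell\ne i}|R_i\cap P_\ell|$, and then control the $m_i$'s jointly---but at that point you are essentially reinventing the paper's ``unhappy'' reduction and the subsequent sorting argument.
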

Lemmas \ref{lem:structured-sym-opt-lower-bound} and \ref{lem:sym_part_structure} together complete the proof of Lemma \ref{lem:sym-opt-lower-bound}. The rest of this section is devoted to proving Lemma \ref{lem:sym_part_structure}. For the rest of this section, let $V:=[k]\times [k]$.
We will denote a symmetric multiway partition of $V$ satisfying the second condition of Lemma \ref{lem:sym_part_structure} as a \emph{structured symmetric partition} (see Figure \ref{fig:sym-part-structure} for an example). 
To prove Lemma \ref{lem:sym_part_structure}, we will prove several claims about how symmetric multiway partitions of $V = [k] \times [k]$ can be transformed into a structured symmetric multiway partitions without increasing the cost of the partition by more than $4k$. 

The requirement for structured symmetric multiway partition for $i=1$ is that $R_1\subseteq Q_1$. 
We begin with Claim \ref{claim:move-first-row} below which shows that this property can be achieved for $i=1$ with an additive loss in the objective value. We emphasize that Claim \ref{claim:move-first-row} will be the only transformation that incurs a loss in the objective value. The rest of the transformations will transform \emph{partially} structured symmetric partition into a fully structured symmetric partition (as desired in Lemma \ref{lem:sym_part_structure}) without any loss in the objective value. 
\begin{claim}\label{claim:move-first-row}
Let $P_1, \ldots, P_k$ be a symmetric multiway partition of $V$. Then, there exists a symmetric multiway partition $Q_1, \ldots, Q_k$ of $V$ such that 
\begin{enumerate}
    \item $R_1\subseteq Q_1$ and
    \item $\obj{Q_1,\ldots, Q_k} \le \obj{P_1, \ldots, P_k} + 2k$ 
\end{enumerate}
\end{claim}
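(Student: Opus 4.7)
The plan is to define $Q_1 := P_1 \cup R_1 \cup C_1$ and $Q_\ell := P_\ell \setminus (R_1 \cup C_1)$ for each $\ell \in \{2, \ldots, k\}$, and to verify the two required properties. That $Q_1, \ldots, Q_k$ is a symmetric multiway partition is straightforward: every element of $V$ lies in exactly one $Q_\ell$ by construction; the diagonal requirement $(i, i) \in Q_i$ holds since $(1, 1) \in P_1 \subseteq Q_1$ and for $i \ge 2$ the element $(i, i)$ does not lie in $R_1 \cup C_1$; and symmetry of each $Q_\ell$ is preserved because $R_1 \cup C_1$ is closed under transpose (as $R_1^T = C_1$) and each $P_\ell$ is symmetric.

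The bulk of the proof is to bound the change in the objective $\obj{Q_1,\ldots,Q_k} - \obj{P_1,\ldots,P_k}$ by $k$. Using the formula $\obj{X_1,\ldots,X_k} = \sum_{i=1}^k \bigl(\phi_t(|R_i \cap X_i|) + \sum_{\ell \neq i} \phi_n(|R_i \cap X_\ell|)\bigr)$ stated earlier in this section, I would analyze the contribution of each row separately. For each $i \ge 2$, the only element of $R_i$ that moves under the transformation is $(i, 1) \in C_1$; either it was already in $P_1$ (in which case the row contribution is unchanged) or it moves from some $P_{\ell(i)}$ with $\ell(i) \neq 1$ to $Q_1$. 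In the latter case at most two of the sizes $|R_i \cap X_\ell|$ change, each by one, so monotonicity of $\phi_t$ and $\phi_n$ combined with the unit-step bound $\phi_n(b+1) - \phi_n(b) \le 1$ gives a row-contribution increase of at most $1$. When $\ell(i) = i$ one additionally uses that $|R_i \cap P_i| \ge 2$, since both $(i,i)$ and $(i,1)$ lie in $P_i$, so that the relevant $\phi_t$ decrement is non-positive rather than the large first-step drop $\phi_t(1) = \tfrac{3k}{8}$.

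The main obstacle is bounding the change in the row $R_1$ contribution, since all of $R_1$ is reassigned to $Q_1$. Writing $c_\ell := |R_1 \cap P_\ell|$, so that $c_1 \ge 1$ and $\sum_\ell c_\ell = k$, this change equals $\phi_t(k) - \phi_t(c_1) - \sum_{\ell \neq 1} \phi_n(c_\ell)$. I would show that this quantity is at most $0$ by first invoking the subadditivity of $\phi_n$ (which holds since $\phi_n$ is concave with $\phi_n(0) = 0$) to get $\sum_{\ell \neq 1} \phi_n(c_\ell) \ge \phi_n(k - c_1)$, and then verifying the one-variable inequality $\phi_t(c_1) + \phi_n(k - c_1) \ge \phi_t(k) = \tfrac{7k}{8}$ by a short case analysis based on whether $c_1 \le k/8$, $k/8 < c_1 \le k/2 + 1$, or $c_1 > k/2 + 1$; in each case the explicit piecewise-linear forms of $\phi_t$ and $\phi_n$ immediately yield the bound. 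Combining the row-by-row estimates gives a total objective change of at most $0 + (k-1) \cdot 1 = k - 1 \le k$, completing the proof.
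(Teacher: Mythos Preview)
Your proof is correct and uses the same basic construction as the paper—move the first row into $Q_1$. You are more careful in one respect: you also move $C_1$ into $Q_1$, which is necessary for $Q_1,\ldots,Q_k$ to actually be symmetric (the paper's stated $Q_1:=P_1\cup R_1$ is not symmetric in general, so its assertion that ``the resulting partition is a symmetric multiway partition'' glosses over this point). Because $C_1$ meets every row $R_i$ with $i\ge 2$ in the element $(i,1)$, you then need the per-row estimate (each such row contributes at most $1$) together with the sharper fact that the row-$1$ change is nonpositive; the paper, by contrast, simply bounds the row-$1$ change crudely by $\phi_t(k)\le k$ and claims all other rows are unaffected. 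Your route thus patches a small gap in the paper's argument and even yields the slightly stronger bound $k-1$.

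One minor remark: your observation that $|R_i\cap P_i|\ge 2$ when $(i,1)\in P_i$ is correct but not needed for the bound. The term appearing in the change is $\phi_t(|R_i\cap P_i|-1)-\phi_t(|R_i\cap P_i|)$, which is $\le 0$ for every $|R_i\cap P_i|\ge 1$ by monotonicity of $\phi_t$; there is no ``large first-step'' issue in this direction.
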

\begin{proof}
    Let use define $Q_1 = P_1 \cup (R_1 \cup C_1)$ and $Q_\ell = P_\ell \setminus (R_1 \cup C_1)$ for each $\ell \in [k] \setminus \{1\}$. We observe that $Q_1, \ldots, Q_k$ is a symmetric multiway partition with $R_1 \subseteq Q_1$. It remains to show that $\obj{Q_1, \ldots, Q_k} \leq \obj{P_1, \ldots, P_k} + 2k$. 

    We have 
    \[  
        \obj{Q_1, \ldots, Q_k} - \obj{P_1, \ldots, P_k} = \sum_{i = 1}^k \left(\phit{R_i \cap Q_i} - \phit{R_i \cap P_i}  + \sum_{\ell \in [k] \setminus \{i\}} \left(\phin{R_i \cap Q_\ell} - \phin{R_i \cap P_\ell}\right)\right). 
    \]

    First we consider $i = 1$. We have $R_1 \cap Q_1 = R_1$, so $\phit{R_1 \cap Q_1} = \frac{7}{8}k$. Additionally, $\phit{R_1 \cap P_1} \geq 0$ such that $\phit{R_1 \cap Q_1} - \phit{R_1 \cap P_1} \leq \frac{7}{8}k$. Since $R_1 \cap Q_\ell = \emptyset$ for each $\ell \in [k] \setminus \{1\}$, we have $\phin{R_1 \cap Q_\ell} = 0$ for each $\ell \in [k] \setminus \{1\}$. Also, we have $\phin{R_1 \cap P_\ell} \geq 0$ for each $\ell \in [k] \setminus \{1\}$. Therefore, for each $\ell \in [k] \setminus \{1\}$, we have $\phin{R_1 \cap Q_\ell} - \phin{R_1 \cap P_\ell} \leq 0$. Hence, 
    \[
        \phit{R_1 \cap Q_1} - \phit{R_1 \cap P_1} + \sum_{\ell \in [k] \setminus \{1\}} \left(\phin{R_1 \cap Q_\ell} - \phin{R_1 \cap P_\ell}\right) \leq \frac{7}{8}k \leq k.
    \]

    Now, we consider $i \in [k] \setminus \{1\}$. When transforming $P_1, \ldots, P_k$ into $Q_1, \ldots, Q_k$, the only way in which we may change the assignment in row $R_i$ is by moving element $(i, 1)$ into part $Q_1$. Hence, $R_i \cap Q_i \subseteq R_i \cap P_i$. Then, we have $\phit{R_i \cap Q_i} - \phit{R_i \cap P_i} \leq 0$. Likewise for each $\ell \in [k] \setminus \{1, i\}$, we have $R_i \cap Q_\ell \subseteq R_i \cap P_\ell$, so $\phin{R_i \cap Q_\ell} - \phin{R_i \cap P_\ell} \leq 0$. Finally, we have $|R_i \cap Q_1| \leq |R_i \cap P_1| + 1$, so $\phin{R_i \cap Q_1} - \phin{R_i \cap P_\ell} \leq 1$. Overall, this shows that for $i \in [k] \setminus \{1\}$, 
    \[
        \phit{R_i \cap Q_i} - \phit{R_i \cap P_i} + \sum_{\ell \in [k] \setminus \{i\}} \left(\phin{R_i \cap Q_\ell} - \phin{R_i \cap P_\ell}\right) \leq 1. 
    \]

    Therefore, 
    \[
        \obj{Q_1, \ldots, Q_k} - \obj{P_1, \ldots, P_k} \leq k + (k - 1)(1) \leq 2k.
    \]
\end{proof}

The rest of the proof of Lemma \ref{lem:sym_part_structure} involves transforming a partially structured symmetric partition into a structured symmetric partition without any loss in the objective value. For this, we introduce the notion of unhappy elements. 
    
\begin{definition}[unhappy elements]
\label{def:unhappy}
    Let $P_1, \ldots, P_k$ be a partition of $V$. An element $(i, j) \in V$ is \emph{unhappy under $P_1, \ldots, P_k$} if $(i, j) \in P_\ell$ where $\ell \notin \{i, j\}$. Moreover, let 
    \[
    U(P_1, \ldots, P_k) := \{(i, j) \in [k]\times [k] \colon  (i, j) \text{ is unhappy under } P_1, \ldots, P_k\}. 
    \]
\end{definition}

We now show that unhappy elements can be moved to the first part of a partially structured partition without increasing the objective value. 
\begin{claim}
\label{claim:move-unhappy}
    Let $P_1, \ldots, P_k$ be a symmetric multiway partition of $V$ such that $R_1 \subseteq P_1$. 
    Then, there exists a symmetric multiway partition $Q_1, \ldots, Q_k$ of $V$ such that 
    \begin{enumerate}
        \item $R_1 \subseteq Q_1$, 
        \item $U(Q_1, \ldots, Q_k)\subseteq Q_1$, and 
        \item $\obj{Q_1, \ldots, Q_k} \leq \obj{P_1, \ldots, P_k}$. 
    \end{enumerate}
\end{claim}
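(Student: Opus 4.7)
The plan is to pool every unhappy element into the first part: set $Q_1 := P_1 \cup U(P_1, \ldots, P_k)$ and $Q_\ell := P_\ell \setminus U(P_1, \ldots, P_k)$ for every $\ell \geq 2$. Because $P_\ell = P_\ell^T$ for each $\ell$, an element $(i,j)$ is unhappy iff $(j,i)$ is, so $U(P_1, \ldots, P_k)$ is a symmetric subset of $V$ and hence each $Q_\ell$ is symmetric. No terminal $(\ell,\ell)$ is unhappy, so $(\ell,\ell) \in Q_\ell$ for all $\ell$, and $R_1 \subseteq P_1 \subseteq Q_1$ is immediate. For $\ell \geq 2$, every element surviving in $Q_\ell$ was happy in $P$ (otherwise it would have been removed) and its happiness depends only on its part index and coordinates, so it remains happy in $Q$; hence $U(Q_1, \ldots, Q_k) \subseteq Q_1$.

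The main step is the inequality $\sum_\ell f(Q_\ell) \leq \sum_\ell f(P_\ell)$. Since both partitions are symmetric, the row and column contributions in $f(\cdot) = \sum_i (g(R_i \cap \cdot) + g(C_i \cap \cdot))$ coincide, so it suffices to prove for every $i \in [k]$ that $\sum_\ell g(R_i \cap Q_\ell) \leq \sum_\ell g(R_i \cap P_\ell)$. For $i = 1$ both sides are equal because $R_1 \subseteq P_1 \subseteq Q_1$ and $R_1$ contains no unhappy element. Fix $i \geq 2$ and set $p_\ell := |R_i \cap P_\ell|$, $u_\ell := |R_i \cap P_\ell \cap U(P_1, \ldots, P_k)|$, and $h_\ell := p_\ell - u_\ell$. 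Since $(i, 1)$ is forced into $P_1$ by the symmetry of $P$ and $R_1 \subseteq P_1$, it is the only happy element of $R_i \cap P_1$, so $h_1 = 1$; every element of $R_i \cap P_i$ is happy, so $u_i = 0$; and for $\ell \notin \{1, i\}$ the value $h_\ell \in \{0, 1\}$ is the indicator of whether $(i, \ell) \in P_\ell$. A direct inclusion-exclusion on $R_i \cap Q_1 = (R_i \cap P_1) \cup (R_i \cap U(P_1, \ldots, P_k))$ yields $|R_i \cap Q_1| = p_1 + \sum_{\ell \neq i} u_\ell - u_1 = 1 + \sum_{\ell \neq i} u_\ell$, while $|R_i \cap Q_\ell| = h_\ell$ for $\ell \notin \{1, i\}$ and $R_i \cap Q_i = R_i \cap P_i$. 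Since the $\phi_t$ term on the $\ell = i$ contribution is unchanged, the row inequality reduces to
\[
\phi_n\!\left(p_1 + \sum_{\ell \neq 1, i} u_\ell\right) - \phi_n(p_1) \;\leq\; \sum_{\ell \neq 1, i}\bigl(\phi_n(h_\ell + u_\ell) - \phi_n(h_\ell)\bigr).
\]

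The reduced inequality uses only the concavity and monotonicity of $\phi_n$. The function $\psi(x) := \phi_n(p_1 + x) - \phi_n(p_1)$ is concave with $\psi(0) = 0$, hence subadditive, so $\psi\!\left(\sum_{\ell \neq 1, i} u_\ell\right) \leq \sum_{\ell \neq 1, i} \psi(u_\ell)$. Each summand $\psi(u_\ell) = \phi_n(p_1 + u_\ell) - \phi_n(p_1)$ is in turn at most $\phi_n(h_\ell + u_\ell) - \phi_n(h_\ell)$ because $h_\ell \leq 1 \leq p_1$ and the marginal increments of a concave function are non-increasing in the base point. The main obstacle is seeing that moving all unhappy elements simultaneously (rather than one at a time, which may temporarily overflow an already large part and fail a pairwise comparison) combined with the bookkeeping identity $|R_i \cap Q_1| = 1 + \sum_{\ell \neq i} u_\ell$ sets up the exact structural condition $h_\ell \leq 1 \leq p_1$ that the concavity arguments need; summing over $i$ and doubling for columns then gives $\obj{Q_1, \ldots, Q_k} \leq \obj{P_1, \ldots, P_k}$.
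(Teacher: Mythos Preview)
Your proof is correct and uses the same construction as the paper (set $Q_1 := P_1 \cup U$ and strip $U$ from every other part). Where you differ is in the cost inequality: the paper first proves the pointwise bound $|R_i \cap P_\ell| \le |R_i \cap Q_1|$ for every $\ell\notin\{1,i\}$ and then splits into two cases according to whether $|R_i\cap Q_1|\le \tfrac{7}{8}k$ (so all $\phi_n$-terms are linear and the difference telescopes to zero) or $|R_i\cap Q_1|>\tfrac{7}{8}k$ (handled by a separate $\argmax$ argument). You bypass this case analysis entirely by isolating the two structural facts $h_\ell\le 1$ (for $\ell\notin\{1,i\}$) and $p_1\ge 1$ and then invoking only the concavity of $\phi_n$: subadditivity of $\psi(x)=\phi_n(p_1+x)-\phi_n(p_1)$ splits the left side, and the decreasing-marginals property with base points $h_\ell\le p_1$ bounds each summand. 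This is cleaner and works verbatim for any concave monotone $\phi_n$, not just the piecewise-linear one used here; the paper's argument, by contrast, leans on the explicit threshold $\tfrac{7}{8}k$.
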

\begin{proof}
For $U:=U(P_1, \ldots, P_k)$, let $Q_1 := P_1 \cup U$, and $Q_{\ell} := P_{\ell} \setminus U$ for each $\ell\in \{2,3,\ldots, k\}$. Then, $Q_1, \ldots, Q_k$ is a symmetric multiway partition of $V$     The first two properties follow by definition of $Q_1, \ldots, Q_k$. It suffices to bound the objective value. 
We have that 
    \begin{align*}
        \obj{Q_1, \ldots, Q_k} - \obj{P_1, \ldots, P_k} 
        &= \sum_{i = 1}^k \sum_{\ell = 1}^k \left(g(R_i \cap Q_\ell) - g(R_i \cap P_\ell)\right).
    \end{align*}
In order to complete the proof, we show that  $\sum_{\ell = 1}^k \left(g(R_i \cap Q_\ell) - g(R_i \cap P_\ell)\right)\le 0$ for each $i\in [k]$. 

First, we consider $i = 1$. 
    \begin{align*}
        \sum_{\ell = 1}^k \left(g(R_1 \cap Q_\ell) - g(R_1 \cap P_\ell)\right) = \phit{R_1 \cap Q_1} - \phit{R_1 \cap P_1} + \sum_{\ell \neq 1} \left(\phin{R_1 \cap Q_\ell} - \phin{R_1 \cap P_\ell}\right) = 0,
    \end{align*}
    since $R_1 \subseteq P_1$ and $R_1 \subseteq Q_1$ implies that $|R_1 \cap Q_1| = |R_1 \cap P_1| = k$ and $|R_1 \cap Q_\ell| = |R_1 \cap P_\ell| = 0$ for each $\ell \in \{2, 3, \ldots, k\}$. 

Next, 
let $i \in \{2, \ldots, k\}$. We need to show that $\sum_{\ell = 1}^k \left(g(R_i \cap Q_\ell) - g(R_i \cap P_\ell)\right) \leq 0$. 
    We begin by showing that 
    \begin{align}
        |R_i\cap P_{\ell}|\le |R_i\cap Q_1| \text{ for each } \ell\in [k]\setminus \{1, i\}. \label{ineq:set-sizes}
    \end{align}
    We have that $R_i \cap P_i \cap U = \emptyset$, so $R_i \cap P_i = R_i \cap Q_i$. Moreover, $\sum_{\ell' = 1}^k \left(|R_i \cap Q_{\ell'}| - |R_i \cap P_{\ell'}|\right)=0$. Hence, we have that 
    \[
        \sum_{\ell' \neq i} \left(|R_i \cap Q_{\ell'}| - |R_i \cap P_{\ell'}|\right) = \sum_{\ell' = 1}^k \left(|R_i \cap Q_{\ell'}| - |R_i \cap P_{\ell'}|\right) = 0. 
    \]
    Therefore,
    \[
        |R_i \cap Q_1| - |R_i \cap P_1| + \sum_{\ell' \notin \{1, i\}} \left(|R_i \cap Q_{\ell'}| - |R_i \cap P_{\ell'}|\right) = 0. 
    \]
    Hence,  
    \[
        |R_i \cap Q_1| - |R_i \cap P_1| = \sum_{\ell' \notin \{1, i\}} \left(|R_i \cap P_{\ell'}| - |R_i \cap Q_{\ell'}|\right). 
    \]
    Since $R_i \cap Q_{\ell'} \subseteq R_i \cap P_{\ell'}$ for each $\ell' \in [k] \setminus \{1, i\}$, we have that 
    $|R_i \cap P_{\ell'}| - |R_i \cap Q_{\ell'}| \geq 0$ for each $\ell' \in [k] \setminus \{1, i\}$.
    Then, for each $\ell \in [k] \setminus \{1, i\}$,
    \[
        |R_i \cap P_\ell| - |R_i \cap Q_\ell| \leq \sum_{\ell' \notin \{1, i\}} \left(|R_i \cap P_{\ell'}| - |R_i \cap Q_{\ell'}|\right) = |R_i \cap Q_1| - |R_i \cap P_1|.
    \] 
    Hence, for each $\ell \in [k] \setminus \{1, i\}$,
    \begin{align*}
        |R_i \cap P_\ell| &\leq |R_i \cap Q_1| - |R_i \cap P_1| + |R_i \cap Q_\ell| \\
        &\leq |R_i \cap Q_1| - |R_i \cap P_1| + 1 \quad \text{(since $R_i \cap Q_\ell \subseteq \{(i, \ell)\}$)} \\
        &\leq |R_i \cap Q_1| - 1 + 1 \quad \text{(since $R_1 \subseteq P_1$, $(1, i) \in P_1$, so $(i, 1) \in P_1$)} \\
        &\leq |R_i \cap Q_1|.
    \end{align*}
    This completes the proof of inequality \eqref{ineq:set-sizes}. 

    Next, we observe that 
    \begin{align*}
        \sum_{\ell = 1}^k \left(g(R_i \cap Q_\ell) - g(R_i \cap P_\ell)\right) & = \phi_t\left(|R_i \cap Q_i|\right) - \phi_t\left(|R_i \cap P_i|\right) + \sum_{\ell \neq i} \left(\phi_n\left(|R_i \cap Q_\ell|\right) - \phi_n\left(|R_i \cap P_\ell|\right) \right) \\
        &= \sum_{\ell \neq i} \left(\phi_n\left(|R_i \cap Q_\ell|\right) - \phi_n\left(|R_i \cap P_\ell|\right) \right) \quad \quad \text{(since $R_i \cap P_i \cap U = \emptyset$)} \\ 
        &= \phi_n\left(|R_i \cap Q_1|\right) - \phi_n\left(|R_i \cap P_1|\right) + \sum_{\ell \notin \{1, i\}} \left(\phi_n\left(|R_i \cap Q_\ell|\right) - \phi_n\left(|R_i \cap P_\ell|\right)\right).
    \end{align*} 
    We consider two cases based on $|R_i \cap Q_1|$. 
    
    \noindent \textbf{Case 1.} Suppose $|R_i \cap Q_1| \leq \frac{7}{8}k$. By inequality \eqref{ineq:set-sizes}, we have that $|R_i \cap P_\ell| \leq |R_i \cap Q_1| \leq \frac{7}{8}k$ for each $\ell \in [k] \setminus \{1, i\}$. Additionally, we have that $|R_i \cap P_1| \leq |R_i \cap Q_1| \leq \frac{7}{8}k$  since $P_1 \subseteq Q_1$. 
    Consequently, 
    \begin{align*}
        &\phin{R_i \cap Q_1} - \phin{R_i \cap P_1} + \sum_{\ell \notin \{1, i\}} \left(\phin{R_i \cap Q_\ell} - \phin{R_i \cap P_\ell}\right) \\
        & \quad = |R_i \cap Q_1| - |R_i \cap P_1| + \sum_{\ell \notin \{1, i\}} \left(|R_i \cap Q_\ell| - |R_i \cap P_\ell|\right) \\
        & \quad = 0. 
    \end{align*}

    \noindent \textbf{Case 2}. Suppose $|R_i \cap Q_1| > \frac{7}{8}k$. Since $|R_i \cap Q_1| > \frac{7}{8}k$, we have that $|R_i \cap Q_\ell| \leq \frac{7}{8}k$ for each $\ell \in [k] \setminus \{1, i\}$. Hence, 
    \[
    \phin{R_i \cap Q_\ell} = \begin{cases} \frac{7}{8}k & \ell = 1, \\ |R_i \cap Q_\ell| & \ell \notin \{1, i\}.\end{cases} 
    \]
    Therefore,
    \begin{align*}
        \sum_{\ell \neq i} \phin{R_i \cap Q_\ell} &= \frac{7}{8}k + \sum_{\ell \notin \{1, i\}} |R_i \cap Q_\ell| \\
        &= \frac{7}{8}k + \sum_{\ell \neq i} |R_i \cap Q_\ell| - |R_i \cap Q_1| \\
        &= \frac{7}{8}k + \sum_{\ell \neq i} |R_i \cap P_\ell| - |R_i \cap Q_1|.
    \end{align*}
    Now we consider $\sum_{\ell \neq i} \phin{R_i \cap P_\ell}$. Let $\ell^* = \argmax_{\ell \neq i} |R_i \cap P_\ell|$. Then, 
    \begin{align*}
        \sum_{\ell \neq i} \phin{R_i \cap P_\ell} &= \phin{R_i \cap P_{\ell^*}} + \sum_{\ell \notin \{\ell^*, i\}} |R_i \cap P_\ell| \quad \text{(since for each $\ell \notin \{\ell^*, i\}$, $|R_i \cap P_\ell| \leq \frac{7}{8}k$)} \\
        &= \phin{R_i \cap P_{\ell^*}} + \sum_{\ell \neq i} |R_i \cap P_\ell| - |R_i \cap P_{\ell^*}|.
    \end{align*}
    Therefore, 
    \[
        \sum_{\ell \neq i} \left(\phin{R_i \cap Q_\ell} - \phin{R_i \cap P_\ell}\right) = \frac{7}{8}k - \phin{R_i \cap P_{\ell^*}} - |R_i \cap Q_1| + |R_i \cap P_{\ell^*}|. 
    \]
    If $|R_i \cap P_{\ell^*}| \leq \frac{7}{8}k$, then 
    \begin{align*}
        \sum_{\ell \neq i} \left(\phin{R_i \cap Q_\ell} - \phin{R_i \cap P_\ell}\right) &= \frac{7}{8}k - |R_i \cap P_{\ell^*}| - |R_i \cap Q_1| + |R_i \cap P_{\ell^*}| \\
        &= \frac{7}{8}k - |R_i \cap Q_1| \\
        &< 0. \quad \text{(since $|R_i \cap Q_1| > \frac{7}{8}k$)}
    \end{align*}
    Otherwise, $|R_i \cap P_{\ell^*}| > \frac{7}{8}k$. Then,
    \begin{align*}
        \sum_{\ell \neq i} \left(\phin{R_i \cap Q_\ell} - \phin{R_i \cap P_\ell}\right) &= \frac{7}{8}k - \frac{7}{8}k - |R_i \cap Q_1| - |R_i \cap P_{\ell^*}| \\
        &= -|R_i \cap Q_1| + |R_i \cap P_{\ell^*}| \\
        &\leq 0,
    \end{align*}
    since $|R_i \cap P_\ell| \leq |R_i \cap Q_1|$ for each $\ell \in [k] \setminus \{1, i\}$ by inequality \eqref{ineq:set-sizes} and $|R_i \cap P_1| \leq |R_i \cap Q_1|$. 
\end{proof}

We will use the following claim later to move certain elements to adjust sizes of certain parts in the symmetric partition without increasing the objective value. 
\begin{claim}
\label{claim:move_pair}
    Let $P_1, \ldots, P_k$ be a symmetric multiway partition of $V$ 
    with the following properties:
    \begin{enumerate}
        \item $R_1 \subseteq P_1$ and 

        \item $U(P_1, \ldots, P_k)\subseteq P_1$. 
    \end{enumerate}
    Let $(i, j) \in R_i \cap P_i$ where $i \neq j$. If $|R_j \cap P_j| \geq |R_i \cap P_i|$, 
    then the symmetric multiway partition $Q_i := P_i \setminus \{(i, j), (j, i)\}, Q_j := P_j \cup \{(i, j), (j, i)\}, Q_\ell := P_\ell$ for all $\ell \notin \{i, j\}$ satisfies $\obj{Q_1, \ldots, Q_k} \leq \obj{P_1, \ldots, P_k}$. 
\end{claim}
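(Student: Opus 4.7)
The plan is to compute $\Delta := \obj{Q_1, \ldots, Q_k} - \obj{P_1, \ldots, P_k}$ directly and show $\Delta \leq 0$ via concavity of $\phi_t$. First I would argue that we may assume $i, j \in \{2, \ldots, k\}$. Since $R_1 \subseteq P_1$ and $P_1$ is symmetric, $(i, 1) \in P_1$ as well; so if $j = 1$ and $i \neq 1$, then $(i, 1) \in P_1$ would contradict $(i, j) \in P_i$, forcing $j \neq 1$. If instead $i = 1$, then $|R_1 \cap P_1| = k$ while $|R_j \cap P_j| \leq k - 1$ (because $(j, 1) \in P_1$ but not in $P_j$ for $j \neq 1$), contradicting the hypothesis $|R_j \cap P_j| \geq |R_i \cap P_i|$.

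Next, let $a := |R_i \cap P_i|$ and $c := |R_j \cap P_j|$. The swap only alters the four intersections $R_i \cap P_i, R_i \cap P_j, R_j \cap P_i, R_j \cap P_j$, so
\begin{align*}
\Delta &= [\phi_t(a - 1) - \phi_t(a)] + [\phi_t(c + 1) - \phi_t(c)] \\
& \quad + [\phi_n(|R_i \cap P_j| + 1) - \phi_n(|R_i \cap P_j|)] \\
& \quad + [\phi_n(|R_j \cap P_i| - 1) - \phi_n(|R_j \cap P_i|)].
\end{align*}
Using $U(P_1, \ldots, P_k) \subseteq P_1$ together with $i, j \neq 1$, the parts $P_i$ and $P_j$ contain only happy elements. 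Thus any $(i, y) \in P_j$ must have $y = j$; combined with $(i, j) \in P_i$, this gives $R_i \cap P_j = \emptyset$. Symmetrically, $R_j \cap P_i$ consists of at most $(j, i)$, and by symmetry of $P_i$ the element $(j, i)$ is in $P_i$, so $|R_j \cap P_i| = 1$. The two $\phi_n$ differences then become $[\phi_n(1) - \phi_n(0)] + [\phi_n(0) - \phi_n(1)] = 0$, and the expression collapses to $\Delta = [\phi_t(c + 1) - \phi_t(c)] - [\phi_t(a) - \phi_t(a - 1)]$.

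Finally, I would invoke the concavity of $\phi_t$ (already established in the proof of Lemma \ref{lem:function-is-monotone-submodular}): since $c \geq a$, we have $c \geq a - 1$, so the non-increasing-difference property of $\phi_t$ yields $\phi_t(c + 1) - \phi_t(c) \leq \phi_t(a) - \phi_t(a - 1)$, i.e., $\Delta \leq 0$. The main conceptual step is recognizing, via the unhappiness assumption, that only the $\phi_t$-contributions on rows $i$ and $j$ matter (the $\phi_n$-contributions cancel exactly); the rest is a short concavity calculation.
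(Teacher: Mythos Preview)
Your proof is correct and follows essentially the same approach as the paper's: both reduce the objective difference to $\phi_t(c+1)-\phi_t(c)-(\phi_t(a)-\phi_t(a-1))$ by using the no-unhappy-elements-outside-$P_1$ assumption to pin down $|R_i\cap P_j|=0$ and $|R_j\cap P_i|=1$, and then conclude via concavity of $\phi_t$. Your case analysis ruling out $i=1$ and $j=1$ is slightly more explicit than the paper's, but the argument is the same in substance.
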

\begin{proof}
    We observe that $1 \notin \{i, j\}$ since $(j, i) \notin P_j$ implies that $k - 1 \geq |R_j \cap P_j| \geq |R_i \cap P_i|$ and $|R_1 \cap P_1| = k$. Let $a = |R_i \cap P_i|$ and $b = |R_j \cap P_j|$. 
    
    We have
    \begin{align*}
        &\phit{R_i \cap Q_i} - \phit{R_i \cap P_i} + \sum_{\ell \in [k] - i} \left(\phin{R_i \cap Q_\ell} - \phin{R_i \cap P_\ell}\right) \\
        &= \phit{R_i \cap Q_i} - \phit{R_i \cap P_i} + \phin{R_i \cap Q_j} - \phin{R_i \cap P_j} \quad \text{(since $P_\ell = Q_\ell$ for all $\ell \notin \{i, j\}$)} \\
        &= \phi_t(a - 1) - \phi_t(a) + 1 - 0 \quad \text{(since $P_j \cap U = \emptyset$ and $(i, j) \in P_i$ implies $|R_i \cap P_j| = 0$)} \\
        &= \phi_t(a - 1) - \phi_t(a) + 1.
    \end{align*}

    Additionally, 
    \begin{align*}
        &\phit{R_j \cap Q_j} - \phit{R_j \cap P_j} + \sum_{\ell \in [k] - j} \left(\phin{R_j \cap Q_\ell} - \phin{R_j \cap P_\ell}\right) \\
        &= \phit{R_j \cap Q_j} - \phit{R_j \cap P_j} + \phin{R_j \cap Q_i} - \phin{R_j \cap P_i} \quad \text{(since $P_\ell = Q_\ell$ for all $\ell \notin \{i, j\}$)} \\
        &= \phi_t(b + 1) - \phi_t(b) + 0 - 1 \quad \text{(since $P_i \cap U = \emptyset$ and $(j, i) \in P_i$ implies $|R_j \cap P_i| = 1$)} \\
        &= \phi_t(b + 1) - \phi_t(b) - 1.
    \end{align*}

    Then, we have 
    \begin{align*}
        &\obj{Q_1, \ldots, Q_k} - \obj{P_1, \ldots, P_k} \\
        &= \sum_{x = 1}^k \left(\phit{R_x \cap Q_x}  - \phit{R_x \cap P_x} + \sum_{\ell \in [k] - x} \left(\phin{R_x \cap Q_\ell} - \phin{R_x \cap P_\ell}\right)\right) \\
        &= \sum_{x \in \{i, j\}} \left(\phit{R_x \cap Q_x}  - \phit{R_x \cap P_x} + \sum_{\ell \in [k] - x} \left(\phin{R_x \cap Q_\ell} - \phin{R_x \cap P_\ell}\right)\right),
    \end{align*}
    since for all $x \notin \{i, j\}$ and $\ell \in [k]$, $R_x \cap Q_\ell = R_x \cap P_\ell$. 
    Thus, 
    \begin{align*}
        \obj{Q_1, \ldots, Q_k} - \obj{P_1, \ldots, P_k} &= \phi_t(a - 1) - \phi_t(a) + 1 + \phi_t(b + 1) - \phi_t(b) - 1 \\
        &= \phi_t(b + 1) - \phi_t(b) - (\phi_t(a) - \phi_t(a - 1)) \\
        &\leq 0 \quad \text{(since $\phi_t$ is concave and $a \leq b$)}. 
    \end{align*}
\end{proof}

For the rest of the transformations, we will use the following notion of \emph{swap} operation. 
\begin{definition}[swap operation]
\label{def:swap}
    For each $i\in \{2, 3, \ldots, k\}$, define $\pi_i:[k] \rightarrow [k]$ as follows:
    \[
        \pi_{i}(\ell) = \begin{cases} \ell & \text{if } \ell \notin \{i-1, i\}, \\ i-1 & \text{if } \ell = i, \\ i & \text{if } \ell = i-1. \end{cases}
    \]
    For $i \in \{2, 3, \ldots, k\}$ and a symmetric multiway partition $P_1, \ldots, P_k$ of $V$, we define $\swapi{P_1, \ldots, P_k}$ as the symmetric multiway partition $Q_1, \ldots, Q_k$ of $V$ defined by 
    \begin{enumerate}
        \item $Q_{i - 1} = \left\{\left(\pi_{i}(x), \pi_{i}(y)\right) \colon (x, y) \in P_i \right\}$, 
        \item $Q_i = \left\{\left(\pi_{i}(x), \pi_{i}(y)\right) \colon (x, y) \in P_{i - 1} \right\}$, and 
        \item $Q_\ell = \left\{\left(\pi_{i}(x), \pi_{i}(y)\right) \colon (x, y) \in P_\ell \right\}$ for each $\ell \in [k]\setminus \{i-1, i\}$.
    \end{enumerate}
    Condensing this definition, we see that $Q_{\pi_i(\ell)} = \left\{ \left(\pi_i(x), \pi_i(y)\right) \colon (x, y) \in P_\ell \right\}$ for every $\ell\in [k]$. 
\end{definition}

\begin{figure}[ht]
\centering
\begin{minipage}[b]{0.49\textwidth}
    \centering
    \begin{tikzpicture}
    
        \definecolor{brightube}{rgb}{0.82, 0.62, 0.91}
        
        \def\gridsize{5}
        \def\spacing{0.9} 
    
        \foreach \x in {1,...,\gridsize} {
            \foreach \y in {1,...,\gridsize} {
                \filldraw[black] (\x*\spacing, \y*\spacing) circle (2pt);
            }
        }
        
        \foreach \i in {0,...,4} {
            \foreach \j in {0} {
                \fill[red, opacity = 0.4] (0.5*\spacing + \i*\spacing, \gridsize*\spacing - 0.5*\spacing - \j*\spacing) rectangle (0.5*\spacing + \i*\spacing +\spacing, \gridsize*\spacing - 0.5*\spacing - \j*\spacing + \spacing);
            }
        }
    
        \foreach \i in {0} {
            \foreach \j in {1,...,4} {
                \fill[red, opacity = 0.4] (0.5*\spacing + \i*\spacing, \gridsize*\spacing - 0.5*\spacing - \j*\spacing) rectangle (0.5*\spacing + \i*\spacing +\spacing, \gridsize*\spacing - 0.5*\spacing - \j*\spacing + \spacing);
            }
        }
    
        \foreach \i/\j in {1/1, 1/3, 3/1}{
            \fill[orange, opacity = 0.4] (0.5*\spacing + \i*\spacing, \gridsize*\spacing - 0.5*\spacing - \j*\spacing) rectangle (0.5*\spacing + \i*\spacing +\spacing, \gridsize*\spacing - 0.5*\spacing - \j*\spacing + \spacing);
        }
    
        \foreach \i/\j in {1/2, 2/1, 2/2, 2/3, 2/4, 3/2, 4/2}{
            \fill[green, opacity = 0.4] (0.5*\spacing + \i*\spacing, \gridsize*\spacing - 0.5*\spacing - \j*\spacing) rectangle (0.5*\spacing + \i*\spacing +\spacing, \gridsize*\spacing - 0.5*\spacing - \j*\spacing + \spacing);
        }
    
        \foreach \i/\j in {1/4, 3/3, 4/1}{
            \fill[blue, opacity = 0.4] (0.5*\spacing + \i*\spacing, \gridsize*\spacing - 0.5*\spacing - \j*\spacing) rectangle (0.5*\spacing + \i*\spacing +\spacing, \gridsize*\spacing - 0.5*\spacing - \j*\spacing + \spacing);
        }
    
        \foreach \i/\j in {4/3, 4/4, 3/4}{
            \fill[brightube, opacity = 0.4] (0.5*\spacing + \i*\spacing, \gridsize*\spacing - 0.5*\spacing - \j*\spacing) rectangle (0.5*\spacing + \i*\spacing +\spacing, \gridsize*\spacing - 0.5*\spacing - \j*\spacing + \spacing);
        }
    \end{tikzpicture}
\end{minipage}
\begin{minipage}[b]{0.49\textwidth}
    \centering
    \begin{tikzpicture}
    
        \definecolor{brightube}{rgb}{0.82, 0.62, 0.91}
        
        \def\gridsize{5}
        \def\spacing{0.9} 
    
        \foreach \x in {1,...,\gridsize} {
            \foreach \y in {1,...,\gridsize} {
                \filldraw[black] (\x*\spacing, \y*\spacing) circle (2pt);
            }
        }
        
        \foreach \i in {0,...,4} {
            \foreach \j in {0} {
                \fill[red, opacity = 0.4] (0.5*\spacing + \i*\spacing, \gridsize*\spacing - 0.5*\spacing - \j*\spacing) rectangle (0.5*\spacing + \i*\spacing +\spacing, \gridsize*\spacing - 0.5*\spacing - \j*\spacing + \spacing);
            }
        }
    
        \foreach \i in {0} {
            \foreach \j in {1,...,4} {
                \fill[red, opacity = 0.4] (0.5*\spacing + \i*\spacing, \gridsize*\spacing - 0.5*\spacing - \j*\spacing) rectangle (0.5*\spacing + \i*\spacing +\spacing, \gridsize*\spacing - 0.5*\spacing - \j*\spacing + \spacing);
            }
        }
    
        \foreach \i/\j in {2/1, 1/2, 1/1, 1/3, 1/4, 3/1, 4/1}{
            \fill[green, opacity = 0.4] (0.5*\spacing + \i*\spacing, \gridsize*\spacing - 0.5*\spacing - \j*\spacing) rectangle (0.5*\spacing + \i*\spacing +\spacing, \gridsize*\spacing - 0.5*\spacing - \j*\spacing + \spacing);
        }
    
        \foreach \i/\j in {2/2, 2/3, 3/2}{
            \fill[orange, opacity = 0.4] (0.5*\spacing + \i*\spacing, \gridsize*\spacing - 0.5*\spacing - \j*\spacing) rectangle (0.5*\spacing + \i*\spacing +\spacing, \gridsize*\spacing - 0.5*\spacing - \j*\spacing + \spacing);
        }
    
        \foreach \i/\j in {2/4, 3/3, 4/2}{
            \fill[blue, opacity = 0.4] (0.5*\spacing + \i*\spacing, \gridsize*\spacing - 0.5*\spacing - \j*\spacing) rectangle (0.5*\spacing + \i*\spacing +\spacing, \gridsize*\spacing - 0.5*\spacing - \j*\spacing + \spacing);
        }
    
        \foreach \i/\j in {4/3, 4/4, 3/4}{
            \fill[brightube, opacity = 0.4] (0.5*\spacing + \i*\spacing, \gridsize*\spacing - 0.5*\spacing - \j*\spacing) rectangle (0.5*\spacing + \i*\spacing +\spacing, \gridsize*\spacing - 0.5*\spacing - \j*\spacing + \spacing);
        }
    \end{tikzpicture}
\end{minipage}
\caption{An example of a swap in which $(Q_1, \ldots, Q_5) = \text{swap}_3(P_1, \ldots, P_5)$. On the left, $P_2$, $P_3$, and $P_4$ correspond to the orange, green, and blue part respectively. On the right, $Q_2$, $Q_3$, and $Q_4$ correspond to the green,  orange, and blue part respectively.}
\end{figure}
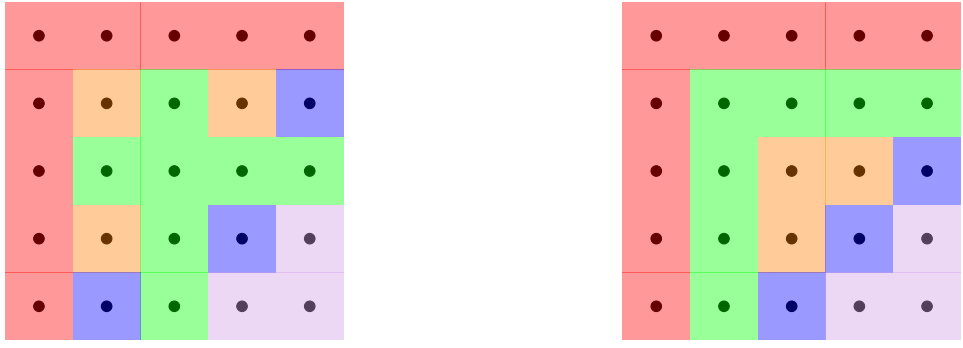

We now show that swap operations preserve certain structural properties without increasing the objective value. 
\begin{claim}
\label{claim:swap_properties}
    Let $P_1, \ldots, P_k$ be a symmetric multiway partition of $V$. Let $Q_1, \ldots, Q_k = \swapi{P_1, \ldots, P_k}$ for some $i \in \{3,\ldots, k\}$. Then,
    \begin{enumerate}
        \item $Q_1, \ldots, Q_k$ is a symmetric multiway partition of $V$.
        \label{item:is_sym_mult_part}
        
        \item If $U(P_1, \ldots, P_k) \subseteq P_1$, then $U(Q_1, \ldots, Q_k) \subseteq Q_1$. 
        \label{item:swap_unhappy}
    
        \item 
        \label{item:swap_size}
        $\left|R_x \cap Q_\ell\right| = \left|R_{\pi_{i}(x)} \cap P_{\pi_{i}(\ell)}\right|$ for all $x, \ell \in [k]$. 

        \item $\obj{Q_1, \ldots, Q_k} = \obj{P_1, \ldots, P_k}$. 
        \label{item:swap_obj}

        \item If $\overleftarrow{R}_\ell \cap P_\ell = \emptyset$ and $\ell \in [k] \setminus \{i-1, i\}$, then $\overleftarrow{R}_\ell \cap Q_\ell = \emptyset$.
        \label{item:swap_preserve_empty}

        \item If $\overrightarrow{R}_\ell \subseteq P_1 \cup P_\ell$ and $\ell \in [k] \setminus \{i - 1, i\}$, then $\overrightarrow{R}_\ell \subseteq Q_1 \cup Q_\ell$. 
        \label{item:swap_preserve_subset}

    \end{enumerate}
\end{claim}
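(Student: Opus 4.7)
The plan is to view the swap operation as the diagonal relabeling action of the involution $\pi_i$ simultaneously on the two coordinates of $V=[k]\times[k]$ and on the part indices: by Definition \ref{def:swap}, $(a,b)\in Q_m$ if and only if $(\pi_i(a),\pi_i(b))\in P_{\pi_i(m)}$. With this reformulation, Items (1), (3), and (4) become essentially automatic from $\pi_i$ being a bijection and from the symmetry of the $P_\ell$; Item (2) uses that $\pi_i$ fixes $1$ (because $i\geq 3$); and Items (5) and (6) reduce to the observation that when $\ell\notin\{i-1,i\}$, the map $\pi_i$ restricts to a bijection on $[\ell-1]$ and on $\{\ell+1,\ldots,k\}$.

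For Item (1), bijectivity of $(x,y)\mapsto(\pi_i(x),\pi_i(y))$ on $V$ gives that the $Q_\ell$'s partition $V$. Each $Q_\ell$ is symmetric because, setting $m=\pi_i(\ell)$, we have $(a,b)\in Q_m \iff (\pi_i(a),\pi_i(b))\in P_\ell \iff (\pi_i(b),\pi_i(a))\in P_\ell \iff (b,a)\in Q_m$, using symmetry of $P_\ell$. The terminal $(\ell,\ell)$ lies in $Q_\ell$ since $(\pi_i(\ell),\pi_i(\ell))\in P_{\pi_i(\ell)}$ by the terminal condition on $P$. For Item (2), if $(a,b)\in Q_m$ is unhappy ($m\notin\{a,b\}$), then injectivity of $\pi_i$ gives $\pi_i(m)\notin\{\pi_i(a),\pi_i(b)\}$, so $(\pi_i(a),\pi_i(b))\in P_{\pi_i(m)}$ is unhappy under $P$. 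The hypothesis forces $\pi_i(m)=1$; since $i\geq 3$, $\pi_i(1)=1$, hence $m=1$. For Item (3), $y\mapsto\pi_i(y)$ is a bijection between $\{y:(x,y)\in Q_\ell\}$ and $\{y':(\pi_i(x),y')\in P_{\pi_i(\ell)}\}$, which gives the cardinality equality.

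For Item (4), apply Item (3) termwise to $\obj{Q_1,\ldots,Q_k}=\sum_{x=1}^k\bigl(\phi_t(|R_x\cap Q_x|)+\sum_{\ell\neq x}\phi_n(|R_x\cap Q_\ell|)\bigr)$ and re-index the outer sum by $x'=\pi_i(x)$ and the inner sum by $\ell'=\pi_i(\ell)$; since $\pi_i$ is a bijection on $[k]$ that sends $\{x\}$ to $\{\pi_i(x)\}$, the sum recovers $\obj{P_1,\ldots,P_k}$ exactly. For Items (5) and (6), the key point is that when $\ell\notin\{i-1,i\}$ we have $\pi_i(\ell)=\ell$, and additionally $\{i-1,i\}\subseteq[\ell-1]$ when $\ell>i$ and $\{i-1,i\}\subseteq\{\ell+1,\ldots,k\}$ when $\ell<i-1$; in either case $\pi_i$ restricts to a bijection on each of $[\ell-1]$ and $\{\ell+1,\ldots,k\}$. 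Combined with $\pi_i(1)=1$ (so $Q_1$ corresponds to $P_1$), this yields $\overleftarrow{R}_\ell\cap Q_\ell=\emptyset \iff \overleftarrow{R}_\ell\cap P_\ell=\emptyset$, giving (5), and similarly $\overrightarrow{R}_\ell\subseteq Q_1\cup Q_\ell \iff \overrightarrow{R}_\ell\subseteq P_1\cup P_\ell$, giving (6). The only non-trivial bookkeeping is the case split for (5) and (6) on $\ell<i-1$ versus $\ell>i$, but no new ideas are required beyond carefully tracking how $\pi_i$ moves indices.
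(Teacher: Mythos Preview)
Your proposal is correct and follows essentially the same approach as the paper: both hinge on the reformulation $(a,b)\in Q_m \iff (\pi_i(a),\pi_i(b))\in P_{\pi_i(m)}$ and then verify each item by tracking how $\pi_i$ moves indices. Your presentation is somewhat more streamlined---for Item~(4) you re-index the double sum directly rather than splitting into the cases $x=i$, $x=i-1$, $x\notin\{i-1,i\}$ as the paper does, and for Items~(5)--(6) you isolate the single structural fact (that $\pi_i$ restricts to a bijection on $[\ell-1]$ and on $\{\ell+1,\ldots,k\}$ when $\ell\notin\{i-1,i\}$) rather than running separate contradiction arguments for $\ell\le i-2$ and $\ell\ge i+1$---but the underlying reasoning is the same.
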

\begin{proof}
    First, we show property \ref{item:is_sym_mult_part}. We have $(x, y) \in Q_\ell$ if and only if $(\pi_i(x), \pi_i(y)) \in P_{\pi_i(\ell)}$. Using this fact, we have that $Q_1, \ldots, Q_k$ is a partition of $V$. We now show that $Q_1, \ldots, Q_k$ is symmetric: Suppose for contradiction that $Q_1, \ldots, Q_k$ is not symmetric. Then, there exists $\ell \in [k]$ and $(x, y) \in Q_\ell$ such that $(y, x) \notin Q_\ell$. Then, $(\pi_i(x), \pi_i(y)) \in P_{\pi_i(\ell)}$ and $(\pi_i(y), \pi_i(x)) \notin P_{\pi_i(\ell)}$, but this implies that $P_1, \ldots, P_k$ is not symmetric, which is a contradiction. Hence, $Q_1, \ldots, Q_k$ is symmetric. Finally, $Q_1, \ldots, Q_k$ satisfies $(\ell, \ell) \in Q_\ell$ for each $\ell \in [k]$: To see this, we observe that for each $\ell \in [k] \setminus \{i - 1, i\}$, we have $(\pi_i(\ell), \pi_i(\ell)) = (\ell, \ell) \in P_\ell = P_{\pi_i(\ell)}$, so $(\ell, \ell) \in Q_\ell$. Additionally, because $(i - 1, i - 1) \in P_{i - 1}$, we have $(i, i) = (\pi_i(i - 1), \pi_i(i - 1)) \in Q_{\pi_i(i - 1)} = Q_i$. Similar reasoning shows that $(i - 1, i - 1) \in Q_{i - 1}$. 
    
    Next, we show property \ref{item:swap_size}. We observe that $(x, y) \in Q_\ell$ if and only if $(\pi_i(x), \pi_i(y)) \in P_{\pi_i(\ell)}$. Hence, $|R_x \cap Q_\ell| = \sum_{y = 1}^k \mathbbm{1}\left[(x, y) \in Q_\ell\right] = \sum_{y = 1}^k \mathbbm{1}\left[\left(\pi_i(x), \pi_i(y)\right) \in P_{\pi_i(\ell)}\right] = \left|R_{\pi_i(x)} \cap P_{\pi_i(\ell)}\right|$ since $[k] = \{\pi_i(y)\}_{y \in [k]}$ as $\pi_i$ is a bijection. 

    Now, we show property \ref{item:swap_obj}. Consider $\phit{R_i \cap Q_i} + \sum_{\ell \neq x} \phin{R_x \cap Q_\ell}$. We have
    \begin{align*}
        &\phit{R_i \cap Q_i} + \sum_{\ell \neq i} \phin{R_x \cap Q_\ell} \\
        &= \phit{R_i \cap Q_i} + \phin{R_i \cap Q_{i -1}} + \sum_{\ell \notin \{i-1, i\}} \phin{R_i \cap Q_\ell} \\
        &= \phit{R_{i - 1} \cap P_{i - 1}} + \phin{R_{i - 1} \cap P_i} + \sum_{\ell \notin \{i - 1, i\}} \phin{R_{i - 1} \cap P_\ell} \quad \text{(by property \ref{item:swap_size})} \\
        &= \phit{R_{i - 1} \cap P_{i - 1}} + \sum_{\ell \neq i - 1} \phin{R_{i - 1} \cap P_\ell}.
    \end{align*}
    Similar reasoning shows that 
    \[ 
        \phit{R_{i - 1} \cap Q_{i - 1}} + \sum_{\ell \neq i - 1} \phin{R_{i - 1} \cap Q_\ell} = \phit{R_i \cap P_i} + \sum_{\ell \neq i}\phin{R_i \cap P_\ell}. 
    \]
    Finally, for $x \notin \{i - 1, i\}$, we have
    \begin{align*}
        &\phit{R_x \cap Q_x} + \sum_{\ell \neq x}\phin{R_x \cap Q_\ell} \\
        &= \phit{R_x \cap Q_x} + \phin{R_x \cap Q_{i - 1}} + \phin{R_x \cap Q_i} + \sum_{\ell \notin \{x, i - 1, i\}} \phin{R_x \cap Q_\ell} \\
        &= \phit{R_x \cap P_x} + \phin{R_x \cap P_i} + \phin{R_x \cap P_{i - 1}} + \sum_{\ell \notin \{x, i- 1, i\}} \phin{R_x \cap P_\ell} \quad \text{(by property \ref{item:swap_size})} \\
        &= \phit{R_x \cap P_x} + \sum_{\ell \neq x}\phin{R_x \cap P_\ell}. 
    \end{align*}
    Hence, 
    \begin{align*}
        \obj{Q_1, \ldots, Q_k} &= \sum_{x = 1}^k \left(\phit{R_x \cap Q_x} + \sum_{\ell \neq x} \phin{R_x \cap Q_\ell}\right) \\ 
        &= \sum_{x = 1}^k \left(\phit{R_x \cap P_x} + \sum_{\ell \neq x} \phin{R_x \cap P_\ell}\right) \\
        &= \obj{P_1, \ldots, P_k}. 
    \end{align*}

    Next, we prove property \ref{item:swap_unhappy}. For the sake of contradiction, suppose that property \ref{item:swap_unhappy} does not hold. Then, $U(P_1, \ldots, P_k) \subseteq P_1$ and there exists $(x, y) \in U(Q_1, \ldots, Q_k)$ such that $(x, y) \notin Q_1$. Then, $(x, y) \in Q_\ell$ for some $\ell \notin \{1, x, y\}$. Consequently, $\left(\pi_i(x), \pi_i(y)\right) \in P_{\pi_i(\ell)}$. If $\left(\pi_i(x), \pi_i(y)\right)$ is not unhappy under $P_1, \ldots, P_k$, then $\pi_i(\ell) = \pi_i(x)$ or $\pi_i(\ell) = \pi_i(y)$, which would imply that $x = \ell$ or $y = \ell$, a contradiction. Hence, $\left(\pi_i(x), \pi_i(y)\right)$ is unhappy under $P_1 \ldots, P_k$, so $\left(\pi_i(x), \pi_i(y)\right) \in P_1$. Then, $(x, y) \in Q_{\pi_i(1)} = Q_1$, which contradicts the assumption that $(x, y) \notin Q_1$. Thus, all unhappy vertices under $Q_1, \ldots, Q_k$ are in $Q_1$. 

    Next, we prove property \ref{item:swap_preserve_empty}. We consider two cases based on the value of $\ell$.
    \begin{enumerate}
        \item Suppose $\ell \in [i - 2]$. Suppose for contradiction that $\overleftarrow{R}_\ell \cap P_\ell = \emptyset$ and $\overleftarrow{R}_\ell \cap Q_\ell \neq \emptyset$. Then there exists $y \in [\ell - 1]$ such that $(\ell, y) \in Q_\ell$. Therefore, $(\ell, y) = (\pi_i(\ell), \pi_i(y)) \in P_{\pi_i(\ell)} = P_\ell$, which contradicts $\overleftarrow{R}_\ell \cap P_\ell = \emptyset$. 
        
        \item Suppose $\ell \in \{i + 1, \ldots, k\}$. Suppose for contradiction that $\overleftarrow{R}_\ell \cap P_\ell = \emptyset$ and $\overleftarrow{R}_\ell \cap Q_\ell \neq \emptyset$. Then there exists $y \in [\ell - 1]$ such that $(\ell, y) \in \overleftarrow{R}_\ell \cap Q_\ell$. Then, $(\ell, \pi_i(y)) = (\pi_i(\ell), \pi_i(y)) \in P_{\pi_i(\ell)} = P_\ell$. We observe that $\pi_i(y) = \begin{cases} i & y = i - 1 \\ i - 1 & y = i \\ y & y \notin\{i - 1, i\}\end{cases}$, and in all cases, $\pi_i(y) < \ell$. Hence, $(\ell, \pi_i(y)) \in \overleftarrow{R}_\ell \cap P_\ell$, which contradicts $\overleftarrow{R}_\ell \cap P_\ell = \emptyset$.
    \end{enumerate}

    Finally, we prove property \ref{item:swap_preserve_subset}. Again, we consider two cases based on the value of $\ell$. \begin{enumerate}
        \item Suppose $\ell \in [i - 2]$. Suppose for contradiction that $\overrightarrow{R}_\ell \subseteq P_1 \cup P_\ell$ and $\overrightarrow{R}_\ell \not\subseteq Q_1 \cup Q_\ell$. Then, there exists $y \in \{\ell + 1, \ldots, k\}$ and $\ell' \in [k] \setminus \{1, \ell\}$ such that $(\ell, y) \in Q_{\ell'}$. Then, $(\ell, \pi_i(y)) = (\pi_i(\ell), \pi_i(y)) \in P_{\pi_i(\ell')}$. We observe that $\pi_i(y) = \begin{cases} i - 1 & y = i \\ i & y = i - 1 \\ y & y \notin \{i - 1, i\}\end{cases}$, and $\pi_i(\ell') \in [k] \setminus \{1, \ell\}$. Hence, $(\ell, \pi_i(y)) \in \overrightarrow{R}_\ell \cap P_{\pi_i(\ell')}$ is a witnesses the fact that $\overrightarrow{R}_\ell \not\subseteq P_1 \cup P_\ell$, which is a contradiction. 
        
        \item Suppose $\ell \in \{i + 1, \ldots, k\}$. Suppose for contradiction that $\overrightarrow{R}_\ell \subseteq P_1 \cup P_\ell$ and $\overrightarrow{R}_\ell \not\subseteq Q_1 \cup Q_\ell$. Then there exists $y \in \{\ell + 1, \ldots, k\}$ and $\ell' \in [k] \setminus \{1, \ell\}$ such that $(\ell, y) \in Q_{\ell'}$. Then, $(\ell, y) = (\pi_i(\ell), \pi_i(y)) \in P_{\pi_i(\ell')}$ where $\pi_i(\ell') \in [k] \setminus \{1, \ell\}$. Then, $\overrightarrow{R}_\ell \not\subseteq P_1 \cup P_\ell$, which is a contradiction.
    \end{enumerate}
\end{proof}

Next, we show that swap operation can be used to enforce additional structure. 
\begin{claim}
\label{claim:swap_preserves_structure}
    Let $P_1, \ldots, P_k$ be a symmetric multiway partition of $V$ and let $i\in \{3, \ldots, k\}$. 
    Suppose that the following properties hold for each $\ell \geq i - 1$:
    \begin{enumerate}
        \item $\overleftarrow{R}_\ell \cap P_\ell = \emptyset$ with the possible exception that $(i, i - 1) \in \overleftarrow{R}_i \cap P_i$;

        \item $\overrightarrow{R}_\ell \subseteq P_1 \cup P_\ell$ with the possible exception that $(i - 1, i) \in P_i$;

        \item $\{(i - 1, i), (i, i - 1)\} \subseteq P_i$ or $\{(i - 1, i), (i, i - 1)\} \subseteq P_1$.
    \end{enumerate}
    Let $(Q_1, \ldots, Q_k) = \swapi{P_1, \ldots, P_k}$. Then, we have that  $\overleftarrow{R}_\ell \cap Q_\ell = \emptyset$ and $\overrightarrow{R}_\ell \subseteq Q_1 \cup Q_\ell$
    for each $\ell \geq i - 1$. 
\end{claim}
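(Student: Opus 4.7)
The plan is to split on the value of $\ell$ into three cases: $\ell \geq i+1$, $\ell = i-1$, and $\ell = i$. For $\ell \geq i+1$ we have $\ell \notin \{i-1, i\}$ and the permitted exceptions in hypotheses 1 and 2 (which live only at $\ell = i$) do not apply, so $\overleftarrow{R}_\ell \cap P_\ell = \emptyset$ and $\overrightarrow{R}_\ell \subseteq P_1 \cup P_\ell$ hold cleanly. Then properties \ref{item:swap_preserve_empty} and \ref{item:swap_preserve_subset} of Claim \ref{claim:swap_properties} directly transfer these to $Q_\ell$, finishing that case.

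For $\ell \in \{i-1, i\}$ the unifying observation I would use is that $(x, y) \in Q_m$ iff $(\pi_i(x), \pi_i(y)) \in P_{\pi_i(m)}$, where $\pi_i$ transposes $i-1$ with $i$ and fixes everything else in $[k]$. This translates every membership question in $Q$ into a question in $P$, and the hypotheses then immediately resolve it.

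For $\ell = i-1$, to prove $\overleftarrow{R}_{i-1} \cap Q_{i-1} = \emptyset$ I would note that for $j \in [i-2]$, $(i-1, j) \in Q_{i-1}$ is equivalent to $(i, j) \in P_i$, which is ruled out by hypothesis 1 since the only permitted exception $(i, i-1)$ has $i-1 \notin [i-2]$. For $\overrightarrow{R}_{i-1} \subseteq Q_1 \cup Q_{i-1}$, the subcase $j \geq i+1$ reduces to $(i, j) \in P_1 \cup P_i$, which is hypothesis 2 applied at $\ell = i$ away from its exception; and the lone subcase $j = i$ reduces to $(i, i-1) \in P_1 \cup P_i$, which is exactly hypothesis 3. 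The case $\ell = i$ is symmetric: the translation gives $(i, j) \in Q_i$ iff $(i-1, \pi_i(j)) \in P_{i-1}$, so $\overleftarrow{R}_i \cap Q_i = \emptyset$ follows from the exception-free hypothesis 1 at $\ell = i-1$ once one notes that the subcase $j = i-1$ requires $(i-1, i) \in P_{i-1}$, which is impossible because hypothesis 3 places $(i-1, i)$ in $P_1$ or $P_i$; and $\overrightarrow{R}_i \subseteq Q_1 \cup Q_i$ reduces to the exception-free hypothesis 2 at $\ell = i-1$.

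The main obstacle is purely bookkeeping: keeping straight which hypothesis handles which boundary element, and using the fact that the exceptions in hypotheses 1 and 2 sit only at $\ell = i$, so that the statements at $\ell = i-1$ are clean and can be invoked without caveat when $\ell = i$ on the $Q$ side. Once the case split is set up and $\pi_i$ is used to rewrite each $Q$-membership as a $P$-membership, the verification is mechanical, and hypothesis 3 is invoked exactly once in each of the two nontrivial cases to handle the single crossover point $(i-1, i)$ (and, by symmetry of the partition, its transpose).
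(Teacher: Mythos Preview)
Your proof is correct and follows the same approach as the paper: handle $\ell \geq i+1$ via Claim~\ref{claim:swap_properties} properties~\ref{item:swap_preserve_empty} and~\ref{item:swap_preserve_subset}, and for $\ell \in \{i-1,i\}$ translate each $Q$-membership into a $P$-membership via $\pi_i$, invoking hypothesis~3 exactly at the crossover element. One small bookkeeping slip: the exception in hypothesis~2 (that $(i-1,i)$ may lie in $P_i$) actually lives at $\ell = i-1$, not at $\ell = i$, since $(i-1,i) \in \overrightarrow{R}_{i-1}$; this does not break your argument, because whenever you appeal to hypothesis~2 at $\ell = i-1$ you are checking an element $(i-1,j)$ with $j \geq i+1$, where the exception does not apply, and hypothesis~2 at $\ell = i$ is genuinely exception-free since $(i-1,i) \notin \overrightarrow{R}_i$.
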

\begin{proof}
    First, we show that for each $\ell \in \{i - 1, \ldots, k\}$, we have $\overleftarrow{R}_\ell \cap Q_\ell= \emptyset$. By Claim \ref{claim:swap_properties} property \ref{item:swap_preserve_empty}, this statement is true for each $\ell \in \{i + 1, \ldots, k\}$. Hence, suppose for contradiction there exists $j \in \{i - 1, i\}$ such that $\overleftarrow{R}_j \cap Q_j \neq \emptyset$. We case on the value $j$:
    \begin{enumerate}
        \item 
        If $j = i - 1$, then there exists $y \in [i - 2]$ such that $(i - 1, y) \in \overleftarrow{R}_{i - 1} \cap Q_{i - 1}$. Then $(i, \pi_i(y)) = (\pi_i(i - 1), \pi_i(y)) \in P_{\pi_i(i - 1)} = P_i$. This contradicts the fact that $\overleftarrow{R}_i \cap P_i \subseteq \{(i, i - 1)\}$. 

        \item 
        If $j = i$, then there exists $y \in [i - 1]$ such that $(i, y) \in \overleftarrow{R}_i \cap Q_i$. Since $(i - 1, i) \in P_i$ or $(i - i, i) \in P_1$, we have $(i, i - 1) = (\pi_i(i - 1), \pi_i(i)) \in Q_{\pi_i(i)} = Q_{i - 1}$ or $(i, i - 1) \in Q_{\pi_i(1)} = Q_1$. Hence, $y \neq i - 1$, so $y \in [i - 2]$. But then $(i - 1, y) = (\pi_i(i), \pi_i(y)) \in P_{\pi_i(i)} = P_{i - 1}$, which contradicts $\overleftarrow{R}_{i - 1} \cap P_{i - 1} = \emptyset$. 
    \end{enumerate}

    Now, we show that for each $\ell \in \{i - 1, \ldots, k\}$, we have $\overrightarrow{R}_\ell \subseteq Q_1 \cup Q_\ell$. By Claim \ref{claim:swap_properties} property \ref{item:swap_preserve_subset}, this statement is true for each $\ell \in \{i + 1, \ldots, k\}$. Hence, suppose for contradiction there exists $j \in \{i - 1, i\}$ such that $\overrightarrow{R}_j \not\subseteq Q_1 \cup Q_j$. Again, we case on the value of $j$:
    \begin{enumerate}
        \item 
        If $j = i - 1$, then there exists $y \in \{i, \ldots, k\}$ and $\ell \in [k] \setminus \{1, i - 1\}$ such that $(i - 1, y) \in Q_\ell$. Since $(i, i - 1) \in P_i$ or $(i, i - 1) \in P_1$, we have $(i - 1, i) = (\pi_i(i), \pi_i(i - 1)) \in Q_{\pi_i(i - 1)} = Q_{i - 1}$ or $(i - 1, i) \in Q_{\pi_i(1)} = Q_1$. Therefore, $y \neq i$ and $y \in \{i + 1, \ldots, k\}$. Then, we have $(i, y) = (\pi_i(i - 1), \pi_i(y)) \in P_{\pi_i(\ell)}$ where $\pi_i(\ell) \in [k] \setminus \{1, i\}$. Since $y > i$, $\overrightarrow{R}_i \not\subseteq P_1 \cup P_i$, which is a contradiction. 

        \item 
        If $j = i$, then there exists $y \in \{i + 1, \ldots, k\}$ and $\ell \in [k] \setminus \{1, i\}$ such that $(i, y) \in Q_\ell$. Then, $(i - 1, y) = (\pi_i(i), \pi_i(y)) \in P_{\pi_i(\ell)}$ where $\pi_i(\ell) \in [k] \setminus \{1, i - 1\}$. But then since $y \geq i + 1 \geq i - 1$, we have a contradiction of $\overrightarrow{R}_{i - 1} \subseteq P_1 \cup P_{i - 1}$ with the possible exception that $(i - 1, i) \in P_i$. 
    \end{enumerate}
\end{proof}

\begin{figure}[ht]
\centering
\begin{minipage}[b]{0.49\textwidth}
    \centering
    \begin{tikzpicture}
    
        \definecolor{brightube}{rgb}{0.82, 0.62, 0.91}
        
        \def\gridsize{6}
        \def\spacing{0.9} 
    
        \foreach \x in {1,...,\gridsize} {
            \foreach \y in {1,...,\gridsize} {
                \filldraw[black] (\x*\spacing, \y*\spacing) circle (2pt);
            }
        }
        
        \foreach \i/\j in {0/0, 0/1, 0/2, 1/0, 1/3, 1/4, 1/5, 2/0, 2/5, 3/1, 3/4, 4/1, 4/3, 5/1, 5/2} {
            \fill[red, opacity = 0.4] (0.5*\spacing + \i*\spacing, \gridsize*\spacing - 0.5*\spacing - \j*\spacing) rectangle (0.5*\spacing + \i*\spacing +\spacing, \gridsize*\spacing - 0.5*\spacing - \j*\spacing + \spacing);
        }
    
        \foreach \i/\j in {0/3, 0/4, 0/5, 1/1, 1/2, 2/1, 3/0, 4/0, 5/0}{
            \fill[orange, opacity = 0.4] (0.5*\spacing + \i*\spacing, \gridsize*\spacing - 0.5*\spacing - \j*\spacing) rectangle (0.5*\spacing + \i*\spacing +\spacing, \gridsize*\spacing - 0.5*\spacing - \j*\spacing + \spacing);
        }
    
        \foreach \i/\j in {2/2, 2/4, 4/2}{
            \fill[green, opacity = 0.4] (0.5*\spacing + \i*\spacing, \gridsize*\spacing - 0.5*\spacing - \j*\spacing) rectangle (0.5*\spacing + \i*\spacing +\spacing, \gridsize*\spacing - 0.5*\spacing - \j*\spacing + \spacing);
        }
    
        \foreach \i/\j in {2/3, 3/2, 3/3, 3/5, 5/3}{
            \fill[blue, opacity = 0.4] (0.5*\spacing + \i*\spacing, \gridsize*\spacing - 0.5*\spacing - \j*\spacing) rectangle (0.5*\spacing + \i*\spacing +\spacing, \gridsize*\spacing - 0.5*\spacing - \j*\spacing + \spacing);
        }
    
        \foreach \i/\j in {4/4, 4/5, 5/4}{
            \fill[brightube, opacity = 0.4] (0.5*\spacing + \i*\spacing, \gridsize*\spacing - 0.5*\spacing - \j*\spacing) rectangle (0.5*\spacing + \i*\spacing +\spacing, \gridsize*\spacing - 0.5*\spacing - \j*\spacing + \spacing);
        }
    
        \foreach \i/\j in {5/5}{
            \fill[gray, opacity = 0.4] (0.5*\spacing + \i*\spacing, \gridsize*\spacing - 0.5*\spacing - \j*\spacing) rectangle (0.5*\spacing + \i*\spacing +\spacing, \gridsize*\spacing - 0.5*\spacing - \j*\spacing + \spacing);
        }
    \end{tikzpicture}
\end{minipage}
\begin{minipage}[b]{0.49\textwidth}
    \centering
    \begin{tikzpicture}
    
        \definecolor{brightube}{rgb}{0.82, 0.62, 0.91}
        
        \def\gridsize{6}
        \def\spacing{0.9} 
    
        \foreach \x in {1,...,\gridsize} {
            \foreach \y in {1,...,\gridsize} {
                \filldraw[black] (\x*\spacing, \y*\spacing) circle (2pt);
            }
        }
        
        \foreach \i/\j in {0/0, 0/1, 0/3, 1/0, 1/2, 1/4, 1/5, 3/0, 3/5, 2/1, 2/4, 4/1, 4/2, 5/1, 5/3} {
            \fill[red, opacity = 0.4] (0.5*\spacing + \i*\spacing, \gridsize*\spacing - 0.5*\spacing - \j*\spacing) rectangle (0.5*\spacing + \i*\spacing +\spacing, \gridsize*\spacing - 0.5*\spacing - \j*\spacing + \spacing);
        }
    
        \foreach \i/\j in {0/2, 0/4, 0/5, 1/1, 1/3, 3/1, 2/0, 4/0, 5/0}{
            \fill[orange, opacity = 0.4] (0.5*\spacing + \i*\spacing, \gridsize*\spacing - 0.5*\spacing - \j*\spacing) rectangle (0.5*\spacing + \i*\spacing +\spacing, \gridsize*\spacing - 0.5*\spacing - \j*\spacing + \spacing);
        }
    
        \foreach \i/\j in {3/3, 3/4, 4/3}{
            \fill[green, opacity = 0.4] (0.5*\spacing + \i*\spacing, \gridsize*\spacing - 0.5*\spacing - \j*\spacing) rectangle (0.5*\spacing + \i*\spacing +\spacing, \gridsize*\spacing - 0.5*\spacing - \j*\spacing + \spacing);
        }
    
        \foreach \i/\j in {3/2, 2/3, 2/2, 2/5, 5/2}{
            \fill[blue, opacity = 0.4] (0.5*\spacing + \i*\spacing, \gridsize*\spacing - 0.5*\spacing - \j*\spacing) rectangle (0.5*\spacing + \i*\spacing +\spacing, \gridsize*\spacing - 0.5*\spacing - \j*\spacing + \spacing);
        }
    
        \foreach \i/\j in {4/4, 4/5, 5/4}{
            \fill[brightube, opacity = 0.4] (0.5*\spacing + \i*\spacing, \gridsize*\spacing - 0.5*\spacing - \j*\spacing) rectangle (0.5*\spacing + \i*\spacing +\spacing, \gridsize*\spacing - 0.5*\spacing - \j*\spacing + \spacing);
        }
    
        \foreach \i/\j in {5/5}{
            \fill[gray, opacity = 0.4] (0.5*\spacing + \i*\spacing, \gridsize*\spacing - 0.5*\spacing - \j*\spacing) rectangle (0.5*\spacing + \i*\spacing +\spacing, \gridsize*\spacing - 0.5*\spacing - \j*\spacing + \spacing);
        }
    \end{tikzpicture}
\end{minipage}
\caption{An example of applying Claim \ref{claim:swap_preserves_structure}. On the left is $P_1, \ldots, P_6$. On the right is $(Q_1, \ldots, Q_6) = \text{swap}_4(P_1, \ldots, P_6)$. On the left, $P_3$ and $P_4$ correspond to the green and blue part respectively. On the right, $Q_3$ and $Q_4$ correspond to the blue and green part respectively.}
\end{figure}
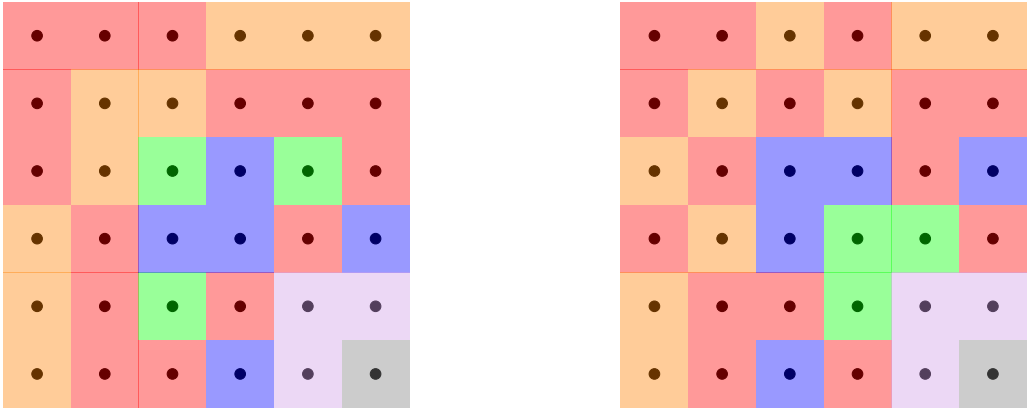

Our next claim will help in inductively achieving the required structure. We include an additional conclusion in the claim which will not be used beyond this claim, namely property 7. The purpose of this property is that it serves as a stronger induction hypothesis to prove the claim. 

\begin{claim}\label{claim:induction-size}
    Let $P_1, \ldots, P_k$ be a symmetric multiway partition of $V$ and let $i \in \{3, \ldots, k + 1\}$. Suppose $P_1, \ldots, P_k$ additionally satisfies the following properties:
    \begin{enumerate}
        \item $R_1 \subseteq P_1$, 
        
        \item $U(P_1, \ldots, P_k) \subseteq P_1$, 

        \item $\overleftarrow{R}_\ell \cap P_\ell = \emptyset$ for each $\ell \in \{i - 1, i, \ldots, k\}$, 

        \item $\overrightarrow{R}_\ell \subseteq P_1 \cup P_\ell$ for each $\ell \in \{i - 1, i, \ldots, k\}$, and 

        \item $\left|R_i \cap P_i\right| \geq \left|R_{i + 1} \cap P_{i + 1}\right| \geq \ldots \geq \left|R_k \cap P_k\right|$. 
    \end{enumerate}
    Then, there exists a symmetric multiway partition $Q_1, \ldots, Q_k$ that additionally satisfies the following properties:
    \begin{enumerate}
        \item $R_1 \subseteq Q_1$, 
        
        \item $U(Q_1, \ldots, Q_k)\subseteq Q_1$, 

        \item $\overleftarrow{R}_\ell \cap Q_\ell = \emptyset$ for each $\ell \in \{i - 1, i, \ldots, k\}$, 

        \item $\overrightarrow{R}_\ell \subseteq Q_1 \cup Q_\ell$ for each $\ell \in \{i - 1, i, \ldots, k\}$, 

        \item $\left|R_{i - 1} \cap Q_{i - 1}\right| \geq \left|R_i \cap Q_i\right| \geq \ldots \geq \left|R_k \cap Q_k\right|$, 

        \item $\obj{Q_1, \ldots, Q_k} \leq \obj{P_1, \ldots, P_k}$, and

        \item $Q_1, \ldots, Q_k$ is obtained from $P_1, \ldots, P_k$ using only $\text{swap}_j$ operations and operations that modify the assignment of $\{(j, j - 1), (j - 1, j)\}$ where $j \in \{i \ldots, k\}$. 
    \end{enumerate}
\end{claim}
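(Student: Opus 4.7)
The plan is to prove the claim by backward induction on $i$, descending from $i = k+1$ down to $i = 3$. The base case $i = k+1$ is immediate by taking $Q := P$: the conclusion's size chain involves only the single element $|R_k \cap Q_k|$ and is trivially sorted, the structural conditions for $\ell \in \{k,\ldots,k\}$ are exactly the hypotheses, and property 7 is vacuously satisfied since no operation with $j \in \{k+1,\ldots,k\} = \emptyset$ is performed.

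For the inductive step with $3 \le i \le k$, if $|R_{i-1} \cap P_{i-1}| \ge |R_i \cap P_i|$ then $Q := P$ already extends the chain, so assume the opposite strict inequality. The structural hypotheses $\overleftarrow{R}_i \cap P_i = \emptyset$ and $\overrightarrow{R}_{i-1} \subseteq P_1 \cup P_{i-1}$, combined with the symmetry of $P_{i-1}$, force the pair $\{(i-1,i),(i,i-1)\}$ to lie entirely in $P_1$ or entirely in $P_{i-1}$. In the second case I would first apply Claim~\ref{claim:move_pair} to move this pair from $P_{i-1}$ into $P_i$ (valid since $|R_i \cap P_i| \ge |R_{i-1} \cap P_{i-1}|$ in the active branch), producing $P'$ with $\obj{P'} \le \obj{P}$; in the first case set $P' := P$. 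In both cases the pair now sits entirely in $P'_1$ or entirely in $P'_i$, meeting condition~3 of Claim~\ref{claim:swap_preserves_structure} with its allowed exceptions, so applying $\text{swap}_i$ to $P'$ yields $P''$ for which Claims~\ref{claim:swap_properties} and~\ref{claim:swap_preserves_structure} give: $R_1 \subseteq P''_1$, $U(P'') \subseteq P''_1$, the structural conditions for every $\ell \ge i-1$, $\obj{P''} = \obj{P'} \le \obj{P}$, $|R_{i-1} \cap P''_{i-1}| \ge |R_i \cap P_i|$, and the sorted tail $|R_{i+1} \cap P''_{i+1}| \ge \ldots \ge |R_k \cap P''_k|$ inherited from $P$. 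I would then invoke the induction hypothesis on $P''$ with $i' = i+1$ to re-sort positions $i$ through $k$, obtaining $Q$. By property~7 of the inductive conclusion, the transformation $P'' \to Q$ uses only $\text{swap}_j$ and modifications of $\{(j,j-1),(j-1,j)\}$ with $j \ge i+1$; none of these touch part $i-1$ nor any element of row $R_{i-1}$, so $|R_{i-1} \cap Q_{i-1}| = |R_{i-1} \cap P''_{i-1}|$ and the structural conditions at $\ell = i-1$ descend verbatim from $P''$. The objective bound telescopes as $\obj{Q} \le \obj{P''} \le \obj{P'} \le \obj{P}$, and property~7 for our $Q$ at level $i$ follows by combining the level-$i$ operations (a pair modification of $\{(i,i-1),(i-1,i)\}$ and $\text{swap}_i$) with those supplied by recursion.

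The main obstacle I anticipate is verifying the extended size inequality $|R_{i-1} \cap Q_{i-1}| \ge |R_i \cap Q_i|$. The recursion sorts the tail and returns $|R_i \cap Q_i|$ as its maximum, but a pair-modification at recursion level $i+1$ could in principle inflate $|R_i \cap Q_i|$ by one above $|R_{i+1} \cap P_{i+1}|$. To control this I would unwind the $\text{swap}_i$ at level $i$ and observe that the pair $\{(i,i+1),(i+1,i)\}$ sits in the current $P''$-part at position $i$ precisely when $\{(i-1,i+1),(i+1,i-1)\}$ sits in the original $P_{i-1}$; combining this observation with the strict inequality $|R_{i-1} \cap P_{i-1}| < |R_i \cap P_i|$ active in the non-trivial branch and the sub-case distinction used at level $i$ (the pair-move sub-case already shifts $|R_{i-1} \cap P''_{i-1}|$ up to $|R_i \cap P_i|+1$, providing an extra $+1$ of slack), a careful case analysis should show that the possible $+1$ introduced by the inner pair-move is absorbed and the extended size chain from $i-1$ to $k$ holds.
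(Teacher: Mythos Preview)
Your skeleton matches the paper's almost step for step: base case $i = k+1$, then in the nontrivial branch move the pair $\{(i-1,i),(i,i-1)\}$ into $P_i$ if needed via Claim~\ref{claim:move_pair}, apply $\text{swap}_i$, invoke the statement at level $i+1$, and argue that operations with $j \ge i+1$ leave row and part $i-1$ untouched. The divergence is in how you close the argument, and the obstacle you flag is real and not dispatched by the case analysis you sketch.

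Here is a configuration on which your single pass fails. Take $|R_{i-1}\cap P_{i-1}| = 3$, $|R_i \cap P_i| = 4$, $|R_{i+1}\cap P_{i+1}| = 4$, with $\{(i-1,i),(i,i-1)\}\subseteq P_1$ and $\{(i-1,i+1),(i+1,i-1)\}\subseteq P_{i-1}$. Your outer pair-move is skipped, so after $\text{swap}_i$ you get $|R_{i-1}\cap P''_{i-1}| = 4$, $|R_i\cap P''_i| = 3$, $|R_{i+1}\cap P''_{i+1}| = 4$, and crucially $\{(i,i+1),(i+1,i)\}\subseteq P''_i$ (this pair corresponds under $\pi_i$ to $\{(i-1,i+1),(i+1,i-1)\}\subseteq P_{i-1}$). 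Now the recursion at level $i+1$ is in its nontrivial branch, moves this pair into part $i+1$ bringing $|R_{i+1}\cap(\cdot)_{i+1}|$ to $5$, then $\text{swap}_{i+1}$ turns this into $|R_i\cap(\cdot)_i| = 5$. Subsequent operations at levels $\ge i+2$ fix both row $i$ and part $i$, so the recursion returns $Q$ with $|R_i\cap Q_i| = 5 > 4 = |R_{i-1}\cap Q_{i-1}|$. Your proposed ``absorption'' of the inner $+1$ relies on the outer pair-move having fired to give the extra slack, but in this example the outer pair sits in $P_1$ and no slack is created.

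The paper handles exactly this by \emph{not} doing a plain backward induction on $i$. It argues by minimal counterexample: take the largest level $t$ admitting a counterexample, and among those the instance maximizing $|R_t\cap P_t|$. After one round of (pair-move, $\text{swap}_t$, apply the claim at level $t+1$) it obtains $P^{(3)}$ which again satisfies all hypotheses at level $t$; if the chain still fails one deduces $|R_t\cap P^{(3)}_t| > |R_t\cap P^{(0)}_t|$, contradicting maximality. In effect the paper iterates your procedure at level $t$ until the chain extends, with termination guaranteed because $|R_t\cap(\cdot)_t|$ strictly increases. You can repair your argument the same way: run a secondary induction (or a well-ordering argument) on $k - |R_i\cap P_i|$ within each fixed level $i$, re-invoking the level-$i$ statement on the output of the level-$(i+1)$ call whenever $|R_{i-1}\cap Q_{i-1}| < |R_i\cap Q_i|$.
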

\begin{proof}
    If $i = k + 1$, then setting $Q_\ell := P_\ell$ for each $\ell \in [k]$ gives a symmetric multiway partition $Q_1, \ldots, Q_k$ of $V$ satisfying the required properties. We will henceforth assume that $i\le k$. Now, suppose for contradiction that the claim is false. Let $\mathcal{C}$ be the set of pairs $\left((P_1, \ldots, P_k), i\right)$, where $P_1, \ldots, P_k$ is a symmetric multiway partition of $V$ and $i \in \{3, \ldots, k + 1\}$, that serve as counterexamples to the claim. Let $t = \max\left\{i \colon \left((P_1, \ldots, P_k), i\right) \in \mathcal{C} \right\}$. Since the claim holds for $i = k + 1$, we have that $t \in \{3, \ldots, k\}$. Let 
    \[
        \left(P_1^{(0)}, \ldots, P_k^{(0)}\right) = \argmax_{(P_1, \ldots, P_k) \colon \left((P_1, \ldots, P_k), t\right) \in \mathcal{C}} |R_t \cap P_t|.
    \]
    In other words, among symmetric multiway partitions $P_1, \ldots, P_k$ that serve as a counterexample to the claim when paired with $t$, $\left|R_t \cap P_t^{(0)}\right|$ maximizes $|R_t \cap P_t|$. 

    Since $P_1^{(0)}, \ldots, P_k^{(0)}$ is a counterexample when paired with $t$, we must have 
    \begin{align}
        \left|R_{t - 1} \cap P_{t - 1}^{(0)} \right| < \left|R_t \cap P_t^{(0)} \right|. \label{ineq:row-intersection-sizes}   
    \end{align} 
    If not, then letting $Q_\ell := P_\ell^{(0)}$ for each $\ell \in [k]$ would satisfy the properties required of $Q_1, \ldots, Q_k$ in the claim. In the rest of this proof, we will perform a series of transformations to $P_1^{(0)}, \ldots, P_k^{(0)}$ to arrive at a partition $Q_1, \ldots, Q_k$ satisfying the conclusions of the claim. This contradicts the existence of a counterexample to the claim.
    
    If $\{(t - 1, t), (t, t - 1)\} \subseteq P_{t - 1}^{(0)}$, then we define $P_1^{(1)}, \ldots, P_k^{(1)}$ as $P_{t - 1}^{(1)} := P_{t - 1}^{(0)} \setminus \{(t - 1, t), (t, t - 1)\}$, $P_t^{(1)} := P_t^{(0)} \cup \{(t - 1, t), (t, t - 1)\}$, and $P_\ell^{(1)} := P_\ell^{(0)}$ for each $\ell \in [k] \setminus \{t - 1, t\}$. Otherwise, we let $P_\ell^{(1)} := P_\ell^{(0)}$ for each $\ell \in [k]$. By \eqref{ineq:row-intersection-sizes} and Claim \ref{claim:move_pair}, we have that $\obj{P_1^{(1)}, \ldots, P_k^{(1)}} \leq \obj{P_1^{(0)}, \ldots, P_k^{(0)}}$. 

    Next, let $\left(P_1^{(2)}, \ldots, P_k^{(2)}\right) = \text{swap}_t\left(P_1^{(1)}, \ldots, P_k^{(1)}\right)$. The symmetric multiway partition $P_1^{(2)}, \ldots, P_k^{(2)}$ of $V$ satisfies the following properties:
    \begin{enumerate}
        \item 
        We have $R_1 \subseteq P_1^{(2)}$ since $R_1 \subseteq P_1^{(1)}$ and by Claim \ref{claim:swap_properties} property \ref{item:swap_preserve_subset}. 

        \item 
        We have $U\left(P_1^{(2)}, \ldots, P_k^{(2)}\right) \subseteq P_1^{(2)}$ since $U\left(P_1^{(1)}, \ldots, P_k^{(1)}\right) \subseteq P_1^{(1)}$ and by Claim \ref{claim:swap_properties} property \ref{item:swap_unhappy}. 

        \item 
        For each $\ell \in \{t - 1, t, \ldots, k\}$, we have $\overleftarrow{R}_\ell \cap P_\ell^{(2)} = \emptyset$. This is because $P_1^{(1)}, \ldots, P_k^{(1)}$ and $i=t$ satisfy the hypotheses of Claim \ref{claim:swap_preserves_structure}. 

        \item 
        For each $\ell \in \{t - 1, t \ldots, k\}$, we have $\overrightarrow{R}_\ell \subseteq P_1^{(2)} \cup P_\ell^{(2)}$. Again, this is because $P_1^{(1)}, \ldots, P_k^{(1)}$ and $i=t$ satisfy the hypotheses of Claim \ref{claim:swap_preserves_structure}.

        \item 
        We have that $\left|R_{t + 1} \cap P_{t + 1}^{(2)}\right| \geq \left|R_{t + 2} \cap P_{t + 2}^{(2)}\right| \geq \ldots \geq \left|R_k \cap P_k^{(2)}\right|$: By Claim \ref{claim:swap_properties} property \ref{item:swap_size}, we have $\left|R_{\ell} \cap P_{\ell}^{(2)}\right| = \left|R_{\ell} \cap P_{\ell}^{(1)}\right|$ for each $\ell \in \{t + 1, \ldots, k\}$. Additionally, we have $\left|R_{\ell} \cap P_{\ell}^{(1)}\right| = \left|R_{\ell} \cap P_{\ell}^{(0)}\right|$ for each $\ell \in \{t + 1, \ldots, k\}$. Then the claimed property follows from the fact that $\left|R_{t + 1} \cap P_{t + 1}^{(0)} \right| \geq \left|R_{t + 2} \cap P_{t + 2}^{(0)}\right| \geq \ldots \geq \left|R_k \cap P_k^{(0)}\right|$ since $P_1^{(0)}, \ldots, P_k^{(0)}$ paired with $t$ is a counterexample to the claim. 

        \item 
        $\obj{P_1^{(2)}, \ldots, P_k^{(2)}} = \obj{P_1^{(1)}, \ldots, P_k^{(1)}} \leq \obj{P_1^{(0)}, \ldots, P_k^{(0)}}$: This is because of Claim \ref{claim:swap_properties} property \ref{item:swap_obj}.
    \end{enumerate}
    Since $t + 1 > t$, we have that $\left(\left(P_1^{(2)}, \ldots, P_k^{(2)}\right), t + 1\right) \notin \mathcal{C}$. Therefore, there exists a symmetric multiway partition $P^{(3)}_1, \ldots, P^{(3)}_k$ of $V$ that satisfies the following properties:
    \begin{enumerate}
        \item $R_1 \subseteq P^{(3)}_1$, 
        
        \item $U(P^{(3)}_1, \ldots, P^{(3)}_k)\subseteq P^{(3)}_1$, 

        \item $\overleftarrow{R}_\ell \cap P^{(3)}_\ell = \emptyset$ for each $\ell \in \{t, t + 1, \ldots, k\}$, 

        \item $\overrightarrow{R}_\ell \subseteq P^{(3)}_1 \cup P^{(3)}_\ell$ for each $\ell \in \{t, t + 1, \ldots, k\}$, 

        \item $\left|R_{t} \cap P^{(3)}_{t}\right| \geq \left|R_{t + 1} \cap P^{(3)}_{t + 1}\right| \geq \ldots \geq \left|R_k \cap P^{(3)}_k\right|$,

        \item $\obj{P^{(3)}_1, \ldots, P^{(3)}_k} \leq \obj{P_1^{(2)}, \ldots, P_k^{(2)}} \leq \obj{P_1^{(0)}, \ldots, P_k^{(0)}}$, and

        \item $P^{(3)}_1, \ldots, P^{(3)}_k$ can be obtained from $P_1^{(2)}, \ldots, P_k^{(2)}$ using only $\text{swap}_j$ operations and operations that modify the assignment of $\{(j, j - 1), (j - 1, j)\}$ where $j \in \{t + 1, \ldots, k\}$.
    \end{enumerate}

    Now, we show that $\overleftarrow{R}_{t - 1} \cap P^{(3)}_{t - 1} = \emptyset$:  We recall that $\overleftarrow{R}_{t - 1} \cap P_{t - 1}^{(2)} = \emptyset$. Additionally, $P^{(3)}_1, \ldots, P^{(3)}_k$ is obtained from $P_1^{(2)}, \ldots, P_k^{(2)}$ using only $\text{swap}_j$ operations and operations that change the part containing $\{(j, j - 1), (j - 1, j)\}$ where $j \in \{t + 1, \ldots, k\}$. Hence, we apply Claim \ref{claim:swap_properties} property \ref{item:swap_preserve_empty} in addition to the fact that modifying the assignment for $\{(j, j - 1), (j - 1, j)\}$ where $j \in \{t + 1, \ldots, k\}$ does not affect the assignment for elements in $R_{t - 1}$ to conclude that $\overleftarrow{R}_{t - 1} \cap P^{(3)}_{t - 1} = \emptyset$. 

    Next, using similar reasoning, we show that $\overrightarrow{R}_{t - 1} \subseteq P^{(3)}_1 \cup P^{(3)}_{t - 1}$: We recall that $\overrightarrow{R}_{t - 1} \subseteq P_1^{(2)} \cup P_{t - 1}^{(2)}$. Additionally, $P^{(3)}_1, \ldots, P^{(3)}_k$ is obtained from $P_1^{(2)}, \ldots, P_k^{(2)}$ using only $\text{swap}_j$ operations and operations that modify the assignment of $\{(j, j - 1), (j - 1, j)\}$ where $j \in \{t + 1, \ldots, k\}$. Hence, we apply Claim \ref{claim:swap_properties} property \ref{item:swap_preserve_subset} in addition to the fact that modifying the assignment for $\{(j, j - 1), (j - 1, j)\}$ where $j \in \{t + 1, \ldots, k\}$ does not affect the assignment for vertices in $R_{t - 1}$ to see that $\overrightarrow{R}_{t - 1} \subseteq P^{(3)}_1 \cup P^{(3)}_{t - 1}$. 

    Above, we have established that $P^{(3)}_1, \ldots, P^{(3)}_k$ and $t$ satisfy the hypotheses of Claim \ref{claim:induction-size}. Now, we demonstrate that $\left(\left(P^{(3)}_1, \ldots, P^{(3)}_k\right), t\right) \notin \mathcal{C}$. We have
    \begin{align}
        \left|R_t \cap P_t^{(0)} \right| &\leq \left|R_t \cap P_t^{(1)} \right| \quad \text{(since $P_t^{(0)} \subseteq P_t^{(1)}$)} \notag\\ 
        &= \left|R_{t - 1} \cap P_{t - 1}^{(2)} \right| \quad \text{(by Claim \ref{claim:swap_properties} property \ref{item:swap_size} applied to $\text{swap}_t$)} \notag \\
        &= \left|R_{t - 1} \cap P^{(3)}_{t - 1} \right|, \label{ineq:R_t-intersection-size}
    \end{align}
    where the last equality follows from repeated use of Claim \ref{claim:swap_properties} property \ref{item:swap_size} for $\text{swap}_j$ where $j \in \{t + 1, \ldots, k\}$ and the fact that changing the assignment for $\{(j, j - 1), (j - 1, j)\}$ where $j \in \{t + 1, \ldots, k\}$ does not affect the assignment for vertices in $R_{t - 1}$. Suppose for contradiction $\left(\left(P^{(3)}_1, \ldots, P^{(3)}_k\right), t\right) \in \mathcal{C}$. Then, $\left|R_{t - 1} \cap P^{(3)}_{t - 1}\right| < \left|R_t \cap P^{(3)}_t\right|$---otherwise, setting $Q_{\ell}:=P_{\ell}^{(3)}$ for each $\ell\in [k]$ would satisfy the properties required of $Q_1, \ldots, Q_k$ in the claim thereby implying that $\left(\left(P^{(3)}_1, \ldots, P^{(3)}_k\right), t\right) \not\in \mathcal{C}$. However, by \eqref{ineq:R_t-intersection-size}, we have that $\left|R_t \cap P_t^{(0)} \right| < \left|R_t \cap P^{(3)}_t\right|$, which contradicts the fact that
    \[
        \left|R_t \cap P_t^{(0)}\right| = \max_{(P_1, \ldots, P_k) \colon \left((P_1, \ldots, P_k), t\right) \in \mathcal{C}} |R_t \cap P_t|. 
    \]
    Hence, $\left(\left(P^{(3)}_1, \ldots, P^{(3)}_k\right), t\right) \notin \mathcal{C}$. 
    
    Furthermore, we observe that $P^{(3)}_1, \ldots, P^{(3)}_k$ was obtained from $P_1^{(0)}, \ldots, P_k^{(0)}$ using only $\text{swap}_j$ operations and operations that modify the assignment of $\{(j, j - 1), (j - 1, j)\}$ where $j \in \{t, \ldots, k\}$. 

    Finally, since $\left(\left(P^{(3)}_1, \ldots, P^{(3)}_k\right), t\right) \notin \mathcal{C}$ and $P^{(3)}_1, \ldots, P^{(3)}_k$ and $t$ satisfy the hypotheses of Claim \ref{claim:induction-size}, there exists a symmetric multiway partition $Q_1, \ldots, Q_k$ of $V$ such that 
    \begin{enumerate}
        \item $R_1 \subseteq Q_1$, 
        
        \item $U(Q_1, \ldots, Q_k)\subseteq Q_1$, 

        \item $\overleftarrow{R}_\ell \cap Q_\ell = \emptyset$ for each $\ell \in \{t - 1, t, \ldots, k\}$, 

        \item $\overrightarrow{R}_\ell \subseteq Q_1 \cup Q_\ell$ for each $\ell \in \{t - 1, t, \ldots, k\}$, 

        \item $\left|R_{t - 1} \cap Q_{t - 1}\right| \geq \left|R_t \cap Q_t\right| \geq \ldots \geq \left|R_k \cap Q_k\right|$, 

        \item $\obj{Q_1, \ldots, Q_k} \leq \obj{P_1^{(3)}, \ldots, P_k^{(3)}}$, and

        \item $Q_1, \ldots, Q_k$ is obtained from $P_1^{(3)}, \ldots, P_k^{(3)}$ using only $\text{swap}_j$ operations and operations that modify the assignment of $\{(j, j - 1), (j - 1, j)\}$ where $j \in \{t, \ldots, k\}$.
    \end{enumerate}
    Since, $\obj{P_1^{(3)}, \ldots, P_k^{(3)}} \leq \obj{P_1^{(0)}, \ldots, P_k^{(0)}}$, we have $\obj{Q_1, \ldots, Q_k} \leq \obj{P_1^{(0)}, \ldots, P_k^{(0)}}$. Additionally, $P^{(3)}_1, \ldots, P^{(3)}_k$ was obtained from $P_1^{(0)}, \ldots, P_k^{(0)}$ using only $\text{swap}_j$ operations and operations that modify the assignment of $\{(j, j - 1), (j - 1, j)\}$ where $j \in \{t, \ldots, k\}$. Therefore, $Q_1, \ldots, Q_k$ was obtained from $P_1^{(0)}, \ldots, P_k^{(0)}$ using only $\text{swap}_j$ operations and operations that change the part containing $\{(j, j - 1), (j - 1, j)\}$ where $j \in \{t, \ldots, k\}$. Hence, $Q_1, \ldots, Q_k$ satisfies the properties required by the claim. 
    However, this shows that $P_1^{(0)}, \ldots, P_k^{(0)}$ paired with $t$ is not a counterexample to the claim. 
\end{proof}

Our next claim will be applied immediately after the previous claim and will help in inductively achieving the required structure. 
\begin{claim}\label{claim:induction-containment}
    Let $P_1, \ldots, P_k$ be a symmetric multiway partition of $V$ and let $i \in \{3, \ldots, k + 1\}$. Suppose $P_1, \ldots, P_k$ additionally satisfies the following properties:
    \begin{enumerate}
        \item $R_1 \subseteq P_1$, 
        
        \item $U(P_1, \ldots, P_k) \subseteq P_1$, 

        \item $\overleftarrow{R}_\ell \cap P_\ell = \emptyset$ for each $\ell \in \{i - 1, i, \ldots, k\}$, 

        \item $\overrightarrow{R}_\ell \subseteq P_1 \cup P_\ell$ for each $\ell \in \{i - 1, i, \ldots, k\}$, and 

        \item $\left|R_{i - 1} \cap P_{i - 1}\right| \geq \left|R_i \cap P_i\right| \geq \ldots \geq \left|R_k \cap P_k\right|$. 
    \end{enumerate}
    Then there exists a symmetric multiway partition $Q_1, \ldots, Q_k$ that additionally satisfies the following properties: \begin{enumerate}
        \item $R_1 \subseteq Q_1$;
        
        \item $U(Q_1, \ldots, Q_k) \subseteq Q_1$, 

        \item $\overleftarrow{R}_\ell \cap Q_\ell = \emptyset$ for each $\ell \in \{i - 2, i-1, \ldots, k\}$, 

        \item $\overrightarrow{R}_\ell \subseteq Q_1 \cup Q_\ell$ for each $\ell \in \{i - 2, i-1, \ldots, k\}$, 

        \item $\left|R_{i - 1} \cap Q_{i - 1}\right| \geq \left|R_i \cap Q_i\right| \geq \ldots \geq \left|R_k \cap Q_k\right|$, and 

        \item $\obj{Q_1, \ldots, Q_k} \leq \obj{P_1, \ldots, P_k}$.
    \end{enumerate}
\end{claim}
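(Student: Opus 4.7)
The plan is to construct $Q_1, \ldots, Q_k$ from $P_1, \ldots, P_k$ by modifying only the two parts $P_1$ and $P_{i-2}$, moving certain elements out of $P_{i-2}$ either into $P_1$ or into $P_j$ for appropriate $j\in[i-3]$. Because the parts $P_{i-1}, \ldots, P_k$ are untouched, the structural properties and the size ordering among these parts (hypotheses 3, 4, 5 transferred to $Q$) are automatically preserved, as is $R_1 \subseteq Q_1$ (we only add to $P_1$). So the problem reduces to (a) identifying the elements of $P_{i-2}$ that must be evicted to achieve $\overleftarrow{R}_{i-2} \cap Q_{i-2} = \emptyset$ and $\overrightarrow{R}_{i-2} \subseteq Q_1 \cup Q_{i-2}$, and (b) choosing destinations for these elements so that the objective does not increase.

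For (a), the first observation is that $\overrightarrow{R}_{i-2} \subseteq P_1 \cup P_{i-2}$ already holds: for $j \in \{i-1, \ldots, k\}$ the hypothesis $\overleftarrow{R}_j \cap P_j = \emptyset$ together with symmetry of $P_j$ rules out $(i-2, j) \in P_j$, and then the unhappy-in-$P_1$ hypothesis forces $(i-2, j) \in P_1 \cup P_{i-2}$. Since we only move elements within $P_1 \cup P_{i-2}$, this property is inherited by $Q_1, Q_{i-2}$. Moreover, $U(P_1, \ldots, P_k) \subseteq P_1$ and symmetry force the form $P_{i-2} = \{(i-2, i-2)\} \cup \{(i-2, j), (j, i-2) : j \in J\}$ for some $J \subseteq [k]\setminus\{1, i-2\}$ (where $1 \notin J$ since $C_1 \subseteq P_1$). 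The ``bad'' set $B := J \cap [i-3]$ consists of exactly the indices $j$ for which $(i-2, j) \in \overleftarrow{R}_{i-2} \cap P_{i-2}$; I must evict $\{(i-2, j), (j, i-2)\}$ from $P_{i-2}$ for each $j \in B$.

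For (b), I proceed iteratively over $B$. At each step, for the current index $j \in B$, if the current partition satisfies $|R_j \cap P_j| \geq |R_{i-2} \cap P_{i-2}|$, I invoke Claim~\ref{claim:move_pair} to move $\{(i-2, j), (j, i-2)\}$ from $P_{i-2}$ to $P_j$ without increasing the objective; these elements become happy in $P_j$, so the unhappy-in-$Q_1$ invariant is preserved. Otherwise, I move the pair to $P_1$; the element becomes unhappy there, but lands in $Q_1$, again preserving the unhappy invariant. I will process $B$ in decreasing order of $|R_j \cap P_j|$ so that, as $|R_{i-2} \cap P_{i-2}|$ decreases with each successful move into $P_j$, Claim~\ref{claim:move_pair}'s condition becomes progressively easier to satisfy for the remaining indices.

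The main obstacle is verifying non-increase of the objective for the moves into $P_1$. The row-$R_j$ contribution change is harmless: it is a $\phi_n$-difference with $|R_j \cap P_{i-2}| = 1$ before the move (by the unhappy hypothesis applied to row $j$) and $|R_j \cap P_1| \geq 1$ (since $(j,1) \in C_1 \subseteq P_1$), giving non-positivity by concavity of $\phi_n$. The delicate part is the row-$R_{i-2}$ contribution (and its symmetric twin on $C_{i-2}$), which trades $\phi_t$ on the terminal side for $\phi_n$ on the $P_1$ side. Here I will exploit (i) the fact that $\phi_t$ has slope exactly $1$ on $[1, \tfrac{k}{2}+1]$, so that the $\phi_t$-loss matches the $\phi_n$-gain in the typical regime, and (ii) the strict savings $1 - (\phi_t(d_j+1) - \phi_t(d_j)) \geq 0$ accumulated by each successful Claim~\ref{claim:move_pair}-style move, which provide a budget to offset any residual excess from the moves into $P_1$. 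Summing the telescoped changes over all moves yields a total objective change of at most zero, completing the construction of $Q_1, \ldots, Q_k$.
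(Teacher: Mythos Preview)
Your structural observations are correct, and the reduction to evicting the pairs $\{(i-2,j),(j,i-2)\}$ for $j\in B=J\cap[i-3]$ is exactly the right target. The gap is in step (b): the ``budget'' argument for moves into $P_1$ does not go through. Specifically, your claimed offset relies on strict savings from Claim~\ref{claim:move_pair}-type moves, but nothing in the hypotheses prevents a configuration in which \emph{every} $j\in B$ has $|R_j\cap P_j|<|R_{i-2}\cap P_{i-2}|$, so that all moves go to $P_1$ and no savings are ever accrued.

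Here is a concrete failure with $k=8$, $i=8$ (so $i-2=6$): take $P_\ell=\{(\ell,\ell)\}$ for $\ell\in\{2,3,4,5,7,8\}$, $P_6=\{(6,m),(m,6):m\in\{2,\ldots,8\}\}$, and $P_1$ the rest. All five hypotheses hold. One computes $\obj{P_1,\ldots,P_8}=75$. Now $B=\{2,3,4,5\}$ and $|R_j\cap P_j|=1<7=|R_6\cap P_6|$ for every $j\in B$, so all four moves go to $P_1$. For the first two moves the current $a=|R_6\cap P_6|\ge k/2+2=6$, so $\phi_t(a-1)-\phi_t(a)=0$, while $|R_6\cap P_1|$ is small enough that the $\phi_n$-increment is $+1$; meanwhile $|R_j\cap P_1|=6<7=7k/8$ gives row-$R_j$ change equal to $0$, not $-1$. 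Each of these two moves contributes $+1$ to the objective, and the remaining two moves contribute $0$, yielding $\obj{Q_1,\ldots,Q_8}=77>75$.

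The paper avoids this by a different mechanism: before touching any elements, it applies a chain of swap operations $\text{swap}_{i-2}\circ\cdots\circ\text{swap}_{j+1}$ (each objective-preserving by Claim~\ref{claim:swap_properties}) to relocate the part of \emph{minimum} size $|R_j\cap P_j|$ among $j\in[i-2]$ to position $i-2$. After this relabeling, $|R_{i-2}\cap P'_{i-2}|\le|R_\ell\cap P'_\ell|$ for every $\ell\in[i-3]$, so every eviction is a legitimate Claim~\ref{claim:move_pair} move into $P'_\ell$, and no move into $P_1$ is ever needed. In the example above the swaps turn $P'_6$ into a singleton, and the iterative step is vacuous. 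Your approach can be repaired by inserting this swap phase before the eviction loop; the direct budget argument as written does not suffice.
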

\begin{proof}
    If $i=3$, then the claim holds by setting $Q_\ell = P_\ell$ for each $\ell \in [k]$. We will henceforth assume that $i\ge 4$. 
    Let $j = \argmin_{\ell = 1}^{i - 2} |R_\ell \cap P_\ell|$. We observe that $j \geq 2$ since $R_1 \subseteq P_1$ implies $|R_1 \cap P_1| > |R_\ell \cap P_\ell|$ for all $\ell \in [k] - 1$. Let 
    \[ 
        (P_1', \ldots, P_k') = \text{swap}_{i - 2}\left(\text{swap}_{i - 3}\left(\ldots \text{swap}_{j + 1}\left(P_1, \ldots, P_k) \ldots \right)\right)\right). 
    \]
    Consider the bijection $\pi \colon [k] \rightarrow [k]$ defined as follows:
    \[
        \pi(t) = \begin{cases} t & t \leq j - 1 \text{ or } t \geq i - 1, \\
                    t - 1 & j + 1 \leq t \leq i - 2, \\
                    i - 2 & t = j.
                    \end{cases}
    \]
    We observe that $\pi = \pi_{i - 2} \circ \pi_{i - 3} \circ \ldots \circ \pi_{j + 1}$.
    By applying Claim \ref{claim:swap_properties} repeatedly, we observe that $P_1', \ldots, P_k'$ has the following properties:
    \begin{enumerate}
        \item $P_1', \ldots, P_k'$ is a symmetric multiway partition of $V$.

        \item $U(P_1', \ldots, P_k') \subseteq P_1'$.

        \item $\left|R_x \cap P_\ell'\right| = \left|R_{\pi^{-1}(x)} \cap P_{\pi^{-1}(\ell)}\right|$ for all $x, \ell\in [k]$.

        \item $\obj{P_1', \ldots, P_k'} \leq \obj{P_1, \ldots, P_k}$.

        \item $\overleftarrow{R}_\ell \cap P_\ell' = \emptyset$ for each $\ell \in \{i - 1, \ldots, k\}$. 

        \item $\overrightarrow{R}_\ell \subseteq P_1' \cup P_\ell'$ for each $\ell \in \{i - 1, \ldots, k\}$. 
    \end{enumerate} 
    We observe that $|R_1 \cap P_1'| = \left|R_{\pi^{-1}(1)} \cap P_{\pi^{-1}(1)}\right| = |R_1 \cap P_1| = k$, so $R_1 \subseteq P_1'$. Additionally, for each $\ell \in \{i - 1, \ldots, k\}$, we have $|R_\ell \cap P_\ell'| = \left|R_{\pi^{-1}(\ell)} \cap P_{\pi^{-1}(\ell)}\right| = |R_\ell \cap P_\ell|$. Hence, we have $|R_{i - 1} \cap P_{i - 1}'| \geq |R_i \cap P_i'| \geq \ldots \geq |R_k \cap P_k'|$. 

    We now show the following:  
    \begin{align}
        i - 2 = \argmin_{\ell = 1}^{i - 2} |R_\ell \cap P_\ell'|. \label{eq:i-2-smallest}
    \end{align}
        Since $\pi$ permutes the elements in $[i - 2]$, we have 
    \[  
        \left\{|R_\ell \cap P_\ell'| \colon \ell \in [i - 2]\right\} = \left\{\left|R_{\pi^{-1}(\ell)} \cap P_{\pi^{-1}(\ell)}\right| \colon \ell \in [i - 2]\right\} = \left\{|R_\ell \cap P_\ell| \colon \ell \in [i - 2]\right\}. 
    \]
    By the definition of $j$, we have $|R_j \cap P_j| = \min\left\{|R_\ell \cap P_\ell| \colon \ell \in [i - 2]\right\}$. Hence, 
    \begin{align*}
        \left|R_{i - 2} \cap P_{i - 2}'\right| &= |R_j \cap P_j| \\
        &= \min\left\{|R_\ell \cap P_\ell| \colon \ell \in [i - 2]\right\} \\
        &= \min\left\{|R_\ell \cap P_\ell'| \colon \ell \in [i - 2]\right\}, 
    \end{align*}
    demonstrating that $i - 2 = \argmin_{\ell = 1}^{i - 2} |R_\ell \cap P_\ell'|$. 

    Now, we obtain $Q_1, \ldots, Q_k$ using the following iterative procedure: For each $\ell \in [i - 3]$, if $(i - 2, \ell) \in P_{i - 2}'$, then 
    redefine $P_1', \ldots, P_k'$ such that $P_{i - 2}' := P_{i - 2}' \setminus \{(i - 2, \ell), (\ell, i - 2)\}$, $P_\ell' := P_\ell' \cup \{(i - 2, \ell), (\ell, i - 2)\}$ and $P_t'$ is unchanged for all $t \notin \{i - 2, \ell\}$. Let $Q_1, \ldots, Q_k$ be the parts obtained after the procedure is complete. Then, $Q_1, \ldots, Q_k$ is a symmetric multiway partition of $V$. We will show that $Q_1, \ldots, Q_k$ satisfies the properties mentioned in the claim statement:
    \begin{enumerate}
        \item 
        As established above, $R_1 \subseteq P_1'$. Since no vertices are removed from $P_1'$ during the procedure, we have $R_1 \subseteq Q_1$. 

        \item 
        Since $U(P_1', \ldots, P_k') \subseteq P_1'$ and no vertices in $U(P_1', \ldots, P_k')$ change parts during the procedure, we have $U(Q_1, \ldots, Q_k) \subseteq Q_1$.

        \item 
        Since $\overleftarrow{R}_\ell \cap P_\ell' = \emptyset$ for each $\ell \in \{i - 1, \ldots, k\}$ and the procedure only changes parts for vertices $(x, y)$ in which $x, y \in [i - 2]$, we have $\overleftarrow{R}_\ell \cap Q_\ell = \emptyset$ for each $\ell \in \{i - 1, \ldots, k\}$. Further, the procedure ensures that $\overleftarrow{R}_{i - 2} \cap Q_{i - 2} = \emptyset$. 

        \item 
        Since $\overrightarrow{R}_\ell \subseteq P_1' \cup P_\ell'$ for each $\ell \in \{i - 1, \ldots, k\}$ and the procedure only changes parts for vertices $(x, y)$ in which $x, y \in [i - 2]$, we have $\overrightarrow{R}_\ell \subseteq Q_1 \cup Q_\ell$ for each $\ell \in \{i - 1, \ldots, k\}$. Now, suppose for contradiction that $\overrightarrow{R}_{i - 2} \not\subseteq Q_1 \cup Q_{i - 2}$. Then, there exists $j \in \{i - 1, \ldots, k\}$ such that $(i - 2, j) \in Q_j$. However, $(j, i - 2) \in Q_j$ since $Q_1, \ldots, Q_k$ is symmetric, which contradicts $\overleftarrow{R}_j \cap Q_j = \emptyset$. 

        \item 
        As established above, we have $|R_{i - 1} \cap P_{i - 1}'| \geq |R_i \cap P_i'| \geq \ldots \geq |R_k \cap P_k'|$. Since the procedure only changes parts for vertices $(x, y)$ in which $x, y \in [i - 2]$, we have $R_\ell \cap P_\ell' = R_\ell \cap Q_\ell$ for each $\ell \in \{i - 1, \ldots, k\}$. Hence, $|R_{i - 1} \cap Q_{i - 1}| \geq |R_i \cap Q_i| \geq \ldots \geq |R_k \cap Q_k|$.

        \item We have that $\obj{P_1', \ldots, P_k'} \leq \obj{P_1, \ldots, P_k}$. Moreover, we have that $|R_{\ell}\cap P_{\ell}| \ge |R_{i-2}\cap P_{i-2}'|$ for every $\ell\in [i-3]$ by \eqref{eq:i-2-smallest} before the procedure and this inequality is preserved throughout the procedure. Hence, by Claim \ref{claim:move_pair}, we have that $\obj{P_1', \ldots, P_k'}$ does not increase during the procedure. Consequently, $\obj{Q_1 \ldots, Q_k} \leq \obj{P_1, \ldots, P_k}$. 
    \end{enumerate}
\end{proof}

The next claim follows by applying Claim \ref{claim:induction-size} followed by Claim \ref{claim:induction-containment}. 
\begin{claim}\label{claim:induction}
    Let $P_1, \ldots, P_k$ be a symmetric multiway partition of $V$ and let $i \in \{3, \ldots, k + 1\}$. Suppose $P_1, \ldots, P_k$ additionally satisfies the following properties:
    \begin{enumerate}
        \item $R_1 \subseteq P_1$, 
        
        \item $U(P_1, \ldots, P_k)\subseteq P_1$, 

        \item $\overleftarrow{R}_\ell \cap P_\ell = \emptyset$ for each $\ell \in \{i - 1, i, \ldots, k\}$, 

        \item $\overrightarrow{R}_\ell \subseteq P_1 \cup P_\ell$ for each $\ell \in \{i - 1, i, \ldots, k\}$, and 

        \item $\left|R_i \cap P_i\right| \geq \left|R_{i + 1} \cap P_{i + 1}\right| \geq \ldots \geq \left|R_k \cap P_k\right|$. 
    \end{enumerate}
    Then there exists a symmetric multiway partition $Q_1, \ldots, Q_k$ of $V$ that additionally satisfies the following properties:
    \begin{enumerate}
        \item $R_1 \subseteq Q_1$, 
        
        \item $U(Q_1, \ldots, Q_k) \subseteq Q_1$, 

        \item $\overleftarrow{R}_\ell \cap Q_\ell = \emptyset$ for each $\ell \in \{i - 2, i-1, \ldots, k\}$, 

        \item $\overrightarrow{R}_\ell \subseteq Q_1 \cup Q_\ell$ for each $\ell \in \{i - 2, i-1, \ldots, k\}$, 

        \item $\left|R_{i - 1} \cap Q_{i - 1}\right| \geq \left|R_i \cap Q_i\right| \geq \ldots \geq \left|R_k \cap Q_k\right|$, and 

        \item $\obj{Q_1, \ldots, Q_k} \leq \obj{P_1, \ldots, P_k}$.
    \end{enumerate}
\end{claim}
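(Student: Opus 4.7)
The plan is to obtain $Q_1, \ldots, Q_k$ in two stages by composing the two preceding claims.

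First, I would apply Claim \ref{claim:induction-size} directly to $P_1, \ldots, P_k$ with parameter $i$. The hypotheses of Claim \ref{claim:induction-size} are exactly the hypotheses assumed here, so the application yields a symmetric multiway partition $P_1', \ldots, P_k'$ of $V$ satisfying $R_1 \subseteq P_1'$, $U(P_1', \ldots, P_k') \subseteq P_1'$, $\overleftarrow{R}_\ell \cap P_\ell' = \emptyset$ and $\overrightarrow{R}_\ell \subseteq P_1' \cup P_\ell'$ for every $\ell \in \{i-1, i, \ldots, k\}$, the monotonicity $|R_{i-1} \cap P_{i-1}'| \geq |R_i \cap P_i'| \geq \ldots \geq |R_k \cap P_k'|$, and $\obj{P_1', \ldots, P_k'} \leq \obj{P_1, \ldots, P_k}$. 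The key point is that the sorted-size chain now extends one index further down, starting at $i-1$ rather than $i$.

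Next, I would observe that $P_1', \ldots, P_k'$ together with the same index $i$ satisfy the hypotheses of Claim \ref{claim:induction-containment} verbatim (conditions on $R_1$, unhappy vertices, the emptiness of $\overleftarrow{R}_\ell \cap P_\ell'$ for $\ell \geq i-1$, the containment $\overrightarrow{R}_\ell \subseteq P_1' \cup P_\ell'$ for $\ell \geq i-1$, and the extended sorted-size chain from $i-1$ to $k$). Applying Claim \ref{claim:induction-containment} to $P_1', \ldots, P_k'$ then produces a symmetric multiway partition $Q_1, \ldots, Q_k$ of $V$ that satisfies each of the six desired properties with the index range for the emptiness and containment conditions extended down to $\{i-2, i-1, \ldots, k\}$, the sorted-size chain preserved from $i-1$ to $k$, and $\obj{Q_1, \ldots, Q_k} \leq \obj{P_1', \ldots, P_k'}$. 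Chaining this with the objective bound from the first stage yields $\obj{Q_1, \ldots, Q_k} \leq \obj{P_1, \ldots, P_k}$, completing the proof. There is no genuine obstacle here; the claim is a bookkeeping composition and the only thing to verify is that the output of Claim \ref{claim:induction-size} exactly matches the input hypotheses of Claim \ref{claim:induction-containment}, which it does.
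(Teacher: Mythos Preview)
Your proposal is correct and matches the paper's proof exactly: the paper's entire argument is the one-line remark that the claim follows by applying Claim~\ref{claim:induction-size} followed by Claim~\ref{claim:induction-containment}, and you have spelled out precisely this composition and verified that the output of the first claim matches the hypotheses of the second.
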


We have the following corollary from Claim \ref{claim:induction} by induction. 
\begin{corollary}
\label{cor:partial_structure}
    Let $P_1, \ldots, P_k$ be a symmetric multiway partition of $V$ such that $R_1 \subseteq P_1$ and $U(P_1, \ldots, P_k)\subseteq P_1$. Then, there exists a symmetric multiway partition $Q_1, \ldots, Q_k$ of $V$ such that the following properties hold:
    \begin{enumerate}
        \item $R_1 \subseteq Q_1$, 

        \item $\overrightarrow{R}_i \subseteq Q_1 \cup Q_i$ for each $i \in \{2, \ldots, k\}$, 

        \item $\left|R_1 \cap Q_1\right| \geq \left|R_2 \cap Q_2\right| \geq \ldots \geq \left|R_k \cap Q_k\right|$, and 

        \item $\obj{Q_1, \ldots, Q_k} \leq \obj{P_1, \ldots, P_k}$. 
    \end{enumerate}
\end{corollary}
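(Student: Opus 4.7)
The plan is to prove the corollary by finite induction on $i$, running from $i = k+1$ down to $i = 3$, with Claim \ref{claim:induction} providing the inductive step. The conclusions of Claim \ref{claim:induction} at $i$ are precisely the hypotheses required at $i - 1$, and the objective is non-increasing throughout. After $k-1$ iterations the resulting symmetric multiway partition $Q_1, \ldots, Q_k$ will satisfy $R_1 \subseteq Q_1$, $\overrightarrow{R}_\ell \subseteq Q_1 \cup Q_\ell$ for every $\ell \in \{1, \ldots, k\}$, the chain $|R_2 \cap Q_2| \geq \ldots \geq |R_k \cap Q_k|$, and $\obj{Q_1, \ldots, Q_k} \leq \obj{P_1, \ldots, P_k}$. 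Since $R_1 \subseteq Q_1$ forces $|R_1 \cap Q_1| = k \geq |R_2 \cap Q_2|$, all four conclusions of the corollary follow (conclusion 2 is only required for $\ell \in \{2, \ldots, k\}$; the case $\ell = 1$ is automatic).

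The main obstacle is the base case $i = k+1$, since Claim \ref{claim:induction} cannot be applied directly to an arbitrary partition satisfying only $R_1 \subseteq P_1$ and $U(P_1, \ldots, P_k) \subseteq P_1$. For $i = k+1$, the hypotheses collapse to $R_1 \subseteq P_1$ and $U \subseteq P_1$ (both given), $\overrightarrow{R}_k \subseteq P_1 \cup P_k$ (automatic since $\overrightarrow{R}_k = \emptyset$), a vacuous size ordering, and the single nontrivial condition $\overleftarrow{R}_k \cap P_k = \emptyset$. To establish the last condition, I first apply a sequence of $\text{swap}_i$ operations with $i \in \{3, \ldots, k\}$ to permute the parts $P_2, \ldots, P_k$ so that $|R_k \cap P_k| \leq |R_\ell \cap P_\ell|$ for every $\ell \in \{2, \ldots, k-1\}$; by Claim \ref{claim:swap_properties} each such swap preserves the objective, preserves both $R_1 \subseteq P_1$ and $U \subseteq P_1$, and transposes the diagonal sizes $|R_{i-1} \cap P_{i-1}|$ and $|R_i \cap P_i|$, so bubble-sorting is possible.

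Next, while any $(k, j) \in P_k$ with $j < k$ remains, I invoke Claim \ref{claim:move_pair} to relocate $\{(k, j), (j, k)\}$ from $P_k$ to $P_j$. Such a $j$ necessarily satisfies $j \geq 2$ since $(k, 1) \in P_1$ by symmetry of $R_1 \subseteq P_1$, and the required inequality $|R_j \cap P_j| \geq |R_k \cap P_k|$ is ensured by the preceding sort and is preserved across iterations because $|R_j \cap P_j|$ only grows while $|R_k \cap P_k|$ only shrinks. One also checks that both $R_1 \subseteq P_1$ and $U \subseteq P_1$ are preserved by each pair-move, since $P_1$ is untouched and the relocated elements $(k, j), (j, k)$ remain happy in $P_j$. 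The loop terminates with $|R_k \cap P_k| = 1$, hence $\overleftarrow{R}_k \cap P_k = \emptyset$, all without increasing the objective, thereby completing the bootstrap that feeds the induction.
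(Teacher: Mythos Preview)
Your proof is correct and follows the same inductive scheme the paper intends: descend through Claim~\ref{claim:induction} from $i=k+1$ down to $i=3$, then read off the corollary's conclusions (with $|R_1\cap Q_1|=k$ supplying the top of the size chain). The paper's own proof is just the single sentence ``follows from Claim~\ref{claim:induction} by induction,'' so you have in fact filled in the nontrivial detail the paper leaves implicit, namely that the hypothesis $\overleftarrow{R}_k\cap P_k=\emptyset$ required at $i=k+1$ is not given and must be manufactured. Your bootstrap---bubble-sort via $\text{swap}_i$ to make $|R_k\cap P_k|$ minimal among indices $\ge 2$, then repeatedly apply Claim~\ref{claim:move_pair} to drain $\overleftarrow{R}_k\cap P_k$---is exactly the mechanism Claim~\ref{claim:induction-containment} uses internally at position $i-2$, so it is fully consistent with the paper's machinery; you have simply made the base case explicit where the paper did not.
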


The next claim will help in inductively achieving the required structure. It helps achieve a certain structure for parts with large size without increasing the objective value. 
\begin{claim}
\label{claim:move_to_j}
Let $j\in [k]$ and $P_1, \ldots, P_k$ be a symmetric multiway partition of $V$ with the following properties:
\begin{enumerate}
    \item $\overrightarrow{R}_i\subseteq P_i$ for each $i\in [j-1]$, 
    \label{item:suffix_contained_geq_1}
    \item $\overrightarrow{R}_i \subseteq P_1\cup P_i$ for each $i \in \{j, \ldots, k\}$, and 
    \label{item:suffix_contained_geq_2}
    \item $|R_j \cap P_j| \geq \frac{k}{2} + 2$. 
    \end{enumerate}
Then, there exists a symmetric multiway partition $Q_1, \ldots, Q_k$ of $V$ such that the following properties hold:
    \begin{enumerate}
        \item $\overrightarrow{R}_i\subseteq Q_i$ for each $i\in [j]$, 
        \item $\overrightarrow{R}_i \subseteq Q_1\cup Q_i$ for each $i \in \{j+1, \ldots, k\}$, and 
        \item $\obj{Q_1, \ldots, Q_k} \leq \obj{P_1, \ldots, P_k}$
    \end{enumerate}
\end{claim}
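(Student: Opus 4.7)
The plan is to build $Q$ from $P$ by iteratively moving each symmetric pair $\{(j, \ell), (\ell, j)\}$ with $\ell > j$ and $(j, \ell) \in P_1$ out of $P_1$ and into $P_j$, one pair at a time. For $j = 1$, condition 2 applied with $i = 1$ already gives $\overrightarrow{R}_1 \subseteq P_1$, so we simply set $Q_\ell = P_\ell$; henceforth assume $j \geq 2$.

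As a first step I would establish a structural fact about the distribution of $R_j$ across $P_1, \ldots, P_k$: combining conditions 1 and 2 with the symmetry of each $P_i$ forces $|R_j \cap P_i| = 1$ for every $i \in \{2, \ldots, j - 1\}$ (witnessed by $(j, i)$, the transpose of $(i, j) \in \overrightarrow{R}_i \subseteq P_i$) and $|R_j \cap P_i| = 0$ for every $i \in \{j+1, \ldots, k\}$. In both cases the proof rules out alternatives by transposing an alleged element to a location that condition 1 or condition 2 already occupies with a different part. Writing $a := |R_j \cap P_j|$ and $b := |R_j \cap P_1|$, this yields $b = k - j + 2 - a$, and combined with the hypothesis $a \geq k/2 + 2$ (which is preserved throughout the procedure since $a$ only grows) we obtain $b \leq k/2 - j \leq k/2 - 2$.

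Next I would analyze a single move. Let $p := |R_\ell \cap P_j|$ and $q := |R_\ell \cap P_1|$ immediately before the move. The change in objective value can be written as
\begin{align*}
\Delta &= \bigl[\phi_t(a+1) - \phi_t(a)\bigr] + \bigl[\phi_n(b-1) - \phi_n(b)\bigr] \\
&\quad + \bigl[\phi_n(p+1) - \phi_n(p)\bigr] + \bigl[\phi_n(q-1) - \phi_n(q)\bigr].
\end{align*}
Because $a \geq k/2 + 2$, the function $\phi_t$ has saturated at $7k/8$, so the first bracket vanishes. Because $b \leq k/2 - 2 < 7k/8$, the function $\phi_n$ is in its linear regime on $\{b-1, b\}$, making the second bracket exactly $-1$. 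The third bracket is at most $1$ and the fourth at most $0$ (both by the piecewise-linear shape of $\phi_n$), so $\Delta \leq 0$.

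Finally I would verify the stated conclusions. Property 1 for $i < j$ holds because those rows are untouched by any move (every move alters only entries in $R_j$ and in $R_\ell$ for some $\ell > j$), and for $i = j$ by construction. Property 2 holds because for each $\ell > j$ the single element of $R_\ell$ that we move is $(\ell, j) \in \overleftarrow{R}_\ell$, which leaves $\overrightarrow{R}_\ell$ untouched; symmetry and diagonal preservation are immediate from moving pairs rather than singletons and from never touching $(\ell, \ell)$. I expect the main obstacle to be the structural lemma that pins down $|R_j \cap P_i|$ for $i \notin \{1, j\}$; once that is in place, the bound $b \leq k/2 - 2$ opens up the linear regime of $\phi_n$ and the saturation of $\phi_t$ makes the rest of the calculation essentially bookkeeping.
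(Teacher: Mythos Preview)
Your proposal is correct and follows essentially the same approach as the paper: both construct $Q$ by moving every pair $\{(j,\ell),(\ell,j)\}$ with $\ell>j$ and $(j,\ell)\in P_1$ from $P_1$ to $P_j$, and both hinge on the two key facts that $\phi_t$ is already saturated at $|R_j\cap P_j|\ge k/2+2$ while $\phi_n$ is still linear at $|R_j\cap P_1|\le k/2-2$. The only differences are cosmetic: the paper moves all pairs at once and bounds the total change (row $j$ contributes $-a$ and the rows $\ell>j$ together contribute at most $+a$), whereas you move one pair at a time and show each move is non-increasing; and your structural lemma pinning down $|R_j\cap P_i|$ exactly for $i\notin\{1,j\}$ is more than you need, since the single inequality $|R_j\cap P_1|\le k-|R_j\cap P_j|\le k/2-2$ already suffices.
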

\begin{proof}
    We define $\overrightarrow{C}_{t}:=\{(i,t): i\in \{t+1, t+2, \ldots, k\}\}$ for each $t\in [k]$. 
    We set $Q_1 := P_1 \setminus (\overrightarrow{R}_j \cup \overrightarrow{C}_j)$, $Q_j := P_j \cup (\overrightarrow{R}_j\cup \overrightarrow{C}_j)$, and $Q_\ell := P_\ell$ for all $\ell \in [k] \setminus \{1, j\}$. Then, $Q_1, \ldots, Q_k$ is a symmetric multiway partition of $V$ and the first two properties follow by definition. We now bound the objective value of $Q_1, \ldots, Q_k$. 

    We assume $j \neq 1$ since if $j = 1$, we have $Q_\ell = P_\ell$ for each $\ell \in [k]$ and the bound on the objective value follows immediately. We have 
    \begin{align*}
        &\obj{Q_1, \ldots, Q_k} - \obj{P_1, \ldots, P_k} \\
        &\quad \quad \quad \quad = \sum_{i = 1}^k \left(\phit{R_i \cap Q_i} - \phit{R_i \cap P_i} + \sum_{\ell \in [k] - i} \left(\phin{R_i \cap Q_\ell} - \phin{R_i \cap P_\ell}\right)\right) \\
        &\quad \quad \quad \quad = \sum_{i = j}^k \left(\phit{R_i \cap Q_i} - \phit{R_i \cap P_i} + \sum_{\ell \in [k] - i} \left(\phin{R_i \cap Q_\ell} - \phin{R_i \cap P_\ell}\right)\right),
    \end{align*}
    since for each $i \in [j - 1]$ and $\ell \in [k]$, $R_i \cap Q_\ell = R_i \cap P_\ell$. 

    We let $a = |\overrightarrow{R}_j \cap P_1|$, $b = |\overrightarrow{R}_j \cap P_j|$, $c = |\overleftarrow{R}_j \cap P_1|$, and $d = |\overleftarrow{R}_j \cap P_j|$. By assumption, we have $b + d + 1 \geq \frac{k}{2} + 2$.
    Then, 
    \begin{align*}
        &\phit{R_j \cap Q_j} - \phit{R_j \cap P_j} + \sum_{\ell \in [k] - j} \left(\phin{R_j \cap Q_\ell} - \phin{R_j \cap P_\ell}\right) \\
        &= \phit{R_j \cap Q_j} - \phit{R_j \cap P_j} + \phin{R_j \cap Q_1} - \phin{R_j \cap P_1} \quad \text{(since $Q_\ell = P_\ell$ for all $\ell \in [k] \setminus \{1, j\}$)} \\
        &= \phi_t(a + b + d + 1) - \phi_t(b + d + 1) + \phi_n(c) - \phi_n(a + c) \\
        &= \frac{7}{8}k - \frac{7}{8}k + \phi_n(c) - \phi_n(a + c) \\
        &= -a \quad \text{(since $|R_j \cap P_j| \geq \frac{k}{2} + 2$, we have $a + c = |R_j \cap P_1| \leq \frac{k}{2} - 2$)}. 
    \end{align*}
    
    Now fix $i \in \{j + 1, \ldots, k\}$. We have
    \begin{align*}
        &\phit{R_i \cap Q_i} - \phit{R_i \cap P_i} + \sum_{\ell \in [k] - i} \left(\phin{R_i \cap Q_\ell} - \phin{R_i \cap P_\ell}\right) \\
        &= \phin{R_i \cap Q_1} - \phin{R_i \cap P_1} + \phin{R_i \cap Q_j} - \phin{R_i \cap P_j} \\
        &= \mathbbm{1}\left[(i, j) \in P_1\right]\left(\phin{R_i \cap Q_1} - \phin{R_i \cap P_1} + \phin{R_i \cap Q_j} - \phin{R_i \cap P_j}\right),
    \end{align*}
    since if $(i, j) \in P_j$, $R_i \cap Q_\ell = R_i \cap P_\ell$ for each $\ell \in [k]$.  
    Therefore,
    \begin{align*}
        &\sum_{i \in \{j + 1, \ldots, k\}} \left(\phit{R_i \cap Q_i} - \phit{R_i \cap P_i} + \sum_{\ell \in [k] - i} \phin{R_i \cap Q_\ell} - \phin{R_i \cap P_\ell}\right) \\
        &= \sum_{i \in \{j + 1, \ldots, k\}} \mathbbm{1}\left[(i, j) \in P_1\right]\left(\phin{R_i \cap Q_1} - \phin{R_i \cap P_1} + \phin{R_i \cap Q_j} - \phin{R_i \cap P_j}\right) \\
        &= \sum_{i > j \colon (j, i) \in P_1} \left(\phin{R_i \cap Q_1} - \phin{R_i \cap P_1} + \phin{R_i \cap Q_j} - \phin{R_i \cap P_j}\right) \\
        &= \sum_{i > j \colon (j, i) \in P_1} \left(\phin{R_i \cap Q_1} - \phin{R_i \cap P_1} + 1 - 0\right)\\
        & \quad \quad \quad \quad \quad \quad \text{(by properties \ref{item:suffix_contained_geq_1} and \ref{item:suffix_contained_geq_2} and since $(i, j) \in P_1$, $R_i \cap P_j = \emptyset$)} \\
        &\leq \sum_{i > j \colon (j, i) \in P_1} 1 \quad \text{(since $\phi_n$ is monotonic and $R_i \cap Q_1 \subseteq R_i \cap P_1$)} \\
        &= a.
    \end{align*}
    Hence, $\obj{Q_1, \ldots, Q_k} - \obj{P_1, \ldots, P_k} \leq -a + a = 0$. 
\end{proof}

Our next claim builds on the previous claim and will help in inductively achieving the required structure. It helps achieve a certain structure for parts with small size without increasing the objective value. 
\begin{claim}
\label{claim:move_to_1}
Let $i^*, j\in [k]$ with $i^*<j$ and $P_1, \ldots, P_k$ be a symmetric multiway partition of $V$ with the following properties:
\begin{enumerate}
    \item $\overrightarrow{R}_i\subseteq P_i$ for each $i\in [i^*]$, 
    \label{item:suffix_contained_leq_1}
    \item $\overrightarrow{R}_i\subseteq P_1$ for each $i\in \{i^* + 1, \ldots, j-1\}$, 
    \label{item:suffix_contained_leq_2}
    \item $\overrightarrow{R}_i \subseteq P_1\cup P_i$ for each $i \in \{j, \ldots, k\}$, and 
    \item $|R_j \cap P_j| \leq \frac{k}{2} + 1$. 
    \end{enumerate}
Then, there exists a symmetric multiway partition $Q_1, \ldots, Q_k$ of $V$ such that the following properties hold:
    \begin{enumerate}
        \item $\overrightarrow{R}_i\subseteq Q_i$ for each $i\in [i^*]$, 
        \item $\overrightarrow{R}_i\subseteq Q_1$ for each $i\in \{i^* + 1, \ldots, j-1, j\}$, 
        \item $\overrightarrow{R}_i \subseteq Q_1\cup Q_i$ for each $i \in \{j+1, \ldots, k\}$, and 
        \item $\obj{Q_1, \ldots, Q_k} \leq \obj{P_1, \ldots, P_k}$
    \end{enumerate}
\end{claim}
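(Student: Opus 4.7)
The plan is to mirror the construction in Claim \ref{claim:move_to_j}, moving elements in the opposite direction. Specifically, I will define
\[
Q_1 := P_1 \cup (\overrightarrow{R}_j \cap P_j) \cup (\overrightarrow{C}_j \cap P_j), \quad Q_j := P_j \setminus \bigl((\overrightarrow{R}_j \cap P_j) \cup (\overrightarrow{C}_j \cap P_j)\bigr),
\]
and $Q_\ell := P_\ell$ for $\ell \in [k] \setminus \{1, j\}$. Since symmetric pairs $\{(j,m),(m,j)\}$ are moved together and diagonals are untouched, $Q_1, \ldots, Q_k$ is a symmetric multiway partition with $(i,i) \in Q_i$. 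The three structural conclusions are immediate: for $i \in [i^*]$, $Q_i = P_i$ (or $Q_1 \supseteq P_1 \supseteq \overrightarrow{R}_1$ when $i=1$); for $i \in \{i^*+1,\dots,j-1\}$, $\overrightarrow{R}_i \subseteq P_1 \subseteq Q_1$; for $i = j$, hypothesis 3 gives $\overrightarrow{R}_j \subseteq P_1 \cup P_j$, and by construction both halves end up in $Q_1$; for $i > j$, $Q_i = P_i$ and $Q_1 \supseteq P_1$, so $\overrightarrow{R}_i \subseteq P_1 \cup P_i \subseteq Q_1 \cup Q_i$.

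The main work is showing $\obj{Q_1,\ldots,Q_k} \leq \obj{P_1,\ldots,P_k}$. A crucial preliminary observation is that $\overleftarrow{R}_j \cap P_j = \emptyset$: for every $m \in [j-1]$, properties 1 and 2 of the hypothesis together with symmetry force $(j,m) \in P_m$ (when $m \le i^*$) or $(j,m) \in P_1$ (when $m > i^*$), neither of which is $P_j$. Setting $b := |\overrightarrow{R}_j \cap P_j|$, this gives $|R_j \cap P_j| = b+1$, so hypothesis 4 yields $b \leq k/2$.

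Only rows $j$ and those $i > j$ with $(j,i) \in P_j$ see a change in objective contribution. For row $j$: $|R_j \cap Q_j| = 1$, $|R_j \cap Q_1| = |R_j \cap P_1| + b$, and since $b+1 \leq k/2 + 1$, direct calculation gives $\phi_t(b+1) - \phi_t(1) = b$, while $\phi_n(|R_j \cap P_1| + b) - \phi_n(|R_j \cap P_1|) \leq b$ by concavity of $\phi_n$; the net change is at most $-b + b = 0$. For each row $i > j$ with $(j,i) \in P_j$, the change is $\phi_n(|R_i \cap Q_j|) - \phi_n(|R_i \cap P_j|) + \phi_n(|R_i \cap Q_1|) - \phi_n(|R_i \cap P_1|)$, which is at most zero once we show $|R_i \cap P_j| = 1$ (so $|R_i \cap Q_j| = 0$ and the row-$i$ change is $-1 + \phi_n(|R_i \cap P_1|+1) - \phi_n(|R_i \cap P_1|) \leq 0$).

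The main obstacle is the structural identity $|R_i \cap P_j| = 1$ for every row $i > j$ with $(i,j) \in P_j$, and this is where all three hypotheses and symmetry interact. The argument is a case analysis on $m$: for $m > i$ or $m = i$, hypothesis 3 or the diagonal condition excludes $(i,m) \in P_j$; for $m < j$, symmetry together with hypotheses 1--2 applied to row $m$ places $(m,i)$ in $P_m$ or $P_1$, not $P_j$; for $j < m < i$, hypothesis 3 applied to row $m$ does the same. The only surviving case is $m = j$, giving $|R_i \cap P_j| = 1$. Once this is established, summing the per-row bounds completes the proof.
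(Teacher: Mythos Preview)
Your proof is correct and follows essentially the same construction and analysis as the paper. Your case analysis establishing $R_i \cap P_j = \{(i,j)\}$ for $i>j$ is in fact more careful than the paper's (which cites only hypotheses~1--2 where hypothesis~3 is also needed for the cases $m>j$), and your preliminary observation $\overleftarrow{R}_j \cap P_j = \emptyset$ is a true simplification the paper does not exploit.
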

\begin{proof}
We define $\overrightarrow{C}_{t}:=\{(i,t): i\in \{t+1, t+2, \ldots, k\}$ for each $t\in [k]$. 
We set $Q_1 := P_1 \cup (\overrightarrow{R}_j \cup \overrightarrow{C}_j)$, $Q_j := P_j \setminus (\overrightarrow{R}_j\cup \overrightarrow{C}_j)$, and $Q_\ell := P_\ell$ for all $\ell \in [k] \setminus \{1, j\}$. Then, $Q_1, \ldots, Q_k$ is a symmetric multiway partition of $V$ and the first three properties follow by definition. We now bound the objective value of $Q_1, \ldots, Q_k$. 

    We begin by observing that $j \neq 1$ since $R_1 \subseteq P_1$ implies that $|R_1 \cap P_1| \geq \frac{k}{2} + 2$. We have  
    \begin{align*}
        &\obj{Q_1, \ldots, Q_k} - \obj{P_1, \ldots, P_k} \\
        &\quad \quad \quad \quad = \sum_{i = 1}^k \left(\phit{R_i \cap Q_i} - \phit{R_i \cap P_i} + \sum_{\ell \in [k] - i} \left(\phin{R_i \cap Q_\ell} - \phin{R_i \cap P_\ell}\right)\right) \\
        &\quad \quad \quad \quad = \sum_{i = j}^k \left(\phit{R_i \cap Q_i} - \phit{R_i \cap P_i} + \sum_{\ell \in [k] - i} \left(\phin{R_i \cap Q_\ell} - \phin{R_i \cap P_\ell}\right)\right)
    \end{align*}
    since for each $i \in [j - 1]$ and $\ell \in [k]$, $R_i \cap Q_\ell = R_i \cap P_\ell$. 

    We let $a = |\overrightarrow{R}_j \cap P_1|$, $b = |\overrightarrow{R}_j \cap P_j|$, $c = |\overleftarrow{R}_j \cap P_1|$, and $d = |\overleftarrow{R}_j \cap P_j|$. By assumption, we have $b + d + 1 \leq \frac{k}{2} + 1$.
    Then,  
    \begin{align*}
        &\phit{R_j \cap Q_j} - \phit{R_j \cap P_j} + \sum_{\ell \in [k] - j} \left(\phin{R_j \cap Q_\ell} - \phin{R_j \cap P_\ell}\right) \\
        &= \phit{R_j \cap Q_j} - \phit{R_j \cap P_j} + \phin{R_j \cap Q_1} - \phin{R_j \cap P_1} \quad \text{(since $Q_\ell = P_\ell$ for all $\ell \in [k] \setminus \{1, j\}$)} \\
        &= \phi_t(d + 1) - \phi_t(b + d + 1) + \phi_n(a + b + c) - \phi_n(a + c) \\
        &\leq -b + b \quad \text{(since $b + d + 1 \leq \frac{k}{2} + 1$)} \\
        &= 0.
    \end{align*}

    Now, consider $i \in \{j + 1, \ldots, k\}$. If $(j, i) \in P_1$, then $R_i \cap Q_\ell = R_i \cap P_\ell$ for each $\ell \in [k]$, so $\phit{R_i \cap Q_i} - \phit{R_i \cap P_i} + \sum_{\ell \in [k] - i} \left(\phin{R_i \cap Q_\ell} - \phin{R_i \cap P_\ell}\right) = 0$. Hence, we assume $(j, i) \in P_j$. Then, 
    \begin{align*}
        &\phit{R_i \cap Q_i} - \phit{R_i \cap P_i} + \sum_{\ell \in [k] - i} \left(\phin{R_i \cap Q_\ell} - \phin{R_i \cap P_\ell}\right) \\
        &= \phin{R_i \cap Q_1} - \phin{R_i \cap P_1} + \phin{R_i \cap Q_j} - \phin{R_i \cap P_j} \\
        &\leq 1 + \phi_n(0) - \phi_n(1) \quad \text{(by properties \ref{item:suffix_contained_leq_1} and \ref{item:suffix_contained_leq_2}, $R_i \cap P_j = \{(i, j)\}$)} \\
        &= 1 - 1 \\
        &= 0.
    \end{align*}

    Hence, $\obj{Q_1, \ldots, Q_k} - \obj{P_1, \ldots, P_k} \leq 0$. 
\end{proof}

We have the following corollary from Claims \ref{claim:move_to_j} and \ref{claim:move_to_1}. 
\begin{corollary}
\label{cor:final_structure}
    Let $P_1, \ldots, P_k$ be a symmetric multiway partition of $V$ with the following properties:
    \begin{enumerate}
        \item $R_1 \subseteq P_1$.

        \item $\overrightarrow{R}_i \subseteq P_1 \cup P_i$ for each $i \in \{2, \ldots, k\}$.

        \item $|R_1 \cap P_1| \geq |R_2 \cap P_2| \geq \ldots \geq |R_k \cap P_k|$.
    \end{enumerate}
    Then, there exists a symmetric multiway partition $Q_1, \ldots, Q_k$ of $V$ with the following properties:
    \begin{enumerate}
        \item 
        There exists $i^* \in [k]$ such that $\overrightarrow{R}_i \subseteq Q_i$ for each $i \in [i^*]$ and $\overrightarrow{R}_i \subseteq Q_1$ for each $i \in \{i^* + 1, \ldots, k\}$. 

        \item 
        $\obj{Q_1, \ldots, Q_k} \leq \obj{P_1, \ldots, P_k}$. 
    \end{enumerate}
\end{corollary}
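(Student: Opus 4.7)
The plan is to choose $i^*$ directly from the given ordered sequence of sizes $|R_i \cap P_i|$ and then perform two sweeps, applying Claim \ref{claim:move_to_j} on a prefix of rows and Claim \ref{claim:move_to_1} on the remaining suffix. Since $|R_1 \cap P_1| = k \geq k/2 + 2$ for $k \geq 4$, I would let $i^*$ be the largest index in $[k]$ such that $|R_i \cap P_i| \geq k/2 + 2$ for every $i \in [i^*]$; the monotonicity hypothesis then guarantees $|R_i \cap P_i| \leq k/2 + 1$ for every $i > i^*$, so the two sweeps together cover all rows from $2$ to $k$.

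First I would process $i = 2, 3, \ldots, i^*$ in order, at each step applying Claim \ref{claim:move_to_j} with $j = i$. The hypothesis $\overrightarrow{R}_{i'} \subseteq P_{i'}$ for $i' \in [i - 1]$ holds inductively because each previous step of this sweep established exactly this containment at the corresponding row; the hypothesis $\overrightarrow{R}_{i'} \subseteq P_1 \cup P_{i'}$ for $i' \geq i$ is preserved throughout the sweep because each application only redistributes elements of $\overrightarrow{R}_{j'} \cup \overrightarrow{C}_{j'}$, which is disjoint from every $\overrightarrow{R}_{i'}$ with $i' \neq j'$; and the size threshold $|R_i \cap P_i| \geq k/2 + 2$ still holds at step $i$ because every previous application had $Q_\ell = P_\ell$ for $\ell \notin \{1, j'\}$ and therefore left $P_i$ untouched. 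Each application keeps the objective value nonincreasing.

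Next I would process $i = i^* + 1, \ldots, k$ in order, at each step applying Claim \ref{claim:move_to_1} with the same $i^*$ and with $j = i$. After the first sweep, the required containments $\overrightarrow{R}_{i'} \subseteq P_{i'}$ for $i' \in [i^*]$ and $\overrightarrow{R}_{i'} \subseteq P_1 \cup P_{i'}$ for $i' > i^*$ are in place; the intermediate condition $\overrightarrow{R}_{i'} \subseteq P_1$ for $i' \in \{i^* + 1, \ldots, i - 1\}$ is produced inductively by the previous iterations of this second sweep; and the size condition $|R_i \cap P_i| \leq k/2 + 1$ again holds because $|R_i \cap P_i|$ has not been modified and its original value is at most $k/2 + 1$ by the choice of $i^*$. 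Summing the per-step objective inequalities across both sweeps yields $\obj{Q_1, \ldots, Q_k} \leq \obj{P_1, \ldots, P_k}$ for the final partition, which by construction satisfies $\overrightarrow{R}_i \subseteq Q_i$ for $i \in [i^*]$ and $\overrightarrow{R}_i \subseteq Q_1$ for $i \in \{i^* + 1, \ldots, k\}$.

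The main obstacle will be the bookkeeping across iterations: verifying that each invocation of Claim \ref{claim:move_to_j} or Claim \ref{claim:move_to_1} really sees the hypotheses it requires, particularly the size threshold and the uniform containment of $\overrightarrow{R}_{i'}$ for rows not yet processed. As outlined above, this reduces to the single observation that both claims only redistribute elements inside $R_j \cup C_j$ between $P_1$ and $P_j$, leaving $P_i$ for $i \notin \{1, j\}$, and in particular the values $|R_i \cap P_i|$ for $i \neq j$, exactly equal to their originals.
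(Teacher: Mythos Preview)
Your proposal is correct and follows essentially the same approach as the paper: choose $i^*$ as the threshold index where $|R_i \cap P_i|$ drops below $k/2+2$, then apply Claim~\ref{claim:move_to_j} inductively for $j$ up to $i^*$ and Claim~\ref{claim:move_to_1} inductively for $j$ beyond $i^*$. Your write-up is in fact more explicit than the paper's about why the hypotheses of each claim survive across iterations (namely, because each step only moves elements of $\overrightarrow{R}_j\cup\overrightarrow{C}_j$ between $P_1$ and $P_j$, leaving $P_i$ for $i\notin\{1,j\}$ untouched).
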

\begin{proof}
    Since $R_1 \subseteq P_1$, we have that $k = |R_1 \cap P_1| \geq |R_2 \cap P_2| \geq \ldots \geq |R_k \cap P_k|\ge 1$. Since $k\ge 4$, there exists $i^* \in [k]$ such that $|R_i \cap P_i| \geq \frac{k}{2} + 2$ for each $i \in [i^*]$ and $|R_i \cap P_i| \leq \frac{k}{2} + 1$ for each $i \in \{i^* + 1, \ldots, k\}$. 
    Applying Claim \ref{claim:move_to_j} inductively for each $j\in [i^*]$ and then applying Claim \ref{claim:move_to_1} inductively for each $j\in \{i^*+1, \ldots, k\}$, we obtain the desired partition. 
\end{proof}

We now restate and prove Lemma \ref{lem:sym_part_structure}. 
\lemSymPartStructure*
\begin{proof}
    Let us define $P_1^{(0)}, \ldots, P_k^{(0)}$ such that $P_\ell^{(0)} = P_\ell$ for each $\ell \in [k]$. We arrive at the required symmetric multiway partition by a series of transformations as described below.
    \begin{enumerate}
        \item 
        Apply Claim \ref{claim:move-first-row} to the symmetric multiway partition $P_1^{(0)}, \ldots, P_k^{(0)}$ of $V$ and obtain a symmetric multiway partition $P_1^{(1)}, \ldots, P_k^{(1)}$ of $V$. We have that 
        \[\obj{P_1^{(1)}, \ldots, P_k^{(1)}} \leq \obj{P_1^{(0)}, \ldots, P_k^{(0)}} + 2k = \obj{P_1, \ldots, P_k} + 2k.
        \]

        \item The symmetric multiway partition $P_1^{(1)}, \ldots, P_k^{(1)}$ of $V$ satisfies the hypothesis of Claim \ref{claim:move-unhappy}. 
        Apply Claim \ref{claim:move-unhappy} 
        to the symmetric multiway partition $P_1^{(1)}, \ldots, P_k^{(1)}$ of $V$ and obtain a symmetric multiway partition $P_1^{(2)}, \ldots, P_k^{(2)}$ of $V$ such that (1) $R_1 \subseteq P_1^{(2)}$ and (2) $U\left(P_1^{(2)}, \ldots, P_k^{(2)}\right) \subseteq P_1^{(2)}$. We also have that 
        \[\obj{P_1^{(2)}, \ldots, P_k^{(2)}} \leq \obj{P_1^{(1)}, \ldots, P_k^{(1)}} \leq \obj{P_1, \ldots, P_k} + 2k.\]

        \item The symmetric multiway partition $P_1^{(2)}, \ldots, P_k^{(2)}$ of $V$ satisfies the hypothesis of Corollary \ref{cor:partial_structure}. 
        Apply Corollary \ref{cor:partial_structure} 
        to the symmetric multiway partition $P_1^{(2)}, \ldots, P_k^{(2)}$ of $V$ and obtain a symmetric multiway partition $P_1^{(3)}, \ldots, P_k^{(3)}$ of $V$ such that 
        \begin{enumerate}[label=(\roman*)]
            \item $R_1 \subseteq P_1^{(3)}$, 

            \item $\overrightarrow{R}_i \subseteq P_1^{(3)} \cup P_i^{(3)}$ for each $i \in \{2, \ldots, k\}$, and 

            \item $\left|R_1 \cap P_1^{(3)}\right| \geq \left|R_2 \cap P_2^{(3)}\right| \geq \ldots \geq \left|R_k \cap P_k^{(3)}\right|$. 
        \end{enumerate}
        We also have that 
        \[\obj{P_1^{(3)}, \ldots, P_k^{(3)}} \leq \obj{P_1^{(2)}, \ldots, P_k^{(2)}} \leq \obj{P_1, \ldots, P_k} + 2k. \]

        \item The symmetric multiway partition $P_1^{(3)}, \ldots, P_k^{(3)}$ of $V$ satisfies the hypothesis of Corollary \ref{cor:final_structure}. 
        Apply Corollary \ref{cor:final_structure} 
        to the symmetric multiway partition $P_1^{(3)}, \ldots, P_k^{(3)}$ of $V$ and obtain a symmetric multiway partition $Q_1, \ldots, Q_k$ of $V$ such that 
        there exists $i^* \in [k]$ such that $\overrightarrow{R}_i \subseteq Q_i$ for each $i \in [i^*]$ and $\overrightarrow{R}_i \subseteq Q_1$ for each $i \in \{i^* + 1, \ldots, k\}$. We also have that 
        \[\obj{Q_1, \ldots, Q_k} \leq \obj{P_1^{(3)}, \ldots, P_k^{(3)}} \leq \obj{P_1, \ldots, P_k} + 2k. \]
    \end{enumerate}
\end{proof}

\section{Graph Coverage Multiway Partition}
\label{sec:graph-coverage-mp}

We recall \gcovmp. The input here is a graph $G = (V, E, w)$ where $w : E \to \R_+$ is the edge cost function,
and a set of terminal vertices $T = \{t_1, \ldots, t_k\}$. The goal is to find a partition $S_1, \ldots, S_k$ of $V$ with $t_i\in S_i$ for every $i\in [k]$ in order to minimize $\sum_{i=1}^k f(S_i)$, 
where $b : 2^V \to \R_+$ is the graph coverage function defined as  
\begin{align*}
    b(S) := \sum_{e = \{u, v\} \in E : \{u, v\}\cap S\neq \emptyset} w_e\ \forall\ S\subseteq V.
\end{align*}
There is a close relationship between \gcovmp and \mwc that we recall. 
In \mwc, the input is identical to \gcovmp while the goal is to find a partition $S_1, \ldots, S_k$ of $V$ with $t_i\in S_i$ for every $i\in [k]$ in order to minimize $(1/2)\sum_{i=1}^k d(S_i)$, where $d:2^V\to \R_+$ is the graph cut function defined as 
\begin{align*}
    d(S) := \sum_{e = \{u, v\} \in E : |\{u, v\}\cap S|=1} w_e\ \forall\ S\subseteq V.
\end{align*}
Then, for a partition $S_1, \ldots, S_k$ of $V$, the objective in \gcovmp can be rewritten as, 
\begin{align}
    \sum_{i=1}^k b(S_i) &= 
    \frac{1}{2}\sum_{i=1}^k d(S_i) + \sum_{e\in E}w_e. \label{eq:cov-vs-cut-translation}
\end{align}
Hence, the objective function of \gcovmp is a translation of the objective function of \mwc. This connection inspires the approaches underlying the results in this section. 
We begin with some easy consequences of the connection between \mwc and \gcovmp. 
Firstly, since \mwc is NP-hard, it follows that \gcovmp is also NP-hard. Secondly, the following proposition shows that an $\alpha$-approximation for \mwc immediately implies a $(1+\alpha)/2$-approximation for \gcovmp. 

\begin{proposition}\label{prop:mwc-to-gcovmp-approximation}
If we have an $\alpha$-approximation for \mwc, then we have a $(1+\alpha)/2$-approximation for \gcovmp. 
\end{proposition}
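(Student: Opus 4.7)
The plan is to leverage the additive translation between the two objectives captured by equation \eqref{eq:cov-vs-cut-translation}. Let $\opt_{\text{cov}}$ denote the optimum value of the \gcovmp instance and $\opt_{\text{mwc}}$ denote the optimum value of the \mwc instance on the same input $(G, w, T)$. Since the feasible regions of the two problems are identical (partitions $S_1, \ldots, S_k$ with $t_i \in S_i$), equation \eqref{eq:cov-vs-cut-translation} implies that the same partition achieves both optima, and hence
\[
\opt_{\text{cov}} \;=\; \opt_{\text{mwc}} \;+\; w(E).
\]

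Given an $\alpha$-approximation algorithm $\mathcal{A}$ for \mwc, I would run it on the instance to obtain a partition $S_1, \ldots, S_k$ with $\tfrac{1}{2}\sum_{i=1}^k d(S_i) \leq \alpha \cdot \opt_{\text{mwc}}$, and then output this same partition as a solution to \gcovmp. By equation \eqref{eq:cov-vs-cut-translation}, the \gcovmp cost of this output is
\[
\sum_{i=1}^k b(S_i) \;=\; \tfrac{1}{2}\sum_{i=1}^k d(S_i) + w(E) \;\leq\; \alpha \cdot \opt_{\text{mwc}} + w(E).
\]

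The remaining step is a short algebraic verification. The key auxiliary inequality is the trivial bound $\opt_{\text{mwc}} \leq w(E)$, which holds because the \mwc objective counts each cut edge at most once and hence cannot exceed the total edge weight. Using $\opt_{\text{cov}} = \opt_{\text{mwc}} + w(E)$ and $\opt_{\text{mwc}} \leq w(E)$, I would rewrite
\[
\alpha \cdot \opt_{\text{mwc}} + w(E) \;=\; \tfrac{1+\alpha}{2}\bigl(\opt_{\text{mwc}} + w(E)\bigr) + \tfrac{\alpha-1}{2}\bigl(\opt_{\text{mwc}} - w(E)\bigr) \;\leq\; \tfrac{1+\alpha}{2} \cdot \opt_{\text{cov}},
\]
since $\alpha \geq 1$ and $\opt_{\text{mwc}} \leq w(E)$. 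This yields the claimed $(1+\alpha)/2$-approximation.

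There is no real obstacle here: the entire argument is a one-line algebraic manipulation built on top of the identity \eqref{eq:cov-vs-cut-translation} and the trivial cut bound. The only point worth flagging is that the bound is tight in the sense that the ``extra'' $w(E)$ in the \gcovmp objective dilutes the multiplicative approximation guarantee by a factor of two toward $1$, which is precisely why the current $1.2965$-approximation for \mwc only gives $1.14825$ for \gcovmp---motivating the stronger $1.125$-approximation shown in Theorem \ref{thm:gcovmp-approximation}.
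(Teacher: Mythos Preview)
Your proof is correct and follows essentially the same approach as the paper: run the $\alpha$-approximation for \mwc, use the translation identity \eqref{eq:cov-vs-cut-translation} together with the trivial bound $\opt_{\text{mwc}} \le w(E)$, and conclude via the same algebraic manipulation (the paper adds the nonnegative term $\tfrac{\alpha-1}{2}(w(E)-\opt_{\text{mwc}})$, which is just your identity rearranged).
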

\begin{proof}
    Given an instance of \gcovmp, we run the $\alpha$-approximation algorithm for \mwc on this instance to obtain a partition $S_1, \ldots, S_k$ of the vertex set and return the same partition. We now analyze the approximation factor of this approach. 
    Let $\opt_{\text{MWC}}$ and $\opt_{\text{G-COV-MP}}$ denote the optimum objective value of \mwc and \gcovmp respectively on the instance $I$. 
    We have that 
    \begin{align*}
        \frac{1}{2}\sum_{i=1}^k d(S_i) &\leq \alpha \cdot \opt_{\text{MWC}}, \text{ and}\\
        \opt_{\text{G-COV-MP}} &= \opt_{\text{MWC}} + \sum_{e\in E}w_e. 
    \end{align*}
    Since for every partition $T_1, \ldots, T_k$ of $V$, we have that $(1/2)\sum_{i=1}^k d(T_i)\le \sum_{e\in E}w_e$, we have that 
    \[
    \opt_{\text{MWC}} \le \sum_{e\in E}w_e. 
    \]
    Thus, the \gcovmp objective value of the partition $S_1, \ldots, S_k$ of $V$ returned by our algorithm is 
    \begin{align*}
    \sum_{i=1}^k b(S_i) 
    &= \frac{1}{2}\sum_{i=1}^k d(S_i) + \sum_{e\in E}w_e\\
    &\le \alpha \cdot \opt_{\text{MWC}} + \sum_{e\in E}w_e\\
    &\le \alpha \cdot \opt_{\text{MWC}} + \sum_{e\in E}w_e + \left(\frac{\alpha-1}{2}\right)\left(\sum_{e\in E}w_e - \opt_{\text{MWC}}\right)\\
    &= \left(\frac{\alpha+1}{2}\right)\left(\opt_{\text{MWC}}+ \sum_{e\in E}w_e\right)\\
    &= \left(\frac{\alpha+1}{2}\right)\opt_{\text{G-COV-MP}}. 
    \end{align*}
\end{proof}

We note that the best possible approximation factor for $\mwc$ is $1.2965$. By Proposition \ref{prop:mwc-to-gcovmp-approximation}, this implies a $1.14825$-approximation for \gcovmp. 
In the next theorem, we improve on this approximability. 
\thmGcovMPApprox*

Since the reduction between \gcovmp and \mwc mentioned above is not approximation preserving, the inapproximability factor for \mwc does not give the same inapproximability factor for \gcovmp. With this issue in mind, we show the following inapproximability results for \gcovmp. 


\thmGCovMPInapprox*

\subsection{Approximation Algorithm}
In this section, we prove Theorem \ref{thm:gcovmp-approximation}. 
We will use the following LP-relaxation of \gcovmp: 
\begin{align}
    \min \quad &\sum_{e \in E}w_e \left( \sum_{i=1}^{k} \max\{x_u^i, x_v^i\} \right) \tag{Graph-Cov-MP-Rel}\label{LP:gcovmp}\\ 
    &x_u \in \Delta_k \quad \quad \forall u \in V  \notag \\ 
    &x_{t_i} = e_i \quad \quad \forall i \in [k] \notag
\end{align}
where $\Delta_k$ denotes the $k$-simplex, i.e. $\Delta_k:=\{x\in \R^k: \sum_{i=1}^k x_i = 1, x\ge 0\}$ and $e_i$ denotes the unit vector in $\R^k$ with $1$ at the $i$-th index. 
Our rounding algorithm for the LP-relaxation is identical to the exponential clock based simplex partitioning given by Buchbinder, Naor, and Schwartz for \mwc \cite{smplx-mwc}. However, our analysis of the approximation factor is different since the objective of interest is different. We present our algorithm for \gcovmp below:
\begin{algorithm}[H] 
\caption{\gcovmp Rounding}\label{alg:gcov}
\begin{algorithmic}
    \State Let $\vx$ be an optimal solution to \ref{LP:gcovmp} 
    \State Choose i.i.d random variables $Z_i \sim \exp(1)$ for $i \in [k]$
    \State For each $u\in V$, define $\ell(u):= \argmin\{\frac{Z_i}{(x_u)_i} : i \in [k]\}$
    \State For each $i\in [k]$, define $S_i:=\{u\in V: \ell(u)=i\}$. 
    \State \Return $S_1, \ldots, S_k$
\end{algorithmic}
\end{algorithm}
We analyze the approximation factor of the algorithm in the rest of the section. 
We recall that $b: 2^V\rightarrow \R_{\ge 0}$ is the coverage function of the input $(G=(V, E), w: E\rightarrow \R_{\ge 0})$. 
The following result bounds the approximation factor of Algorithm \ref{alg:gcov} relative to the optimum value of \ref{LP:gcovmp}. 
\begin{restatable}{theorem}{thmGcovMPApproxReltoLP}\label{thm:gcovmp-approx-rel-to-lp}
    Let $\opt_{frac}$ be the optimum value of \ref{LP:gcovmp} and $S_1, \ldots, S_k$ be the solution returned by Algorithm \ref{alg:gcov}. Then, 
    \[
        \E\left[\sum_{i=1}^k b(S_i)\right]\le \frac{9}{8}\opt_{frac}. 
    \]
\end{restatable}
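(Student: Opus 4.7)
The plan is to reduce the approximation factor analysis to a per-edge comparison between the expected algorithm cost and the LP cost. For each edge $e = \{u, v\}$ with weight $w_e$, the LP contribution is $w_e \sum_{i=1}^k \max(x_u^i, x_v^i) = w_e(1 + d_{uv})$, where $d_{uv} := \tfrac{1}{2}\|x_u - x_v\|_1$, using the identity $\sum_i \max(a_i, b_i) = 1 + \tfrac{1}{2}\|a - b\|_1$ for any $a, b \in \Delta_k$. On the algorithm side, since $b(S_i)$ counts $w_e$ for each returned part $S_i$ intersecting $\{u, v\}$, the expected contribution of $e$ to $\sum_{i=1}^k b(S_i)$ is $w_e(1 + \Pr[\ell(u) \neq \ell(v)])$. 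By linearity, it suffices to establish the edge-by-edge ratio bound $1 + \Pr[\ell(u) \neq \ell(v)] \leq (9/8)(1 + d_{uv})$.

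The key technical claim I will prove is that $\Pr[\ell(u) \neq \ell(v)] \leq 2 d_{uv}/(1 + d_{uv})$. Writing $p := x_u$ and $q := x_v$, a standard computation using the memorylessness of exponentials yields
\[
\Pr[\ell(u) = \ell(v) = i] \;=\; \frac{1}{\sum_{j} \max(p_j/p_i,\, q_j/q_i)}
\]
(interpreted as $0$ if $p_i = 0$ or $q_i = 0$). The crucial inequality underlying the analysis is $\max(p_j/p_i,\, q_j/q_i)\cdot \min(p_i, q_i) \leq \max(p_j, q_j)$ for every $i, j$, which follows immediately from $\min(p_i, q_i)/p_i \leq 1$ and $\min(p_i, q_i)/q_i \leq 1$. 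Summing over $j$ gives $\min(p_i, q_i)\cdot \sum_j \max(p_j/p_i, q_j/q_i) \leq \sum_j \max(p_j, q_j) = 1 + d_{uv}$, and therefore $\Pr[\ell(u) = \ell(v) = i] \geq \min(p_i, q_i)/(1 + d_{uv})$. Summing over $i$ and using $\sum_i \min(p_i, q_i) = 1 - d_{uv}$ yields $\Pr[\ell(u) = \ell(v)] \geq (1 - d_{uv})/(1 + d_{uv})$, from which the claim follows by passing to the complement.

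Plugging the claim into the per-edge target gives $(1 + \Pr[\ell(u) \neq \ell(v)])/(1 + d_{uv}) \leq (1 + 3 d_{uv})/(1 + d_{uv})^2$. A short calculus exercise (differentiating, one gets numerator $1 - 3 d_{uv}$, vanishing at $d_{uv} = 1/3$) shows that the right-hand side attains its maximum value of $9/8$ at $d_{uv} = 1/3$ on $[0,1]$. Summing the per-edge bound over all edges weighted by $w_e$ then yields $\E\!\left[\sum_{i=1}^k b(S_i)\right] \leq (9/8)\,\opt_{frac}$, completing the proof.

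The main conceptual obstacle I anticipate is identifying the correct inequality tying the exponential-clock coincidence probability to the simplex metric $d_{uv}$. A naive route that plugs in a worst-case MWC-style edge bound of the form $\Pr[\mathrm{cut}] \leq \alpha\, d_{uv}$ only yields a $(1+\alpha)/2$-factor via Proposition \ref{prop:mwc-to-gcovmp-approximation}, which even with $\alpha \approx 1.309$ from \cite{SV14} falls short of $9/8$. The sharper edge-by-edge bound $\Pr[\mathrm{cut}] \leq 2 d_{uv}/(1 + d_{uv})$ is precisely what is needed, and its proof hinges on the elementary pointwise inequality above, combined with the fact that the $\ell^1$-based quantities $\sum_i \min(p_i, q_i)$ and $\sum_i \max(p_i, q_i)$ evaluate cleanly to $1 - d_{uv}$ and $1 + d_{uv}$ respectively.
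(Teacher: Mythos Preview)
Your proof is correct and follows essentially the same approach as the paper: both reduce to a per-edge analysis, establish the key bound $\Pr[\ell(u)\neq\ell(v)]\le 2d_{uv}/(1+d_{uv})$ by lower-bounding $\Pr[\ell(u)=\ell(v)=i]$ by $\min(p_i,q_i)/(1+d_{uv})$ and summing, and then maximize $(1+3d)/(1+d)^2$ at $d=1/3$ to obtain $9/8$. The only cosmetic difference is that you first compute the exact formula $\Pr[\ell(u)=\ell(v)=i]=1/\sum_j\max(p_j/p_i,q_j/q_i)$ and bound it via the pointwise inequality $\max(p_j/p_i,q_j/q_i)\min(p_i,q_i)\le\max(p_j,q_j)$, whereas the paper bounds the event directly by the sub-event $\{Z_i/\min(u_i,v_i)\le \min_{j\neq i} Z_j/\max(u_j,v_j)\}$ and invokes the standard exponential-race formula; these yield the same intermediate inequality.
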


The rest of this section is devoted to proving Theorem \ref{thm:gcovmp-approx-rel-to-lp}. 
Let $x$ be a feasible solution to \ref{LP:gcovmp}. For $u\in V$, we denote $x_u = (u_1, \cdots, u_k)$ and for every $u, v\in V$, we denote  $\epsilon_{uv} := \sum_{j=1}^{k} \max\{0, u_j - v_j\}$. We begin with the following proposition. 
\begin{proposition} \label{prop:u-v-sum}
For every pair $u, v\in V$, we have that 
\[
\epsilon_{uv} = \epsilon_{vu}=\frac{1}{2} \lVert x_u - x_v \rVert_1. 
\]

\end{proposition}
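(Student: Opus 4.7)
The plan is to exploit the fact that both $x_u$ and $x_v$ lie in the simplex $\Delta_k$, so their coordinates each sum to $1$. From $\sum_{j=1}^k (u_j - v_j) = 0$ I immediately get that the positive part and negative part of the vector $x_u - x_v$ have equal total mass, which is exactly the claim $\epsilon_{uv} = \epsilon_{vu}$.

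Concretely, I would first write
\[
0 = \sum_{j=1}^k (u_j - v_j) = \sum_{j=1}^k \max\{0, u_j - v_j\} - \sum_{j=1}^k \max\{0, v_j - u_j\} = \epsilon_{uv} - \epsilon_{vu},
\]
which yields $\epsilon_{uv}=\epsilon_{vu}$. Next I would split the $\ell_1$ norm coordinatewise into its positive and negative parts:
\[
\lVert x_u - x_v \rVert_1 = \sum_{j=1}^k |u_j - v_j| = \sum_{j=1}^k \max\{0, u_j-v_j\} + \sum_{j=1}^k \max\{0, v_j - u_j\} = \epsilon_{uv} + \epsilon_{vu} = 2\epsilon_{uv}.
\]
Dividing by $2$ completes the proof.

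There is essentially no obstacle here; this is a one-line fact about vectors lying in the probability simplex. The only thing to be careful about is invoking the simplex constraint $x_u, x_v \in \Delta_k$ from the feasibility of $x$ for \ref{LP:gcovmp}, which is what makes the positive and negative parts balance. The proposition will serve later in the section to rewrite the LP objective $\sum_{i=1}^k \max\{(x_u)_i,(x_v)_i\} = 1 + \epsilon_{uv} = 1 + \tfrac{1}{2}\|x_u - x_v\|_1$ on each edge, which is presumably the form needed for the edge-by-edge $9/8$ analysis of Algorithm \ref{alg:gcov}.
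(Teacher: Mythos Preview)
Your proof is correct and follows essentially the same approach as the paper: use the simplex constraint to get $\sum_j(u_j-v_j)=0$, hence $\epsilon_{uv}=\epsilon_{vu}$, and then split $\lVert x_u-x_v\rVert_1$ into its positive and negative parts to obtain $\lVert x_u-x_v\rVert_1=\epsilon_{uv}+\epsilon_{vu}=2\epsilon_{uv}$. Your write-up is in fact a bit cleaner than the paper's.
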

\begin{proof}
    Since $x_u, x_v \in \Delta_k$, we have that $\sum_{j=1}^{k} u_j = \sum_{j=1}^{k} v_j = 1$. Hence we get, $\sum_{j=1}^{k} (u_j - v_j) = 0$. This gives us,
    \begin{align*}
        \sum_{j:v_j \geq u_j} (u_j - v_j) + \sum_{j:v_j < u_j} (u_j - v_j) = 0.
    \end{align*}
    Consequently, we have that 
    \begin{align*}
        \epsilon_{uv}=\sum_{j=1}^{k} \max\{0, v_j - u_j\} = \sum_{j:v_j \geq u_j} (v_j - u_j) = \sum_{j:v_j < u_j} (u_j - v_j) = \sum_{j=1}^{k} \max\{0, u_j - v_j\}=\epsilon_{vu}.
    \end{align*}
    By the same reasoning, we have that 
    \[
    \frac{1}{2} \lVert x_u - x_v \rVert_1 = \frac{1}{2}\left(\epsilon_{uv}+\epsilon_{vu}\right) = \eps_{uv}. 
    \]
\end{proof}
We will use the following properties of the exponential distribution. 

\begin{proposition} \label{prop:exp-dist}
    If $X \sim \exp(\lambda)$ and $c > 0$, then $\frac{X}{c} \sim \exp(\lambda c)$. Moreover, if we have independent random variables $X_1, \cdots, X_k$ with $X_i \sim \exp(\lambda_i)$, then 
    \begin{enumerate}
        \item $\min\{X_1, \cdots, X_k\} \sim \exp(\lambda_1 + \cdots + \lambda_k)$. 
        \item $Pr[X_i \leq \min\{X_j: j \in [k]\setminus \{i\}\}] = \frac{\lambda_i}{\lambda_1 + \ldots + \lambda_k}$.
    \end{enumerate}
\end{proposition}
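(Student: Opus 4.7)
The plan is to verify each of the three assertions using only the defining CDF $\Pr[X\le t]=1-e^{-\lambda t}$ of the exponential distribution (equivalently, the survival function $\Pr[X>t]=e^{-\lambda t}$), together with independence where needed.

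First I would handle the scaling claim. Let $X\sim\exp(\lambda)$ and $c>0$. Since $c>0$, the event $\{X/c\le t\}$ coincides with $\{X\le ct\}$ for every $t\ge 0$, so $\Pr[X/c\le t]=1-e^{-\lambda(ct)}=1-e^{-(\lambda c)t}$, which is precisely the CDF of $\exp(\lambda c)$. This yields $X/c\sim\exp(\lambda c)$.

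Next I would prove the distribution of the minimum. Let $M:=\min\{X_1,\ldots,X_k\}$. For each $t\ge 0$, the event $\{M>t\}$ is the intersection $\bigcap_{i=1}^k\{X_i>t\}$. By independence,
\[
\Pr[M>t]=\prod_{i=1}^k\Pr[X_i>t]=\prod_{i=1}^k e^{-\lambda_i t}=e^{-(\lambda_1+\cdots+\lambda_k)t},
\]
so $\Pr[M\le t]=1-e^{-(\lambda_1+\cdots+\lambda_k)t}$, i.e.\ $M\sim\exp(\lambda_1+\cdots+\lambda_k)$.

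Finally, for the argmin probability I would condition on the value of $X_i$ and use the density $f_{X_i}(t)=\lambda_i e^{-\lambda_i t}$. Let $M_{-i}:=\min_{j\in[k]\setminus\{i\}}X_j$, which by the previous step is $\exp\bigl(\sum_{j\neq i}\lambda_j\bigr)$ and is independent of $X_i$. Then
\[
\Pr[X_i\le M_{-i}]=\int_0^\infty \lambda_i e^{-\lambda_i t}\,\Pr[M_{-i}\ge t]\,dt=\int_0^\infty \lambda_i e^{-\lambda_i t}e^{-\sum_{j\neq i}\lambda_j\,t}\,dt=\int_0^\infty\lambda_i e^{-(\sum_{j=1}^k\lambda_j)t}\,dt=\frac{\lambda_i}{\sum_{j=1}^k\lambda_j}.
\]
(The event $\{M_{-i}=t\}$ has probability zero, so whether the inequality is strict is immaterial.) There is no real obstacle here: all three claims are textbook consequences of the memoryless/exponential survival form, and the only subtlety is noting the continuity of the distributions so that ties occur with probability zero, which is used implicitly when Algorithm \ref{alg:gcov} defines $\ell(u)$ via an $\argmin$.
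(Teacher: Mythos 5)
Your proof is correct. The paper states Proposition \ref{prop:exp-dist} without any proof, treating it as a standard fact about exponential distributions, and your three verifications (the scaling claim via the CDF, the minimum via multiplying survival functions under independence, and the argmin probability by conditioning on $X_i$ and integrating against its density, with ties ruled out by continuity) are exactly the standard arguments one would supply, so there is nothing to compare against and no gap to report.
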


We use Proposition \ref{prop:exp-dist} to bound the probability of distinct vertices receiving different labels. 
\begin{lemma}\label{lem:different-labels-prob}
    For each distinct $u, v\in V$, we have that $Pr(\ell(u) \neq \ell(v)) \leq \frac{2\epsilon_{uv}}{1 + \epsilon_{uv}}$.
\end{lemma}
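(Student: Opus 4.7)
The plan is to compute the joint probability $p_j := \Pr(\ell(u) = \ell(v) = j)$ in closed form using the memoryless property of the exponential distribution, and then lower bound $\sum_j p_j \geq (1 - \epsilon_{uv})/(1 + \epsilon_{uv})$; the desired inequality then follows from $\Pr(\ell(u) \neq \ell(v)) = 1 - \sum_j p_j$.

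Conditioning on $Z_j = t$ (which has density $e^{-t}$), the event $\ell(u) = \ell(v) = j$ is equivalent to $Z_i > t \cdot \max\{u_i/u_j,\, v_i/v_j\}$ for every $i \neq j$. Since the $Z_i$ are independent standard exponentials, integrating over $t$ gives $p_j = 1/(1 + A_j)$, where $A_j := \sum_{i \neq j} \max\{u_i/u_j,\, v_i/v_j\}$. After relabelling coordinates in increasing order of $\alpha_j := u_j/v_j$ (degenerate cases where $u_j$ or $v_j$ vanish can be handled by a perturbation argument, as such indices contribute $0$ to $\sum_j p_j$), the max simplifies: for $i > j$ we have $u_i/u_j \geq v_i/v_j$, while the reverse holds for $i < j$. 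Therefore $A_j = (U_j - u_j)/u_j + V_{j-1}/v_j$ where $U_j := \sum_{i \geq j} u_i$ and $V_j := \sum_{i \leq j} v_i$, yielding the cleaner form $p_j = u_j v_j / (U_j v_j + V_{j-1} u_j)$.

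The core step is to establish the bound $p_j \geq m_j / (1 + \tau_j)$, where $m_j := \min\{u_j, v_j\}$ and $\tau_j := U_j + V_{j-1} - 1$. Cross-multiplying reduces this to $U_j v_j (u_j - m_j) + V_{j-1} u_j (v_j - m_j) \geq 0$, which holds termwise. Separately, I would show $\tau_j \leq \epsilon_{uv}$ for every $j$. Writing $\tau_j = -\sum_{i < j}(u_i - v_i)$, the sorted order on $\alpha_i$ ensures that $u_i - v_i$ changes sign at most once as $i$ increases (since $u_i \leq v_i \Leftrightarrow \alpha_i \leq 1$), so every prefix sum of $u_i - v_i$ is bounded in absolute value by the total positive mass $\sum_i \max\{0, u_i - v_i\} = \epsilon_{uv}$.

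Combining these pieces, $\sum_j p_j \geq \sum_j m_j/(1 + \tau_j) \geq \left(\sum_j m_j\right)/(1 + \epsilon_{uv}) = (1 - \epsilon_{uv})/(1 + \epsilon_{uv})$, where the final equality uses $\sum_j m_j = 1 - \epsilon_{uv}$ (which follows from $\sum_j(u_j + v_j) = 2$ together with $\sum_j |u_j - v_j| = 2\epsilon_{uv}$, as in Proposition \ref{prop:u-v-sum}). Hence $\Pr(\ell(u) \neq \ell(v)) \leq 2\epsilon_{uv}/(1 + \epsilon_{uv})$. The main subtlety is the ``threshold'' step bounding $\tau_j$: partial sums of $u_i - v_i$ in an arbitrary order could oscillate and exceed $\epsilon_{uv}$, but the sort-by-$\alpha$ reordering allows at most one sign change, which is exactly what keeps the swings controlled.
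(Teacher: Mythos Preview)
Your proposal is correct, but it takes a more laborious route than the paper. The paper bypasses the exact computation of $p_j$ entirely: it observes that the event $\{\ell(u)=\ell(v)=j\}$ contains the event $\{Z_j/\min\{u_j,v_j\} \le \min_{i\neq j} Z_i/\max\{u_i,v_i\}\}$, whose probability is $\min\{u_j,v_j\}/\bigl(\min\{u_j,v_j\}+\sum_{i\neq j}\max\{u_i,v_i\}\bigr)$ by the standard min-of-exponentials formula (Proposition~\ref{prop:exp-dist}). Relaxing the denominator to $\sum_i \max\{u_i,v_i\}=1+\epsilon_{uv}$ gives $p_j\ge m_j/(1+\epsilon_{uv})$ immediately---no sorting, no $\tau_j$, no prefix-sum argument. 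Your route computes $p_j$ exactly and then bounds it; the sort-by-$\alpha_j$ step and the sign-change argument for $\tau_j$ are correct but ultimately recover the same per-term inequality $p_j\ge m_j/(1+\epsilon_{uv})$ that the paper gets in one line. The potential advantage of your exact formula---tightening the bound---is not exploited here, so the extra machinery buys nothing for this lemma. Both proofs then conclude identically via $\sum_j m_j = 1-\epsilon_{uv}$.
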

\begin{proof}
    Let $A_i$ denote the event that $\ell(u) = \ell(v) = i$. 
    Then we get, 
    \begin{align*}
        Pr[A_i] &= Pr\left[\frac{z_i}{u_i} \leq \min\left\{\frac{z_1}{u_1}, \cdots, \frac{z_{i-1}}{u_{i-1}}, \frac{z_{i+1}}{u_{i+1}}, \cdots \right\}, \frac{z_i}{v_i} \leq \min\left\{\frac{z_1}{v_1}, \cdots, \frac{z_{i-1}}{v_{i-1}}, \frac{z_{i+1}}{v_{i+1}}, \cdots \right\} \right] \\ 
        &\geq Pr\left[\frac{z_i}{\min\{u_i, v_i\}} \leq \min\left\{\frac{z_1}{\max\{u_1, v_1\}}, \cdots, \frac{z_{i-1}}{\max\{u_{i-1}, v_{i-1}\}}, \frac{z_{i+1}}{\max\{u_{i+1}, v_{i+1}\}}, \cdots\right\}\right] \\ 
        &= \frac{\min\{u_i, v_i\}}{\left(\sum_{j \neq i} \max\{u_j, v_j\}\right) + \min\{u_i, v_i\} } \quad\quad \text{(by Proposition \ref{prop:exp-dist})}\\
        &= \frac{u_i - \max\{0, u_i - v_i\}}{\left(\sum_{j \neq i} \left(u_j + \max\{0, v_j - u_j\}\right)\right) + u_i - \max\{0, u_i - v_i\} } \\ 
        &\geq \frac{u_i - \max\{0, u_i - v_i\}}{\sum_{j=1}^k \left(u_j + \max\{0, v_j - u_j\}\right) } \\ 
        &= \frac{u_i - \max\{0, u_i - v_i\}}{1 + \epsilon_{uv}}. 
    \end{align*}
    Since $A_i$ for $i \in [k]$ are mutually exclusive events, we have that 
    \begin{align*}
        Pr[\ell(u) \neq \ell(v)] &= 1 - \sum_{i=1}^k Pr[A_i] \\ 
        &\leq 1 - \left(\sum_{i=1}^{k} \frac{u_i - \max\{0, u_i - v_i\}}{1 + \epsilon_{uv}}\right) \quad \quad \text{(by Proposition \ref{prop:u-v-sum})} \\
        &= 1 - \left(\frac{1 - \epsilon_{uv}}{1+\epsilon_{uv}}\right)
        = \frac{2\epsilon_{uv}}{1+\epsilon_{uv}}.
    \end{align*}
\end{proof} 

We now restate and prove Theorem \ref{thm:gcovmp-approx-rel-to-lp}. 
\thmGcovMPApproxReltoLP*
\begin{proof}
    The expected cost of the solution returned by Algorithm \ref{alg:gcov} is 
    \begin{align} \label{eq:exp-cost}
        \E_{Z}\left[\sum_{i=1}^k b(S_i)\right] &= \sum_{e = \{u, v\} \in E} w_e \cdot (1 + Pr[\ell(u) \neq \ell(v)]) \quad \quad\text{(by linearity of expectation)} \notag\\ 
        &\leq \sum_{e = \{u, v\} \in E} w_e \cdot \left(1 + \frac{2\epsilon_{uv}}{1 + \epsilon_{uv}}\right) \quad \quad \text{(by Lemma \ref{lem:different-labels-prob})}\notag\\ 
        &= \sum_{e = \{u, v\} \in E} w_e \cdot \left(\frac{1+3\epsilon_{uv}}{1 + \epsilon_{uv}}\right).
    \end{align}
    Moreover, we have that, 
    \begin{align} \label{eq:opt-gcov}
        \opt_{frac} 
        &= \sum_{e=\{u,v\} \in E} w_e \left(\sum_{i=1}^k \max\{u_i, v_i\}\right) \notag\\ 
        &= \sum_{e=\{u,v\} \in E} w_e \left(\sum_{i=1}^k \left(v_i + \max\{0, u_i - v_i\}\right)\right) \notag \\
        &= \sum_{e=\{u,v\} \in E} w_e \cdot (1 + \epsilon_{uv}).
    \end{align}
    Now, using inequalities (\ref{eq:exp-cost}) and (\ref{eq:opt-gcov}) we get that
    \begin{align*}
        \frac{\E_{Z}\left[\sum_{i=1}^k b(S_i)\right]}{\opt_{frac}} &\leq \frac{\sum_{e = \{u, v\} \in E} w_e \cdot \left(\frac{1 + 3\epsilon_{uv}}{1 + \epsilon_{uv}}\right)}{\sum_{e = \{u,v\} \in E} w_e \cdot (1 + \epsilon_{uv})} \\ 
        &\leq \max_{e = \{u, v\} \in E} \left\{\frac{1 + 3\epsilon_{uv}}{(1 + \epsilon_{uv})^2}\right\}.
    \end{align*}
    Since $\epsilon_{uv} = \frac{1}{2} \lVert x_u - x_v \rVert_1$ by Proposition \ref{prop:u-v-sum} and $x_u, x_v \in \Delta_k$, we have that $\epsilon_{uv} \leq \frac{1}{2}(\lVert x_u \rVert_1 + \lVert x_v \rVert_1) = 1$. Hence, $\epsilon_{uv} \in [0, 1]$ for all $\{u, v\} \in E$.
    Now, we observe that the maximum value of the function $g(x) = \frac{1+3x}{(1+x)^2}$ for $x \in [0, 1]$ is attained at $x = \frac{1}{3}$ with $g(\frac{1}{3}) = \frac{9}{8}$. Hence, 
    \begin{align*}
        \frac{\E_{Z}\left[\sum_{i=1}^k b(S_i)\right]}{\opt_{frac}} &\leq \frac{9}{8}.
    \end{align*}
    
\end{proof}


\begin{remark}
The integrality gap instance constructed by C\u{a}linescu, Karloff, and Rabani \cite{CKR00} for the CKR-relaxation has an integrality gap of $46/45$ for \ref{LP:gcovmp}. 
The integrality gap instance constructed by Freund and Karloff \cite{freund-karloff} for the CKR-relaxation has an integrality gap of $22/21$ for \ref{LP:gcovmp}. We note that the integrality gap instances for $k=3$ constructed by \cite{KKSTY04} and \cite{CCT06} for the CKR-relaxation have integrality gap tending to $1$ for \ref{LP:gcovmp}. Thus, the integrality gap of \ref{LP:gcovmp} is at least $22/21=1.0476$. 
\end{remark}

\subsection{Inapproximability}
In this section, we show inapproximability results for \gcovmp. Our proof approach is identical to the gadget-based reduction given by Dahlhaus, Johnson, Papadimitriou, Seymour, and Yannakakis \cite{cmplx-mwc} to show APX-hardness of \mwc. We adapt their reduction to \gcovmp in a straightforward manner. We reduce from \maxcut: the input here is a graph $G=(V, E)$ and the goal is to find a subset $U\subseteq V$ with maximum $d(U)$, where $d(U)$ is the number of edges crossing $U$. We show the following result: 

\begin{restatable}{theorem}{thmMaxCutToGCovMP} \label{thm:apx-cov-cut}
    If there exists an $\alpha$-approximation for \gcovmp, then there exists a $(164 - 163\alpha)$-approximation for \maxcut. 
\end{restatable}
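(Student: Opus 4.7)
The plan is to adapt the gadget-based reduction from \maxcut to \mwc of Dahlhaus, Johnson, Papadimitriou, Seymour, and Yannakakis \cite{cmplx-mwc}, lifting its analysis to \gcovmp via the translation identity \eqref{eq:cov-vs-cut-translation} between the \mwc and \gcovmp objectives. Given a \maxcut instance $G=(V,E)$ with $m := |E|$ and maximum cut value $C^*$, I will apply the DJPSY construction with suitably chosen gadget weights to build a $3$-terminal edge-weighted instance $H$ with terminals $t_1, t_2, t_3$ satisfying two key properties: (i) $\opt_{\text{MWC}}(H) = W_0 - C^*$ for a constant $W_0$ depending only on the gadget weights and on $m$, and (ii) every feasible $3$-partition $V_1, V_2, V_3$ of $V(H)$ with $t_i \in V_i$ can be efficiently mapped to a cut of $G$ of value at least $W_0 - \tfrac{1}{2}\sum_{i=1}^3 d_H(V_i)$. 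I will calibrate the weights so that both $W_0$ and $w(E(H))$ are linear in $m$ and, specifically, $W_0 + w(E(H)) \le 82 m$.

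\textbf{Approximation analysis.} Running a given $\alpha$-approximation algorithm $\mathcal{A}$ for \gcovmp on $H$ produces a partition $V_1, V_2, V_3$ whose \gcovmp value is at most $\alpha \cdot \opt_{\text{G-COV-MP}}(H) = \alpha(W_0 + w(E(H)) - C^*)$, where the equality uses \eqref{eq:cov-vs-cut-translation} together with property (i). Applying \eqref{eq:cov-vs-cut-translation} once more to this returned partition, its \mwc value is at most $\alpha(W_0 - C^*) + (\alpha - 1)w(E(H))$. Translating $V_1, V_2, V_3$ back to a cut of $G$ through property (ii) then yields a cut of value at least
\[
W_0 - \alpha(W_0 - C^*) - (\alpha - 1)w(E(H)) = \alpha C^* - (\alpha - 1)\bigl(W_0 + w(E(H))\bigr).
\]
Using the random-cut lower bound $C^* \ge m/2$ for \maxcut together with the calibration $W_0 + w(E(H)) \le 82 m$, I obtain $(W_0 + w(E(H)))/C^* \le 164$, and dividing the displayed lower bound by $C^*$ gives an approximation ratio of at least $\alpha - 164(\alpha - 1) = 164 - 163\alpha$, as required.

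\textbf{Main obstacle.} The main technical challenge is specifying the DJPSY gadget weights concretely so that properties (i) and (ii) hold and $W_0 + w(E(H)) \le 82 m$ holds simultaneously. In the original DJPSY analysis for \mwc, the total edge weight $w(E(H))$ never enters the approximation argument, but here it contributes directly to the \maxcut approximation ratio through the shift in \eqref{eq:cov-vs-cut-translation}; consequently, the gadget weights must be calibrated to control the \emph{sum} $W_0 + w(E(H))$ in terms of $m$ rather than just $W_0$. Since only the sum matters, this amounts to a careful but essentially routine rescaling of the DJPSY weights, which I will verify against the exact gadget structure.
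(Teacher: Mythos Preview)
Your proposal is correct and follows essentially the same approach as the paper: both use the DJPSY gadget reduction, translate between \mwc and \gcovmp via \eqref{eq:cov-vs-cut-translation}, and combine the bound $C^*\ge m/2$ with the gadget arithmetic to extract the factor $164-163\alpha$. One small remark: no rescaling of the DJPSY weights is needed---the original gadget (twelve weight-$4$ edges and six weight-$1$ edges) already gives $W_0=28m$ and $w(E(H))=54m$, so $W_0+w(E(H))=82m$ on the nose.
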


Theorem \ref{thm:apx-cov-cut} implies Theorem \ref{thm:gcovmp-inapproximability} since \maxcut does not admit a $16/17$-approximation assuming P $\neq$ NP \cite{Has01} and does not admit a $0.878$-approximation assuming the unique games conjecture \cite{KKMO07}. 
For the reduction to prove Theorem \ref{thm:apx-cov-cut}, we will use the gadget graph shown in Figure \ref{fig:gadget_graph} with $9$ vertices and $3$ terminals. This is identical to the gadget graph in \cite{cmplx-mwc}. 

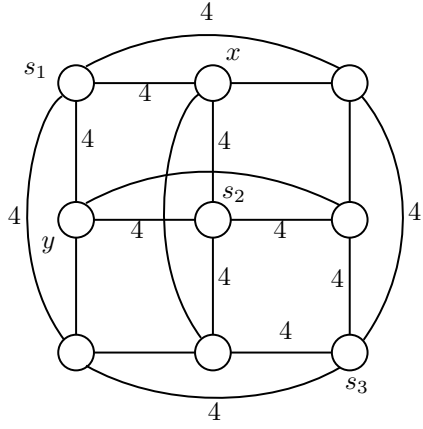
\begin{figure}[h]
\centering
\tikzset{every picture/.style={line width=0.75pt}} 

\begin{tikzpicture}[x=0.75pt,y=0.75pt,yscale=-1,xscale=1]

\draw  [color={rgb, 255:red, 0; green, 0; blue, 0 }  ,draw opacity=1 ] (142,60) .. controls (142,55.03) and (137.97,51) .. (133,51) .. controls (128.03,51) and (124,55.03) .. (124,60) .. controls (124,64.97) and (128.03,69) .. (133,69) .. controls (137.97,69) and (142,64.97) .. (142,60) -- cycle ;
\draw   (142,129) .. controls (142,124.03) and (137.97,120) .. (133,120) .. controls (128.03,120) and (124,124.03) .. (124,129) .. controls (124,133.97) and (128.03,138) .. (133,138) .. controls (137.97,138) and (142,133.97) .. (142,129) -- cycle ;
\draw   (142,196) .. controls (142,191.03) and (137.97,187) .. (133,187) .. controls (128.03,187) and (124,191.03) .. (124,196) .. controls (124,200.97) and (128.03,205) .. (133,205) .. controls (137.97,205) and (142,200.97) .. (142,196) -- cycle ;
\draw    (133,69) -- (133,120) ;
\draw    (133,138) -- (133,187) ;
\draw  [color={rgb, 255:red, 0; green, 0; blue, 0 }  ,draw opacity=1 ] (211,60) .. controls (211,55.03) and (206.97,51) .. (202,51) .. controls (197.03,51) and (193,55.03) .. (193,60) .. controls (193,64.97) and (197.03,69) .. (202,69) .. controls (206.97,69) and (211,64.97) .. (211,60) -- cycle ;
\draw   (211,129) .. controls (211,124.03) and (206.97,120) .. (202,120) .. controls (197.03,120) and (193,124.03) .. (193,129) .. controls (193,133.97) and (197.03,138) .. (202,138) .. controls (206.97,138) and (211,133.97) .. (211,129) -- cycle ;
\draw   (211,196) .. controls (211,191.03) and (206.97,187) .. (202,187) .. controls (197.03,187) and (193,191.03) .. (193,196) .. controls (193,200.97) and (197.03,205) .. (202,205) .. controls (206.97,205) and (211,200.97) .. (211,196) -- cycle ;
\draw    (202,69) -- (202,120) ;
\draw    (202,138) -- (202,187) ;
\draw  [color={rgb, 255:red, 0; green, 0; blue, 0 }  ,draw opacity=1 ] (280,60) .. controls (280,55.03) and (275.97,51) .. (271,51) .. controls (266.03,51) and (262,55.03) .. (262,60) .. controls (262,64.97) and (266.03,69) .. (271,69) .. controls (275.97,69) and (280,64.97) .. (280,60) -- cycle ;
\draw   (280,129) .. controls (280,124.03) and (275.97,120) .. (271,120) .. controls (266.03,120) and (262,124.03) .. (262,129) .. controls (262,133.97) and (266.03,138) .. (271,138) .. controls (275.97,138) and (280,133.97) .. (280,129) -- cycle ;
\draw   (280,196) .. controls (280,191.03) and (275.97,187) .. (271,187) .. controls (266.03,187) and (262,191.03) .. (262,196) .. controls (262,200.97) and (266.03,205) .. (271,205) .. controls (275.97,205) and (280,200.97) .. (280,196) -- cycle ;
\draw    (271,69) -- (271,120) ;
\draw    (271,138) -- (271,187) ;
\draw    (142,60) -- (193,60) ;
\draw    (142,129) -- (193,129) ;
\draw    (142,196) -- (193,196) ;
\draw    (211,60) -- (262,60) ;
\draw    (211,129) -- (262,129) ;
\draw    (211,196) -- (262,196) ;
\draw    (138,51.6) .. controls (186,25.6) and (232,35.6) .. (266,52.6) ;
\draw    (126,66.6) .. controls (112,74.6) and (94,144.6) .. (127,189.6) ;
\draw    (195,65.6) .. controls (181,73.6) and (163,143.6) .. (196,188.6) ;
\draw    (277,66.6) .. controls (302,96.6) and (307,148.6) .. (278,189.6) ;
\draw    (138,120.6) .. controls (186,94.6) and (232,104.6) .. (266,121.6) ;
\draw    (138,202.6) .. controls (175,223.6) and (230,224.6) .. (266,203.6) ;

\draw (134,82) node [anchor=north west][inner sep=0.75pt]   [align=left] {$\displaystyle 4$};
\draw (163,59) node [anchor=north west][inner sep=0.75pt]   [align=left] {$\displaystyle 4$};
\draw (97,121) node [anchor=north west][inner sep=0.75pt]   [align=left] {$\displaystyle 4$};
\draw (194,18) node [anchor=north west][inner sep=0.75pt]   [align=left] {$\displaystyle 4$};
\draw (203,83) node [anchor=north west][inner sep=0.75pt]   [align=left] {$\displaystyle 4$};
\draw (159,128) node [anchor=north west][inner sep=0.75pt]   [align=left] {$\displaystyle 4$};
\draw (231,128) node [anchor=north west][inner sep=0.75pt]   [align=left] {$\displaystyle 4$};
\draw (260,153) node [anchor=north west][inner sep=0.75pt]   [align=left] {$\displaystyle 4$};
\draw (203,152) node [anchor=north west][inner sep=0.75pt]   [align=left] {$\displaystyle 4$};
\draw (234,179) node [anchor=north west][inner sep=0.75pt]   [align=left] {$\displaystyle 4$};
\draw (299,119) node [anchor=north west][inner sep=0.75pt]   [align=left] {$\displaystyle 4$};
\draw (198,220) node [anchor=north west][inner sep=0.75pt]   [align=left] {$\displaystyle 4$};
\draw (207,41) node [anchor=north west][inner sep=0.75pt]   [align=left] {$\displaystyle x$};
\draw (114,136) node [anchor=north west][inner sep=0.75pt]   [align=left] {$\displaystyle y$};
\draw (105,48) node [anchor=north west][inner sep=0.75pt]   [align=left] {$\displaystyle s_{1}$};
\draw (205,111) node [anchor=north west][inner sep=0.75pt]   [align=left] {$\displaystyle s_{2}$};
\draw (267,207) node [anchor=north west][inner sep=0.75pt]   [align=left] {$\displaystyle s_{3}$};

\end{tikzpicture}
\caption{Gadget graph with terminals $s_1, s_2, s_3$ and two specified vertices $x, y$. Twelve edges incident on terminals have weight $4$ and the remaining $6$ unlabeled edges have weight $1$.}
\label{fig:gadget_graph}
\end{figure}

Now, given an instance $G=(V, E)$ of \maxcut, we construct an instance $H$ of \gcovmp for $k=3$ as follows:  
\begin{enumerate}
    \item We begin with $3$ terminal vertices $s_1, s_2, s_3$ in $H$. 
    \item For each vertex $u\in V$, we introduce a vertex (with the same label) in $H$. 
    \item For each edge $\{u, v\} \in E(G)$, add a gadget graph where $x = u, y = v$, the vertices $s_1, s_2, s_3$ are identified with the terminals, and the $4$ unlabeled vertices in the gadget are newly introduced dummy vertices. 
\end{enumerate}
Let $\opt_\maxcut$ and $\opt_{\gcovmp}$ denote the values of the optimum maxcut in $G$ and optimum coverage multiway-partitioning in $H$ respectively. 
We need the following lemma from \cite{cmplx-mwc}.
\begin{lemma}\label{lem:rel-mwc-cut} 
There exists a subset $U \subseteq V(G)$ with $d(U)\ge K$ if and only if there exists a partitioning $V_1, V_2, V_3$ of $V(H)$ with $s_i\in V_i$ such that $(1/2)\sum_{i=1}^3 d_H(V_i)\le 28|E(G)|-K$.  
\end{lemma}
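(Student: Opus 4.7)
The proof centers on an analysis of the internal structure of the gadget $\Gamma$ in Figure~\ref{fig:gadget_graph}. My first step would be to establish the following \emph{Gadget Claim}: for any 3-partition of $\Gamma$'s vertices with $s_i$ placed in the $i$-th part, the weight of internal gadget edges crossing the partition is at least $28$ when $x$ and $y$ lie in the same part, and at least $27$ when they lie in different parts, with both bounds attained by explicit partitions. The proof is a finite case analysis over the $3^2 \cdot 3^4$ placements of $x, y$ and the four unlabeled dummy vertices; the weight-$4$ edges incident to terminals heavily constrain which placements are near-optimal, reducing the enumeration to a handful of representative cases up to the natural symmetries of $\Gamma$. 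This gadget and its associated cost analysis are exactly those of \cite{cmplx-mwc}.

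For the forward direction, I would show that a cut $U \subseteq V(G)$ with $d_G(U) \ge K$ yields a 3-partition of $V(H)$ as follows: place $U \cup \{s_1\}$ into $V_1$, $(V(G) \setminus U) \cup \{s_2\}$ into $V_2$, $s_3$ into $V_3$, and extend to the dummy vertices of each gadget independently so as to achieve the minimum internal cost guaranteed by the Gadget Claim. Since distinct gadgets share only the terminals and the two $V(G)$-vertices $x, y$ on which they are built, the multiway cut of $H$ decomposes as a sum of gadget internal cuts, yielding
\[
\tfrac{1}{2}\sum_{i=1}^3 d_H(V_i) \le 28(|E(G)| - d_G(U)) + 27\,d_G(U) = 28|E(G)| - d_G(U) \le 28|E(G)| - K.
\]

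For the reverse direction, given a 3-partition of cost at most $28|E(G)| - K$, let $c$ be the number of edges $\{u, v\} \in E(G)$ whose endpoints lie in different parts of the partition. The Gadget Claim bounds the cost from below by $28(|E(G)| - c) + 27c = 28|E(G)| - c$, so $c \ge K$. The task is then to convert this 3-partition of $V(H)$ into a 2-cut $U \subseteq V(G)$ with $d_G(U) \ge c$. Here I would exploit a structural consequence of the gadget design: because $x$ (resp.~$y$) is linked by weight-$4$ edges to only two of the three terminals, any cost-minimizing assignment places $x$ (resp.~$y$) in one of those two parts, and an \emph{improving} local move can eliminate every $V(G)$-vertex from the ``third'' part without increasing the total cost or decreasing $c$. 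After this normalization, $V(G)$ uses only two of the three parts (say $V_1$ and $V_2$), and $U := V_1 \cap V(G)$ is a valid $2$-cut of $G$ with $d_G(U) \ge c \ge K$.

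The hardest step will be the Gadget Claim itself, together with the rigidity/local-improvement consequence used in the reverse direction. Both rely on the carefully chosen $4{:}1$ ratio between the twelve terminal-incident edges and the six internal edges, which was engineered in \cite{cmplx-mwc} precisely to give a cost gap of exactly one between the ``same-part'' and ``different-part'' configurations. Once the Gadget Claim and rigidity are in hand, the two directions of the iff are immediate from the displayed computation above, establishing the exact correspondence $K^* = 28|E(G)| - M^*$ between the max-cut value of $G$ and the minimum multiway-cut value of $H$.
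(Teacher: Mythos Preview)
The paper does not actually prove this lemma; it is stated as a direct citation of \cite{cmplx-mwc} and used as a black box. Your proposal reconstructs the Dahlhaus--Johnson--Papadimitriou--Seymour--Yannakakis argument, and the overall architecture (Gadget Claim, forward construction, reverse local-improvement) is exactly theirs.

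There is one slip in your reverse direction. You assert that the improving local move which evacuates a $V(G)$-vertex $u$ from $V_3$ can be done ``without \ldots\ decreasing $c$''. This need not hold: if $u$ has $G$-neighbours already in $V_1$ and you move $u$ to $V_1$, those $G$-edges become uncut and $c$ drops. Fortunately you do not need that claim. What you do need (and what your weight-$4$ versus weight-$1$ observation actually establishes) is that each such move does not increase the multiway-cut cost: moving $u$ from $V_3$ to $V_1$ saves one weight-$4$ terminal edge per incident gadget while altering at most two weight-$1$ edges, a net saving of at least $2$ per gadget. After all such moves the cost is still at most $28|E(G)| - K$, every $V(G)$-vertex lies in $V_1 \cup V_2$, and \emph{now} you re-apply the Gadget Claim to the modified partition to get cost $\ge 28|E(G)| - d_G(U)$ with $U = V_1 \cap V(G)$, hence $d_G(U) \ge K$. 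With this adjustment the argument is complete and matches \cite{cmplx-mwc}.
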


Lemma \ref{lem:rel-mwc-cut} leads to the following relation between $\opt_{\gcovmp}$ and $\opt_{\maxcut}$.
\begin{proposition} \label{prop:rel-1}
    $OPT_{\gcovmp}=82|E(G)| - OPT_{\maxcut} \leq 163 \cdot OPT_{\maxcut}$.
\end{proposition}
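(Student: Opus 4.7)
The plan is to derive the equality by combining Lemma \ref{lem:rel-mwc-cut} with the additive translation between the \gcovmp and \mwc objectives, and then to derive the inequality from the standard fact that $\opt_{\maxcut}(G) \ge |E(G)|/2$.

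First, I would compute the total edge weight $w(E(H))$ of the constructed instance $H$. Each gadget (one per edge of $G$) contains $12$ edges of weight $4$ incident to terminals and $6$ unlabeled edges of weight $1$, contributing $12 \cdot 4 + 6 \cdot 1 = 54$ to the total weight. Since each edge of $G$ contributes one fresh gadget (the dummy vertices are new per gadget and only the terminals and the $x,y$ endpoints are shared), the edge sets of distinct gadgets are disjoint, so $w(E(H)) = 54 |E(G)|$.

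Next, I would apply equation \eqref{eq:cov-vs-cut-translation} to the instance $H$: for every partition $V_1, V_2, V_3$ of $V(H)$ with $s_i \in V_i$, we have $\sum_{i=1}^3 b_H(V_i) = (1/2)\sum_{i=1}^3 d_H(V_i) + w(E(H))$. Taking minima on both sides over feasible multiway partitions gives $\opt_{\gcovmp}(H) = \opt_{\mwc}(H) + 54 |E(G)|$. By Lemma \ref{lem:rel-mwc-cut} applied with $K = \opt_{\maxcut}(G)$ (and its converse direction), the optimum multiway cut value in $H$ satisfies $\opt_{\mwc}(H) = 28|E(G)| - \opt_{\maxcut}(G)$. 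Combining these yields
\[
\opt_{\gcovmp}(H) \;=\; 28|E(G)| - \opt_{\maxcut}(G) + 54|E(G)| \;=\; 82|E(G)| - \opt_{\maxcut}(G),
\]
which is the claimed equality.

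For the inequality, I would use the well-known fact that $\opt_{\maxcut}(G) \ge |E(G)|/2$ (a uniformly random bipartition of $V(G)$ cuts each edge with probability $1/2$). Hence $|E(G)| \le 2 \cdot \opt_{\maxcut}(G)$, and therefore
\[
\opt_{\gcovmp}(H) \;=\; 82|E(G)| - \opt_{\maxcut}(G) \;\le\; 164 \cdot \opt_{\maxcut}(G) - \opt_{\maxcut}(G) \;=\; 163 \cdot \opt_{\maxcut}(G).
\]
I do not anticipate any real obstacles: the only non-routine step is the weight accounting for $w(E(H))$, and that is a direct computation from the gadget in Figure \ref{fig:gadget_graph}.
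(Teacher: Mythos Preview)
Your proposal is correct and follows essentially the same approach as the paper: compute the gadget weight as $54$, use Lemma \ref{lem:rel-mwc-cut} to get $\opt_{\mwc}(H)=28|E(G)|-\opt_{\maxcut}(G)$, add $w(E(H))=54|E(G)|$ via \eqref{eq:cov-vs-cut-translation} to obtain the equality, and then apply $\opt_{\maxcut}(G)\ge |E(G)|/2$ for the inequality. One small caveat: the edge sets of distinct gadgets are not literally disjoint (e.g., the weight-$4$ edge $s_1\text{--}u$ can recur across gadgets sharing $u$), but since $H$ is treated as a weighted multigraph the total weight is still $54|E(G)|$, so your conclusion stands.
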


\begin{proof}
 Let $\opt_\mwc$ denote the objective value of an optimum muliway cut in the weighted graph $H$. 
    By Lemma \ref{lem:rel-mwc-cut}, we have that $OPT_{\mwc} = 28|E(G)| - OPT_{\maxcut}$. 
The weight of each gadget graph is $54$ and hence, $w(E(H))=54|E(G)|$. 
Moreover, we know that $OPT_{\maxcut} \geq \frac{|E(G)|}{2}$. 
    Therefore, 
\begin{align*} \label{eq:cut-term}
    OPT_{\gcovmp} &= OPT_{\mwc} + w(E(H)) \notag\\ 
    &= 54|E(G)| + 28|E(G)| - OPT_{\maxcut} \notag\\
    &= 82|E(G)| - OPT_{\maxcut}\\
    &\leq 82(2 \cdot OPT_{\maxcut}) - OPT_{\maxcut} = 163 \cdot OPT_{\maxcut}.
\end{align*}

\end{proof}

Lemma \ref{lem:rel-mwc-cut} also implies that a low cost solution to \gcovmp can be used to construct a large-valued cut as shown below.  
\begin{proposition} \label{prop:rel-2}
    Let $V_1, V_2, V_3$ be a partitioning of $V(H)$ such that $s_i\in V_i$ for each $i\in [3]$. 
    Then, there exists $U\subseteq V(G)$ such that 
    \[
    d_G(U)\ge \opt_{\maxcut} +\opt_{\gcovmp}- \sum_{i=1}^3 b_H(V_i). 
    \]
    
\end{proposition}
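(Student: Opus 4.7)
The plan is to apply Lemma \ref{lem:rel-mwc-cut} in the direction that converts a feasible $3$-way partition of $V(H)$ into a large cut of $G$, and then translate from the multiway-cut objective in $H$ to the coverage objective in $H$ using the identity \eqref{eq:cov-vs-cut-translation}.

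First, I would observe that each copy of the gadget in Figure \ref{fig:gadget_graph} has total edge weight $4\cdot 12 + 1\cdot 6 = 54$, and different gadgets share only the three terminals (which are not endpoints of any shared edge), so the total edge weight of $H$ is $w(E(H)) = 54|E(G)|$. Given the partition $V_1, V_2, V_3$ of $V(H)$ with $s_i \in V_i$, applying \eqref{eq:cov-vs-cut-translation} to $H$ gives
\[
\sum_{i=1}^3 b_H(V_i) \;=\; \frac{1}{2}\sum_{i=1}^3 d_H(V_i) \;+\; 54|E(G)|.
\]
Set $M := \tfrac12 \sum_{i=1}^3 d_H(V_i)$, so $M = \sum_{i=1}^3 b_H(V_i) - 54|E(G)|$.

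Next, I would apply Lemma \ref{lem:rel-mwc-cut} in the reverse direction with $K := 28|E(G)| - M$: since the partition $V_1, V_2, V_3$ witnesses $\tfrac{1}{2}\sum_{i=1}^3 d_H(V_i) = M \leq 28|E(G)| - K$, there exists $U \subseteq V(G)$ with
\[
d_G(U) \;\geq\; K \;=\; 28|E(G)| - M \;=\; 82|E(G)| - \sum_{i=1}^3 b_H(V_i).
\]
Finally, I would substitute using Proposition \ref{prop:rel-1}, which gives $82|E(G)| = \opt_{\gcovmp} + \opt_{\maxcut}$, to conclude
\[
d_G(U) \;\geq\; \opt_{\maxcut} + \opt_{\gcovmp} - \sum_{i=1}^3 b_H(V_i),
\]
as required. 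There is no serious obstacle here; the proof is essentially bookkeeping, with the only mild subtlety being the computation of $w(E(H))$ and the consistent use of the additive shift between the coverage and cut objectives established in \eqref{eq:cov-vs-cut-translation}.
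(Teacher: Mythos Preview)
Your proof is correct and follows essentially the same approach as the paper: both compute $w(E(H))=54|E(G)|$, translate the coverage objective to the cut objective via \eqref{eq:cov-vs-cut-translation}, invoke Lemma~\ref{lem:rel-mwc-cut} to obtain $d_G(U)\ge 82|E(G)|-\sum_{i=1}^3 b_H(V_i)$, and then substitute using the identity $82|E(G)|=\opt_{\gcovmp}+\opt_{\maxcut}$ from Proposition~\ref{prop:rel-1}.
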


\begin{proof}
    By \eqref{eq:cov-vs-cut-translation}, we have that $(1/2)\sum_{i=1}^3 d_H(V_i) = \sum_{i=1}^3 b_H(V_i) - w(E(H))= \sum_{i=1}^3 b_H(V_i) - 54|E(G)|$. Hence, using Lemma \ref{lem:rel-mwc-cut}, we have a subset $U\subseteq V(G)$ with
    \begin{align*}
        d_G(U) 
        &\ge 28|E(G)|-\left(\sum_{i=1}^3 b_H(V_i) - 54|E(G)|\right) \\
        &= 82|E(G)| - \sum_{i=1}^3 b_H(V_i)\\
        &= 82|E(G)| - \sum_{i=1}^3 b_H(V_i) + \opt_{\maxcut} - \opt_{\maxcut}\\
        & \ge \opt_{\maxcut} + 82|E(G)| - \sum_{i=1}^3 b_H(V_i)- \opt_{\maxcut} \quad \quad \text{(By Proposition \ref{prop:rel-1})}\\
        &= \opt_{\maxcut} + \opt_{\gcovmp}-\sum_{i=1}^3 b_H(V_i).
    \end{align*}

\end{proof}


We now restate and prove Theorem \ref{thm:apx-cov-cut}. 
\thmMaxCutToGCovMP*
\begin{proof}
Given an instance $G=(V, E)$ of \maxcut, we construct the instance $H$ of \gcovmp with $3$ terminals as discussed at the beginning of the section. We apply the $\alpha$-approximation algorithm for \gcovmp to $H$, to obtain a partition $V_1, V_2, V_3$ of $V(H)$ such that $\sum_{i=1}^{3} b_H (V_i) \leq \alpha\cdot OPT_{\gcovmp}$. By Proposition \ref{prop:rel-2}, there exists $U \subseteq V(G)$ such that, 
\begin{align*} 
d_G(U)
&\ge \opt_{\maxcut} + \opt_{\gcovmp} - \sum_{i=1}^3 b_H(V_i)\\
&\ge \opt_{\maxcut} - (\alpha-1)\opt_{\gcovmp}\\
&\ge (164-163\alpha) \opt_{\maxcut}.
\end{align*}
The last inequality above is by Proposition \ref{prop:rel-1} and $\alpha\ge 1$. 
\end{proof}

\bibliographystyle{abbrv}
\bibliography{references} 

\end{document}